\title{Fountain Codes under Maximum Likelihood Decoding} 
\author{\href{mailto:francisco.lazaroblasco@dlr.de}{Francisco L\'azaro Blasco}}
\keywords{{Fountain codes} {LT codes} {Raptor codes} {Maximum likelihood (ML) decoding} {inactivation decoding}}
\newcommand{\msr}[1]{{\mathscr{#1}}}
\newcommand{\fran}[1]{{\textcolor{black}{#1}}}
\newcommand{\figw}{0.6\columnwidth}
\newcommand{\figwbigger}{0.7\columnwidth}
\newcommand{\figwBigger}{0.75\columnwidth}
\newenvironment{absolutelynopagebreak}
  {\par\nobreak\vfil\penalty0\vfilneg
   \vtop\bgroup}
  {\par\xdef\tpd{\the\prevdepth}\egroup
   \prevdepth=\tpd}
\begin{document}

\frontmatter

\maketitle

\newpage

\gutachter


\begin{dedication} 

To my beloved wife, parents and sister.

\end{dedication}




\begin{acknowledgements}      

The journey towards this dissertation was fascinating but also long and challenging. It was only thanks to the technical advise, encouragement and love of those around me that I was able to finish it.

First, I would like to express my heartfelt gratitude to my advisor Prof. Gerhard Bauch for his guidance and valuable suggestions. I would also like to thank him explicitly for his support clearing the bureaucratic hurdles of the PhD. The next thank you goes to Prof. Amin Shokrollahi for accepting to review the thesis and for the helpful discussion during the PhD defense. I would also like to thank Prof. Christian Schuster for his efficiency handling the examination process.

Words cannot express my gratitude to my mentor at {DLR}, Gianluigi Liva, for his continuous support and guidance, and for teaching me all I know about channel coding.  His passion and dedication to research have been an inspiration to me during all these years. I am also deeply in debt with Enrico Paolini for his advice and  scientific rigour.

I would like to extend my gratitude also to my colleagues at {DLR} for creating a wonderful working atmosphere. Especially, I would like to thank Bal\'azs, Federico, Giuliano and Javi for the technical and non-technical discussions. I would also like to thank Sandro Scalise and Simon Plass for supporting my research.

Last, but not least, I would like to thank my loving and caring family. I am infinitely thankful to my parents for giving me the most valuable present, a good education. A great thanks goes to my sister, for taking care of me like a mother. I would also like to thank my Italian family for making me feel welcome from the very first moment. I cannot put in words my gratitude to Paola for her love, patience, constant support and for making me feel at home.
The very last ``thank you'' goes to little Francesco for pushing me to finalize this dissertation.

~

\hfill Munich, May 2017.

\end{acknowledgements}




\begin{abstract}        

This dissertation focuses on fountain codes under \acf*{ML} decoding.
\fran{Fountain codes are a class of erasure correcting codes that can generate an endless amount of coded symbols and were conceived to deliver data files over data networks to a potentially large population of users.}
First \acf{LT} codes  are considered\fran{, which represent the first class of practical fountain codes. Concretely, the focus is on \ac{LT} codes} under inactivation decoding, an efficient \acs*{ML} decoding algorithm that is  widely used in practical systems.
\fran{More precisely}, the decoding complexity of \ac{LT} codes under inactivation decoding is analyzed in terms of the expected number of inactivations. The proposed analysis is based on a dynamical programming approach. This analysis is then extended to provide the probability distribution of the number of inactivations. Additionally a lower complexity approximate analysis is introduced and a code design example is presented that illustrates how these analysis techniques can be used to design \acs{LT} codes.
Next Raptor codes under \acs*{ML} decoding are considered. An upper bound to the probability of decoding failure of $q$-ary Raptor codes is developed, considering the weight enumerator of the outer code (precode).  The bound is shown to be tight, specially in the error floor region, by means of simulations. This bound shows how Raptor codes can be analyzed similarly to a traditional serial concatenation of (fixed-rate) block codes. Next, a heuristic method is presented that yields an approximate analysis of Raptor codes under inactivation decoding.  It is also shown by means of an example how the results in this thesis can be used to design Raptor codes.
Raptor codes are next analyzed in a fixed-rate setting. Concretely, a Raptor code ensemble with an outer code picked from the linear random ensemble is considered. For this ensemble, the average weight enumerator and its growth rate are provided. Furthermore, sufficient and necessary conditions for the ensemble to have a minimum distance growing linearly with the block length are presented. The ensemble analyzed resembles standard Raptor codes, as it is shown by means of simulations.
Finally a new class of fountain codes is introduced, that consists of a parallel concatenation of a block code with a \acf{LRFC}. This scheme is specially interesting when the block code is a \ac{MDS} code.  In this case, the scheme can provide failure probabilities lower than those of \acs{LRFC} codes by several orders of magnitude, provided that the erasure probability of the channel is not too high.

\end{abstract}




\tableofcontents

\listoffigures

\listoftables

\clearpage
\chapter{Acronyms}
\begin{acronym}
\acro{ARQ}{automatic retransmission query}
\acro{AWGN}{additive white Gaussian noise}
\acro{BCH}{Bose Chaudhuri Hocquenghem}
\acro{BEC}{binary erasure channel}
\acro{BSC}{binary symmetric channel}
\acro{BP}{belief propagation}
\acro{CER}{codeword error rate}
\acro{CO-WEF}{conditional output-weight enumerator function}
\acro{CRC}{cyclic redundancy check}
\acro{FEC}{forward error correction}
\acro{GE}{Gaussian elimination}
\acro{GRS}{generalized Reed-Solomon}
\acro{HDPC}{high-density parity-check}
\acro{ID}{inactivation Decoding}
\acro{i.i.d.}{independent and identically distributed}
\acro{IO-WEF}{input output-weight enumerator function}
\acro{MDS}{maximum distance separable}
\acro{ML}{maximum likelihood}
\acro{MWI}{maximum weight inactivation}
\acro{LDGM}{low-density generator matrix}
\acro{LDPC}{low-density parity-check}

\acro{LRFC}{linear random fountain code}
\acro{LT}{Luby transform}

\acro{PMF}{probability mass function}
\acro{QEC}{$q$-ary erasure channel}
\acro{RI}{random inactivation}
\acro{RS}{Reed-Solomon}
\acro{RSD}{robust soliton distribution}
\acro{SA}{simulated annealing}
\acro{SPC}{single parity-check}
\acro{TCP}{Transmission Control Protocol}
\acro{WE}{weight enumerator}
\acro{WEF}{weight enumerator function}
\end{acronym}

\renewcommand{\nomname}{Important Symbols}
\printnomenclature[3cm]

\mainmatter
\newcommand{\erasprob}{\varepsilon}
\newcommand{\inputalpha}{\mathcal{X}}
\newcommand{\outputalpha}{\mathcal{Y}}
\newcommand{\inputrv}{X}
\newcommand{\outputrv}{Y}

\newcommand{\codensemble}{\msr{C}}
\newcommand{\ensemble}{\msr{C}}
\newcommand{\oensemble}{\msr{C}_{\mathsf o}}
\newcommand{\code}{\mathcal{C}}
\newcommand{\precodegeneric}{\msr{C}^{*}}

\newcommand{\outercodeensemble}{\msr{C}^\text{o}}

\newcommand{\codef}{\mathcal{C}_f}

\newcommand{\Exp}{\mathbb{E}}

\newcommand{\len}{L}
\newcommand{\osymb}{c} 
\newcommand{\rosymb}{y} 

\newcommand{\nuncovered}{\phi}
\newcommand{\fracuncovered}{\zeta}

\newcommand{\reducedsyst}{\alpha}

\newcommand{\reloverhead}{\epsilon}
\newcommand{\absoverhead}{\delta}

\newcommand{\isd } {\Omega^{\mathrm{ISD}}}
\newcommand{\avgisd } {\bar{\Omega}^{\mathrm{ISD}}}
\newcommand{\rsd } { \Omega^{\mathrm{RSD}} }
\newcommand{\avgrsd } {\bar{\Omega}^{\mathrm{RSD}}}
\newcommand{\rsdt } { \Omega^{\mathrm{RSD}'} }
\newcommand{\Pf } {{\mathsf{P}}_{\mathsf{F}}}
\newcommand{\Pe} { \mathsf{P}_{\mathsf{E}}}
\newcommand{\Pse} { \mathsf{P}_{\mathsf{e}}}
\newcommand{\Pflow } {\protect\underline{\mathsf{P}}_{\mathsf{F}}}

\newcommand{\Pflowisd } {P_{\text{ns}}}
\newcommand{\barPf } { \bar {\mathsf{P}}_{\mathsf{F}}}

\newcommand{\parone } {\psi}
\newcommand{\partwo} {\varsigma}

\newcommand{\Amatrix}{\mathbf{A}}

\ifthenelse{\isundefined{\Bmatrix}}{
  \newcommand{\Bmatrix}{\mathbf{B}}
}{
  \renewcommand{\Bmatrix}{\mathbf{B}}
}
\renewcommand{\Bmatrix}{\mathbf{B}}
\newcommand{\Cmatrix}{\mathbf{C}}
\newcommand{\Dmatrix}{\mathbf{D}}

\newcommand{\myop}[1]{%
  \mathchoice{\raisebox{8pt}{$\displaystyle #1$}}
             {\raisebox{8pt}{$#1$}}
             {\raisebox{4pt}{$\scriptstyle #1$}}
             {\raisebox{1.6pt}{$\scriptscriptstyle #1$}}}

\newcommand{\rippleset}{\mathscr{R}}
\newcommand{\Ripple}{\mathtt{R}}
\renewcommand{\r}{\mathtt{r}}
\newcommand{\ripple}[1]{ \msr{R}_{#1}}
\newcommand{\ru}{\r_u}
\newcommand{\Ru}{\Ripple_u}
\newcommand{\R}[1]{\Ripp_{#1}}

\newcommand{\cloudset}{\mathscr{W}}
\newcommand{\Cloud}{\mathtt{W}}
\renewcommand{\c}{\mathtt{w}}
\newcommand{\cloud}[1]{\msr{W}_{#1}}
\newcommand{\Cu}{\Cloud_u}
\newcommand{\cu}{\c_u}

\renewcommand{\S}[1]{\mathtt{S}_{#1}}
\ifthenelse{\isundefined{\C}}{
    \newcommand{\C}[1]{\mathtt{C}_{#1}}
    }{
    \renewcommand{\C}[1]{\mathtt{C}_{#1}}
    }

\ifthenelse{\isundefined{\N}}{
    \newcommand{\N}[1]{\mathsf{N}_{#1}}
    }{
    \renewcommand{\N}[1]{\mathsf{N}_{#1}}
    }

\newcommand{\Erv}{\mathtt{A}}
\newcommand{\erv}{\mathtt{a}}
\renewcommand{\b}{\mathtt{b}}
\newcommand{\n}{\mathtt{n}}
\newcommand{\Nu}{\mathtt{N}_u}
\renewcommand{\nu}{\n_u}

\newcommand{\bu}{\b_u}

\newcommand{\Y}{\mathtt{N}}
\newcommand{\y}{\mathtt{n}}

\newcommand{\dmax}{ d_{\max}}

\newcommand{\x}{\mathtt{x}}

\newcommand{\Rijsets}[1]{\mathscr{Z}^{(#1)}}
\newcommand{\Rijset}[2]{\mathscr{Z}^{(#2)}_{#1}}
\newcommand{\Rijcard}[2]{\mathtt{z}^{(#2)}_{#1}}
\newcommand{\Rijcards}[1]{\mathtt{z}^{(#1)}}
\newcommand{\Rijreals}[1]{ \mathtt{Z}^{(#1)}}
\newcommand{\Rijreal}[2]{ \mathtt{Z}^{(#2)}_{#1}}
\newcommand{\Rij}[2]{ \hat {\mathtt{Z}}^{(#2)}_{#1}}
\newcommand{\rij}[2]{\xi^{(#2)}_{#1}}
\newcommand{\probtx}[2]{p^{(#2)}_{#1}}
\newcommand{\ntx}[2]{B^{(#2)}_{#1}}
\newcommand{\Rijrealv}[1]{ \pmb{\mathtt{Z}}_{#1}}
\newcommand{\Rijcardv}[1]{ \pmb{\mathtt{z}}_{#1}}
\newcommand{\driftv} {\pmb{\mathtt{w}}}
\ifthenelse{\isundefined{\m}}{
    \newcommand{\m}{m}
    }{
    \renewcommand{\m}{m}
    }

\newcommand{\Raptorinput}{u}
\newcommand{\vecu}{\mathbf{\Raptorinput}}

\newcommand{\Rintermsymbol}{v}
\newcommand{\vecv}{\mathbf{\Rintermsymbol}}

\newcommand{\Rosymb}{c}
\newcommand{\vecout}{\mathbf{\Rosymb}}

\newcommand{\rate}{\mathsf{r}}
\newcommand{\ro}{\rate_{\mathsf o}}
\newcommand{\ri}{\rate_{\mathsf i}}

\newcommand{\srten}{s_{\text{R}10} }
\newcommand{\hrten}{h_{\text{R}10} }

\newcommand{\constmatrix}{\mathbf{M}}
\newcommand{\hmatrixpre}{\mathbf{H}_{\text{o}}}
\newcommand{\GrxLT}{\Grx_{LT}}

\newcommand{\zeros}{\mathbf{z}}
\newcommand{\Rrosymb}{\rosymb}

\newcommand{\we}{\mathcal{A}}
\newcommand{\wesingle}{A}
\newcommand{\weo}{A^{\mathsf o}}
\newcommand{\wei}{A^{\mathsf i}}
\newcommand{\geo}{G^{\mathsf o}}
\ifthenelse{\isundefined{\A}}{
    \newcommand{\A}{\we}
    }{
    \renewcommand{\A}{\we}
    }
\newcommand{\wef}{\mathscr {A}}

\newcommand{\p}{p}
\newcommand{\pl}{\p_{\l}}
\newcommand{\pjl}{ \p_{j,\l} }
\newcommand{\pnl}{ \p_{\nl} }
\newcommand{\npnl}{ \np_{\nl} }
\newcommand{\pjnl}{ \p_{j,\nl} }
\newcommand{\pnlo}{ \p_{\nlo} }
\newcommand{\plo}{ \p_{\lo} }
\newcommand{\np}{\varrho}

\newcommand{\nd}{\varpi}
\newcommand{\dmin}{d_{\mathrm{min}}}
\renewcommand{\d}{d}
\newcommand{\ds}{d^{\star}}
\renewcommand{\l}{l}
\newcommand{\nl}{\lambda}
\newcommand{\nls}{\lambda^\star}
\newcommand{\Hb}{\mathsf H_{\mathsf b}}
\newcommand{\tw}{\tilde w}
\newcommand{\two}{\tilde w_0}
\newcommand{\nlo}{\lambda_0}

\newcommand{\fmax}{\mathsf f_{\textrm{\textnormal{max}}}}
\newcommand{\f}{\mathsf f}
\newcommand{\gmax}{\mathsf g_{\textrm{\textnormal{max}}}}

\ifthenelse{\isundefined{\g}}{
    \newcommand{\g}{\mathsf g}
    }{
    \renewcommand{\g}{\mathsf g}
    }

\newcommand{\lo}{\l_0}

\newcommand{\inner}{\msr{I}}
\renewcommand{\outer}{\msr{O}}
\newcommand{\outerprime}{\msr{O}'}
\newcommand{\region}{\msr{P}}
\newcommand{\Q}{\mathsf Q}
\newcommand{\lambert}{\mathsf W}

\newcommand{\de}{\mathrm{d}}

\newcommand{\vecU}{\mathbf{U}}
\newcommand{\vecV}{\mathbf{V}}
\newcommand{\vecX}{\mathbf{X}}

\newcommand{\hw}{w_{\mathsf H}}
\renewcommand{\deg}{\mathrm{deg}}

\ifthenelse{\isundefined{\G}}{
    \newcommand{\G}{\mathbf{G}}
    }{
    \renewcommand{\G}{\mathbf{G}}
    }

\newcommand{\dmint}{\nd^{\star}}
\newcommand{\dmintt}{d_{\mathrm{min}}^{\star}}
\newcommand{\D}{D}
\newcommand{\Z}{Z}

\newcommand{\avgd}{\bar{\Omega}}

\newcommand{\Omegaeq}{{\hat \Omega}}
\newcommand{\meq}{{\hat m}}

\newcommand{\Omegarten}{ \Omega^{\mathrm{R10}} }
\newcommand{\Omegatwo}{ \Omega^{\dagger} }

\newcommand{\neighbour}{\mathcal{N} }

\newcommand{\neighbourof}[1]{\mathcal{N}\left( #1 \right) }

\newtheorem{algo}{Algorithm}
\newtheorem{mydef}{Definition}
\newtheorem{prop}{Proposition}
\newtheorem{theorem}{Theorem}
\newtheorem{remark}{Remark}
\newtheorem{example}{Example}
\newtheorem{lemma}{Lemma}
\newtheorem{corollary}{Corollary}

\newcommand{\GrxRaptor}{\Grx_{R}}

\newcommand{\GLTa}{\G}
\newcommand{\GrxR}{\G_\text{R}}
\newcommand{\Grx}{\tilde{\mathbf{G}}}

\newcommand{\Gp}{\G_{\text{o}}}
\newcommand{\GLT}{\G_{\text{LT}}}
\newcommand{\weoensemble}{\mathsf{A}}
\newcommand{\pil}{\pi_{\l}}
\newcommand{\krawt}{\mathcal{K}}
\newcommand{\rank}{\mathsf{rank}}
\newcommand{\pifroml}{\vartheta_{i,l,j} }
\newcommand{\qi}{\varphi_i }


\newcommand{\isymbc}{v}
\newcommand{\osymbpre}{\osymb'}
\newcommand{\osymblrfc}{\osymb''}
\newcommand{\osymbconcat}{\osymb}
\newcommand{\Gpre}{\mathbf{G}'}
\newcommand{\Glrfc}{\mathbf{G}''}

\newcommand{\Grxpre}{\Grx'}
\newcommand{\Grxlrfc}{\Grx''}

\newcommand{\nc}{h}
\newcommand{\lc}{n}
\newcommand{\fevent}{F}
\newcommand{\hc}{l}

\newcommand{\probi}{p_i}

\ifthenelse{\isundefined{\argmax}}{
\newcommand{\argmax}{{\arg\max}}
}{
\renewcommand{\argmax}{{\arg\max}}
}

\nomenclature[aa]{$A_d$}{ number of codewords of weight $d$}
\nomenclature[ac]{$\msr{C}$}{ code ensemble}

\nomenclature[ak]{$k$}{ number of input symbols}
\nomenclature[ah]{$h$}{ number of intermediate symbols}
\nomenclature[ahb]{$\mathsf H_{\mathsf b}$}{ binary entropy function}
\nomenclature[am]{$m$}{ number of output symbols collected by the receiver}
\nomenclature[an]{$n$}{ number of output symbols generated by the encoder}

\nomenclature[ar]{$\mathsf{r}$}{overall rate of a fixed-rate Raptor code}
\nomenclature[aro]{$\mathsf{r}_{\mathsf o}$}{outer code rate}
\nomenclature[ari]{$\mathsf{r}_{\mathsf i}$}{inner fixed-rate LT code rate}
\nomenclature[aw]{$w_{\mathsf H}$}{Hamming weight}
\nomenclature[apf]{${\mathsf{P}}_{\mathsf{F}}$}{probability of decoding failure}

\nomenclature[ge]{$\epsilon$}{relative receiver overhead, $\epsilon=m/k-1$}                
\nomenclature[ge]{$\varepsilon$}{erasure probability of the channel}  
\nomenclature[gd]{$\delta$}{absolute receiver overhead $\delta=m-k$}
\nomenclature[gw]{$\varpi$}{normalized output weight $\varpi =w/n$}
\nomenclature[gD]{$\Delta$}{Transmitter overhead}
\nomenclature[gl]{$\lambda$}{normalized Hamming weight of the intermediate word $\lambda=l/h$}
\nomenclature[go]{$\Omega$}{output degree distribution of an LT code}
\nomenclature[goa]{$\bar \Omega$}{average output degree of an LT code}

\nomenclature[rv]{$\mathbf{v}$}{row vector of input (source) symbols}
\nomenclature[rc]{$\mathbf{c}$}{row vector of output symbols}
\nomenclature[ry]{$\mathbf{y}$}{row vector of received output symbols}
\nomenclature[rG]{$\mathbf{G}$}{generator matrix}
\nomenclature[rH]{$\tilde{\mathbf{G}}$}{matrix corresponding to the non-erased positions of $\mathbf{G}$}

\nomenclature[su]{$\mathbf{u}$}{row vector of input (source) symbols}
\nomenclature[sv]{$\mathbf{v}$}{row vector of intermediate symbols}
\nomenclature[sc]{$\mathbf{c}$}{row vector of output symbols}
\nomenclature[sy]{$\mathbf{y}$}{row vector of received output symbols}
\nomenclature[sG]{$\mathbf{G}_{\text{LT}}$}{generator matrix of the inner LT code}
\nomenclature[sGt]{$\tilde{ \mathbf{G}}_{\text{LT}}$}{matrix corresponding to the non-erased positions of $\mathbf{G}_{\text{LT}}$ }
\nomenclature[sH]{$\mathbf{H}_{\text{p}}$}{parity check matrix of the outer block code (precode)}
\nomenclature[sM]{$\mathbf{M}$}{constraint matrix}

\nomenclature[xw]{$\mathscr{W}$}{ cloud}
\nomenclature[xn]{$\mathcal{N}$}{neighbourhood of a node in a graph}
\nomenclature[xo]{$\mathcal{O}$}{Landau big O notation}
\nomenclature[xos]{$\mathrm o$}{Landau small $\mathrm o$ notation}
\nomenclature[xr]{$\mathscr{R}$}{ripple}
\nomenclature[xp]{$\msr{P}$}{positive growth rate region of a fixed-rate Raptor code}
\nomenclature[xf]{$\mathbb {F}_{q}$}{Galois field of order $q$}
\nomenclature[xkra]{$\mathcal{K}_k(x;n,q)$}{Krawtchouk polynomial of degree $j$ with parameters $h$ and $q$.}

\nomenclature[za]{$\mathrm{deg}(c)$}{degree of output symbol $c$}
\nomenclature[zab]{$\mathrm{deg}_r(c)$}{reduced degree of output symbol $c$}
\nomenclature[zb]{$\binom{k}{i}$}{binomial coefficient}
\nomenclature[zc]{$\lceil x \rceil$}{smallest integer larger than or equal to x}
\nomenclature[zf]{$\lfloor x \rfloor$}{largest integer smaller than or equal to x}
\nomenclature[zap]{$\lfloor x \rceil$}{closest integer to x}
\nomenclature[zrank]{$\text{rank}(\mathbf{G})$}{rank of matrix $\mathbf{G}$} 
\chapter{Introduction}
\ifpdf
    \graphicspath{{Introduction/IntroductionFigs/PNG/}{Introduction/IntroductionFigs/PDF/}{Introduction/IntroductionFigs/}}
\else
    \graphicspath{{Introduction/IntroductionFigs/EPS/}{Introduction/IntroductionFigs/}}
\fi
\epigraph{I just wondered how things were put together.}{Claude E. Shannon}

In the early years of communication systems it was not known whether error free communication was possible over a communication channel that introduced errors using a rate \fran{that} was not vanishingly small. It was C.E Shannon, who in his landmark paper from 1948 \cite{Shannon1948} proved that error free communication is possible if one communicates at a rate lower than the channel capacity. This milestone gave birth to the Information Age in which we live nowadays.

\fran{Initially the research community focused on the communication channels that arise in the physical layer of a communication system.}
At the physical layer of a communication system the thermal noise generated by thermal vibrations of the atoms in conductors can be accurately modeled as  \ac{AWGN}, giving rise to the \ac{AWGN} channel. The \ac{AWGN} channel was one of the first models to be studied. Another simpler model of the physical layer is the \ac{BSC} channel that was also widely studied during the early days of the Information Age. The \ac{BSC} can be seen as a degradation of the \ac{AWGN} when the input to the channel is constrained to be binary and symmetric and the receiver applies hard decision detection.

After the publication of Shannon's work a humongous amount of research has been carried out in the field of channel coding. The dominant motivation in the research community was getting closer and closer to Shannon's capacity with an affordable complexity. In the early decades of channel coding, algebraic codes were the main focus of research. The most prominent fruits of this research were Hamming, Golay, Reed Muller, \acs{BCH} and \ac{RS} codes \cite{hamming:1950,golay:1949,muller:1954,reed:1954,hocquenghem:1959,bose:1960,reed:RS}. Algebraic coding usually aims at finding codes with good distance properties, usually by maximizing the minimum distance of a code. Due to their good distance properties, algebraic codes tend to exhibit a low probability of error under optimal (\acl{ML}) decoding. The main disadvantage of algebraic codes, is that \fran{in general soft decoding tends to be complex, specially for large block lengths.}
%

The first paradigm change in coding was shifting the focus towards \emph{probabilistic codes} where the aim is at improving the average performance of a code with constraints on the encoding and decoding complexity \cite{Costello:history}. At this stage, the research community had realized that the structure of the codes needed to be tailored to simplify the implementation in practical systems. Convolutional codes, introduced by Elias in \cite{Elias55:2noisy} are generally considered to be the first class of probabilistic codes \cite{Costello:history}. Optimal decoding algorithms for convolutional codes were first derived by Viterbi \cite{viterbi1967error} and then by Bahl, Cocke, Jelinek and Raviv \cite{bahl1974optimal}. Another important milestone in coding was the introduction of concatenated codes by Forney \cite{forney1966concatenated}, which involve a serial cascade of  two linear block codes, usually denoted as inner and outer code. The main advantage of concatenated codes is that the inner and outer codes can be short and easy to decode. Hence, it is possible to decode concatenated codes using so called 2 stage decoders (decoding first the inner and then the outer coder). This decoder is suboptimal but it still shows a very good performance. In fact, the serial concatenation of \ac{RS} and convolutional codes developed by NASA \cite{ccsds:bluebookarticle}, and inspired in Forney's concatenated codes, was for many years one of the best performing coding schemes known and was widely used in practice.

The second paradigm change came with turbo codes, introduced in 1993 \cite{berrou1996near}. Thanks to iterative soft decoding algorithms both turbo and \ac{LDPC} codes were able to approach the Shannon limit in \ac{AWGN} channels with a modest complexity. \ac{LDPC} codes had been proposed and studied by Gallager in his doctoral thesis in 1963 \cite{Gallager63} but later they had been largely forgotten because their potential for long block lengths was not recognised. Shortly after the introduction of turbo codes, \ac{LDPC} codes were rediscovered in \cite{MacKayldpc}, \fran{where it was observed that their performance was better than that of convolutional and concatenated codes, and similar to that of turbo codes}. Nowadays, the majority of practical wireless communication systems use turbo or \ac{LDPC} codes since these codes allow to close largely the gap to capacity in most cases.

In the meantime digital communications have become ubiquitous and channel coding problems are no longer exclusive to the physical layer of communications systems. In this thesis we deal exclusively with erasure channels which are generally not typical from the physical layer. The \ac{BEC} was introduced by Elias in \cite{Elias55:2noisy}. In this channel the transmitters sends one bit (either zero or one) and the receiver either receives this bit error free or receives an \emph{erasure}. The \ac{BEC} was originally regarded as a purely theoretical channel. However, this changed with the emergence of the Internet. It was soon realized that erasure channels are a very good abstraction model for the transmission of data over the Internet, where packets get lost due to, for example, buffer overflows at intermediate routers. Erasure channels also find applications in wireless and satellite channels where deep fading events can cause the loss of one or several packets.

Reliable communication in data networks can be achieved by using an \ac{ARQ} mechanism in which the receiver requests the retransmissions of the information they have not been able to decode successfully.  However, \ac{ARQ} mechanisms present some limitations. The first is that they rely intensively on feedback. The second limitation enters into play in a reliable multicasting application, where one single transmitter wants to send an object (a file) to a set of receivers. In this scenario different receivers suffer different losses. If the number of receivers is large, the transmitter needs to process a large number of feedback messages and it also needs to perform a large number of retransmissions. For such applications, one would desire to have an alternative to \ac{ARQ} that does not rely so much on feedback and whose efficiency scales better with the number of users.

Probably, one of the first works proposing erasure coding as an alternative to \ac{ARQ} mechanisms is \cite{Metzer84:retransmission}, where an algorithm is proposed for the transmission of a file to multiple receivers. Instead of retransmitting lost packets, the transmitter sends redundancy packets until all receivers acknowledge the reception of the file. In that work \acl{RS} codes and linear random codes were considered, which become impractical due to their complexity  for medium-large block lengths\fran{, i.e., for block lengths exceeding the few thousands.}

Tornado codes were proposed for transmission over erasure channels \cite{luby97:PracticalLossRes, luby2001efficient}. Tornado codes have linear encoding and decoding complexity (under iterative decoding). However, the encoding and decoding complexity is proportional to their block lengths and not their dimension, \cite{luby97:PracticalLossRes}. Hence, they are not suitable for low rate applications such as reliable multicasting, where the transmitter needs to adapt its code rate to the user with the worst channel (highest erasure probability). Another family of codes with good performance over erasure channels are \ac{LDPC} codes. Several works have considered \ac{LDPC} codes over erasure channels \cite{oswald2002capacity,miller04:bec,paolini2012maximum} and they have been proved to be practical in several scenarios even under \acf{ML} decoding. \fran{For example, in \cite{paolini2012maximum} a decoding speed of up to 1.5 Gbps was reported for a $(2048,1024)$ \ac{LDPC} using \ac{ML} decoding.} However, \fran{for a fixed code dimension, the decoding complexity of \ac{LDPC} codes increases with the block length. Thus, as the erasure rate of the channel increases one is forced to increase the block length (i.e., decrease the rate), and the decoding complexity increases}.

Although solutions based on linear block codes usually outperform \ac{ARQ} mechanisms in the reliable multicasting setting, they still present some limitations. The first limitation is that the rate, and hence the block length, needs to be fixed a-priori. In the chosen rate turns out not to be low enough, it can happen that some users are unable to recover the original file. Furthermore, block codes usually need to be carefully designed taking  into account the information and block lengths. Thus, if one decides to change these parameters one usually needs to carry out a new code design.

The concept of a digital fountain was introduced in \cite{byers98:fountain} as an ideal solution to the problem of distributing data to a large number of users. A fountain code is basically an erasure code that is  able to generate a potentially endless amount of encoded symbols.  As such, fountain codes find application in contexts where the channel erasure rate is not known a priori. The first class of practical fountain codes, \ac{LT} codes, was introduced in \cite{luby02:LT}. \ac{LT} codes admit a sparse graph representation and can be decoded efficiently by means of iterative decoding when the code dimension (or number of input symbols, usually denoted by $k$) is large.  The main drawback of \ac{LT} codes is that in order to have a low probability of unsuccessful decoding, the encoding and iterative decoding cost per output/input\footnote{The encoding cost is defined as the encoding complexity in terms of operations normalized by the number of output symbols and the decoding cost as the decoding complexity normalized by the number of input symbols.} symbol has to grow at least logarithmically with the dimension of the code, $k$. Thus, \ac{LT} codes have a scalability problem. On the one hand we need the number of input symbols $k$ to be very large so that iterative decoding succeeds with high probability. On the other hand, by making $k$ large the encoding and iterative decoding cost increase.

Raptor codes were introduced in \cite{shokrollahi2001raptor} and published in \cite{shokrollahi04:raptor},\cite{shokrollahi06:raptor} as an evolution of \ac{LT} codes. They were also independently proposed in \cite{maymounkov2002online}, where they are referred to as online codes. Raptor codes consist of a serial concatenation of an outer block code, commonly referred to as precode, with an inner \ac{LT} code. The basic idea behind Raptor codes is relaxing the \ac{LT} code design, thus, requiring only the recovery of a fraction $1-\gamma$ of the input symbols, where $\gamma$ is usually small. This can be achieved with linear complexity, both in encoding and (iterative) decoding. The outer code is responsible for recovering the remaining fraction of input symbols, $\gamma$. If the precode is linear-time encodable, then the Raptor code has linear encoding complexity on the number of input symbols $k$, and therefore the overall encoding cost per output symbol is constant with respect to the number of input symbols $k$. If iterative decoding is used and the outer code can be decoded iteratively with linear complexity (in the number of input symbols $k$), the decoding complexity is also linear which results in a constant decoding cost per symbol. Furthermore, in \cite{shokrollahi06:raptor} it was shown that Raptor codes under iterative decoding are universally capacity-achieving on the binary erasure channel. This means that  a Raptor code can achieve the capacity of \emph{all} \acp{BEC}, no matter which value the erasure probability takes.
Thus, they can be used for transmission over an erasure channel whose erasure probability is unknown and they are still guaranteed to achieve capacity.

Both \ac{LT} and Raptor codes have been analyzed in depth under the assumption of iterative decoding and very large input blocks (at least in the order of a few tens of thousands symbols). However, often much smaller input block lengths are used due to different reasons.  For example, the decoders have sometimes limited memory resources allocated, the files to be delivered are often of smaller size, and sometimes a short latency is desired. This leads to the need of efficient short fountain codes. \fran{This is the reason why, for the Raptor codes standardized in 3GPP Multimedia Broadcast Multicast Service (MBMS) and IETF it is recommend to use between $1024$ and $8192$ input symbols (see \cite{MBMS12:raptor} and \cite{luby2007rfc} for more details)}. For these input block lengths, the performance under iterative decoding degrades considerably. In fact, these codes are decoded using an efficient \ac{ML} decoding algorithm known as inactivation decoding \cite{shokrollahi2005systems}.

The focus of this doctoral thesis is on the analysis and design of fountain codes under \ac{ML} decoding inspired by practical applications. Major parts of the results in this dissertation have been published in \cite{lazaro:ITW, lazaro:scc2015, lazaro:Allerton2015, lazaro2011concatenation, lazaro2013parallel, lazaro:ISIT2015, lazaro:JSAC, lazaro:Globecom2016, garrammone2013fragmentation}.

The remaining of this thesis is organized as follows. Chapter~\ref{chap:prelim} provides some preliminaries on erasure channels, block codes and fountain codes. The two main classes of fountain codes, \ac{LT} and Raptor codes are introduced in Chapter~\ref{chap:basics}.
In Chapter~\ref{chap:LT} \ac{LT} codes under inactivation decoding are considered. The main contribution of this chapter is an analysis of the decoding complexity of \ac{LT} codes under inactivation decoding using a dynamical programming approach.
Chapter~\ref{chap:Raptor} focuses on Raptor codes under inactivation decoding. First, an upper bound on the probability of decoding failure of Raptor codes under \ac{ML} decoding is presented. Then, a heuristic analysis of inactivation decoding is presented that provides an approximation of the number of inactivations.
Chapter~\ref{chap:Raptor_fixed_rate} contains several results related to the distance spectrum of an ensemble of fixed-rate Raptor codes.
In Chapter~\ref{chap:parallel} a novel fountain coding scheme is presented that consists of a parallel concatenation of a linear block code with a \acf{LRFC}. This scheme is particularly interesting when the outer code is a \acf{MDS} code.
Some concluding remarks are presented in Chapter~\ref{chap:conclusions}.
Appendix~\ref{app:practical} contains a comparison of the performance of the different inactivation techniques used in practice.
Finally, Appendix~\ref{app:proofs} contains some proofs that were omitted from Chapters~\ref{chap:Raptor} and \ref{chap:Raptor_fixed_rate}.


\chapter{Background} \label{chap:prelim}
\ifpdf
    \graphicspath{{Chapter1/Chapter1Figs/PNG/}{Chapter1/Chapter1Figs/PDF/}{Chapter1/Chapter1Figs/}}
\else
    \graphicspath{{Chapter1/Chapter1Figs/EPS/}{Chapter1/Chapter1Figs/}}
\fi
\epigraph{Everything should be made as simple as possible, but not simpler.}{Albert Einstein}

In this chapter we briefly introduce the communication channels that are considered in this thesis. Concretely, we present three different channels, the \acf*{BEC},  the \acf*{QEC} and the packet erasure channel. We then present some basic concepts related to block codes and fountain codes. Finally, the notation used in the thesis is described.

\section{Channel Models}

\subsection{The Memoryless Binary Erasure Channel}
The memoryless \acf*{BEC} \cite{Elias55:2noisy} is a communication channel with a binary input alphabet $\inputalpha = \{0,1\}$ and a ternary output alphabet  $\outputalpha = \{0,1, E\}$, as depicted in Figure~\ref{fig:BEC}. The symbol ``$E$'' denotes an erasure. Let $\inputrv \in \inputalpha$ be the random variable associated to the input of the channel
and $\outputrv \in \outputalpha$ be the random variable associated with the output of the channel. The transition probabilities of the channel are:
\begin{align*}
&\Pr(\outputrv = y | \inputrv = x) = 1 - \erasprob ,  \qquad \text {if } y = x, \\
&\Pr(\outputrv = E | \inputrv = x) = \erasprob.
\end{align*}
When the symbols ``$0$'' or ``$1$'' are received there is no uncertainty about the symbol transmitted. However, when symbol ``$E$'' is received the receiver does not know which symbol was transmitted.
\begin{figure}[t]
\begin{center}
\includegraphics[width=0.6\textwidth]{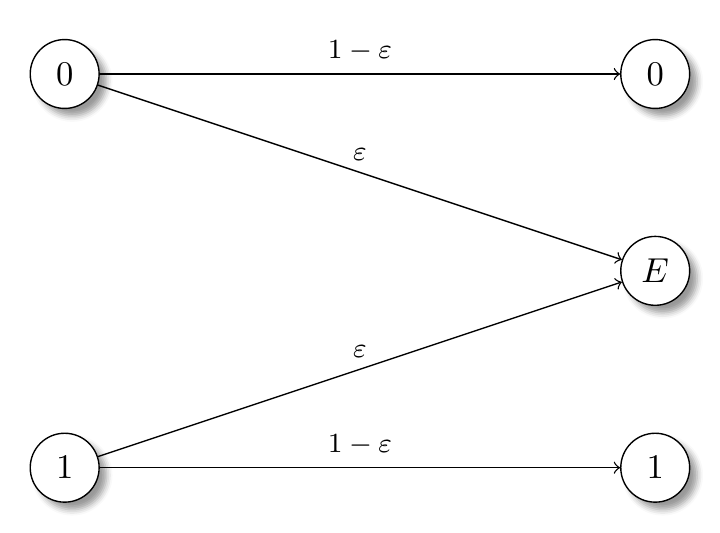}
\centering \caption[Binary erasure channel model]{The \acf*{BEC}.}
\label{fig:BEC}
\end{center}
\end{figure}

The capacity of the \ac{BEC} is
\[
C = 1 - \erasprob \,\,\,[\text{bits}/\text{channel use}],
\]
and it is attained with $\inputrv$ uniformly distributed.

\subsection{The $q$-ary Erasure Channel}
The \ac{QEC} is a communication channel with a q-ary input alphabet $\inputalpha = \{0,1,...,q-1\}$ and an output alphabet of cardinality $q+1$, ${\outputalpha = \{0,1,...,q-1, E\}}$, as depicted in Figure~\ref{fig:QEC}. Again, symbol ``$E$'' denotes an erasure. Let $\inputrv \in \inputalpha$ be the random variable associated to the input of the channel and $\outputrv \in \outputalpha$ be the random variable associated to the output of the channel. The transition probabilities of the channel are:
\begin{align*}
&\Pr(\outputrv = y | \inputrv = x) = 1 - \erasprob ,  \qquad \text {if } y = x, \\
&\Pr(\outputrv = E | \inputrv = x) = \erasprob.
\end{align*}
\begin{figure}[t]
\begin{center}
\includegraphics[width=0.6\textwidth]{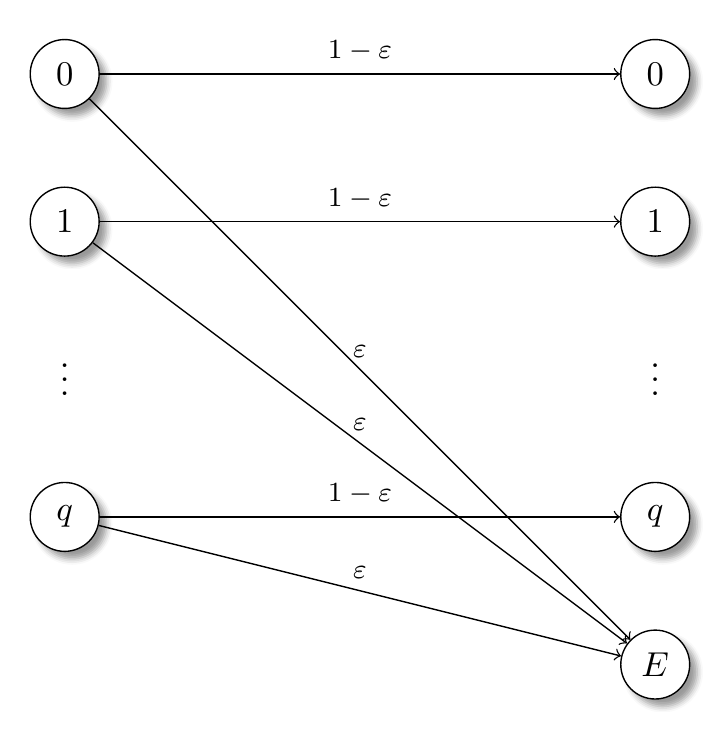}
\centering \caption[\acl{QEC} model]{The \acf*{QEC}.}
\label{fig:QEC}
\end{center}
\end{figure}
\noindent
The capacity of the \ac{QEC} is
\[
C = 1 - \erasprob \,\,\,[\text{symbols}/\text{channel use}],
\]
and
\[
C_b = \log_2(q) \, C \,\,\,[\text{bits}/\text{channel use}].
\]
The capacity is attained with $\inputrv$ uniformly distributed in $\inputalpha$.
\subsection{The Packet Erasure Channel}

The packet erasure channel is a communication channel in which the input is a packet, that is, an array of $\len$ symbols belonging to the alphabet $\{0,1\}$, i.e. $\inputalpha = \{0,1\}^\len$. Similarly to the \ac{BEC} and \ac{QEC}, in the packet erasure channel at the output the input is received error free with probability $1-\erasprob$, and an erasure is received with probability $\erasprob$.

The packet erasure channel can be seen as $\len$ parallel, fully correlated \acp{BEC} \cite{massey:81}. Thus, the capacity of the packet erasure channel is
\[
C = 1 - \erasprob \,\,\,[\text{packets}/\text{channel use}] ,
\]
and
\[
C_b = \len C \,\,\,[\text{bits}/\text{channel use}].
\]
Furthermore, all coding methods and performance bounds from the \ac{BEC} can be applied to the packet erasure channel with slight modifications.

The packet erasure channel has a great practical importance. For example, let us consider a satellite or terrestrial communication link.
 The data to be transmitted is usually split into packets and each of these packets is transmitted using a channel code at the physical layer. At the receiver side, channel decoding is performed at the physical layer in order to correct the errors introduced by the (physical) channel. After channel decoding  some residual errors might still be present. At this stage error detection is carried out and packets containing errors are marked as erased (discarded). It is easy to see how, \fran{under the assumption of perfect error detection}, the upper layers can abstract the behavior of the lower layers as a packet erasure channel.

The packet erasure channel can also be used to abstract the behavior of a computer data network such as the Internet. In this case, generally, the packets need to be forwarded through different intermediate nodes before reaching their destination. In this case, packet losses can occur due to, for example, a buffer overflow in some intermediate node. Additionally, during transmission bit errors can occur. Protocols (i.e. IP protocol) usually add a \ac{CRC} to each packet, that is used to detect and discard erroneous packets. All in all, the behavior of the data network can be abstracted by the upper layers as a packet erasure channel between the encoder and decoder.

Figure~\ref{fig:packet_eras} shows the block diagram of a typical digital communication system that makes use of erasure coding in a single link communication. At upper layers, a packet erasure channel encoder is used which accepts at its input $k$ source packets and generates $n$ output packets. Before transmission, each frame is protected by an erasure code.  At the receiver side channel decoding is performed at the physical layer in order to correct the errors introduced by the (physical layer) channel. After channel decoding  some residual errors might be present. At this stage error detection is carried out and packets containing errors are marked as erased (discarded). Next, this packets are passed on to the packet erasure channel decoder which then recovers the $k$ original source packets.

\begin{figure}[!ht]
\begin{center}
\includegraphics[width=1\textwidth]{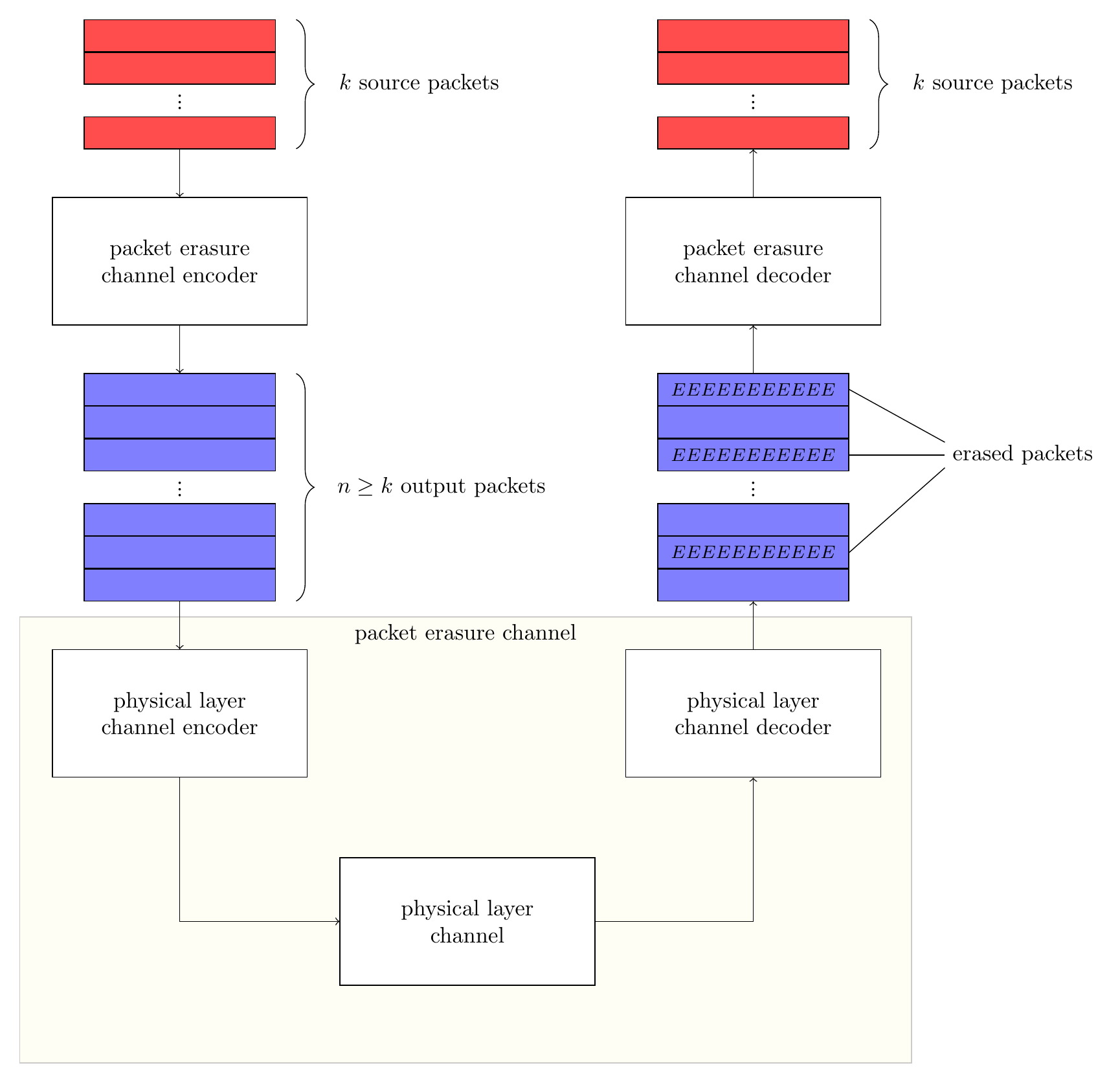}
\centering \caption[Packet erasure coding diagram]{Simplified diagram of a communication system that makes use of packet erasure coding.}
\label{fig:packet_eras}
\end{center}
\end{figure}

Due to  the easy mapping of the packet erasure channel to the \ac{BEC} and \ac{QEC}, for ease of exposition all the results in this thesis will be stated in the \ac{BEC}/\ac{QEC} setting, being the extension to the packet erasure channel straightforward. This approach is quite widespread in the recent literature of coding for erasure channels.

\section{Block Codes: Basics and Performance Bounds}

Consider the transmission over the \ac{BEC} with a $(n,k)$ binary linear block code $\code$. It is possible to show that the block error probability, $P_B$ satisfies the following inequality
\[
P_B (\code) \geq P^{(\mathsf S)}_{B},
\]
where $P^{(\mathsf S)}_{B}$ is the Singleton bound \cite{singleton1964maximum},
\begin{align}
P^{(\mathsf S)}_{B}(n,k,\erasprob)= \sum_{e=n-k+1}^n {n \choose e} \erasprob^e (1-\erasprob)^{n-e}.
\label{eq:bound_S}
\end{align}
In this bound, equality is achieved only if $\code$ is a $(n,k)$ \acf*{MDS} code, i.e., if the code minimum distance is:
\[
\dmin = n-k+1.
\]

Berlekamp derived an upper bound on the \emph{average} block error probability of random binary linear block codes \cite{berlekamp:bound}
\begin{align}\label{eq:bound_Berlekamp}
{P^{(\mathsf B)}_{B}}&{= \sum_{e=n-k+1}^n {n \choose e} \erasprob^e (1-\erasprob)^{n-e}} \nonumber \\
&{+\sum_{e=1}^{n-k} {n \choose e} \erasprob^e (1-\erasprob)^{n-e} \,2^{-(n-k-e)}}.
\end{align}
If we compare \eqref{eq:bound_S} and \eqref{eq:bound_Berlekamp} we can see how the Berlekamp bound is composed of the Singleton bound plus a correction term.

Let us denote by $\code^*$ the best code among all $(n,k)$ binary linear block codes, where by best we mean the one with the minimum block error probability over a \ac{BEC}. We have that:
\[
P^{(\mathsf S)}_{B} \leq  P_B (\code^*) < P^{(\mathsf B)}_{B}.
\]
That is, the Singleton and the Berlekamp bounds provide lower and upper bounds to the block error probability of the best binary linear block code with parameters $(n,k)$.

The block error probability of a linear block code not only depends on its minimum distance, $\dmin$, but also on its weight enumerator, $\wesingle_w$, that corresponds to the number of codewords of Hamming weight $w$. Unfortunately, \fran{when dealing with modern (turbo/ \ac{LDPC}) codes,} deriving the exact weight enumerator of a code is a very challenging problem \cite{berlekamp78:intractability}. For this reason it is convenient to work with code ensembles since it is usually easier to derive average results for the ensemble.

A code ensemble $\codensemble$ is a set of codes $\codensemble = \{ \code_1, \code_2, \hdots, \code_m \} $ together with a probability distribution that gives the probability of the occurrence of each of the codes in the ensemble. We will illustrate the concept of code ensemble by means of an example.
\begin{example}
The $(n,k)$ binary linear random ensemble  is given by all possible codes obtained by generating at random a $(n-k) \times n$ parity check matrix $\mathbf{H}$ in which each element of the parity check matrix takes value one with probability 1/2. This ensemble contains all $(n,k')$ codes with $k'\leq k$, \fran{since the rank of $\mathbf{H}$ can be smaller than $n-k$}.
\end{example}
Let us consider a binary linear block code ensemble $\codensemble = \{ \code_1, \code_2, \hdots, \code_m \} $. The ensemble average weight enumerator $\we_w$ is defined as
\[
\we_w =\Exp_{\code \in \codensemble} \left[ \wesingle_w (\code)\right],
\]
where $\Exp_{\code \in \codensemble}[\cdot]$ denotes expectation over all the codes $\code$ in the ensemble $\codensemble$, and $\wesingle_w (\code)$ is the weight enumerator of code $\code$.

Consider a binary linear block code ensemble $\codensemble$ with average weight enumerator $\we_w$.  The average block error probability for codes in the ensemble,
$P_B (\codensemble)$, can be upper bounded as \cite{CDi2001:Finite}
\begin{align}\label{eq:bound_Gavg}
P_B (\codensemble) &= \Exp_{\code \in \codensemble}  \left[P_B(\code)\right]\leq
P^{(\mathsf S)}_{B}(n,k,\erasprob) \nonumber \\ & + \sum_{e=1}^{n-k} {n \choose e} \erasprob^e (1-\erasprob)^{n-e} \min \left\{1, \sum_{w=1}^e {e \choose w} \frac{\A _w}{{n \choose w}}\right\}.
\end{align}

\section{Fountain Codes: Basics and Performance Bounds}\label{sec:fountain_bounds}
\fran{Consider a fountain code $\codef$ of dimension $k$. The fountain encoder }receives at its input $k$ input symbols (also called source symbols) out of which it generates $n$ output symbols (also called coded symbols).
The key property of a fountain code is that the number of output symbols $n$ does not need to be fixed a-priori. Additional output symbols can be generated \emph{on the fly} in an on-demand fashion. For this reason, fountain codes are said to be rateless.

We consider the transmission over an erasure channel with a fountain code with $k$ input symbols. In this setting, the output symbols generated by the fountain encoder are transmitted through an erasure channel where they are erased with probability $\erasprob$. We denote by $m$ the number of output symbols that are not erased by the channel at a given receiver. We define the absolute (receiver) overhead as:
\[
\absoverhead := m-k.
\]
We also define the relative overhead $\reloverhead$ as the absolute overhead normalized by the number of input symbols, formally:
\[
\reloverhead := \frac{\absoverhead}{k} = m/k -1.
\]

Given the fact that fountain codes are rateless ($n$ not fixed) it is useful to define the performance bounds of fountain codes in terms of the absolute receiver overhead. More concretely, we are interested in bounds to the probability of decoding failure as a function of the absolute receiver overhead, $\Pf(\absoverhead)$.

A lower bound to the performance of fountain codes is obtained assuming an \emph{ideal} fountain code that allows the receiver to decode successfully whenever $m\geq k$ output symbols are received, \fran{i.e., whenever $\absoverhead \geq 0$}. The performance on an ideal fountain code is, hence, given by:
\[
\Pf^{\text{I}}(\absoverhead) =
\begin{cases}
1 & \absoverhead < 0 \\
0 & \absoverhead \geq 0
\end{cases} .
\]
Thus, for any given fountain code $\codef$ its decoding failure probability can be lower bounded as
\[
\Pf(\codef, \absoverhead) \geq \Pf^{\text{I}} (\absoverhead)
\]

Let us consider a \acf{LRFC}\footnote{\acfp{LRFC} are defined in Section~\ref{sec:lfrc}.} on a finite field of order $q$. In \cite{Liva2013} it was shown how the probability of decoding failure of an \ac{LRFC} can be upper bounded as
\begin{equation}
\Pf(\absoverhead) < \frac{1}{q-1} q^{-\absoverhead}, \qquad \absoverhead \geq 0
\label{eq:lrfc_upper}
\end{equation}

Let us now denote by $\codef^*$ the best code among all $q$-ary fountain codes with $k$ input symbols, where by best we mean the one with the minimum block error probability over a \ac{QEC}. We have that:
\[
\Pf^{\text{I}} (\absoverhead) \leq \Pf(\codef^*, \absoverhead) < \frac{1}{q-1} q^{-\absoverhead}, \qquad \absoverhead \geq 0
\]
That is, the performance of an ideal fountain code and the bound in \eqref{eq:lrfc_upper} provide lower and upper bounds to the probability of decoding failure of the best $q$-ary fountain code with $k$ input symbols, \fran{when used to transmit over a $q$-ary erasure channel.}

\section{ Notation}


\fran{In this section we introduce several definitions which will be used throughout the thesis.
}

\begin{mydef}[$\mathcal{O}$-notation]Let $f$ and $g$ be two real functions. We write:
\[
f(n) = \mathcal{O} \left( g(n) \right), \, \text{ as } n \rightarrow \infty,
\]
if for \fran{sufficiently large values of} $n$, there exists a constant $s$ so that
\[
|f(n)|  \leq s |g(n)|.
\]
\end{mydef}
For example, if a function $f$ is $\mathcal{O}(\log(n))$, given $n$, we can find a value $s$ such that $f$ is upper bounded by $s \log(n) $ for sufficiently large $n$.
This notation is also known as Landau notation and it is employed to characterize the behaviour of a function when its argument tends to infinity \cite{Graham:1994}.

Another useful asymptotic notation is the small $\mathrm o$-notation whose formal definition is introduced next.
\begin{mydef}[$\mathrm o$-notation]
Let $f$ and $g$ be two real functions. We write:
\[
f(n) = \mathrm o \left( g(n) \right), \, \text{ as } n \rightarrow \infty,
\]
if and only if for any constant $s>0$ and sufficiently large $n$
\[
|f(n)|  \leq s |g(n)|.
\]
\end{mydef}

\fran{Note that although the definitions of $\mathcal{O}$-notation and $\mathrm o$-notation are similar, they are not equivalent. For example, consider $f(n) = n^2$. We can say that $n^2$ is $\mathcal{O}(n^2)$ but this would not be true for little $\mathrm o$-notation.}

\begin{mydef}[Exponential equivalence]
Two real-valued positive  sequences $a(n)$ and $b(n)$ are said to be exponentially equivalent \cite{CoverThomasBook}, writing $a(n)~\doteq~b(n)$, when
\begin{equation}\label{eq:asymp_eq}
\lim_{n \to \infty} \frac{1}{n} \log_2 \frac{a(n)}{b(n)}=0.
\end{equation}
\end{mydef}
If $a(n)$ and $b(n)$ are exponentially equivalent, then \begin{align}
\lim_{n \to \infty} \frac{1}{n} \log_2 a(n) = \lim_{n \to \infty} \frac{1}{n}  \log_2 b(n).
\end{align}

%




\chapter{Linear Random Fountain Codes, \ac{LT} and Raptor Codes }\label{chap:basics}
\ifpdf
    \graphicspath{{Chapter2/Chapter2Figs/PNG/}{Chapter2/Chapter2Figs/PDF/}{Chapter2/Chapter2Figs/}}
\else
    \graphicspath{{Chapter2/Chapter2Figs/EPS/}{Chapter2/Chapter2Figs/}}
\fi

Within this chapter we present  three fountain code constructions that can be found in literature. First we introduce \acfp*{LRFC}, which are probably the conceptually simplest fountain code one can think of. We then introduce \ac{LT} codes, and describe their encoding and decoding procedures. Finally, we introduce Raptor codes, which are arguably the best performing fountain coding scheme known.

\section{Linear Random Fountain Codes}\label{sec:lfrc}

\fran{For the sake of completeness, let us start by formally defining a Galois Field
\begin{mydef}[Galois Field]
We denote by $\mathbb {F}_q$ a Galois field or finite field of order $q$. A Galois Field $\mathbb {F}_q$ is a set of $q$ elements on which the addition and multiplication operations fulfil the following properties:
\begin{enumerate}[label=(\alph*)]
  \item $\mathbb {F}_q$ is an Abelian group under addition with identity element denoted by $0$.
  \item $\mathbb {F}_q \backslash \{0\}$ is a multiplicative group with identity element denoted by $1$.
  \item multiplication is distributive over addition
\end{enumerate}
\end{mydef}
}

\fran{A $q$-ary \acf{LRFC} is a fountain code that accepts at its input a set of $k$ input (or source) symbols, ${\mathbf{v}=(v_1,~v_2,~\ldots, v_k)}$, where $v_i \in \mathbb {F}_q$. At its output, the \acl{LRFC} encoder can generate an unlimited amount of output symbols (also known as coded symbols)  $\mathbf{\osymb}=(\osymb_1, \osymb_2, \ldots, \osymb_n)$, where $n$ can grow indefinitely and $\osymb_i \in \mathbb {F}_q$.}  The $i$-th output symbol $\osymb_i$ is generated as:
\[
\osymb_i = \sum_{j=1}^k g_{j,i} v_j,
\]
where the coefficients $g_{j,i}$ are picked from $\mathbb {F}_q$ with uniform probability.
\fran{If we assume $n$ to be  fixed, \ac{LRFC} encoding can be seen as a vector matrix multiplication:
\[
\mathbf{\osymb} = \mathbf{v} \GLTa,
\]
where $\GLTa$ is an $k \times n$ with elements $g_{j,i}$ picked uniformly at random from $\mathbb {F}_q$.}

\fran{Let us now assume that the output symbols produced by the \ac{LRFC} encoder are transmitted over a $q$-ary erasure channel, and let us also assume that out of the $n$ output symbols generated by the \ac{LRFC} encoder, the receiver collects $m=k+\absoverhead$, denoted by $\mathbf{\rosymb}=(\rosymb_1, \rosymb_2, \ldots, \rosymb_m)$.  Denoting by $\mathcal{I} = \{i_1, i_2, \hdots, i_m \}$ the set of indices corresponding to the $m$ non-erased symbols, we have
\[
\rosymb_j = \osymb_{i_j}.
\]
We can now cast the received output symbols as
\begin{align}
\mathbf{\rosymb} = \mathbf{v} \Grx
\label{eq:lrfc_ml_eq_sys}
\end{align}
with $\Grx$ given by the $m$ columns of $\G$ with indices in $\mathcal{I}$.
}

\fran{\ac{LRFC} decoding is performed by solving the system of equations in \eqref{eq:lrfc_ml_eq_sys}. Note that matrix $\Grx$ is dense, since its elements are picked uniformly at random in $\mathbb {F}_q$. Due to the high density of $\Grx$ \ac{LRFC} decoding is quite complex; hence, \acp{LRFC} are only practical for small values of $k$ (at most in the order of the hundreds).}

The performance of these codes is remarkably good and follows a relatively simple model.
Under \ac{ML} decoding,  the decoding failure probability of a binary \ac{LRFC} {\cite{shokrollahi06:raptor,Medard08:ARQ}} can be accurately modeled as $\Pf\sim 2^{-\absoverhead}$ for $\absoverhead \geq 0$. Actually, $\Pf$ can be upper bounded by $2^{-\absoverhead}$ {\cite{berlekamp:bound,shokrollahi06:raptor,Medard08:ARQ}}.

In \cite{Liva10:fountain}, \ac{LRFC} on finite fields of order {equal or larger} than $2$ ($\mathbb {F}_q$, $q\geq2$) were analyzed. It was shown that for an \ac{LRFC} over $\mathbb {F}_q$, the failure probability under \ac{ML} decoding is bounded as \cite{Liva10:fountain}
\begin{align}\label{eq:tightbounds}
q^{-\absoverhead-1}\leq \Pf (\absoverhead,q) < \frac{1}{q-1}q^{-\absoverhead}
\end{align}
where both bounds are tight already for $q=2$, and become tighter for increasing $q$.

\section{LT codes}\label{sec:LT_intro}
\acf{LT} codes were introduced in \cite{luby02:LT} as the first practical implementation of a fountain code. They were originally introduced together with an iterative decoding algorithm that will be explained in detail in Section~\ref{chap:LT_iterative}.

An \ac{LT} code accepts at its input a set of $k$ symbols, ${\mathbf{v}=(v_1,~v_2,~\ldots, v_k)}$, that are commonly referred to as input symbols (or source) symbols. At its output, the \ac{LT} encoder can generate an unlimited amount of output symbols (also known as coded symbols)  $\mathbf{\osymb}=(\osymb_1, \osymb_2, \ldots, \osymb_n)$, where $n$ can grow indefinitely.
\fran{A key concept when dealing with \ac{LT} codes is the degree of an output symbol or output degree, which is defined as the number of input symbols that were used to generate the output symbol under consideration.}
An \ac{LT} code is defined by an output degree distribution ${\Omega= (\Omega_1,\Omega_2,\Omega_3,\hdots,\Omega_{d_{\mathrm{max}}})}$, where $\Omega_d$ corresponds to the probability that an output symbol of degree $d$ is generated, and  $d_{\mathrm{max}}$ is the maximum output degree.

In order to generate one output symbol the \ac{LT} encoder performs the following steps:
\begin{itemize}
\item {Randomly choose a degree $d$ according to the degree distribution $\Omega$.}
\item {Choose uniformly at random $d$ distinct input symbols.}
\item {Compute the output symbol as a xor of the $d$ selected input symbols.}
\end{itemize}

If we assume for a moment that the number of output symbols $n$ is fixed, the \ac{LT} encoding operation can be seen as a vector matrix multiplication:
\[
\mathbf{\osymb} = \mathbf{v} \GLTa,
\]
where $\GLTa$ is an $k \times n$ binary\footnote{Unless otherwise stated we will always consider binary \ac{LT} codes.} matrix which defines the relation between the input and the output symbols. The element $g_{i,j}$ of $\GLTa$ is set to one only if input symbol $i$ was used to generate output symbol $j$. Otherwise, element $g_{i,j}$ is set to zero. \fran{From this description it is easy to see how binary \acp{LRFC} can be considered a particular type of \ac{LT} code in which the output degree distribution corresponds to a binomial distribution with parameters $k$ and $1/2$.}

\ac{LT} codes admit a bipartite graph representation. In the bipartite graph of an \ac{LT} code there are two different types of nodes, corresponding to input and output symbols. Let us introduce the notation $\deg(\osymb)$ to refer to the degree of an output symbol $\osymb$. An output symbol of degree $d$ will have $d$ neighbors in the bipartite graph.  We will use the notation $\neighbourof{\cdot}$ to denote the set of neighbours, i.e. the  neighbourhood of a node.

The bipartite graph of an \ac{LT} code is related to its matrix representation, and can be derived from $\GLTa$. We will illustrate this by means of an example. Figure~\ref{fig:ltgraph} shows the bipartite graph representation of an \ac{LT} code with $k= 5$ input symbols and $n=8$ output symbols. In the figure,
input symbol are represented by red circles and output symbol using blue squares. The generator matrix of the \ac{LT} code represented in the figure corresponds to
\[
\GLTa=
\begin{pmatrix}
  1 & 1 & 1 & 1 & 0 & 0 & 0 & 0 &\\
  0 & 1 & 1 & 0 & 0 & 0 & 0 & 0 &\\
  0 & 1 & 1 & 0 & 0 & 0 & 0 & 1 &\\
  1 & 0 & 0 & 1 & 1 & 1 & 1 & 0 &\\
  0 & 1 & 0 & 0 & 0 & 0 & 1 & 1 &\\
\end{pmatrix}.
\]

\begin{figure}
\begin{center}
\includegraphics[width=1\textwidth]{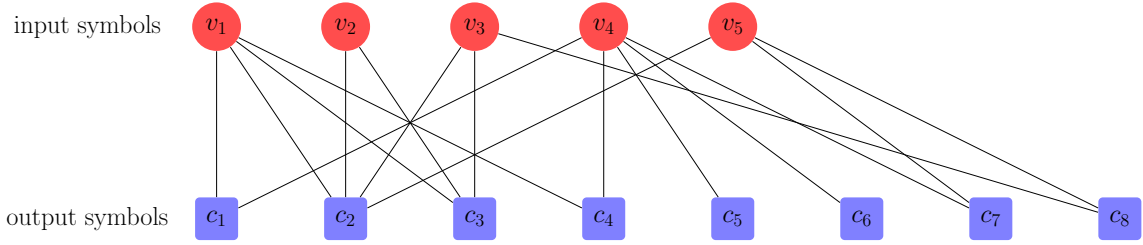}
\centering \caption[Bipartite graph of an \acs{LT} code]{Bipartite graph of an \ac{LT} code.}
\label{fig:ltgraph}
\end{center}
\end{figure}

An important parameter of an \ac{LT} code is its average output degree $\bar \Omega$, that is given by
\[
\bar \Omega := \sum_{i=1}^{\dmax} i \Omega_i.
\]
In \ac{LT} literature, degree distributions are commonly represented in polynomial form. Given a degree distribution $\Omega$, its polynomial representation $\Omega(\x)$ is given by
\[
\Omega(\x) :=  \sum_{i=1}^{\dmax}  \Omega_i \x^i.
\]
This representation can be used to derive moments of the degree distribution (that is a probability mass function) in a very compact form. For example, the average output degree can be expressed as \fran{the first derivative of $\Omega(\x)$ evaluated at $1$,}
\[
\bar \Omega = \Omega'(1).
\]

\subsection{Iterative Decoding}\label{chap:LT_iterative}

\ac{LT} codes were introduced in \cite{luby02:LT} together with a suboptimal, low complexity decoding algorithm.  Although a more proper name for it would be that of peeling decoder, this decoder is usually referred to as iterative decoder. In this thesis we will use the terms iterative decoding and peeling decoder interchangeably.

Iterative decoding of \ac{LT} codes is best described using a bipartite graph. Let us assume that the receiver has collected $m=k+\absoverhead$ output symbols that we will denote by $\mathbf{\rosymb}=(\rosymb_1, \rosymb_2, \ldots, \rosymb_m)$. We will consider a bipartite graph containing the $m$ collected output symbols, $\mathbf{\rosymb}$, and the $k$ input symbols $\mathbf{v}$.

\begin{algo}[Iterative decoding]
{~}
\begin{enumerate}
 \item Search for an output symbol of degree one.
 \begin{enumerate}
    \item If such an output symbol $\rosymb$ exists move to step 2.
    \item If no output symbols of degree one exist, iterative decoding exits and decoding fails.
 \end{enumerate}
 \item Output symbol $\rosymb$ has degree one. Thus, denoting its only neighbour as $v$, the value of $v$ is recovered by setting $v= \rosymb$.
 \item Denoting by $\neighbourof{v}$ the set of neighbours of $v$. For each $\rosymb \in \neighbourof{v}$:
  \begin{enumerate}
  \item Update the value of $\rosymb$ as:  ${\rosymb = \rosymb  + v}$, \fran{where $+$ denotes addition over $\mathbb {F}_2$}.
  \item Remove input symbol $v$ and all its attached edges from the graph.
 \end{enumerate}
\item If all $k$ input symbols have been recovered, decoding is successful and iterative decoding ends. Otherwise, go to step 1.
\end{enumerate}
\label{alg:BP}
\end{algo}

In order to illustrate iterative decoding we will provide a small example. Figure~\ref{fig:lt_iter_step_0} shows the bipartite graph before iterative decoding starts. We can see that the number of source symbols is $k=4$ and the number of output symbols collected by the receiver (not erased by the channel) is $n=5$.

\begin{figure}[h!]
\begin{center}
\includegraphics[width=0.81\textwidth]{ltgraph_iterative_0}
\centering \caption{Iterative decoding example, step 0.}
\label{fig:lt_iter_step_0}
\end{center}
\end{figure}
\FloatBarrier

Iterative decoding starts by searching for a degree one output symbol. In Figure~\ref{fig:lt_iter_step_1} we can see that output symbol  $\rosymb_3$ is the only output symbol with degree one. Using $\rosymb_3$ the decoder recovers $v_2$. Afterwards, the decoder performs the xor (addition over $\mathbb {F}_2$) of $v_2$ with all its neighbors. After doing so all edges attached to $v_2$ are erased.

\begin{figure}[!ht]
\begin{center}
\includegraphics[width=0.9\textwidth]{ltgraph_iterative_1}
\centering \caption{Iterative decoding example, step 1.}
\label{fig:lt_iter_step_1}
\end{center}
\end{figure}
\FloatBarrier

The second run of iterative decoding is shown in Figure~\ref{fig:lt_iter_step_2}. The decoder finds the only degree one output symbol $\rosymb_1$, and uses it to recover $v_1$. Next, the decoder performs the xor (addition over $\mathbb {F}_2$) of $v_1$ with its other neighbor, $\rosymb_4$, and erases the edges attached to $v_1$.

\begin{figure}[!ht]
\begin{center}
\includegraphics[width=0.81\textwidth]{ltgraph_iterative_2}
\centering \caption{Iterative decoding example, step 2.}
\label{fig:lt_iter_step_2}
\end{center}
\end{figure}
\FloatBarrier

Figure~\ref{fig:lt_iter_step_3} depicts the third iteration. We can see how the only degree one output symbol $\rosymb_4$ is used to solve $v_4$. Then the decoder performs the xor of $v_4$ to its other neighbor, $\rosymb_2$ and the edges are removed from the graph.

\begin{figure}[!ht]
\begin{center}
\hspace{7mm}
\includegraphics[width=0.87\textwidth]{ltgraph_iterative_3}
\centering \caption{Iterative decoding example, step 3.}
\label{fig:lt_iter_step_3}
\end{center}
\end{figure}
\FloatBarrier

Finally, the last iteration is shown in Figure~\ref{fig:lt_iter_step_4}. Now there are two degree one output symbols, $\rosymb_2$ and $\rosymb_5$. In this case we assume the decoder chooses at random  $\rosymb_2$ to recover the last input symbol $v_3$.
\begin{figure}[!ht]
\begin{center}
\includegraphics[width=0.81\textwidth]{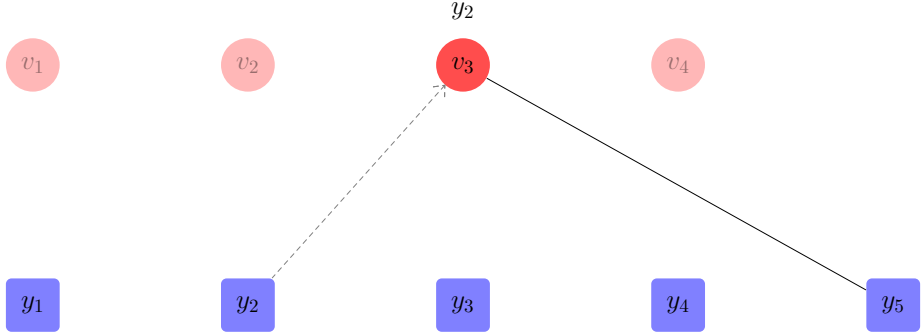}
\centering \caption{Iterative decoding example, step 4.}
\label{fig:lt_iter_step_4}
\end{center}
\end{figure}
\FloatBarrier

The following proposition (\cite{luby02:LT}) provides a necessary condition for decoding to be successful with high probability.
\begin{prop}\label{prop:lt_logk}
A necessary condition for decoding to be successful with high probability is $\bar \Omega=\mathcal{O} \left(\log (k)\right)$.
\end{prop}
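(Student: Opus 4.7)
\medskip

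\noindent\textbf{Proof plan.} The strategy is a coverage argument: I show that if $\bar\Omega$ grows too slowly, then with high probability at least one input symbol is not a neighbour of any of the $m$ collected output symbols, in which case \emph{no} decoder (iterative or otherwise) can recover it. Since the statement is about a necessary condition, it suffices to assume that the number of received symbols satisfies $m = k(1+\reloverhead)$ with $\reloverhead$ bounded (otherwise the rateless code is not useful); the goal is then to derive a lower bound on $\bar\Omega$ in terms of $k$.

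The first step is to compute the probability that a fixed input symbol $v_i$ is \emph{uncovered} by a single output symbol. Conditional on an output symbol having degree $d$, its $d$ neighbours are chosen uniformly among the $k$ input symbols, so $v_i$ is missed with probability $(k-d)/k = 1-d/k$. Averaging over the degree distribution $\Omega$ gives the per-symbol miss probability
\[
\Pr[v_i \notin \neighbourof{\osymb}] \;=\; \sum_{d} \Omega_d\,(1 - d/k) \;=\; 1 - \bar\Omega/k.
\]
Because the $m$ output symbols are generated independently, the probability that $v_i$ is uncovered by the whole batch is $(1-\bar\Omega/k)^m$, and by linearity of expectation the expected number $U$ of uncovered input symbols is $\mathbb{E}[U] = k\,(1-\bar\Omega/k)^m$.

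The second step is to show that $\mathbb{E}[U]$ blows up when $\bar\Omega$ grows more slowly than $\log k$. Using $\log(1-x)\ge -x/(1-x)$ for $x\in[0,1)$, one gets $\mathbb{E}[U]\ge k\exp\!\bigl(-\tfrac{m\bar\Omega/k}{1-\bar\Omega/k}\bigr)$, which for $\bar\Omega = o(k)$ behaves as $k\exp(-m\bar\Omega/k)$. If $\bar\Omega$ is not at least of order $\log k$, i.e.\ $\bar\Omega = \mathrm{o}(\log k)$, and $m = \mathcal{O}(k)$, then $\log k - m\bar\Omega/k \to +\infty$, so $\mathbb{E}[U] \to \infty$. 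To promote this to a probabilistic failure statement I would invoke a Poisson/Chen--Stein approximation (or equivalently a second-moment computation) for the indicator variables $\mathbf{1}\{v_i \text{ uncovered}\}$, obtaining $\Pr[U=0] \to 0$, so decoding fails with probability approaching one. Finally, since an uncovered input symbol cannot be recovered by any decoding rule, the contrapositive yields the claim: successful decoding with high probability forces $\bar\Omega$ to grow at least as $\log k$.

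The main obstacle is the last step, namely turning the divergence of $\mathbb{E}[U]$ into $\Pr[U\ge 1]\to 1$: the events $\{v_i \text{ uncovered}\}$ for different $i$ are not independent, since each output symbol couples its $d$ neighbours. However, the dependence is weak (each output symbol uses on average only $\bar\Omega \ll k$ of the $k$ input symbols), so either a direct computation of $\operatorname{Var}(U)$ combined with Chebyshev's inequality, or a Poisson paradigm bound, should cleanly close the argument.
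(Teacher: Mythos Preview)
Your approach is essentially the same coverage (``balls into bins'') argument used in the paper. The paper computes the probability that a fixed input symbol is missed by all $m$ output symbols, multiplies by $k$ to obtain the expected number of uncovered input symbols $\phi = k\,e^{-\bar\Omega(1+\reloverhead)}$, and concludes that keeping $\phi$ small forces $\bar\Omega \approx \log(k)/(1+\reloverhead)$.

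Two minor differences are worth noting. First, the paper simplifies by allowing neighbours to be chosen \emph{with} replacement, obtaining $((k-1)/k)^d$ per symbol and then invoking the $k\to\infty$ limit; your computation without replacement gives the exact per-symbol miss probability $1-\bar\Omega/k$ directly, which is cleaner. Second, the paper stops at the level of the expectation $\phi$ and does not address the passage from $\mathbb{E}[U]\to\infty$ to $\Pr[U\ge 1]\to 1$; your proposed second-moment/Chebyshev (or Chen--Stein) step is the natural way to make that rigorous, and the paper simply omits it. So your plan is correct and, if anything, slightly more careful than the paper's own argument.
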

\begin{proof}
The proof uses the ``balls into bins'' argument that was presented in \cite{luby02:LT}.
Let us first assume that $k$ and $m$ are very large and let us assume that at encoding each output symbol chooses its neighbors with replacement\footnote{This means that an output symbol will be allowed to choose multiple times the same neighbor. However, this will happen with a negligible probability for large enough values of $k$.}.
Let us consider a randomly chosen input symbol $v$ and an output symbol $\rosymb$ of degree $d$. The probability that $v$ is not in the neighborhood of $\rosymb$ corresponds to:
\[
\Pr \{ v \notin \neighbourof{y} | \deg(y) = d \} = \left( \frac{k-1}{k} \right)^d.
\]
Let us denote by $P_e$ the probability that $v$ does not have any edges to the $m$ received symbols. This probability corresponds to the probability of $v$ not belonging to the union of the neighborhoods of the $m$ received output symbols. Under the replacement assumption we have that
\[
P_e = \Pr \left\{ v \notin \bigcup_{i=1}^{m} \neighbourof{y_i} \mathlarger{\mathlarger{|}} \sum_{i=1}^m \deg(y_i)  \right\}  =  \left( \frac{k-1}{k} \right)^{\sum_{i=1}^m \deg(y_i)} 
\]
\fran{If we now let $k$ tend to infinity, we have
\[
\lim_{k \to \infty } P_e =  e^{-\bar \Omega (1+\reloverhead) }
\]
where we have made use of the relationship
\[
\lim_{k \to \infty } \left( \frac{k-1}{k} \right)^k = e^{-1}.
\]}
Let us denote by $\nuncovered$ the expected number of input symbols not covered by any output symbol,
\[
\nuncovered= k e^{-\bar \Omega (1+\reloverhead) }.
\]
A necessary condition for successful decoding with high probability is that the $\nuncovered$ is vanishingly small. If we relax this condition and let $\nuncovered$ simply be a small positive number, we have
\[
\bar \Omega= \frac{ \log( k / \nuncovered) } {1+\reloverhead}.
\]
This leads us to the statement in the proposition.
\end{proof}
Note that the condition in Proposition~\ref{prop:lt_logk} is valid for any decoding algorithm and not only for iterative decoding.

The performance of \ac{LT} codes under iterative decoding has been object of study in several works and is well understood, \cite{Karp2004, maneva2006new, Maatouk:2012, shokrollahi2009theoryraptor}.
\fran{Iterative decoding of \ac{LT} codes can be seen as an iterative pruning of the bipartite graph of the \ac{LT} code. If we take an instance of decoding in which iterative decoding is successful, we have that initially  all input symbols are unresolved (not yet decoded). At every iteration exactly one input symbol is resolved and all edges attached to the resolved input symbol are erased from the graph. Decoding continues until all input symbols are resolved, which is the case after $k$ iterations. Let us consider the iterative decoder at some intermediate step in which $u$ input symbols are yet unresolved and $k-u$ symbols have already been resolved. Following \cite{Karp2004} we shall introduce some definitions that provide an insight into the iterative decoding process.}
\begin{mydef}[Reduced degree] We define the reduced degree of an output symbol as the degree of the output symbol in a reduced bipartite graph in which only unresolved input symbols are present. \label{def:reduced}
\end{mydef}
Thus, at the initial stage of iterative decoding, when all input symbols are unresolved, the reduced degree of a symbol is equal to its actual degree. However, as iterative decoding progresses the reduced degree of an output symbol decreases if his neighbors get resolved.
\begin{mydef}[Output ripple] We define the output ripple or simply ripple as the set of output symbols of reduced degree 1 and we denote it by $\rippleset$. \label{def:ripple}
\end{mydef}
\begin{mydef}[Cloud] We define the cloud as the set of output symbols of reduced degree $d\geq 2$ and we denote it by $\cloudset$.\label{def:cloud}
\end{mydef}
Figure~\ref{fig:example_cloud_ripple} shows the bipartite graph of an \ac{LT} code in which $4$ input symbols are unresolved. It can be observed how output symbols $\rosymb_1$ and $\rosymb_4$ belong to the ripple since they have reduced degree one and output symbols $\rosymb_2$ and $\rosymb_3$ belong to the cloud since their degree is 2 or larger.
\begin{figure}[t]
\begin{center}
\includegraphics[width=0.6\columnwidth]{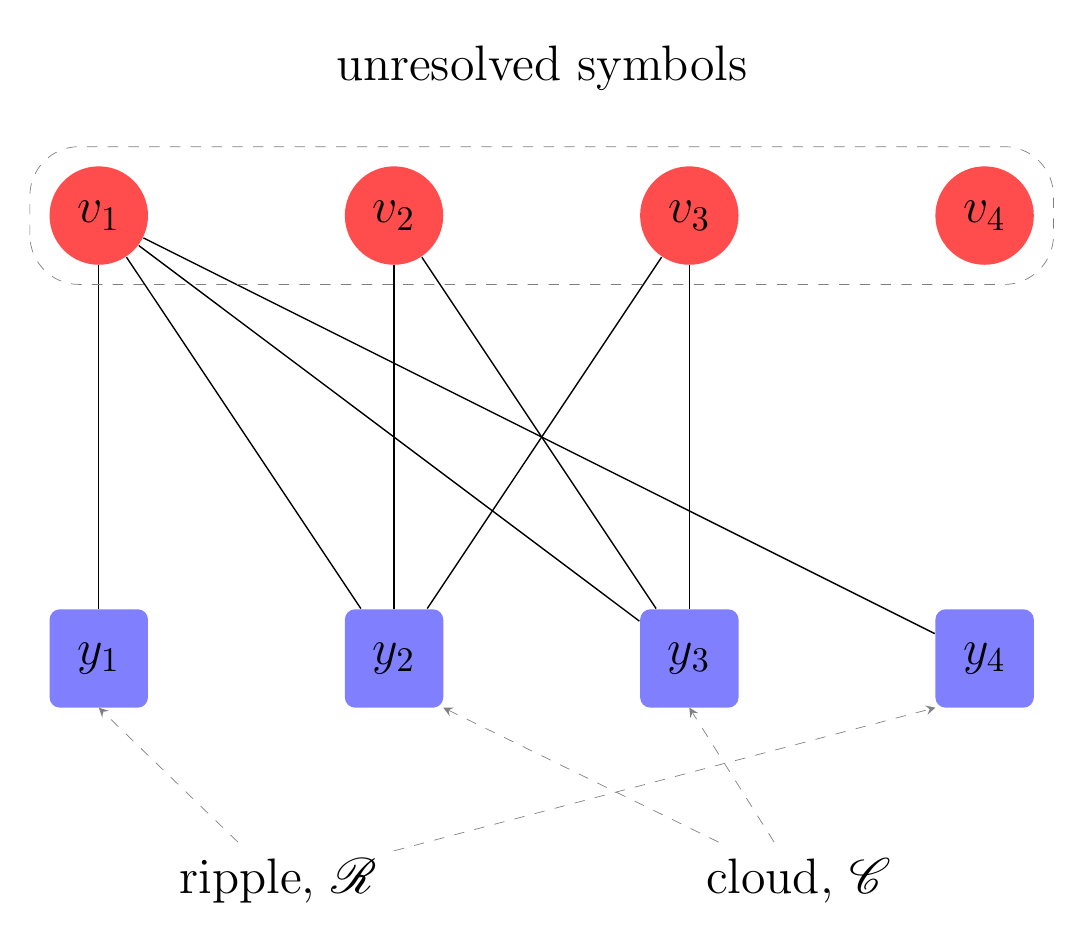}
\centering \caption{Example of ripple and cloud in the bipartite graph of an \ac{LT} code.\label{fig:example_cloud_ripple}}
\end{center}
\end{figure}

It is easy to see how during the iterative decoding process, after every iteration at least one symbol leaves the ripple (assuming decoding is successful). Moreover, at each iteration some output symbols might leave the cloud and enter the ripple if their reduced degree decreases from two to one. Note also that iterative decoding fails if the ripple becomes empty before $k$ iterations. Thus, if one is able to track the size of ripple it is possible to derive the performance of \ac{LT} codes under iterative decoding.  In \cite{Karp2004} a finite length analysis of \ac{LT} codes is proposed that models the iterative decoder as a finite state machine, based on a dynamic programming approach. The full proof of the analysis in \cite{Karp2004}, that was published only in abstract form, can be found in \cite{shokrollahi2009theoryraptor}. This analysis can be used to derive the error probability of the iterative decoder and it also allows to compute the first order moments of the ripple and the cloud. This analysis was extended in \cite{Maatouk:2012}, where the second moment of the ripple size was analyzed. In \cite{maneva2006new} another analysis of \ac{LT} codes under iterative decoding is proposed that has lower complexity and is based on the assumption that the number of output symbol collected by the receiver follows a Poisson distribution.

\subsubsection{Degree Distributions}

In this section we present the two best well known degree distributions, the ideal soliton distribution and the robust soliton distribution. Both distributions were designed for iterative decoding.

\vspace{0.8cm}
\textbf{Ideal Soliton Distribution}
\newline
The first distribution we will present is known as ideal soliton distribution \cite{luby02:LT} and is based on these two design principles:
\begin{itemize}
\item{The expected number of output symbols in the ripple at the start of iterative decoding is one.}
\item{The expected number of output symbols leaving the cloud and entering the ripple is one at every iteration.}
\end{itemize}
Thus, the expected ripple size is $1$ during the whole decoding process.
The ideal soliton distribution, which  we denote by $\isd$, has the following expression.
\begin{align}
\isd_d=
\begin{cases}
\frac{1}{k} & d=1 \\
\frac{1}{d \left( d-1\right)} & 1 < d \leq k.
\end{cases}
\end{align}
\fran{Note that the distribution varies with the number of input symbols $k$.} The average output degree of the ideal soliton distribution is \cite{luby02:LT}
\[
\avgisd = H(k)
\]
where $H(k)$ is the harmonic sum up to $k$:
\[
H(k) = \sum_{i=1}^{k} \frac{1}{i}.
\]
Since, the harmonic sum can be approximated as ${H(k) \approx \log(k)}$, we can approximate the average output degree of  $\isd$ as
\[
\avgisd \approx \log (k),
\]

For illustration we provide a plot of the ideal soliton distribution for $k= 50$ in Figure~\ref{fig:isd}.
\begin{figure}[t]
\begin{center}
\includegraphics[width=\figwbigger]{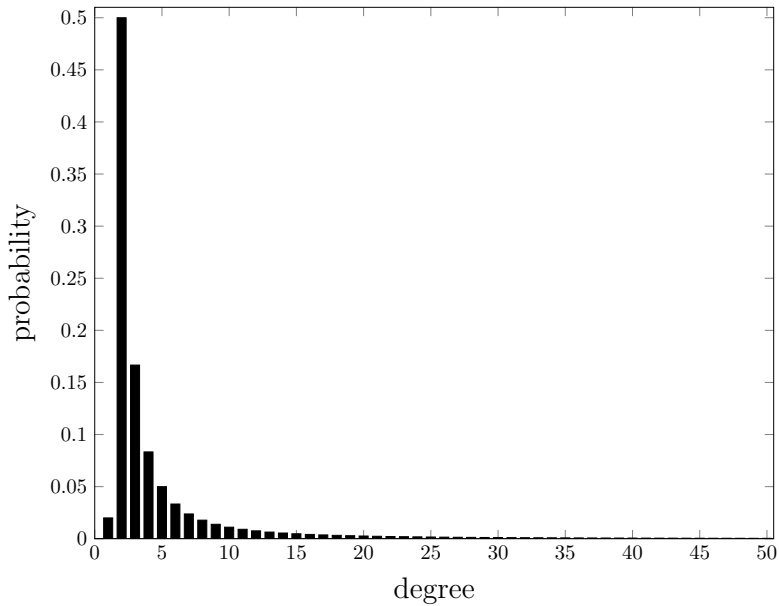}
\centering \caption[Ideal soliton distribution example, $k=50$]{Ideal soliton distribution, $\isd$, for $k=50$.}
\label{fig:isd}
\end{center}
\end{figure}

In practice the ideal soliton distribution does not show a good performance. The reason behind this poor performance is that its design only takes into account the \emph{expected} value of symbols entering the ripple. In practice, however, there are statistical variations in the iterative decoding process that make the ideal soliton distribution fail with  high probability.

Let us denote the probability of decoding failure by $\Pf$. A lower bound to $\Pf$ is the probability that the decoding cannot start at all because the ripple is empty (no degree one output symbols), we shall denote this probability by $\Pflowisd$. This probability corresponds to
\[
\Pflowisd= \binom{m}{0} {\isd_1}^0 \left( 1-\isd_1 \right)^{m} = \left( 1-\frac{1}{k} \right)^{k (1+\reloverhead)}.
\]
If we now let $k$ (and $m$) tend to infinity keeping the relative receiver overhead $\reloverhead$ constant, this expression simplifies to:
\[
\lim_{k \rightarrow \infty } \Pflowisd= e^{-(1+\reloverhead)}.
\]
This implies the probability of decoding failure is in practice very high, since one usually wants to operate at low $\reloverhead$ (the overhead should ideally be small).

\vspace{0.8cm}
\textbf{Robust Soliton Distribution}\label{sec:rsd}
\newline
The robust soliton distribution was introduced in the original \ac{LT} paper from Luby, \cite{luby02:LT}. This distribution is an improvement of the ideal soliton. In fact, the design goal of the robust soliton distribution is ensuring that the expected ripple size is large enough at each point of the decoding \fran{with high probability}. This ensures that iterative decoding does not get stuck in the middle of the decoding process.

The robust soliton distribution is actually a family of parametric distributions that depend on two parameters $\parone$ and $\partwo$. Let $R=\partwo \log (k/\parone) \sqrt{k}$. The robust soliton distribution is obtained as:
\begin{equation}
\rsd_d = \frac{\isd_d + \tau_d}{\beta}, \label{eq:dist_rsd}
\end{equation}
where $\tau_d$ and $\beta$ are given by

\[\tau_d=
\begin{cases}
\frac{R}{d~k}                &  1 \leq d \leq \frac{k}{R-1} \\
R ~\log \left( R / \parone \right)/k & d=\frac{k}{R-1} \\
0                   &   d > \frac{k}{R-1}
\end{cases},
\]
and
\[
\beta = \sum_{d=i}^k \isd_d + \tau_d.
\]
Therefore, the robust soliton distribution is obtained as a mixture of the ideal soliton distribution with a correction term $\tau$ .  The average output degree for this distribution can be upper bounded by \cite{luby02:LT} :
\[
\avgrsd\leq H(k) + 1 + \log(R/\parone).
\]

\begin{figure}[t]
\begin{center}
\includegraphics[width=\figwbigger]{RSD}
\centering \caption[Robust soliton distribution example, $k=50$]{Robust soliton distribution, $\rsd$, for $k=50$, $\parone= 0.2$ and $\partwo=0.05$.}
\label{fig:rsd}
\end{center}
\end{figure}
\begin{figure}[t]
\begin{center}
\includegraphics[width=\figwbigger]{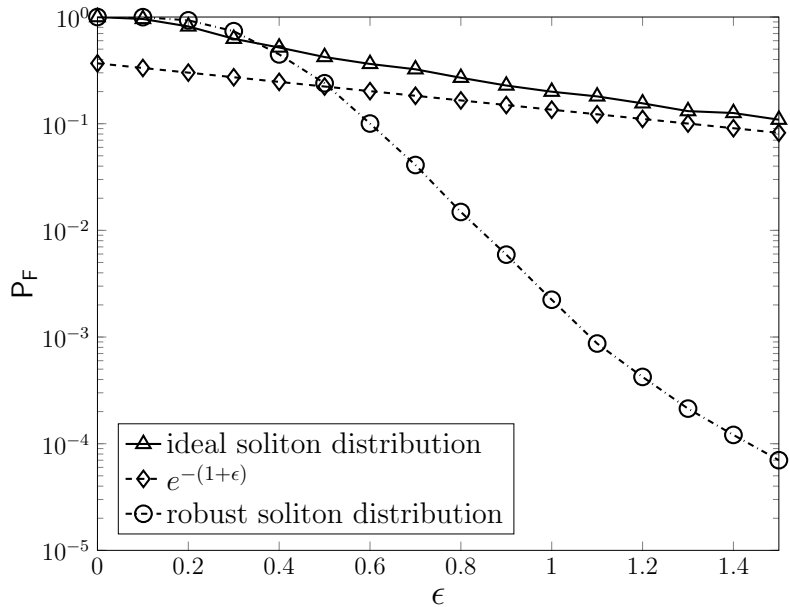}
\centering \caption[$\Pf$ vs.\ $\reloverhead$ for the ideal and robust soliton distribution, $k=100$]{Probability of decoding failure $\Pf$ vs.\ relative receiver overhead $\reloverhead$ for the ideal and robust soliton distribution with $\parone= 0.33$ and $\partwo=0.234$ for $k=100$.}
\label{fig:isd_vs_rsd}
\end{center}
\end{figure}

For illustration we provide a plot of a robust soliton distribution in Figure~\ref{fig:rsd}. We can observe how the probability of degree one output symbols is increased with respect to the ideal soliton distribution. Moreover, a \emph{spike} appears in the distribution at $d={k}/(R-1)$.

In Figure~\ref{fig:isd_vs_rsd} we provide a performance comparison for the ideal and robust soliton distribution for $k=100$. More concretely we show the probability of decoding failure under iterative decoding, $\Pf$, vs.\ the relative receiver overhead $\reloverhead$. It can be observed how the asymptotic lower bound to $\Pf$ for the ideal soliton distribution holds and is actually tight for high $\reloverhead$. Moreover, we can observe how the probability of decoding failure of the robust soliton distribution is much lower than that of the ideal soliton distribution.

\FloatBarrier
\subsection{Maximum Likelihood Decoding}

As we saw in Section~\ref{sec:LT_intro}, for fixed $n$ the relation between source symbols $\mathbf{v}$ and output symbols $\mathbf{\osymb}$ can be expressed by a system of linear equations:

\[
 \mathbf{\osymb} = \mathbf{v} \GLTa
\]
where we recall, that $\GLTa$ was the generator matrix of the fixed-rate \ac{LT} code. That is, under the assumption that the number of output symbols $n$ is fixed.

Let us assume that out of the $n$ output symbols generated by the \ac{LT} encoder the receiver collects $m=k+\absoverhead$, that we denote by $\mathbf{\rosymb}=(\rosymb_1, \rosymb_2, \ldots, \rosymb_m)$.  Denoting by $\mathcal{I} = \{i_1, i_2, \hdots, i_m \}$ the set of indices corresponding to the $m$ non-erased symbols, we have
\[
\rosymb_j = \osymb_{i_j}.
\]
 The dependence of the received output symbols on  the source symbols can be expressed as:
\begin{align}
\mathbf{\rosymb} = \mathbf{v} \Grx
\label{eq:ml_eq_sys}
\end{align}
with $\Grx$ given by the $m$ columns of $\G$ with indices in $\mathcal{I}$.


\ac{LT} decoding consists in finding the solution to the system of linear equations in \eqref{eq:ml_eq_sys}. The solution will be unique only if  $\Grx$ has full rank, that is, if its rank is $k$. If $\Grx$ is rank deficient the system of equations does not have a unique solution and the receiver is not able to recover all source symbols\footnote{In this thesis we focus on problems in which it is necessary to recover all source symbols, therefore, we declare a decoding failure whenever one or several source symbols cannot be recovered.}.

Iterative decoding is a suboptimal algorithm, it is not always able to find the solution when $\Grx$ has full rank.  For example, if $\Grx$ has full rank but does not have any row with Hamming weight one (degree one output symbol), iterative decoding is unable to find the solution.

A \acf{ML} decoding algorithm is an \emph{optimal} decoding algorithm, in the sense that it always finds the solution to the system of linear equations whenever $\Grx$ has full rank. Therefore the performance of any \ac{ML} decoding algorithm depends only on the rank properties of $\Grx$ and, more concretely, on the probability of $\Grx$ having full rank. In  \cite{schotsch:2013} the performance of  \ac{LT} codes under \ac{ML} decoding was studied and a lower bound to the probability of decoding failure $\Pf$ was derived:
\begin{align}
\Pflow  = \sum_{i=1}^k (-1)^{i+1} \binom{k}{i} \left( \sum_{d=1}^k \Omega_d \frac{\binom{k-i}{d}}{\binom{k}{d}}\right)^{k(1+\reloverhead)}.
\label{eq:low_bound_schotsch}
\end{align}
The lower bound is very tight for reception overhead slightly larger than $\reloverhead=0$.

In practice, different \ac{ML} decoding algorithms can be used to solve a system of equations and they all provide the same solution, that is unique when $\Grx$ is full rank. However, different \ac{ML} decoding algorithms have different decoding complexity, and some algorithms are more suitable than others for practical use.

\subsection{Complexity Considerations} \label{chap:LT_complex}

So far, the only performance metric we have dealt with is the probability of decoding failure. The other important metric when dealing with any  coding scheme is its complexity both in encoding and decoding. Let us define complexity as the total number of operations (xor or symbol copy) needed for encoding / decoding.
\fran{Since we consider binary \ac{LT} codes, we only perform xor operations, which correspond to additions over $\mathbb {F}_2$. Note that decoding also requires copying the content of output symbols into input symbols. For the sake of completeness, we shall also count symbol copy as one operation.}
Let us also define the encoding cost as the encoding complexity normalized by the number of output symbols and the decoding cost as the decoding complexity normalized by the number of input symbols.

\subsubsection{Encoding Complexity}
Let us first consider encoding complexity. Generating an output symbol of degree $d$ requires $d$ operations. Thus, given a degree distribution $\Omega$, the encoding cost will be given by the average output degree $\bar \Omega$.  In proposition~\ref{prop:lt_logk} we have shown how a necessary condition for decoding to be successful with high probability is ${\bar \Omega(k) = \mathcal{O} \left( \log(k) \right)}$. This implies that the encoding cost will need to be at least $\mathcal{O} \left( \log(k) \right)$.

\subsubsection{Iterative Decoding Complexity}

We consider now the complexity of \ac{LT} iterative decoding. Let us assume a generic degree distribution $\Omega$, with average output degree $\bar \Omega$ that requires a relative receiver overhead $\reloverhead^*$ for decoding to be successful with high probability. If we think of a bipartite representation of our \ac{LT} code, we can think of  encoding as drawing the edges in the graph, where every edge implies performing one operation (xor or symbol copy). Similarly, iterative decoding starts operating on a bipartite graph containing the $m=(1+\reloverhead^*) k$ received output symbols and $k$ input symbols. During iterative decoding edges are erased from the graph, being each edge again associated to one operation. At the end of iterative decoding all edges are erased from the graph\footnote{Actually, at the last iteration of iterative decoding some edges might still be present in the graph since we might have more than one output symbol in the ripple. We neglect this effect for the sake of simplicity.}. Thus the decoding cost under iterative decoding corresponds to ${(1+ \reloverhead^*) \bar \Omega}$.

In proposition~\ref{prop:lt_logk} we have shown how a necessary condition for decoding to be successful with high probability is ${\bar \Omega(k) = \mathcal{O} \left( \log(k) \right)}$. This implies that the iterative decoding cost will need to be at least $\mathcal{O} \left( \log(k) \right)$.

\subsubsection{Maximum Likelihood Decoding Complexity}
Many different \ac{ML} decoding algorithm exists that can be used to solve a linear system of equations. All \ac{ML} algorithms lead to the same solution, that in our case is unique when matrix $\Grx$ is full rank. The \ac{ML} decoding complexity will vary depending on which \ac{ML} decoding algorithm is used.

The best known algorithm is probably Gaussian elimination. This algorithm has a decoding complexity of $\mathcal{O} \left( k^3 \right)$ and is generally not practical for values of $k$ beyond the hundreds. The problem of solving systems of linear equations is a well known problem that appears not only in erasure correction. Several algorithms exist that have a lower (asymptotic) complexity than Gaussian elimination. For example, the Wiedemann algorithm \cite{wiedemann1986solving} can be used to solve sparse systems of linear equations with a complexity of $\mathcal{O} \left( k^2  \log^2(k) \right)$.  In \cite{lamacchia91:solving} different algorithms are studied to solve large systems of sparse linear equations over finite fields. In this work, the running times of different decoding algorithms are compared for systems of equations arising from integer factorization and the computation of discrete logarithms. The main finding of the paper is that if the system of equations is sparse, there exists a class of algorithms that in practice requires shorter running times than the Wiedemann algorithm when $k$ is below $10^5$. This class of algorithms is usually known as \emph{structured} or \emph{intelligent} Gaussian elimination. They consist of reducing the system of equations to a much smaller one than can be solved using other methods (Gaussian elimination, for example). Let us assume that Gaussian elimination is used to solve the reduced system of equations, and let us also assume that our intelligent Gaussian elimination algorithm is able to reduce the size of the system of equations from $k$ to $k/f$, where $f>1$. Since the complexity of Gaussian elimination is $\mathcal{O} \left( k^3 \right)$, for large enough $k$, the intelligent Gaussian elimination algorithm will reduce complexity at least by a factor $f^3$. Despite having a higher asymptotic complexity (the complexity is still ${\mathcal{O} \left( k^3 \right)}$ ) these algorithms \fran{have shorter running times} than other algorithms, such as the Wiedemann algorithm (provided that $f$ is large enough and $k$ not too large).

The \ac{ML} decoding algorithm used in practice for fountain codes is usually referred to as inactivation decoding. This algorithm belongs to the family of \emph{structured} or \emph{intelligent} Gaussian elimination algorithms and it will be explained in detail in Chapter~\ref{chap:LT_inact}.

\subsection{Systematic \acs{LT} Codes}

In practical applications it is desirable that fountain codes are systematic, that is, the first $k$ output symbols should correspond to the $k$ input symbols. Thus, if the quality of the transmission channel is good and no erasures occur, the receiver does not need to carry out decoding. A straightforward way of making a fountain code systematic is simply transmitting the first $k$ input symbols and afterwards start transmitting output symbols from the fountain code. We will refer to this construction as trivially systematic \ac{LT} code. This construction shows a poor performance since the receiver overhead needed to decode successfully increases substantially \cite{shokrollahi2003systematic}.

Figure~\ref{fig:RSD_systematic} shows the probability of decoding failure for a \acf{RSD} for $k=100$ with parameters with $\parone= 0.33$ and $\partwo=0.234$ under \ac{ML} decoding. In particular, two codes are considered, a standard \ac{LT} code and a trivially systematic \ac{LT} code  over a \ac{BEC} with erasure probability $\erasprob=0.1$. It can be observed the trivial systematic code performs much worse than the standard non systematic \ac{LT} code.

\begin{figure}[t]
        \begin{center}
        \includegraphics[width=\figwbigger]{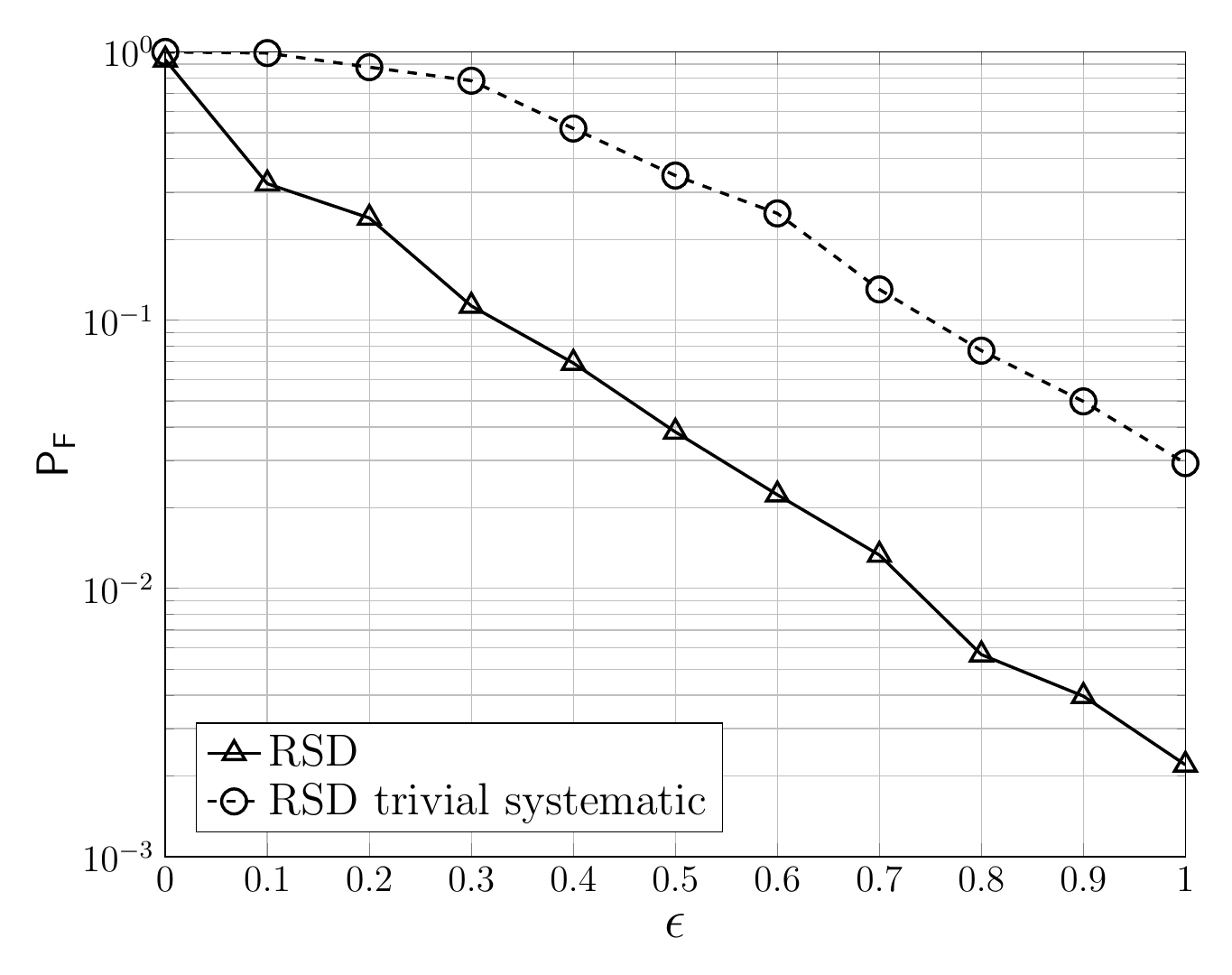}
        \centering
        \caption[$\Pf$ vs.\ $\reloverhead$ for a standard and a trivially systematic \acs{LT} under \acs{ML} decoding.]{$\Pf$ vs.\ $\reloverhead$ for a robust soliton distribution with $\parone= 0.33$ and $\partwo=0.234$ for $k=100$ under \ac{ML} decoding. The solid line with triangle markers represents the probability of failure of a standard \ac{LT} code. The dashed line with round markers represents the probability of failure of a trivially systematic \ac{LT} code over a \ac{BEC} with erasure probability $\erasprob=0.1$.}
        \label{fig:RSD_systematic}
        \end{center}
\end{figure}

The bad performance of trivially systematic \ac{LT} codes might seem surprising at first. The intuition behind this bad performance is the following. Assume that a substantial fraction of systematic symbols are received, for example, let us assume the decoder has received $\mu$ of the systematic symbols and that the remaining fraction $1-\mu$ have been erased. In order to be able to decode, the receiver will need to receive output symbols with neighbors within the yet unrecovered input symbols. Moreover, any output symbol having neighbors only within the received systematic symbols will be useless for decoding. Let us now assume that an output symbol of degree $d$ is received. The probability that all its neighbors are within the received systematic symbols is
\[
\frac{\binom{\mu k}{d}}{\binom{k}{d}}.
\]
\fran{Under the assumption that $k$ is large, $k \gg d$, and that output symbols choose their neighbours with replacement, a simplified expression for this probability can be obtained. Under these assumptions, we have that the probability that one of the neighbors of an output symbol is within the received systematic symbols is $\mu$. Hence, the probability that all $d$ neighbors are within the received systematic symbols is $\mu^d$.}
Thus, when the fraction of received systematic symbols $\mu$ is close to one, and $d$ is not too large, most of the received output symbols will not help at all in decoding. A more detailed analysis of this effect can be found in \cite{shokrollahi2009theoryraptor}.

In practice a different systematic construction is used that was patented in \cite{shokrollahi2003systematic} and that will be presented next.

Let us recall that (for fixed $n$) \ac{LT} encoding can be seen as a vector-matrix multiplication:
\[
\mathbf{\osymb} = \mathbf{v} \mathbf{G},
\]
where $\mathbf{v}$ is  the row vector of $k$ input (source) symbols, $\mathbf{\osymb}$ is the row vector of $n$ output symbols, and $\G$ is an $k \times n$ binary matrix which defines the relation between the input and the output symbols (generator matrix).
To construct a systematic \ac{LT} code we start with an \ac{LT} code with generator matrix in the shape
\[
\G =  \left[ \G_1 | \G_2\right],
\]
where $\G_1$ is a full-rank $k \times k$ matrix  that corresponds to the first $k$ output symbols and $\G_2$ is a $k \times (m-k)$ matrix.
First, one needs to compute the inverse matrix of $\G_1$, ${\G_1}^{-1}$.
The next step is computing:
\[
\mathbf{w} = \mathbf{v}   {\G_1}^{-1}.
\]
Vector $\mathbf{w}$ is then used as input to the \ac{LT} encoder. Thus, the output of the \ac{LT} encoder will be:
\[
\mathbf{\osymb} = \mathbf{w} \mathbf{G}= \mathbf{v}   {\G_1}^{-1} \left[ \G_1 | \G_2\right] = \mathbf{v}  \left[ \mathbf{I} | {\G_1}^{-1}  \G_2\right],
\]
where $\mathbf{I}$ is the $k \times k$ identity matrix. Hence, the first $k$ output symbols correspond to the input symbols  $\mathbf{v}$. For illustration Figure~\ref{fig:syst_LT} shows
a graph representation of a systematic \ac{LT} code.
\begin{figure}[t]
        \begin{center}
        \includegraphics[width=\textwidth]{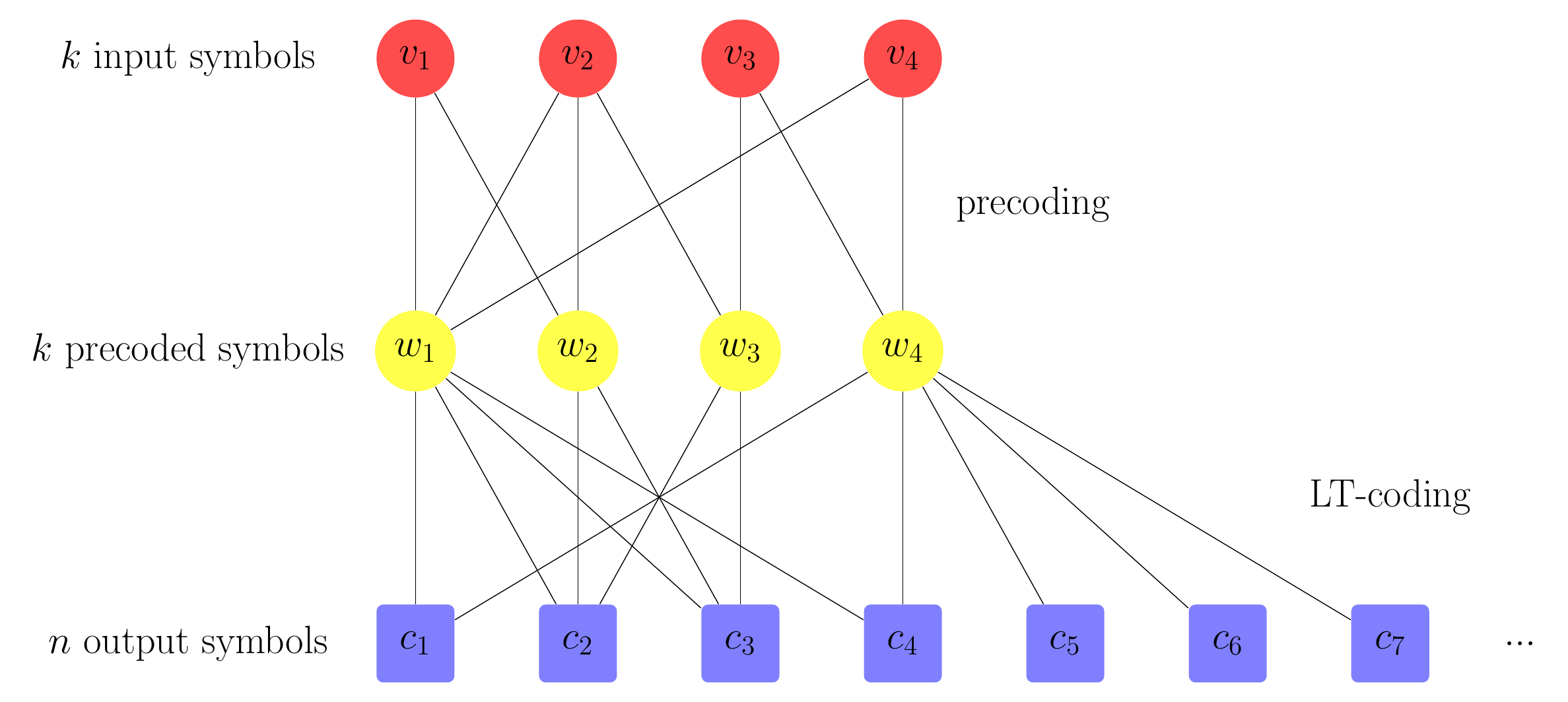}
        \centering \caption{Systematic \ac{LT} code.\label{fig:syst_LT}}
        \end{center}
\end{figure}

At the decoder side two different scenarios can be considered. In case none of the first $k$ output symbols of our systematic \ac{LT} code are erased, there is obviously no need to carry out decoding. In case some erasures do occur, decoding can be done in two steps. First, standard \ac{LT} decoding can be carried out to recover $\mathbf{w}$. This consists of solving the system of equations
\[
\mathbf{\rosymb} = \mathbf{w} \Grx,
\]
where $\Grx$ is a $k \times m$ matrix that corresponds to the $m$ columns of $\mathbf{G}$ associated to the output symbols that were not erased by the channel and $\mathbf{\rosymb}$ are the received output symbols. This system of equations can be solved in several ways, for example using iterative decoding or inactivation decoding. Finally, the input symbols can be recovered computing
\[
\vecv = \mathbf{w} \G_1.
\]

Note that this last step corresponds to LT encoding (since by construction  $\G_1$ is sparse, this last step is actually less complex than a standard vector matrix multiplication).

\fran{The main advantage of this construction is that its performance in terms of probability of decoding failure is similar to that of non-systematic \ac{LT} codes \cite{shokrollahi2011raptor}. However, this comes at some cost in decoding complexity, since an additional \ac{LT} encoding needs to be carried out at the decoder.}


\FloatBarrier
\section{Raptor Codes}

Raptor codes were originally patented in \cite{shokrollahi2001raptor} and published in \cite{shokrollahi04:raptor,shokrollahi06:raptor}. They were also independently proposed in \cite{maymounkov2002online}, where they are referred to as online codes. Raptor codes are an evolution of \ac{LT} codes. More concretely, Raptor codes are a serial concatenation of an outer (fixed-rate) block code $\code$ (usually called precode) with an inner \ac{LT} code.

At the input we have a vector of $k$ input (or source) symbols, ${\vecu=(\Raptorinput_1,~\Raptorinput_2,~\ldots, \Raptorinput_k)}$. Out of the input symbols, the outer code generates a vector of $h$ intermediate symbols ${\vecv=(\Rintermsymbol_1,~\Rintermsymbol_2,~\ldots, \Rintermsymbol_h)}$, \fran{where $h > k$}.
Denoting by $\Gp$ the employed generator matrix of the outer code, of dimension $(k \times h)$, the intermediate symbols can be expressed as
\[
\vecv = \vecu \Gp.
\]
By definition,  ${\vecv=(\Rintermsymbol_1,~\Rintermsymbol_2,~\ldots, \Rintermsymbol_h)}  \in \code$, i.e., the intermediate word is a codeword of the outer code $\code$.

The intermediate symbols serve as input to an \ac{LT} code that can generate an unlimited number of output symbols,
$\mathbf{\Rosymb}=(\Rosymb_1, \Rosymb_2, \ldots, \Rosymb_n)$, where $n$ can grow unbounded. Hence, Raptor codes inherit the rateless property of \ac{LT} codes. For any $n$ the output symbols can be expressed as
\begin{equation}
\mathbf{\Rosymb} = \vecv \GLT = \vecu \Gp \GLT
\end{equation}
where $\GLT$ is the generator matrix of the (fixed-rate) \ac{LT} code. Hence, $\GLT$ is an $(h \times n)$ binary matrix,  each column of $\GLT$ being associated with a received output symbol as seen in Section~\ref{sec:LT_intro}.

Figure~\ref{fig:Raptorgraph} shows a graph representation of a Raptor code, where the input symbols are represented as green diamond-shaped nodes, the intermediate symbols as red circular nodes and the output symbols as blue squared nodes.

\begin{figure}[t]
\begin{center}
\includegraphics[width=0.99\textwidth]{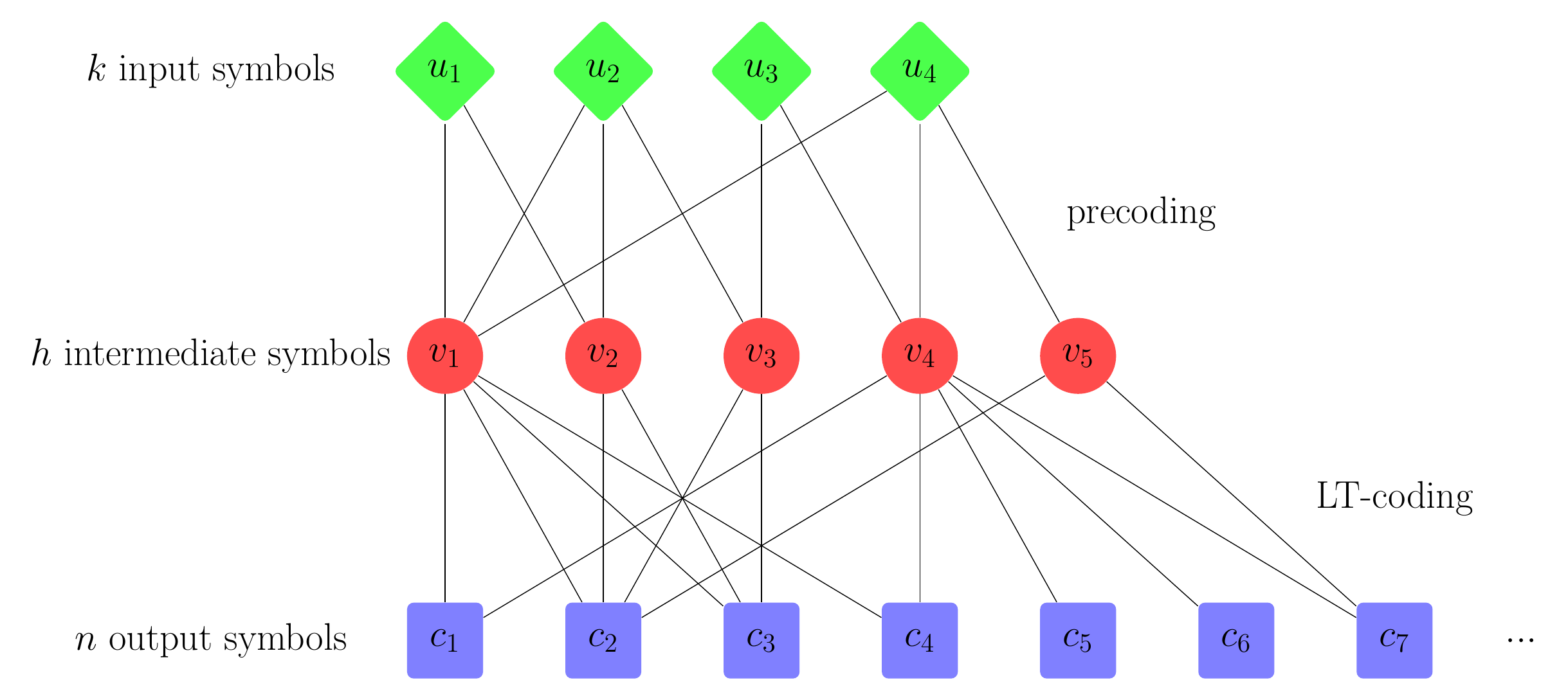}
\centering \caption{Graph representation of a Raptor code.}
\label{fig:Raptorgraph}
\end{center}
\end{figure}

The design principle of Raptor codes can be intuitively explained as follows. In Chapter~\ref{sec:LT_intro} we saw that a necessary condition for \ac{LT} codes to be successfully decoded with high probability is that the average output degree is $\mathcal{O} \left(\log (k)\right)$. This implies an encoding cost of $\mathcal{O} \left(\log (k)\right)$ and a decoding cost of $\mathcal{O} \left(\log (k)\right)$ as well (under iterative decoding).
The main idea behind Raptor codes is relaxing the requirements on the \ac{LT} code. Instead of requiring that the \ac{LT} recovers \emph{all} its input symbols, the inner \ac{LT} code of a Raptor code is only required to recover with high probability a constant fraction $1-\fracuncovered$ of the intermediate symbols. This can be achieved with a constant average output degree.
Let us assume that $k$ is large and that the receiver has collected $m$ output symbols. From the proof of Prop.~\ref{prop:lt_logk}, we have that for asymptotically large $h$ the fraction of intermediate symbols with no edges attached (uncovered) will correspond to:
\[
\fracuncovered= e^{-\bar \Omega \frac{m}{h} }.
\]
Let us assume that all the covered intermediate symbols can be recovered by the \ac{LT} code. The uncovered intermediate symbols can be considered as erasures by the outer code. If the outer code is an erasure correcting code that can recover with high probability from a fraction of $\fracuncovered$ erasures, we will be able to recover all input symbols with high probability.

If the precode $\code$ is linear-time encodable, then the Raptor code has a linear encoding complexity, $\mathcal O\left( k \right)$ since the \ac{LT} code has constant average output degree \fran{(i.e., the average output degree does not increase with $k$)}. Therefore, the overall encoding cost per output symbol is constant with respect to $k$. If the precode also accepts a linear time decoding algorithm (iterative decoding), and the \ac{LT} code is decoded using iterative decoding, the decoding complexity is also linear. Hence, the decoding cost per symbol is constant. Furthermore, already in the original Raptor code paper \cite{shokrollahi06:raptor}, Shokrollahi showed  that Raptor codes under iterative decoding are universally capacity-achieving on the binary erasure channel. Hence, they achieve the capacity of any erasure channel no matter which erasure probability the channel has.

\subsection{Raptor Decoding}\label{chap:raptor_under_ML}

The output symbols $\mathbf{\Rosymb}$ generated by the Raptor encoder are transmitted over a \ac{BEC} at the output of which each transmitted symbol is either correctly received or erased. Let us denote by $m$ the number of output symbols collected by the receiver of interest, where  $m=k+\absoverhead$, being $\absoverhead$ the absolute receiver overhead. We denote by ${\mathbf{\Rrosymb}=(\Rrosymb_1, \Rrosymb_2, \ldots, \Rrosymb_m)}$ the $m$ received output symbols. Denoting by $\mathcal{I} = \{i_1, i_2, \hdots, i_m \}$ the set of indices corresponding to the $m$ non-erased symbols, we have
\[
\Rrosymb_j = \Rosymb_{i_j}.
\]

The relation between the received output symbols and the input symbols can be expressed as:
\begin{equation}
\mathbf{\Rrosymb} = \vecv \GrxR
\label{eq:raptor_G}
\end{equation}
where
\begin{align}
\GrxR = \Gp \GrxLT
\label{eq:sys_eq}
\end{align}
with $\GrxLT$ given by the $m$ columns of $\GLT$ with indices in $\mathcal{I}$.

Raptor decoding consist of recovering the input symbols $\vecv$ given the received output symbols $\mathbf{\Rrosymb}$. Although it is possible to perform Raptor decoding by solving the linear system of equations in \eqref{eq:raptor_G}, this is not done in practice for complexity reasons. The decoding algorithms employed in practice, iterative decoding or inactivation decoding, require that the system of equations is sparse in order to show good performance and matrix $\GrxR$ is not sparse in general.

In practice, instead of the generator matrix of the Raptor code, another matrix representation is used that is usually referred to as constraint matrix, \fran{since it is an alternative representation of the coding constraints of the outer and inner code}.
The constraint matrix of a Raptor code is defined as:
\begin{align}
\constmatrix =
\begin{bmatrix}
 \hmatrixpre \\
 \GrxLT^T
\end{bmatrix},
\label{eq:constmatrix}
\end{align}
where $\hmatrixpre$ is the parity check matrix of the outer code (precode) with size $\left(   \left(h-k \right) \times h  \right)$. Thus, $\constmatrix$ is a $\left( \left(h-k+m \right) \times h\right)$ binary matrix.

By definition, the intermediate word of a Raptor code is a codeword of the precode, ${\vecv=(\Rintermsymbol_1,~\Rintermsymbol_2,~\ldots, \Rintermsymbol_h) \in  \code}$. Hence, one can write
\begin{equation}
\hmatrixpre ~ \vecv^T = \zeros
\label{eq:h_v_z}
\end{equation}
where $\zeros$ is a zero column vector of size $\left( (h-k) \times 1\right)$. Similarly, one can express the vector of received output symbols $\mathbf{\Rrosymb}$ as:
\begin{equation}
\GrxLT^T ~ \vecv^T = \mathbf{\Rrosymb}^T.
\label{eq:g_v_y}
\end{equation}
Putting together \eqref{eq:h_v_z} and \eqref{eq:g_v_y}, we have
\begin{align}
\constmatrix ~ \vecv^T =
\begin{bmatrix}
\zeros \\
\mathbf{\Rrosymb}^T
\end{bmatrix}.
\label{eq:raptor_sys_eq}
\end{align}
In practical Raptor decoders \eqref{eq:raptor_sys_eq} is used for decoding. The main advantage of the constraint matrix is that it preserves the sparsity of the generator matrix of the \ac{LT} code. Moreover, it also preserves the sparsity of the parity check matrix of the precode, in case it is sparse.

The system of equations in \eqref{eq:raptor_sys_eq} can be solved using different techniques, such as iterative decoding, standard Gaussian elimination or inactivation decoding.
Similarly to \ac{LT} codes, most works on Raptor codes consider large input blocks ($k$ at least in the order of a few tens of thousands symbols) and iterative decoding. However, in practice smaller blocks are used, usually due to memory limitations at the decoders. For example, in the most widespread binary Raptor codes, R10 (release 10), values of $k$ ranging from $1024$ to $8192$ are recommended (see Section~\ref{chap:raptor_r10}). For these input block lengths, the performance of iterative decoding suffers a considerable degradation. Therefore, instead of iterative decoding, \ac{ML} decoding is used (inactivation decoding).


\subsection{R10 Raptor Codes}\label{chap:raptor_r10}
The state of the art binary Raptor code is the R10 (release 10) Raptor code. This code is systematic and was designed to support a number of input symbols ranging from $k=4$ to $k=8192$ \cite{shokrollahi2011raptor}. The maximum supported number of output symbols is $n=65,536$. The probability of decoding failure $\Pf$  shows an error floor lower than $10^{-6}$ for all values of $k$ \cite{shokrollahi2011raptor}.

The precode used by R10 Raptor codes is a serial concatenation of two systematic erasure correcting codes.
\begin{itemize}
\item The outer code is a systematic \acf{LDPC} code that introduces  $\srten$ redundant symbols. The number of \ac{LDPC} redundant symbols is a function of $k$ and can be approximated as \cite{shokrollahi2011raptor}
\[
\srten \approx 0.01 k + \sqrt{2 k}.
\]
Its parity check matrix is composed of $\lceil k/\srten\rceil$ degree 3 circulant matrices  plus a $(\srten \times \srten)$ identity matrix. The Hamming weight of each of the rows of the parity check matrix of the \ac{LDPC} code is approximately $ 3 \lfloor k/\srten\rceil +1$, where $\lfloor x \rceil$ denotes the closest integer to $x$ (the last circulant matrix might not be complete).

\item The inner code is a systematic \acf{HDPC} code that introduces $\hrten$ redundant symbols. The number of \ac{HDPC} redundant symbols depends on $k$ approximately as \cite{shokrollahi2011raptor}
\[
\hrten \approx \log_2 (1.01 k + \sqrt{2 k}).
\]
Its parity check matrix is composed of a dense part and a $(\hrten \times \hrten)$ identity matrix. The dense part is obtained from a binary reflected Gray code and has the property that the normalized Hamming weight of every row is approximately $1/2$. Therefore, roughly half of the elements in the dense part of the parity check matrix are set to $1$.
\end{itemize}
Thus, the total number of intermediate symbols $h$ corresponds to
\[
h= k + \srten + \hrten.
\]

Figure.~\ref{fig:raptor_r10_s_and_h} shows the number of redundant \ac{LDPC} and \ac{HDPC} symbols,  $\srten$ and $\hrten$, as a function of $k$. It can be observed how for all values of $k$ except for very small values ($k=4$) the number of \ac{LDPC} redundant symbols is higher than that of \ac{HDPC} redundant symbols.
Therefore, the rate of the \ac{LDPC} code is lower than that of the \ac{HDPC} code, as it can be observed in Figure.~\ref{fig:raptor_r10_rates}. In this last figure it can also be observed how the outer code rate increases with $k$, although it does not do it monotonically.

\begin{figure}
\begin{center}
\includegraphics[width=\figwbigger]{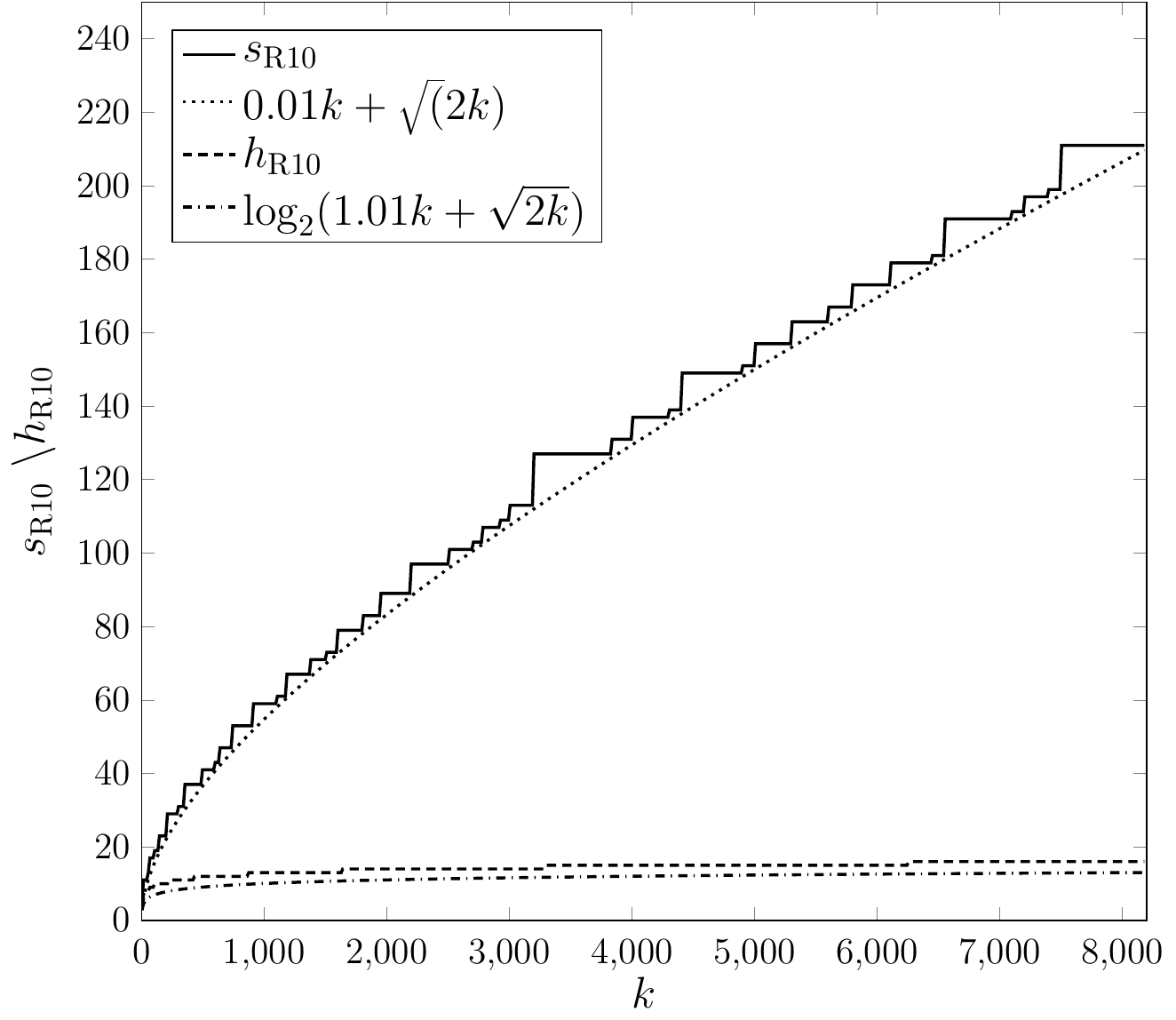}
\centering \caption[Number of \acs{LDPC} and \acs{HDPC} redundant symbols in R10 Raptor codes vs.\ $k$]{Number of \ac{LDPC} ($\srten$) and \ac{HDPC} ($\hrten$) redundant symbols in R10 Raptor codes vs.\ $k$ and their approximate values.}
\label{fig:raptor_r10_s_and_h}
\end{center}
\end{figure}
\begin{figure}[t]
\begin{center}
\includegraphics[width=\figwbigger]{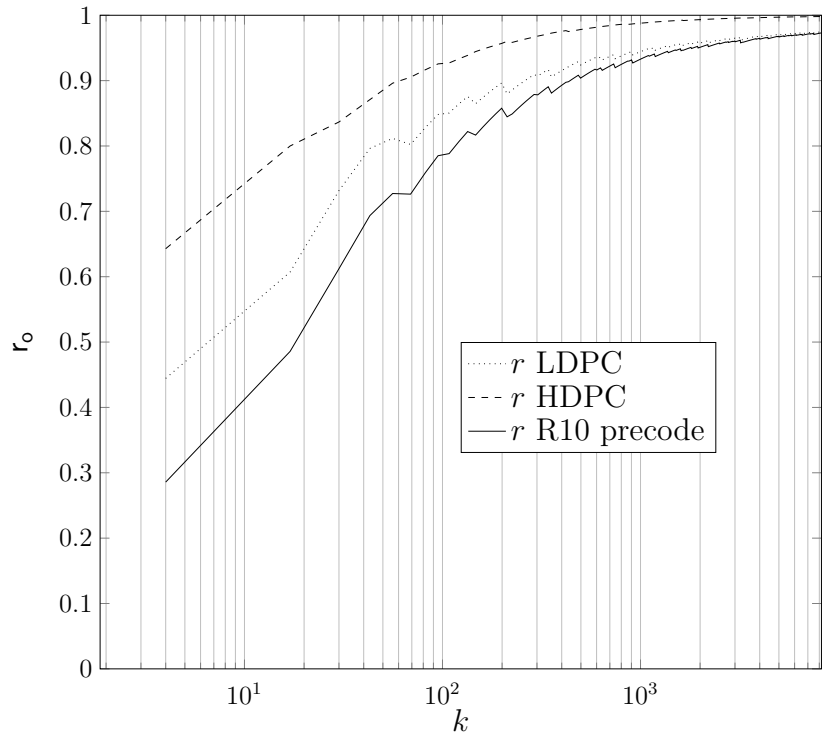}
\centering \caption[Precode rate $\ro$  vs.\ $k$ for R10 Raptor codes]{Precode rate $\ro$  vs.\ $k$ for R10 Raptor codes. The rate of the \ac{LDPC} and \ac{HDPC} codes is also shown.}
\label{fig:raptor_r10_rates}
\end{center}
\end{figure}
The precode of the R10 Raptor code was designed to behave similarly to a uniform random matrix in terms of rank properties but admitting a fast matrix vector multiplication algorithm \cite{shokrollahi2011raptor}.

The degree distribution of the \ac{LT} code is given by:
\begin{align}
\Omegarten(\x) &= 0.0098\x + 0.4590\x^2+ 0.2110\x^3+0.1134\x^4 \\
&+ 0.1113\x^{10} + 0.0799\x^{11} + 0.0156\x^{40}.
\label{eq:dist_mbms}
\end{align}
Its average output degree is $\bar \Omega = 4.631$.


For illustration in Figure.~\ref{fig:raptor_r10_matrix} we provide the constraint matrix for a R10 Raptor code for $k=20$ and $m=30$. In this case $\srten=11$ and $\hrten=7$. In the upper part, highlighted in blue, the parity check matrix of the \ac{LDPC} code can be distinguished. This submatrix is composed of two circulant matrices and an identity matrix. All rows of this submatrix  have Hamming weight $6$ or $7$. Below it the parity check matrix of the \ac{HDPC} code can be observed highlighted in green. These rows have a normalized Hamming weight of around $1/2$ and are the densest in the constraint matrix. The lower part of the constraint matrix, highlighted in red, corresponds to the \ac{LT} symbols and is sparse.

\begin{figure}[t]
\begin{center}
\includegraphics[width=0.55\columnwidth]{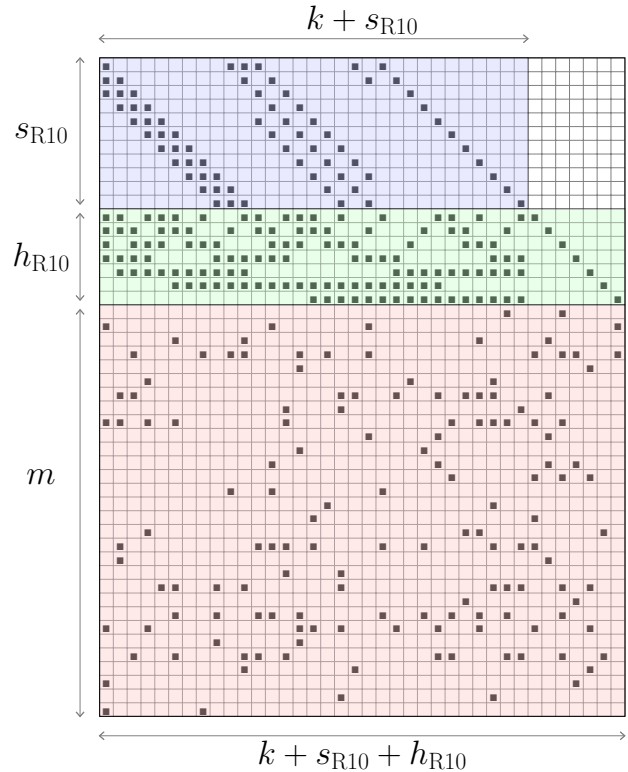}
\centering \caption[Constraint matrix of a R10 Raptor code for $k=20$ and $m=30$]{Constraint matrix of a R10 Raptor code for $k=20$ and $m=30$. The blue and red sub-matrices represents respectively the parity check matrices of the \ac{LDPC} and \ac{HDPC} codes. The red sub-matrix represents the transposed generator matrix of the \ac{LT} code. \fran{The entries filled with a square in the matrix represent the matrix elements set to one, and the empty entries those elements set to zero}.  }
\label{fig:raptor_r10_matrix}
\end{center}
\end{figure}

R10 Raptor codes are the state-of-the-art binary Raptor codes and are widely used in practice. In fact, they are part of several standards \cite{shokrollahi2011raptor}:
\begin{itemize}
    \item 3GPP Multimedia Broadcast Service, \cite{MBMS12:raptor}. Raptor codes are used in a terrestrial cellular network for file delivery and streaming applications.
    \item Internet Engineering Task Force (IETF) RFC 5053, \cite{RFC5053:raptor}. The R10 Raptor code is used for file delivery over data networks (i.e. the Internet). \fran{For example, this standard can be used to deliver data to a user via unicasting, or several user via multicasting, using, for instance, the User Datagram Protocol (UDP).}
    \item Digital Video Broadcasting. The R10 Raptor code is used in several DVB standards:
    \begin{itemize}
        \item In DVB-SH the R10 Raptor code is used as upper layer \acf{FEC} in order to overcome long deep fading events, \cite{DVB-SH:raptor}. \fran{Concretely, R10 Raptor codes can be used to protect Multi Protocol Encapsulation (MPE) fragments at the link layer, or User Datgram Protocol (UDP) datagrams at the transport layer.} This is specially appealing for mobile satellite systems in which terminals suffer frequently of very deep fades in the received signal due to blockage by a building or tunnel, for example.
        \item In DVB-H, \cite{DVB-H:raptor}, R10 Raptor codes are used for similar purposes as in DVB-SH.
        \item DVB has also standardized the R10 Raptor codes for streaming services over IP networks, \cite{DVB-IPBased:raptor}.
        \item The R10 Raptor code is also part of a standard for broadcast/multicast data delivery, \cite{DVB:raptor_broadcast}.
    \end{itemize}
    \item International Telecommunication Union (ITU) IPTV services \cite{ITU:raptor}. The R10 Raptor code is used for streaming applications.
\end{itemize}

R10 Raptor codes are not truly rateless, since the number of output symbols is limited to $n=65,536$. However, for most practical scenarios they can be considered to be rateless.
In spite of their (almost) rateless capability, R10 Raptor codes represent an excellent solution also for fixed-rate communication schemes requiring powerful erasure correction capabilities with low decoding complexity. In fact, in some of cases they are actually used in a fixed-rate setting (see, e.g., \cite{DVB-SH:raptor}).

\subsection{Systematic Raptor Codes}
Systematic Raptor codes are obtained similarly to systematic \ac{LT} codes \cite{shokrollahi2003systematic}. If we recall, Raptor code output symbols are obtained in two stages. First a vector of intermediate symbols $\vecv$ is obtained from the vector of input symbols $\vecu$ using an outer block code (precode):
\[
\vecv = \vecu \Gp,
\]
where $\Gp$ is $(k \times h)$ generator matrix of the precode. The output symbols $\mathbf{\Rosymb}$ are then obtained through an \ac{LT} encoding of the intermediate symbols:
\[
\mathbf{\Rosymb} = \vecv \GLT,
\]
where  $\GLT$ is the generator matrix of the LT code. Thus, the relationship between input and output symbols can be expressed as
\[
\mathbf{\Rosymb} = \vecu \Gp \GLT.
\]

A systematic Raptor code  can be obtained starting with an \ac{LT} generator matrix in the shape:
\[
\GLT =  \left[ \GLT\,\!_{,1} | \GLT\,\!_{,2} \right],
\]
where $\GLT\,\!_{,1}$ has dimension $h \times k$ and $\GLT\,\!_{,2}$ has dimension $h \times (n-k)$. Furthermore, matrix $\GLT\,\!_{,1}$ must be chosen so that matrix
\[
\mathbf{F} = \Gp \GLT\,\!_{,1}
\]
is full rank (has rank $k$). A systematic Raptor code is obtained by computing
\[
\mathbf{w} = \mathbf{v}   \mathbf{F}^{-1},
\]
and using vector $\mathbf{w}$ as input to the Raptor code. Hence, the intermediate symbols of the systematic Raptor code will correspond to:
\[
\vecv = \mathbf{w} \Gp = \mathbf{v}   \mathbf{F}^{-1} \Gp,
\]
and its output will correspond to
\[
\mathbf{\Rosymb} = \mathbf{v} \mathbf{F}^{-1} \Gp \GLT =  \mathbf{v} \mathbf{F}^{-1} \Gp  \left[ \GLT\,\!_{,1} | \GLT\,\!_{,2} \right] =
\mathbf{v} \left[\mathbf{I} | \mathbf{F}^{-1} \Gp \GLT\,\!_{,2} \right],
\]
where $\mathbf{I}$ is the identity matrix of dimension $k$.

For illustration we provide a graph representation of a systematic Raptor code in Figure~\ref{fig:syst_Raptor}.
\begin{figure}[t]
        \begin{center}
        \includegraphics[width=\textwidth]{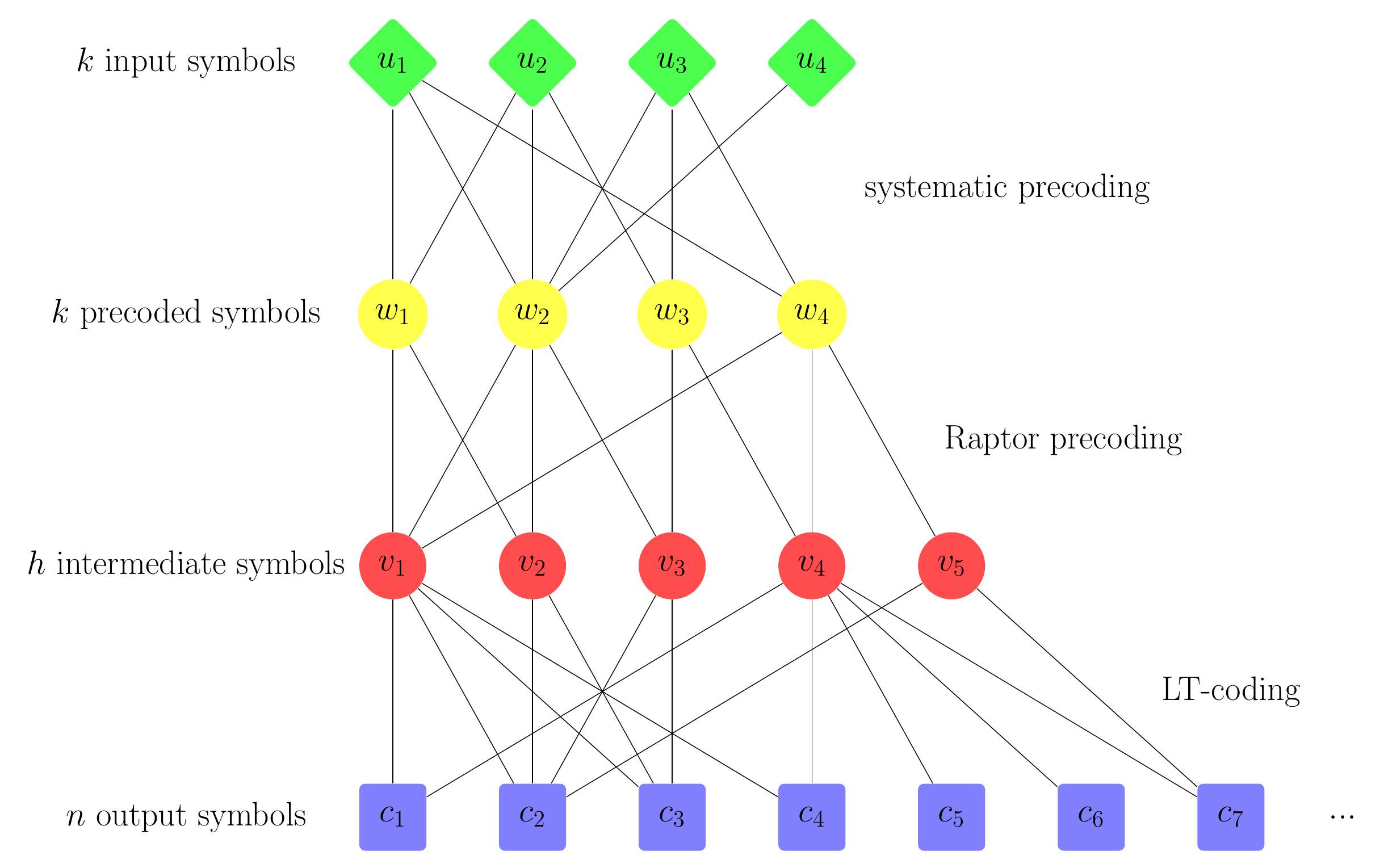}
        \centering \caption{Systematic Raptor code.\label{fig:syst_Raptor}}
        \end{center}
\end{figure}

When decoding a systematic Raptor code, again two cases can be differentiated. In the first none of the first $k$ output symbols are erased and decoding does not need to be carried out. In the second case some of the first $k$ output symbols are erased. In this case the input symbols can be recovered in two stages. First standard Raptor decoding is used to recover $\mathbf{w}$. Then, the input symbols can be recovered as:
\[
\mathbf{v} = \mathbf{w}   \mathbf{F} = \mathbf{w}   \Gp \GLT\,\!_{,1}.
\]
This second stage corresponds to Raptor encoding. First the output of the precode is computed and then an \ac{LT} code is applied.

\fran{This systematic construction of Raptor codes achieves a similar performance to that of non-systematic Raptor codes \cite{shokrollahi2011raptor}.}

\FloatBarrier



\chapter{LT Codes under Inactivation Decoding} \label{chap:LT}
\ifpdf
    \graphicspath{{Chapter3/Chapter3Figs/PNG/}{Chapter3/Chapter3Figs/PDF/}{Chapter3/Chapter3Figs/}}
\else
    \graphicspath{{Chapter3/Chapter3Figs/EPS/}{Chapter3/Chapter3Figs/}}
\fi

In this chapter we consider \ac{LT} codes under inactivation decoding. In Section~\ref{chap:LT_inact} we explain in detail inactivation decoding.  Section~\ref{chap:inact_analysis} focuses on analyzing inactivation decoding of \ac{LT} codes. Concretely Section~\ref{chap:inact_first_order} presents an analysis based on a dynamic programming approach that provides  the first moment of the number of inactivations. This analysis is then extended in Section~\ref{chap:inact_distribution} to obtain the probability distribution of the number of inactivations. In Section~\ref{chap:inact_low_complex} a low complexity approximate analysis of \ac{LT} codes under inactivation decoding is presented. Section~\ref{sec:lt_code_design} shows how the results in this chapter can be used in order to design \ac{LT} codes by means of an example. Finally, Section~\ref{chap:lt_summary} presents a summary of the chapter.

\section{Inactivation Decoding}\label{chap:LT_inact}

Inactivation decoding is a \ac{ML} decoding algorithm that is characterized by a manageable decoding complexity and  is widely used in practice \cite{shokrollahi2005systems}, \cite{MBMS12:raptor}. This algorithm belongs to the family of \emph{structured} or \emph{intelligent} Gaussian elimination algorithms since it aims at reducing the size of the system of equations that needs to be solved.

We will describe inactivation decoding in more detail by means of an example. As explained in Section~\ref{sec:LT_intro}, \ac{LT} decoding consists of solving the system of equations  given in \eqref{eq:ml_eq_sys}, which we replicate here for the sake of completeness,
\[
\mathbf{\rosymb} = \mathbf{v} \Grx,
\]
where we recall, $\mathbf{\rosymb}$ is the $(1 \times m)$ vector of received output symbols, $\mathbf{v}$ is the $(1 \times k)$ vector of input symbols, and $\Grx$ is a $k \times m$ matrix that corresponds to the $m$ columns of $\mathbf{G}$ associated to the output symbols that were not erased by the channel. We will consider an example with $k=50$, $m=60$ and with $\Grx$ as shown in  Figure~\ref{fig:inactivation_0}. In the figure the squares inside a cell represent the elements of  $\Grx$ that are set to $1$ and the empty cells the elements that are set to $0$. As it can be observed in the figure, matrix $\Grx$ is sparse.

\begin{figure}[t]
\centering
    \hspace{0.3cm}
    \includegraphics[width=0.55\columnwidth,draft=false]{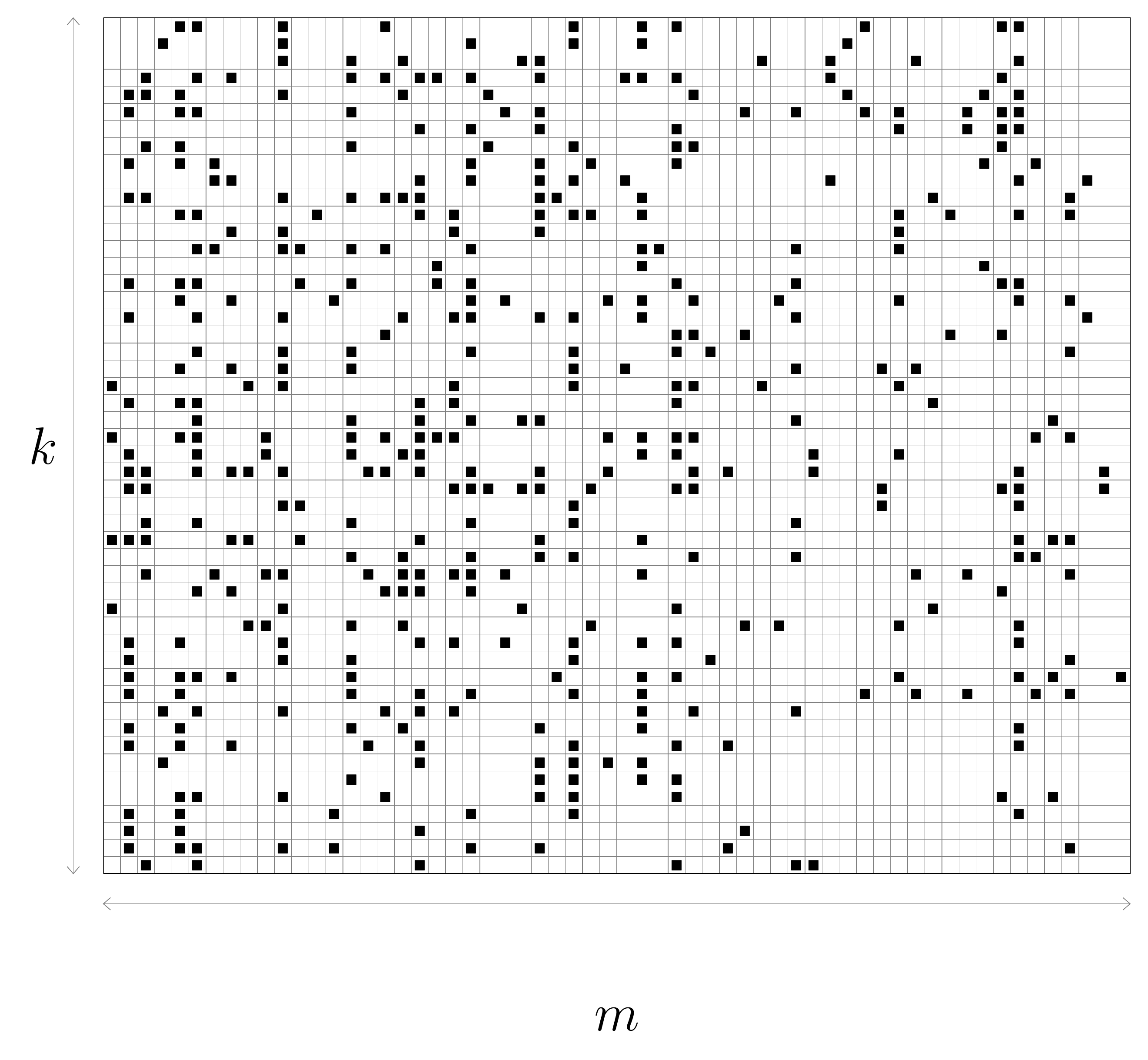}
    \caption{Structure of $\Grx$ before inactivation decoding starts.\label{fig:inactivation_0}}
\end{figure}
\FloatBarrier

Inactivation decoding consists of 4 steps:

\begin{enumerate}
    \item {\emph{Triangulation.} $\Grx$ is put in an approximate upper triangular form by means of column and row permutations. Since no operation is performed on the rows or columns of $\Grx$, the overall density $\Grx$ does not change. At the end of this process we can distinguish 4 sub-matrices in $\Grx$. As shown in Figure~\ref{fig:inactivation_1} in the left upper part of $\Grx$ we have matrix $\Amatrix$ that is an upper triangular matrix of size $(k-\reducedsyst) \times (k-\reducedsyst)$.
        In the upper right part we have matrix $\Bmatrix$ that has size $(k-\reducedsyst) \times (m-k-\reducedsyst)$. Finally at the lower left and right parts we have matrices $\Cmatrix$, and $\Dmatrix$ of respective sizes $\reducedsyst \times (m-k-\reducedsyst)$ and $\reducedsyst \times (k-\reducedsyst)$. The $\reducedsyst$ last (lowest) rows of $\Grx$ corresponding to matrices $\Cmatrix$ and $\Dmatrix$ are usually referred to as inactive rows.
        }
    \item {\emph{Zero matrix procedure.} The matrix $\Amatrix$ is put in a diagonal form and matrix $\Bmatrix$ is zeroed out through column sums. When one performs row/column additions on a sparse matrix, the density of the matrix tends to increase. In our case, matrices $\Cmatrix$ and $\Dmatrix$ become denser as shown in  Figure~\ref{fig:inactivation_2}.

        }
    \item {\emph{\ac{GE}}. \ac{GE} is applied to solve the systems of equations $\tilde{\mathbf{\rosymb}} = \tilde{\mathbf{v}} \Cmatrix'  $, where the symbols in $\tilde{\mathbf{v}}$ are called \emph{inactive variables} and are associated with the rows of the matrix $\Cmatrix'$ in Figure~\ref{fig:inactivation_2} and  $\tilde{\mathbf{\rosymb}}$ are known terms associated with the columns of the matrix $\Cmatrix'$ in Figure~\ref{fig:inactivation_2}. This step drives the cost of inactivation decoding since its complexity is $\mathcal{O} \left( \reducedsyst^3 \right)$, cubic in the number of inactive rows $\reducedsyst$. At the end of the \ac{GE} step, matrix $\Grx$ has the structure shown in Figure~\ref{fig:inactivation_3}.

        }

    \item {\emph{Back-substitution.} Once the values of the inactive variables have been determined, back-substitution is applied to compute the values of the remaining variables in $\mathbf{v}$. This corresponds to setting to zero all elements of matrix $\Dmatrix'$ in Figure~\ref{fig:inactivation_3}. After back-substitution ends all source symbols have been recovered and, therefore, $\Grx$ is in reduced echelon form as shown in Figure~\ref{fig:inactivation_4}.
        }

\end{enumerate}

A unique solution to the system of equations only exists if matrix $\Grx$ has full rank. If we look at Figure~\ref{fig:inactivation_2} it is easy to see how matrix $\Amatrix'$ has full rank since it is an identity matrix. Hence, matrix $\Grx$ has full rank only if submatrix $\Cmatrix'$ has full rank.

Among the 4 steps of inactivation decoding, the one having the highest (asymptotic) complexity is \ac{GE}. However, one has to consider that the size of the system of equations that needs to be solved by means of \ac{GE} is determined by the triangulation step. Therefore, if we want to analyze the complexity of inactivation decoding we need to have a closer look at triangulation.

In order to get a better understanding of triangulation we will use a bipartite graph representation of the \ac{LT} code, similar to the one we used for iterative decoding.  The triangulation step can be represented by an iterative pruning of the bipartite graph of the \ac{LT} code. At each iteration, a reduced graph is obtained that corresponds to a sub-graph of the original LT code graph. This sub-graph involves only a subset of the input symbols (that we call \emph{active} input symbols) and their neighbors. In this context, we use the term \emph{reduced} degree of output symbol $\osymb$ to refer to the  degree of output symbol $\osymb$ in the reduced graph, and we will denote it by $\deg_r(\osymb)$. Therefore, the reduced degree of a node (symbol) is less or equal to its (original) degree. Note that the reduced degree has been defined in the context of iterative decoding (definition~\ref{def:reduced}). The difference is that for iterative decoding the unresolved input symbols where considered, and now we consider the active input symbols.

\begin{figure}
            \centering
            \includegraphics[width=0.55\columnwidth,draft=false]{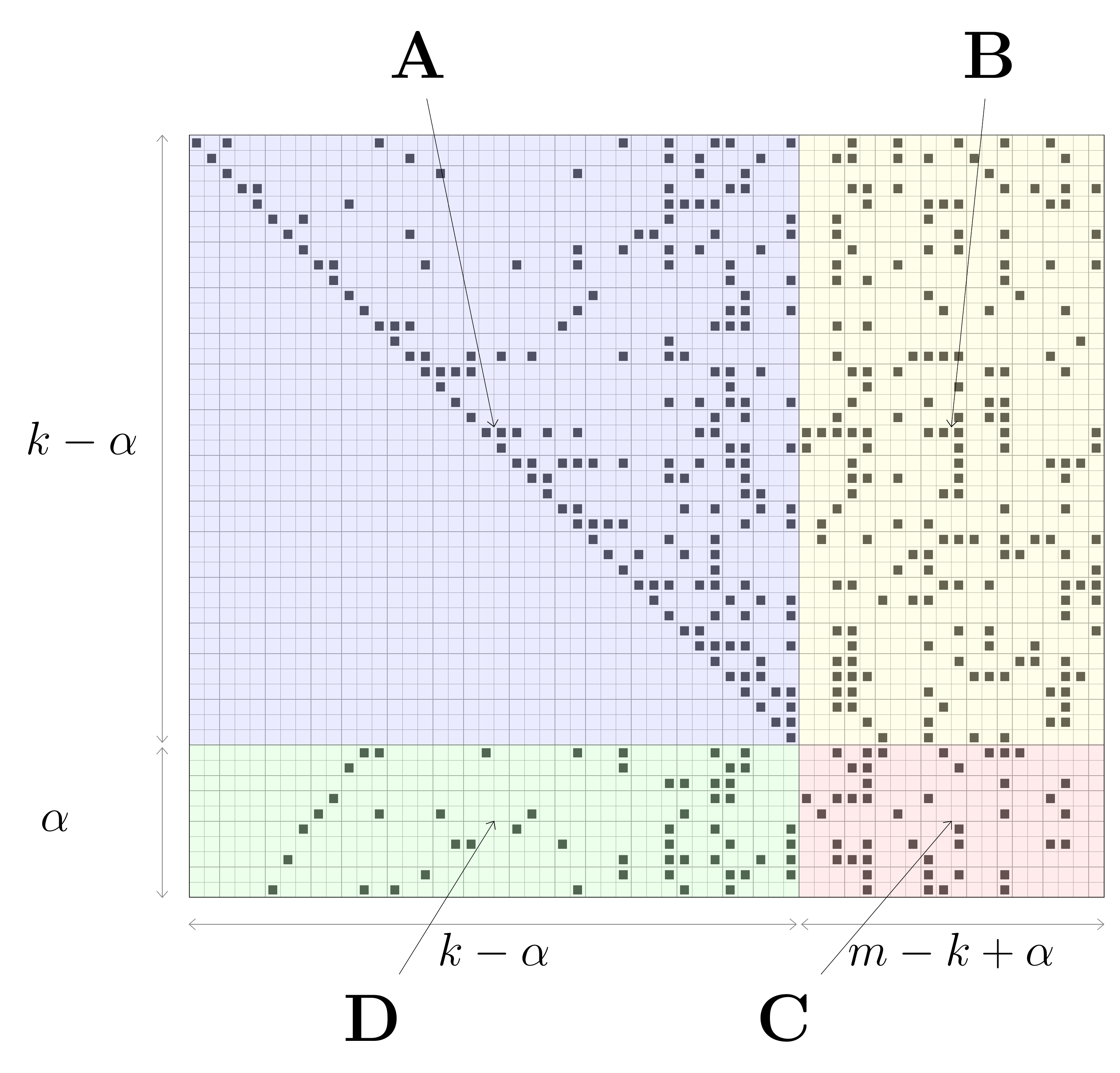}
            \caption{Structure of $\Grx$ after the triangulation process.\label{fig:inactivation_1}}
        \end{figure}
      \begin{figure}
        \centering
        \includegraphics[width=0.55\columnwidth,draft=false]{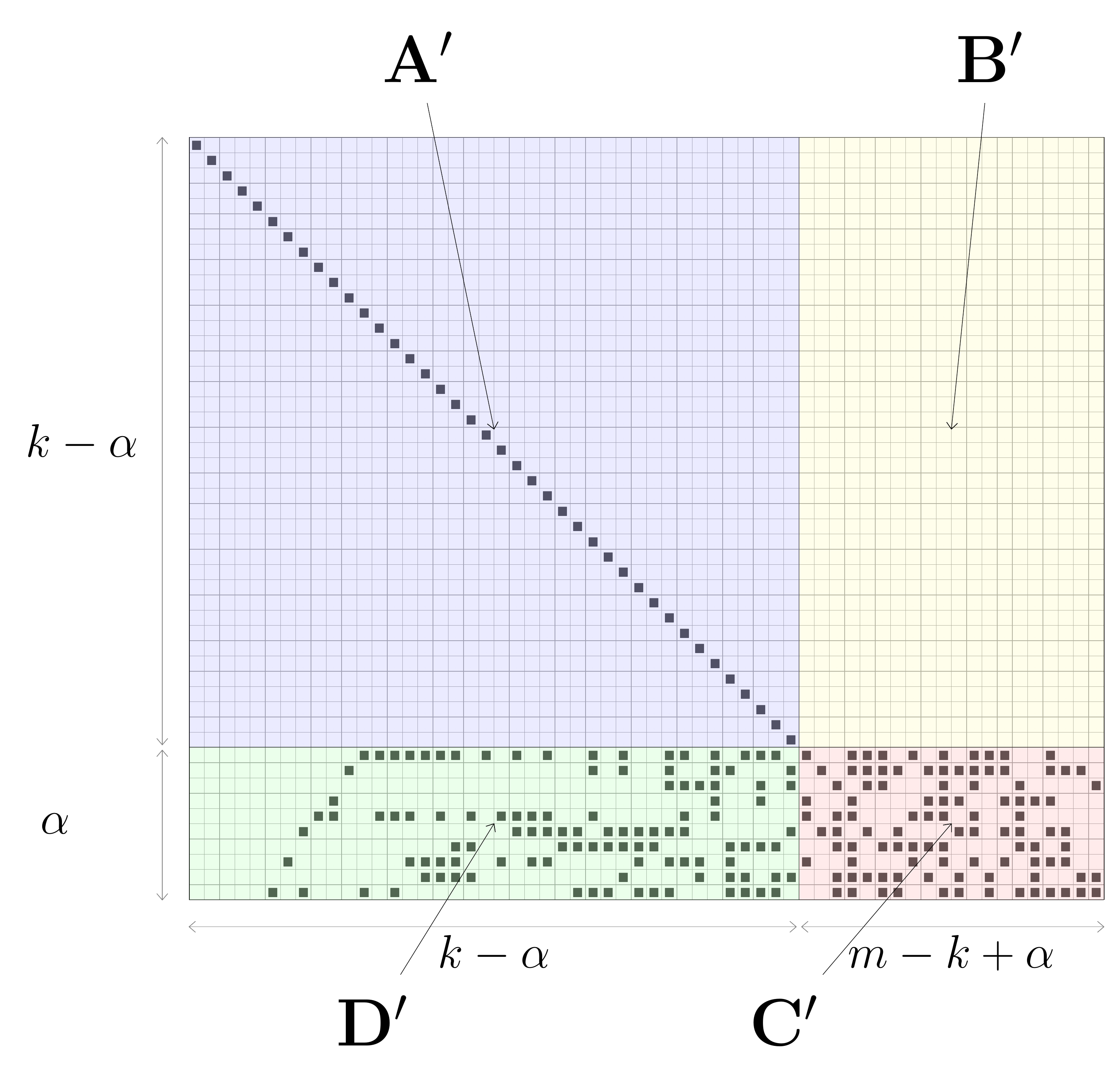}
        \caption{Structure of $\Grx$ after the zero matrix procedure.\label{fig:inactivation_2}}
        \end{figure}
        \begin{figure}
        \centering
        \includegraphics[width=0.55\columnwidth,draft=false]{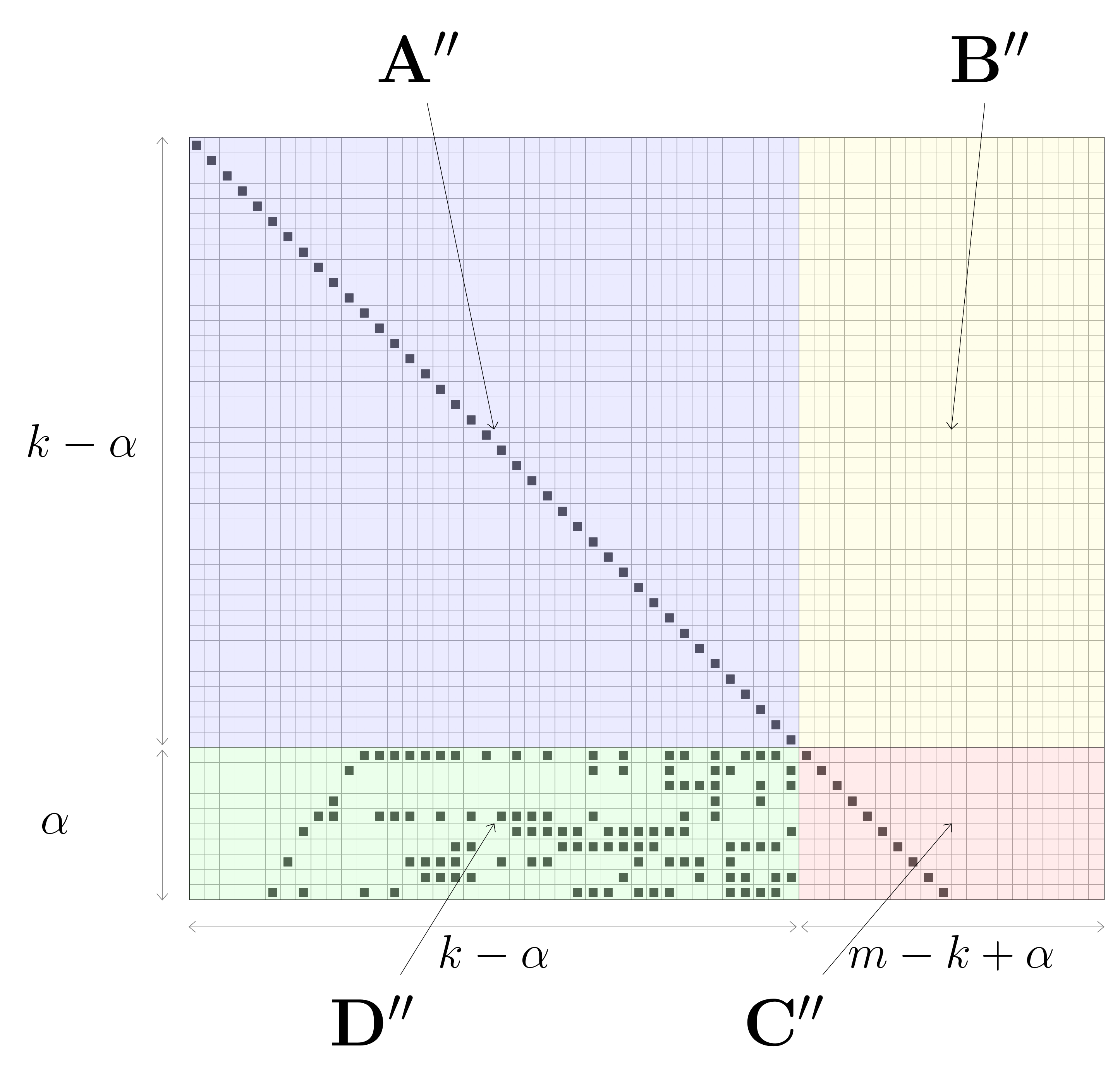}
        \caption{Structure of $\Grx$ after \acl{GE}.\label{fig:inactivation_3}}
        \end{figure}
         \begin{figure}
        \centering
        \includegraphics[width=0.55\columnwidth,draft=false]{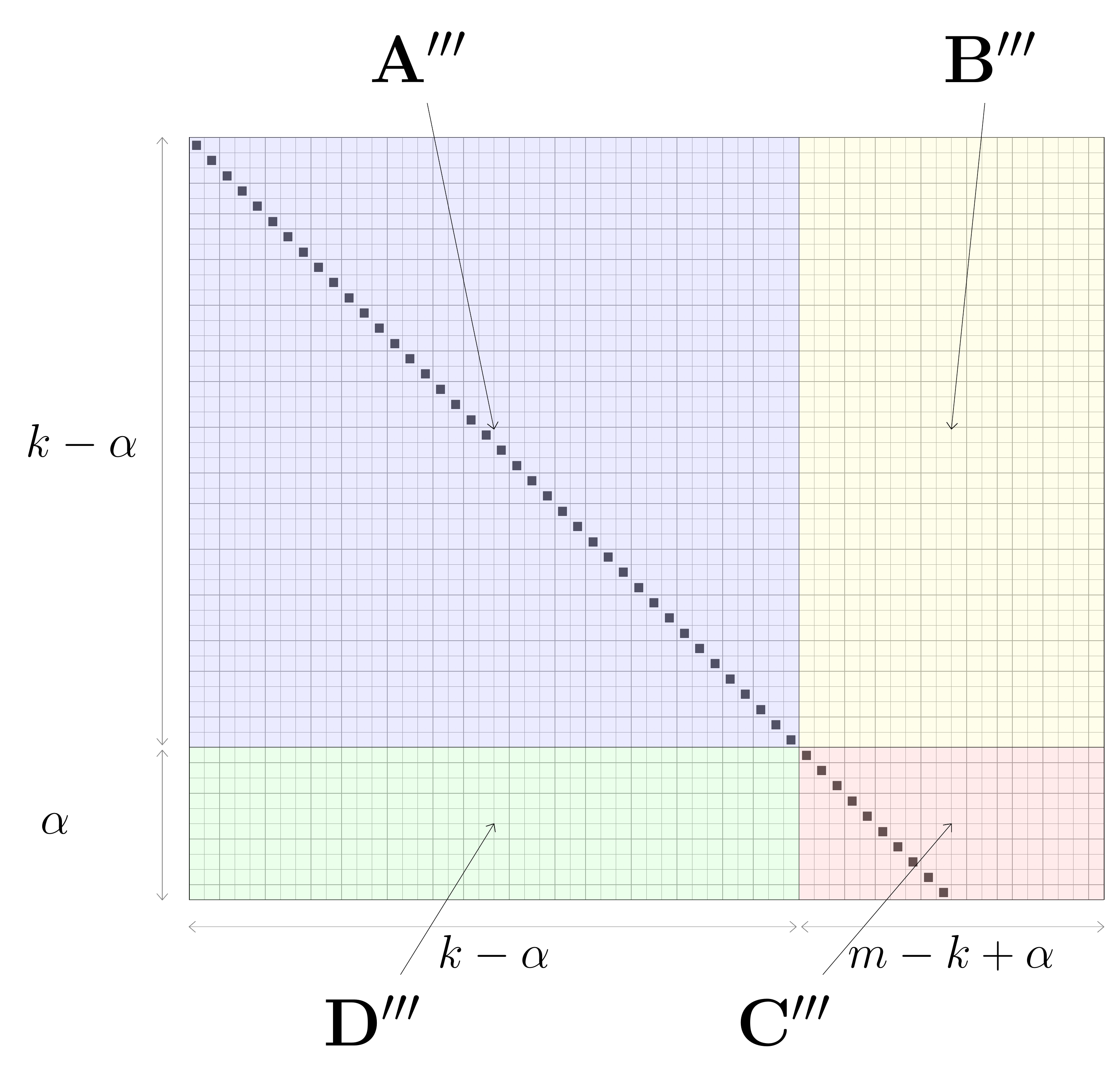}
        \caption{Structure of $\Grx$ after back-substitution.\label{fig:inactivation_4}}
        \end{figure}
\FloatBarrier
The triangulation process and iterative decoding are related to each other since both algorithms consist of an iterative pruning of the bipartite graph. In fact, one can think of
inactivation decoding as an extension of iterative decoding. When iterative decoding is used, at every decoding step one needs to have at least one output symbol in the ripple so that decoding can go on. That is, there always needs to be at least one output symbol of reduced degree one. If the ripple becomes empty at some step, iterative decoding fails. What inactivation decoding does is restarting the iterative decoding process whenever it gets blocked (empty ripple), and does so by marking one of the active input symbols as inactive. The rational behind an inactivation is that hopefully some output symbol of reduced degree two will get its degree reduced and become of reduced degree one (it will enter the ripple), so that iterative decoding can continue.
Similarly as for iterative decoding, the concepts of ripple and cloud are fundamental to understand the triangulation process. Let us recall that the ripple, $\rippleset$, is defined as the set of output symbols of reduced degree $1$, while the cloud, $\cloudset$, corresponds to the set of output symbols of reduced degree $d\geq 2$ (see Definitions \ref{def:ripple} and \ref{def:cloud}). Let us introduce some notation related to the cardinality of the ripple and cloud.  The cardinality of the ripple will be denoted by $\r$ and the corresponding random variable as $\Ripple$. The cardinality of the cloud will be denoted by $\c$ and the corresponding random variable as $\Cloud$.

Triangulation starts operating on the complete bipartite graph of the \ac{LT} code. Thus, before triangulation starts all source symbols are marked as active. At every step of the process, triangulation marks exactly one active source symbol as either \emph{resolvable} or \emph{inactive} and the symbol leaves the reduced graph. After $k$ steps the reduced graph will correspond to an empty graph. In the following, in order to keep track of the steps of the triangulation procedure
we will add a temporal dimension through the subscript $u$.  This subscript $u$ corresponds  to the number of active input symbols in the graph. Given the fact that the number of active symbols decreases by 1 at each step, triangulation will start with $u=k$ active symbols and it will end after $k$ steps with $u=0$. Therefore the subscript decreases as the triangulation procedure progresses.

The following algorithm describes the triangulation procedure at step $u$ (i.e.,  in the transition  from $u$  to $u-1$ active symbols):
\begin{absolutelynopagebreak}
\begin{algo}[Triangulation with random inactivations]\label{alg:triang}
{~}
\begin{itemize}
 \item {If the ripple $ \ripple{u}$ is not empty $(\ru>0)$}
    \begin{itemize}

        \item[] {The decoder selects an output symbol $\rosymb  \in \ripple{u}$ uniformly at random.
        The only  neighbor of $\rosymb$, i.e. the input symbol $v$, is marked as resolvable and leaves the reduced graph. The edges attached to $v$ are removed.}
    \end{itemize}
  \item {If the ripple $\ripple{u}$ is empty $(\ru=0)$}
    \begin{itemize}
        \item[] An inactivation takes place. One of the active input symbols, $v$, is chosen uniformly at random. This input symbol is marked as inactive and leaves the reduced graph.  The edges attached to $v$ are removed.
    \end{itemize}
\end{itemize}
\end{algo}
\end{absolutelynopagebreak}

\fran{Note that choosing the input symbol to be inactivated at random is certainly not the only possible inactivation strategy. However, this strategy makes the analysis trackable. For an overview of the different inactivation strategies we refer the reader to Section~\ref{sec:inact_strat}.}

At the end of the procedure, the source symbols which are marked as resolvable correspond to the rows of matrices $\Amatrix$ and $\Bmatrix$ in Figure~\ref{fig:inactivation_1}. Similarly, the source symbols marked as inactive correspond to the rows of matrices $\Cmatrix$ and $\Dmatrix$.

In order to illustrate Algorithm~\ref{alg:triang} (triangulation) we provide an example for an LT code with $k=4$ source symbols and $m=4$  output symbols. Before triangulation starts all source symbols are active. Figure~\ref{fig:example_4} shows the bipartite graph of our \ac{LT} code before triangulation starts. In the graph we can see how all $4$ source symbols are active. If we now look at the output symbols we can see how the ripple and the cloud are composed of two elements each,  $\rippleset=\{\rosymb_1, \rosymb_4\}$ and $\cloudset=\{\rosymb_2, \rosymb_3\}$.

\begin{figure}
\begin{center}
\includegraphics[width=0.45\columnwidth]{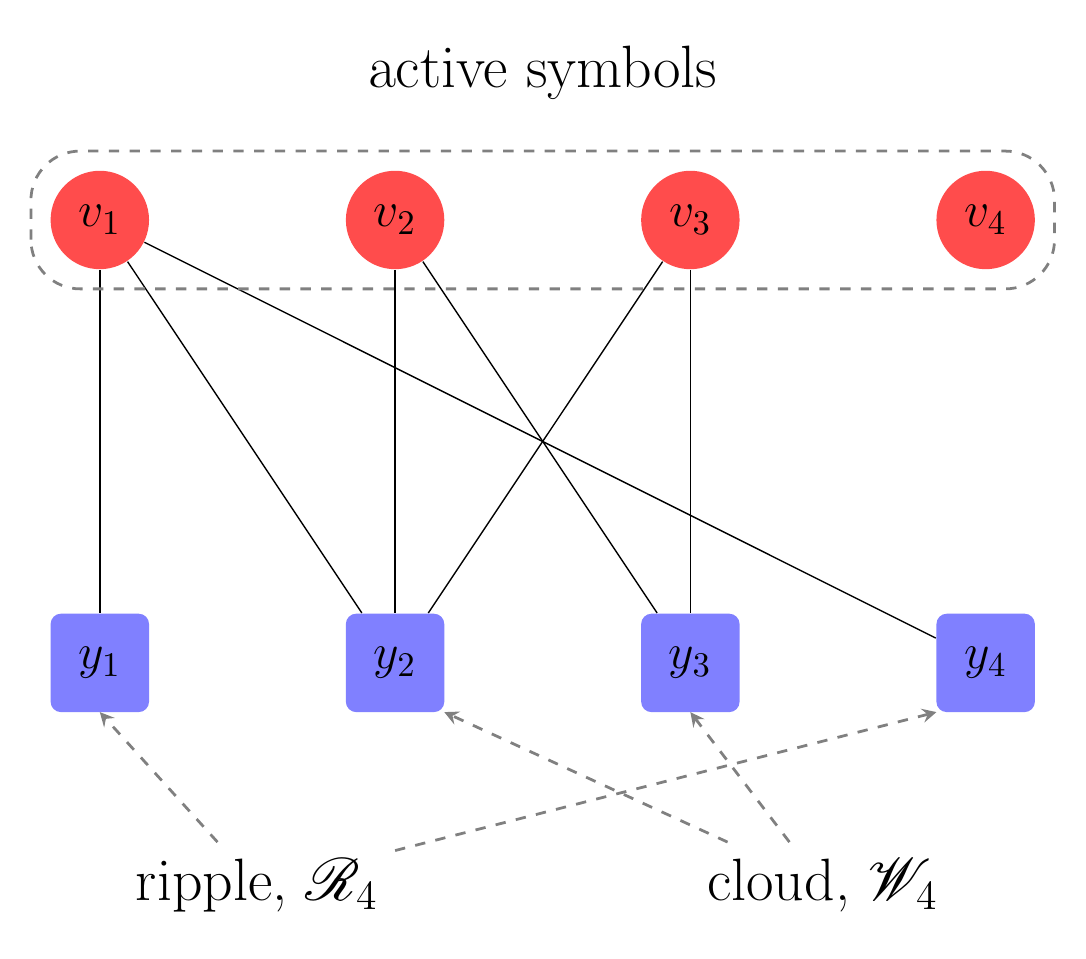}
\centering \caption{Triangulation procedure example, $u=4$.\label{fig:example_4}}
\end{center}
\end{figure}

Triangulation operates as follows:

\FloatBarrier

\begin{enumerate}
    \item {Transition from $u=4$ to $u=3$. At the initial step $u=4$, there are two output symbols in the ripple, $\r_{4}=2$ (see Figure~\ref{fig:example_4}). Hence, in the transition to $u=3$ one of the source symbols ($v_1$) is marked as  resolvable, it leaves the graph and all its attached edges are removed. The graph obtained after the transition from $u=4$ to $u=3$ is shown in  Figure~\ref{fig:example_3}. We can see how nodes $\rosymb_1$ and $\rosymb_4$ have left the graph since their reduced degree became zero. }
        \begin{figure}[h]
        \begin{center}
        \includegraphics[width=0.45\columnwidth]{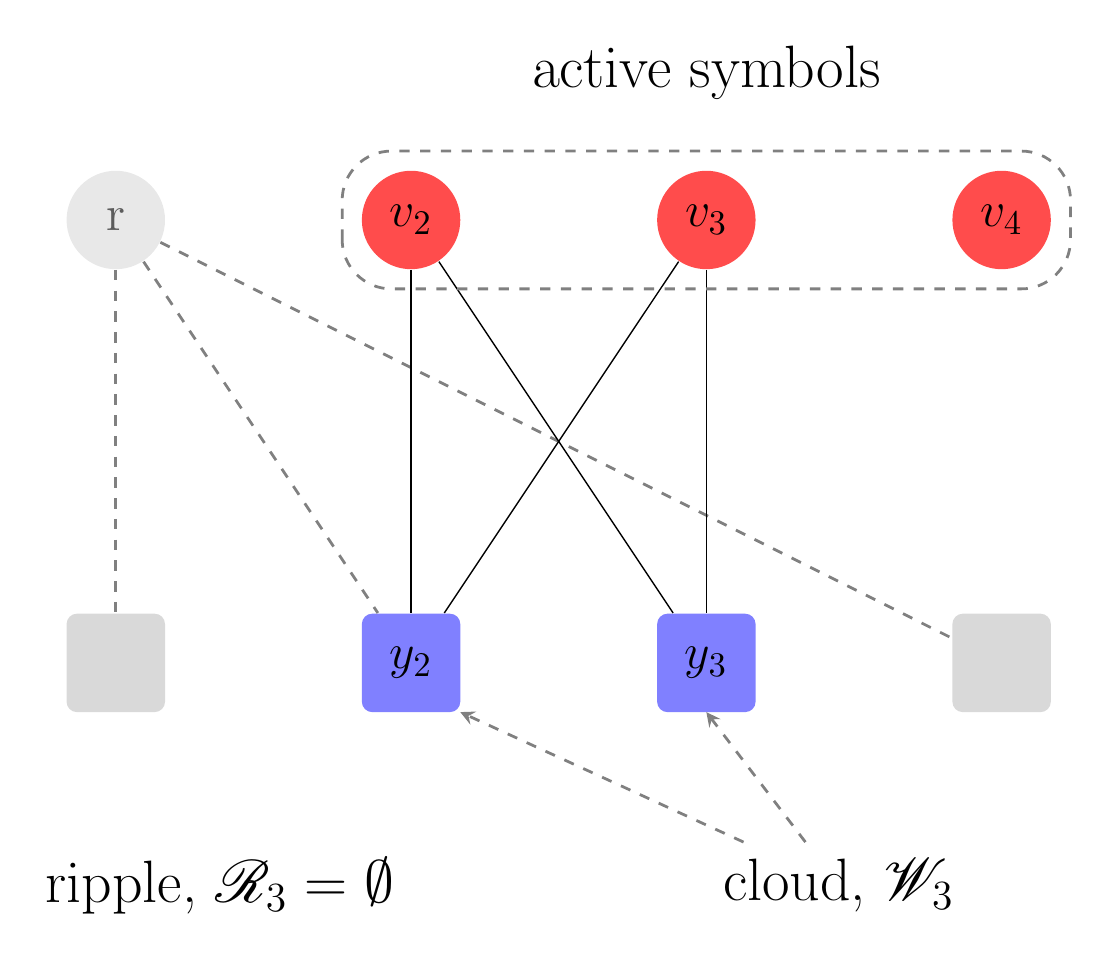}
        \centering \caption{Triangulation procedure example, $u=3$.\label{fig:example_3}}
        \end{center}
        \end{figure}
\FloatBarrier
    \item{Transition from $u=3$ to $u=2$. In Figure~\ref{fig:example_3} we can see how now the ripple is empty, $\r_{3}=0$. Therefore, in the transition to $u=2$ an inactivation takes place. Node $v_2$ is chosen at random and is marked as inactive. All edges attached to $v_2$ are removed from the graph. As a consequence the nodes $\rosymb_2$ and $\rosymb_3$ that were in the cloud $\cloudset_3$ become of reduced degree 1 and  enter the ripple $\rippleset_2$. This can be observed in  Figure~\ref{fig:example_2}.}
        \begin{figure}[h]
        \begin{center}
        \includegraphics[width=0.45\columnwidth]{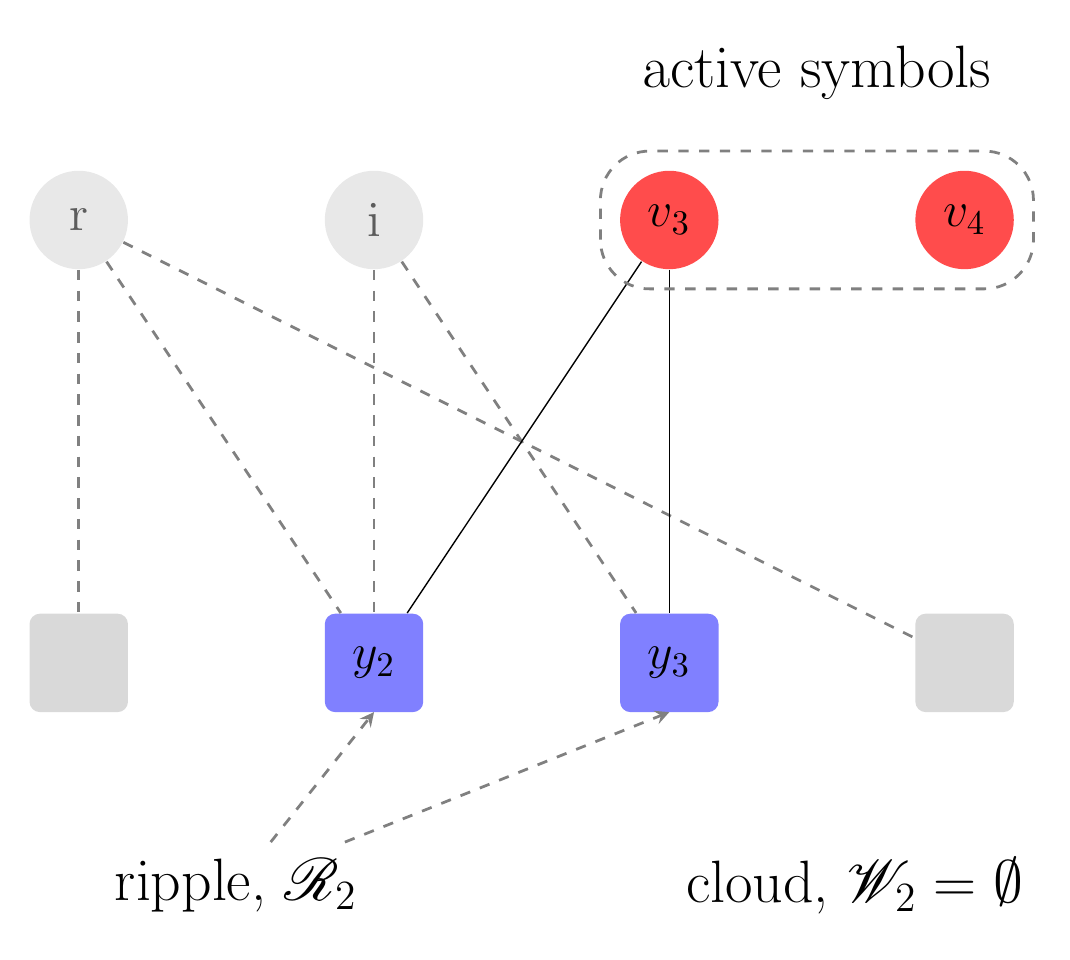}
        \centering \caption{Triangulation procedure example, $u=2$.\label{fig:example_2}}
        \end{center}
        \end{figure}
\FloatBarrier
    \item{Transition from $u=2$ to $u=1$. We can see in Figure~\ref{fig:example_2} how the ripple is not empty, in fact, $\r_{2}=2$. Source symbol $v_3$ is marked as resolvable and all its attached edges are removed. Nodes $\rosymb_2$ and $\rosymb_3$ leave the graph because their reduced degree becomes zero (see Figure~\ref{fig:example_1}).}
        \begin{figure}[h]
        \begin{center}
        \hspace*{0.7 cm}\includegraphics[width=0.51\columnwidth]{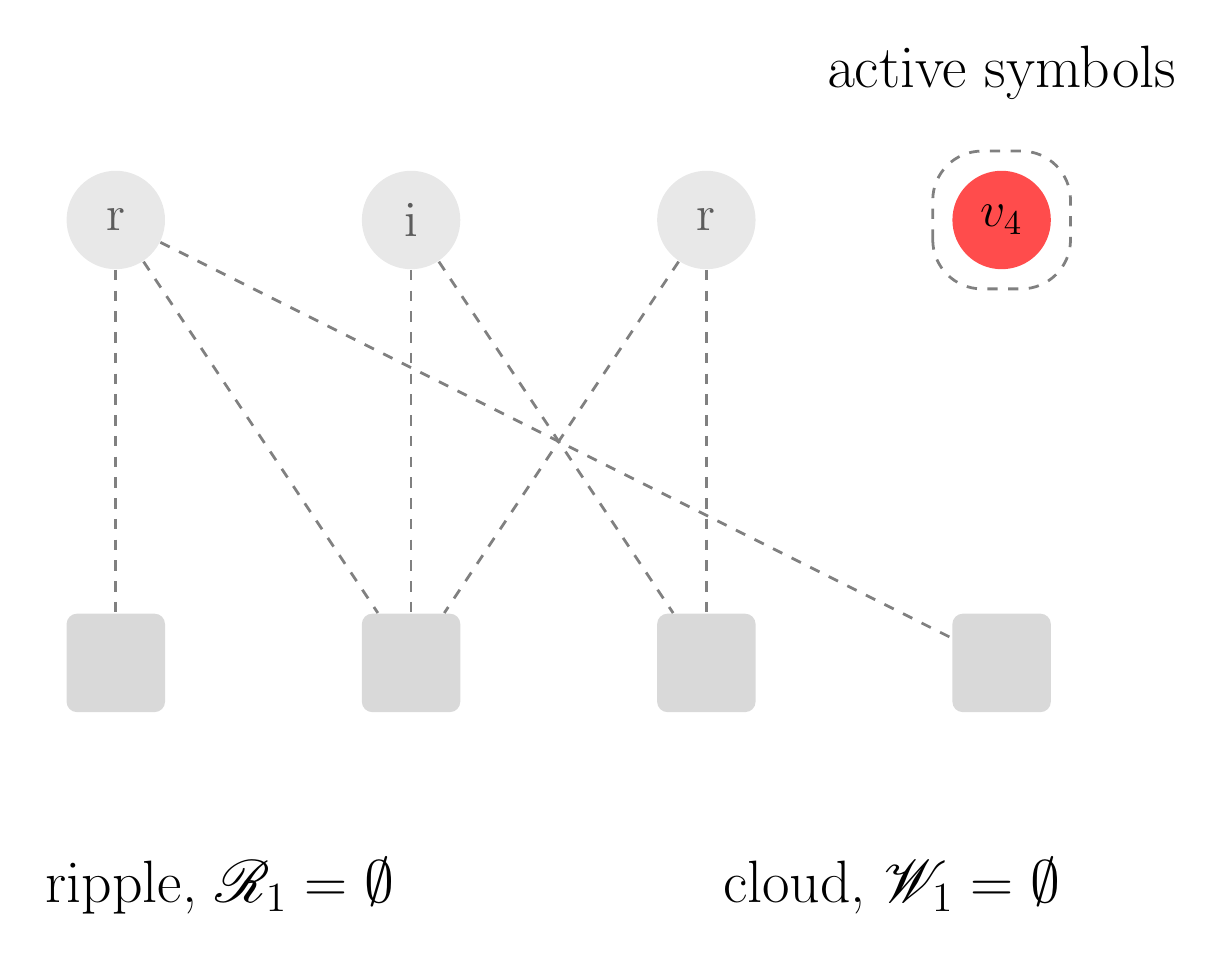}
        \centering \caption{Triangulation procedure example, $u=1$.\label{fig:example_1}}
        \end{center}
        \end{figure}
\FloatBarrier
    \item{Transition from $u=1$ to $u=0$. In Figure~\ref{fig:example_1} we can see how the ripple and cloud are now empty. Hence, an inactivation takes place: node $v_4$ is marked as inactive and the triangulation procedure ends.}
        \begin{figure}[h]
        \begin{center}
        \includegraphics[width=0.45\columnwidth]{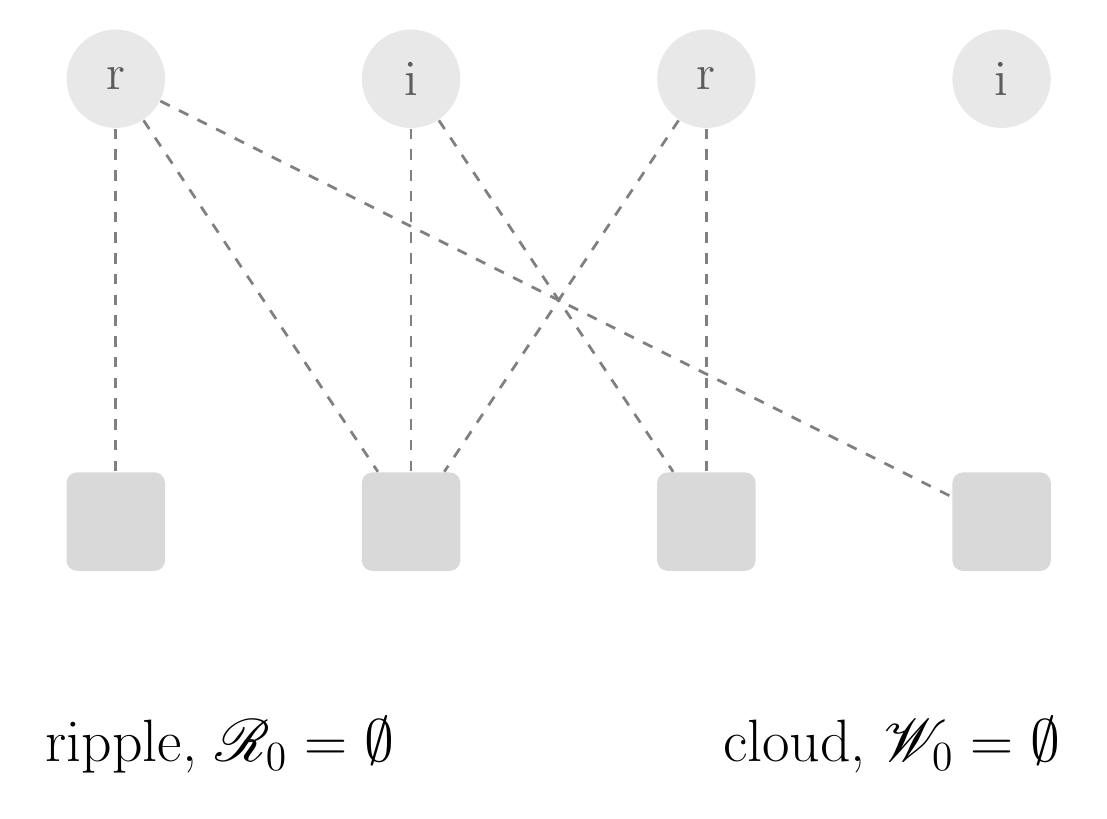}
        \centering \caption{Triangulation procedure example, $u=0$.\label{fig:example_0}}
        \end{center}
        \end{figure}
\FloatBarrier
\end{enumerate}

\fran{For illustration we also show the effect of triangulation on the generator matrix $\Grx$ in Figure~\ref{fig:example_mat_0}. Concretely, on the left hand side we can see matrix $\Grx$ before the triangulation procedure starts, and on the right hand side we can see matrix $\Grx$ after the triangulation procedure ends. We can see that after triangulation the upper left corner of matrix $\Grx$ has an upper triangular shape. We can also see that triangulation reorders the rows and columns of matrix $\Grx$ in order to create an upper diagonal matrix. Concretely, the first row corresponds to the first input symbol that was marked as resolvable, $v_1$. The second row corresponds to the second input symbol that was marked  as resolvable, $v_3$. The last row corresponds to the first inactivated input symbol, $v_2$, and the second to last (third) row to the second symbol that was marked inactive $v_4$. In the following we will stick to a bipartite graph representation of the triangulation procedure, since it allows us to ignore the row and column reordering, making inactivation conceptually simpler.}

\begin{figure}[t]
        \centering
        \begin{subfigure}[b]{0.49\textwidth}
        \centering
            \includegraphics[width=0.9\columnwidth]{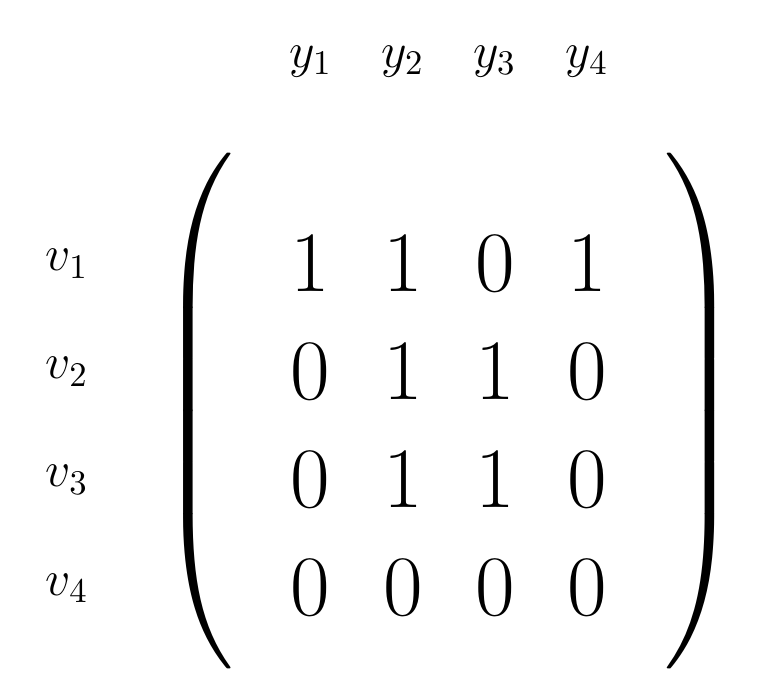}
            \subcaption{Before triangulation}
        \end{subfigure}
         \begin{subfigure}[b]{0.49\textwidth}
        \centering
            \includegraphics[width=0.9\columnwidth]{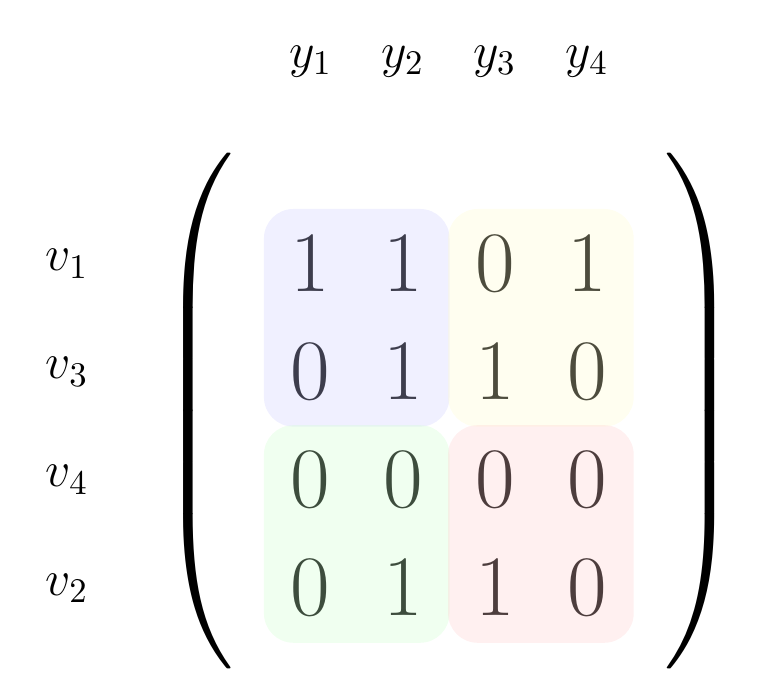}
            \subcaption{After triangulation}
        \end{subfigure}
         \caption{Generator matrix $\Grx$ before and after the triangulation procedure.} .\label{fig:example_mat_0}
\end{figure}

\FloatBarrier
\subsection{Inactivation Strategies}\label{sec:inact_strat}

In this thesis we will focus mostly on a specific inactivation strategy, namely, random inactivation. This inactivation strategy is chosen in order to render the analysis trackable. However, other inactivation strategies exist that lead to a lower number of inactivations, and, therefore, to a decreased decoding complexity. In this Section we will give a short overview of the different inactivation strategies that can be found in literature.

An inactivation consists simply of marking one of the active input symbols as inactive. Therefore, when performing an inactivation the decoder will be presented with as many choices as active symbols are present at that decoding stage.
For illustration, we provide a reduced decoding graph in Figure~\ref{fig:example_app}. The reduced decoding graph represents an \ac{LT} code with $k=9$ source nodes and $m=10$ output nodes. It can be observed how source nodes $v_6$  and $v_7$ have been previously marked as resolvable, whereas source nodes $v_8$ and $v_9$ have been previously marked as inactive. Source nodes $v_1$, $v_2$, $v_3$, $v_4$ and $v_5$ are still active. We can also see how $3$ output symbols have left the reduced graph, since their reduced degree is zero. Output symbols $\rosymb_1$, $\rosymb_2$, ..., $\rosymb_7$ are still in the reduced graph.

\begin{figure}[t]
        \begin{center}
        \includegraphics[width=\textwidth]{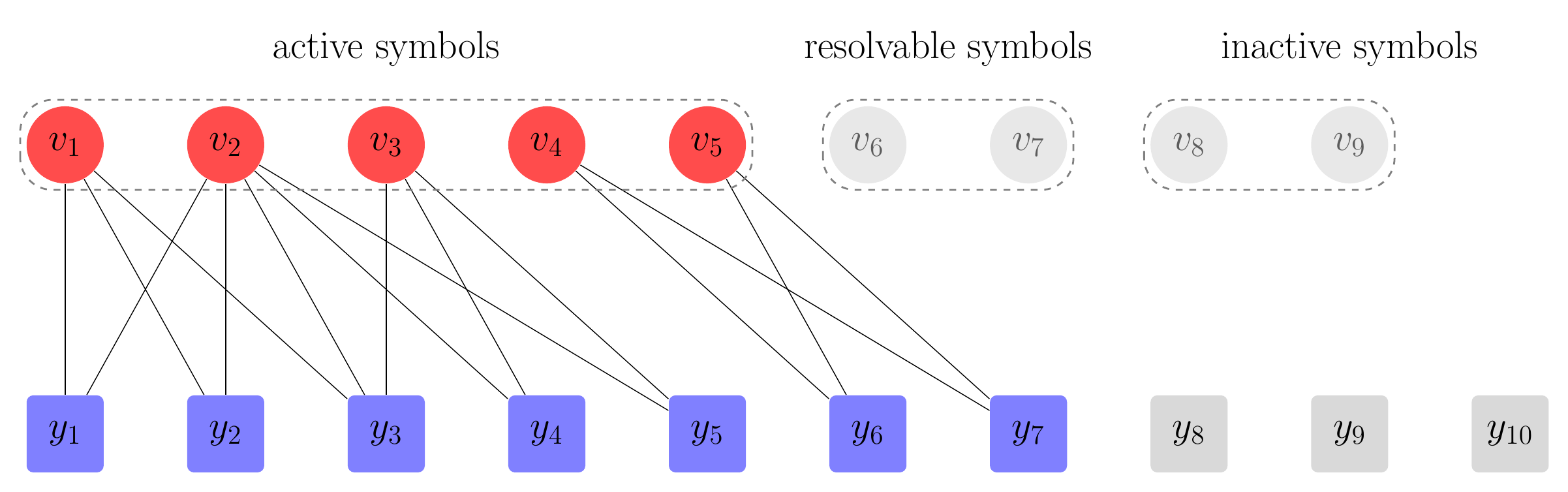}
        \centering \caption{Example of decoding graph of an \ac{LT} code under inactivation decoding \label{fig:example_app}}
        \end{center}
\end{figure}

In the following we will present several algorithms that correspond to different inactivation strategies.

\begin{algo}[Random inactivation]
One of the active input symbols, $v$, is chosen uniformly at random.  This input symbol is marked as inactive and leaves the reduced graph\footnote{This algorithm is a sub-algorithm of Algorithm~\ref{alg:triang}. It is provided for the sake of completeness.}.
\end{algo}

\fran{Note that random inactivation does not depend on the bipartite graph formed by the active symbols. The main advantages of this strategy are its simplicity, and the fact that it renders the analysis on inactivation decoding trackable.}
If random inactivation is applied to the decoding graph in Figure~\ref{fig:example_app} one of the five active input symbols is chosen at uniformly at random.

\begin{algo}[Maximum reduced degree inactivation]
The decoder marks as inactive the source symbol with maximum reduced degree. In case there are several source symbols with the same reduced degree\footnote{As for output symbols, the reduced degree of an input symbol is its degree in the reduced graph.}, one of the source symbols with maximum reduced degree is selected uniformly at random and it is marked as inactive.
\end{algo}

\fran{Maximum reduced degree inactivation aims at making the bipartite graph as sparse as possible by inactivating the input symbol with the most edges attached.} For illustration, we will apply maximum reduced degree inactivation to the decoding graph in Figure~\ref{fig:example_app}. We can see how among the four input symbols, $v_2$ has reduced degree $5$, $v_1$ and $v_3$ have reduced degree $3$, and $v_4$ and $v_5$ have reduced degree $2$. Hence, the decoder will inactivate $v_2$.

In order to introduce the next inactivation strategy we first need to introduce a definition

\begin{mydef}[Accumulated reduced degree]
The accumulated reduced degree of an output symbol is defined as the sum of the reduced degrees of all its neighbors (input symbols). Formally, for a output symbol of degree $d$, and denoting its neighbors as $v_1, v_2, \hdots, v_d$, its accumulated reduced degree is defined as
\[
\sum_{i=1}^{d} \deg_r{(v_i)}.
\]
\end{mydef}

Making use of this definition we can introduce the next inactivation strategy.

\begin{algo}[Maximum accumulated reduced degree inactivation]
The decoder selects the output symbol with minimum reduced degree in the reduced graph. In case there are several output symbols with the same minimum reduced degree, the decoder computes the accumulated reduced degree of each of them and selects the one with maximum accumulated reduced degree. In case there are several output symbols with same minimum reduced degree and same maximum accumulated reduced degree, one of them is chosen at random. The decoder then marks as inactive one of the neighbors of the selected output symbol.
\end{algo}

\fran{The rational behind maximum accumulated reduced degree inactivation is trying to make an input symbol enter the ripple as soon as possible.} Let us also apply maximum accumulated weight inactivation to the decoding graph in Figure~\ref{fig:example_app}. It is easy to see how we have $6$ output symbols of minimum degree $2$, $\rosymb_1$, $\rosymb_2$, $\rosymb_4$,  $\rosymb_5$, $\rosymb_6$ and $\rosymb_7$, with respective accumulated weights $8$, $8$, $8$, $8$, $4$ and $4$. Thus the decoder selects one output symbol at random among those with minimum reduced degree and maximum accumulated reduced degree,i.e., among $\rosymb_1$, $\rosymb_2$, $\rosymb_4$ and  $\rosymb_5$. Let us assume that $\rosymb_1$ is selected. Next, one of its two neighbors is selected at random, for example $v_1$ \fran{and is inactivated. As a consequence $\rosymb_1$ and  $\rosymb_2$ enter the ripple.}

In order to introduce the last inactivation strategy, it is necessary to introduce some concepts dealing with graphs. This concepts are valid for generic graphs (there is no implicit assumption of the graph being a bipartite graph).

\begin{mydef}[Path]
In a graph, a path is a sequence of edges that connects a sequences of vertices (nodes).
\end{mydef}

\begin{mydef}[Connected component]
A connected component of a graph is a subgraph in which any two vertices (nodes) are connected to each other by a path, and in which no vertex (node) is connected to any vertices outside the connected component.
\end{mydef}

Commonly, connected components are referred simply as components. Note that if two nodes, $A$ and $B$ belong to the same component there exists at least one path to go from $A$ to $B$. This does not imply that $A$ and $B$ are neighbors, since the path can be composed of any number of edges. However, if $A$ and $B$ do not belong to the same component there is no path to go from $A$ to $B$.

\begin{algo}[Maximum component inactivation]
{~}
\begin{itemize}
\item The decoder searches for all degree 2 output symbols.
\item If there are no degree 2 output symbols the decoder inactivates one input symbol at random.
\item Otherwise (if there are 1 or more output symbols of degree 2.)
\begin {itemize}
\item The decoder computes the unipartite graph induced by the degree 2 output symbols on the input symbols so that
\begin{itemize}
\item each vertex in this graph corresponds to a degree 2 output symbol
\item two vertices of the induced unipartite graph are only connected to each other if the corresponding degree 2 output symbols have a neighbor (input symbol) in common
\end{itemize}
\item The decoder searches the components in the unipartite graph and computes its size (number of nodes).
\item The decoder inactivates one input symbol that is connected to the maximum component (that of maximum size) in the unipartite graph.
\end{itemize}
\end{itemize}
\end{algo}

The rational behind maximum component inactivation is that when we inactivate  any of the input symbols connected to a component, in the subsequent steps of inactivation decoding all the input symbols connected to the component are marked as resolvable, and, thus, no inactivation happens.

Let us apply maximum component inactivation to the decoding graph in Figure~\ref{fig:example_app}. The degree $2$ output symbols induce the graph shown in Figure~\ref{fig:components}. We can observe how there are two connected components. The first is formed by $\rosymb_1$, $\rosymb_2$, $\rosymb_4$, $\rosymb_5$ and the second by $\rosymb_6$ and $\rosymb_7$.  The decoder hence marks as inactive one input symbol connected to the largest component. Hence, the decoder inactivates either, $v_1$, $v_2$ or $v_3$ (see Figure~\ref{fig:example_app}).

\begin{figure}[t]
        \begin{center}
        \includegraphics[width=\textwidth]{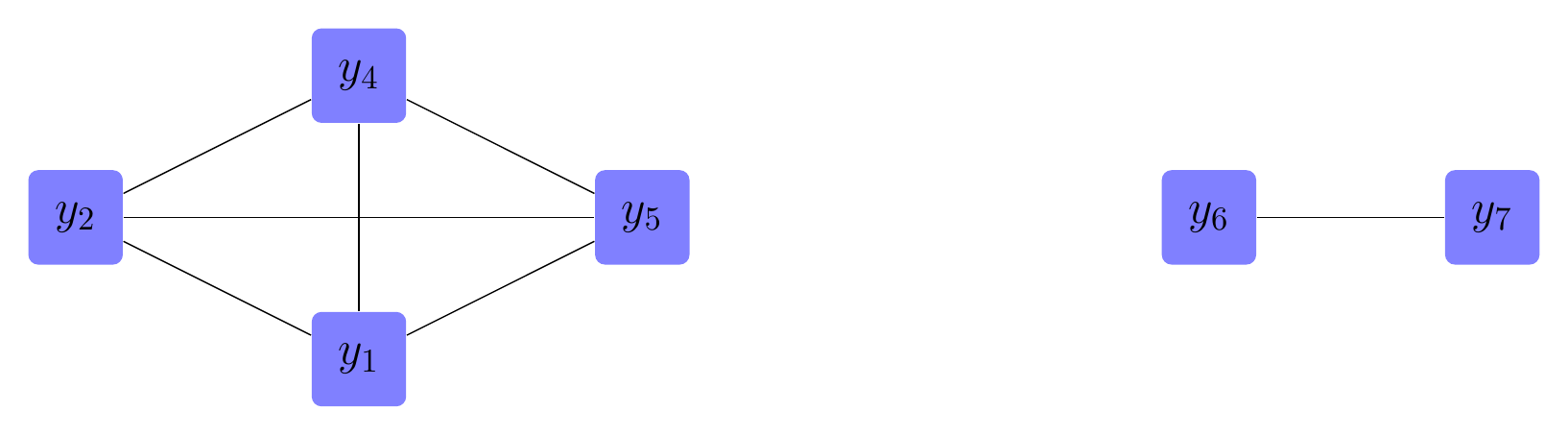}
        \centering \caption{Connected components of the decoding example.\label{fig:components}}
        \end{center}
\end{figure}

In practice different inactivation strategies lead to a different number of inactivations. Among the different inactivation strategies presented in this section, maximum component inactivation usually provides the least number of inactivations, followed closely by maximum accumulated weight inactivation, maximum weight inactivation and at last random inactivation. A detailed comparison of the performance of the different inactivation strategies for R10 Raptor codes can be found in Appendix~\ref{app:practical}.

\section{Analysis under Random Inactivation} \label{chap:inact_analysis}

In the following sections we present novel finite length analysis methods for \ac{LT} codes under inactivation decoding. The goal of these methods is obtaining the number of input symbols that are inactivated after triangulation is over (or an estimation thereof).

\subsection{First Order Finite Length Analysis} \label{chap:inact_first_order}
In \cite{Karp2004,shokrollahi2009theoryraptor,Maatouk:2012} the iterative decoder of \ac{LT} codes was analyzed using a dynamic programming approach. This analysis models the iterative decoder as a finite state machine and it can be used to derive the probability of decoding failure (under iterative decoding). In this section we extend the analysis of the iterative decoder performed in \cite{Karp2004,shokrollahi2009theoryraptor,Maatouk:2012} to the inactivation decoder. The analysis we present in this section is similar to the analysis of batched sparse codes under inactivation decoding presented in \cite{Chingbats}.

As in \cite{Karp2004,shokrollahi2009theoryraptor,Maatouk:2012}, we model the decoder as a finite state machine with state
\[
\S{u}:=(\Cu, \Ru )
\]
where we recall, $\Cu$  and $\Ru$ are respectively the number of output symbols in the cloud and ripple when $u$ output symbols are still active.
In this section a recursion is derived that allows to obtain  $\Pr \{ \S{u-1}=(\c_{u-1}, \r_{u-1}) \}$ as a function of $\Pr \{ \S{u}=(\cu, \ru )\}$ .

Let us first analyze how the ripple and cloud change in the transition from $u$ to $u-1$ active source symbols. In the transition exactly one active source symbol is marked as either resolvable or inactive and all its attached edges are removed. Whenever edges are erased in the graph the degree of one or more output symbols gets reduced.  Consequently, some of the cloud symbols may enter the ripple and some of the ripple symbols may become of reduced degree zero and leave the reduced graph.
We first focus on the symbols that leave the cloud and enter the ripple in the transition given that $\S{u}=(\c_u, \r_u)$.  Since in an \ac{LT} code the neighbors of all output symbols are selected independently and uniformly at random, in a transition each output symbol will leave the cloud and enter the ripple independently from other output symbols. Thus,
the number of cloud symbols which leave $\cloud{u}$ and enter $\ripple{u-1}$ is binomially distributed with parameters $\c_u$ and $p_u$, being $p_u$ the probability of a symbol leaving $\cloud{u}$ to enter $\ripple{u-1}$. Using Bayes' theorem $p_u$ can be written as:
\begin{align}
p_u := \Pr \{ \rosymb \in \ripple{u-1} | \rosymb \in \cloud{u} \} = \frac { \Pr \{ \rosymb \in \ripple{u-1}\, , \, \rosymb \in \cloud{u} \} }  { \Pr \{ \rosymb \in \cloud{u} \}}.
\label{eq:pu_prob}
\end{align}


Let us first consider the numerator of \eqref{eq:pu_prob} assuming that output symbol $\rosymb$ has (original) degree $d$:
\[
\Pr \{ \rosymb \in \ripple{u-1}\, , \, \rosymb \in \cloud{u} | \deg(\rosymb)= d\}.
\]
   This corresponds to the probability that
   \begin{itemize}
   \item one of the $d$ edges of output symbol $\rosymb$ is connected to the symbol being marked as inactive or resolvable at the transition,
   \item another edge is connected to one of the $u-1$ active symbols after the transition,
   \item the remaining $d-2$ edges connected to the $k-u$ not active input symbols (inactive or resolvable).
   \end{itemize}
   In other words, the symbol must have \emph{reduced} degree $2$ \emph{before} the transition and \emph{reduced} degree $1$ \emph{after} the transition.
\begin{prop}
The probability that a symbol $\rosymb$ belongs  to the cloud at step $u$  and  enters the ripple at step $u-1$,  given its original degree $d$ is given by
\begin{align}
 \Pr \{ \rosymb \in \ripple{u-1}\, , \,  \rosymb \in \cloud{u} | \deg(\rosymb)= d \} =   \begin{cases}
  \frac{d}{k} (d-1)\frac{u-1}{k-1}  \frac{\binom{k-u}{d-2}}{\binom{k-2}{d-2}} & \text{if }  d\geq 2 \\
  0 & \text{if }  d < 2
 \end{cases}.
\label{eq:z_and_l_d}
\end{align}
\end{prop}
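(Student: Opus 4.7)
The plan is to split by the value of $d$ and, for $d\geq 2$, reduce the joint event to a purely combinatorial condition on how the $d$ neighbors of $\rosymb$ are distributed at step $u-1$, then evaluate the resulting probability by a chain-rule factorization that matches the stated form factor-by-factor. For $d<2$ the claim is trivial: membership in $\cloud{u}$ requires reduced degree at least two, and an output symbol of total degree less than two can never achieve this, so the probability vanishes. For $d\geq 2$, the crucial observation is that exactly one input symbol, call it $s^{\star}$, is removed from the active set in the transition $u\to u-1$; hence the reduced degree of $\rosymb$ can drop by at most one, and the event $\{\rosymb\in\cloud{u}\}\cap\{\rosymb\in\ripple{u-1}\}$ is equivalent to $s^{\star}\in\neighbourof{\rosymb}$ together with exactly one of the remaining $d-1$ neighbors lying in the $u-1$ symbols still active at step $u-1$ and the other $d-2$ lying in the $k-u$ symbols already resolved or inactivated before step $u$.

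Next I would invoke the symmetry built into \ac{LT} encoding (the $d$ neighbors of $\rosymb$ are chosen uniformly at random from $[k]$) together with the exchangeability of input-symbol labels under the decoding dynamics (the same principle underlying the Karp--Shokrollahi analysis referenced earlier in the chapter) to conclude that, conditional on $\deg(\rosymb)=d$, the set $\neighbourof{\rosymb}$ is a uniformly random $d$-subset of $[k]$ independent of the labels attached to $s^{\star}$ and to the two other classes of input symbols. Applying the chain rule, $\Pr\{s^{\star}\in\neighbourof{\rosymb}\}=d/k$, because a uniformly random $d$-subset of $[k]$ contains any fixed element with probability $d/k$. Conditional on this, the remaining $d-1$ neighbors are uniformly distributed on $[k]\setminus\{s^{\star}\}$, and the probability of the required hypergeometric configuration is
\[
\frac{\binom{u-1}{1}\binom{k-u}{d-2}}{\binom{k-1}{d-1}}
=(d-1)\,\frac{u-1}{k-1}\,\frac{\binom{k-u}{d-2}}{\binom{k-2}{d-2}},
\]
where the last step uses $\binom{k-1}{d-1}=\frac{k-1}{d-1}\binom{k-2}{d-2}$. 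Multiplying the two factors reproduces the right-hand side of \eqref{eq:z_and_l_d}.

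The main technical point requiring care is the exchangeability step, since $\neighbourof{\rosymb}$ influences the decoding dynamics that in turn determine $s^{\star}$ and the active set. I would settle this by the standard relabeling argument: because the decoder treats input-symbol labels interchangeably, the joint distribution of $\neighbourof{\rosymb}$, $s^{\star}$ and the active set at step $u-1$ is invariant under any permutation of $[k]$, so after conditioning on $\deg(\rosymb)=d$ one may fix any canonical partition of $[k]$ into the three classes above and the marginal probability coincides with the uniform-subset count computed in the previous paragraph. The remaining work is purely algebraic.
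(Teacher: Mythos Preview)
Your proof is correct and follows essentially the same combinatorial factorization as the paper: one edge hits the symbol removed at the transition (contributing $d/k$), one of the remaining $d-1$ edges hits the $u-1$ still-active symbols (contributing $(d-1)(u-1)/(k-1)$), and the last $d-2$ edges land among the $k-u$ already non-active symbols (contributing $\binom{k-u}{d-2}/\binom{k-2}{d-2}$). If anything, you are more explicit than the paper about the exchangeability justification needed to decouple $\neighbourof{\rosymb}$ from the identity of $s^{\star}$ and the active set; the paper simply asserts the factorization directly.
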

\begin{proof}
First, the probability  that one edge is connected to the symbol being marked as inactive or resolvable at the transition is $1/k$, and there are $d$ distinct choices for the edge connected to it. This accounts for the term $d/k$ in \eqref{eq:z_and_l_d}.

Second, there are  $d-1$ choices for the edge going to the $u-1$ active symbols after the transition, and the probability of an edge being connected to the set of $u-1$ active symbols is $(u-1)/(k-1)$. This is reflected in the term $(d-1)(u-1)/(k-1)$ in \eqref{eq:z_and_l_d}.

Finally, the last term corresponds to the probability of having exactly $d-2$ edges going to the $k-u$ not active input symbols:
\[
\frac{\binom{k-u}{d-2}}{\binom{k-2}{d-2}}. \qedhere
\]
\end{proof}
If the conditioning on $d$ in \eqref{eq:z_and_l_d} is removed we obtain
\begin{align}
\Pr \{ \rosymb  \in \ripple{u-1}\, , \,  \rosymb  \in \cloud{u} \} =
\mathlarger {\sum}_{d=2}^{\dmax}   \Omega_d \frac{d}{k} (d-1)  \frac{u-1}{k-1}  \frac{\binom{k-u}{d-2}}{\binom{k-2}{d-2}}.
\label{eq:z_and_l}
\end{align}

\medskip

The denominator of \eqref{eq:pu_prob} is given by the probability that the randomly chosen output symbol $\rosymb$ is in the cloud when $u$ input symbols are still active. This is equivalent to the probability of not being in the ripple or having reduced degree zero (all edges are going to symbols marked as inactive or resolvable) as provided by the following Proposition.
\begin{prop}
The probability that the randomly chosen output symbol $\rosymb$ is in the cloud when $u$ input symbols are still active corresponds to
\begin{align}
\Pr & \{ \rosymb \in \cloud{u}\}=  1 -  \mathlarger{\sum}_{d=1}^{\dmax}  \Omega_d  \left[ u\frac{\binom{k-u}{d-1}}{\binom{k}{d}} + \frac{\binom{k-u}{d}}{\binom{k}{d}}\right].
\label{eq:z}
 \end{align}
\end{prop}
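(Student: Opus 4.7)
The plan is to compute $\Pr\{\rosymb \in \cloud{u}\}$ via its complement, namely the probability that the reduced degree of $\rosymb$ is either $0$ or $1$. By definition the cloud at step $u$ consists of output symbols of reduced degree at least $2$ with respect to the set of $u$ still-active input symbols, so I will use
\[
\Pr\{\rosymb \in \cloud{u}\} = 1 - \Pr\{\deg_r(\rosymb) = 0\} - \Pr\{\deg_r(\rosymb) = 1\}.
\]

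First I will condition on the original degree, $\deg(\rosymb) = d$. Given $d$, the $d$ neighbors of $\rosymb$ are chosen uniformly at random among the $k$ input symbols (without replacement), so the joint distribution of how many of them fall in the active set of size $u$ is hypergeometric. Hence
\[
\Pr\{\deg_r(\rosymb)=0 \mid \deg(\rosymb)=d\} = \frac{\binom{k-u}{d}}{\binom{k}{d}},
\]
\[
\Pr\{\deg_r(\rosymb)=1 \mid \deg(\rosymb)=d\} = \frac{\binom{u}{1}\binom{k-u}{d-1}}{\binom{k}{d}} = u\,\frac{\binom{k-u}{d-1}}{\binom{k}{d}}.
\]
These are the only two ways for $\rosymb$ to lie outside $\cloud{u}$.

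Finally I will remove the conditioning on $d$ by averaging against the output degree distribution $\Omega$, i.e., multiplying by $\Omega_d$ and summing $d$ from $1$ to $\dmax$ (noting that the $d=1$ term contributes only to the reduced-degree-$1$ event, since $\binom{k-u}{-1}=0$ by convention, making the formula internally consistent). This yields exactly
\[
\Pr\{\rosymb \in \cloud{u}\}= 1 - \sum_{d=1}^{\dmax}\Omega_d\left[\,u\frac{\binom{k-u}{d-1}}{\binom{k}{d}} + \frac{\binom{k-u}{d}}{\binom{k}{d}}\,\right],
\]
as claimed. There is no real obstacle here: the argument is essentially a direct application of the hypergeometric counting that already underpins equation \eqref{eq:z_and_l_d}, together with the decomposition of the complement of the cloud into the disjoint events $\{\deg_r = 0\}$ and $\{\deg_r = 1\}$.
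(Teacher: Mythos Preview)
Your proof is correct and follows essentially the same approach as the paper: both compute the complement by splitting into the disjoint events $\{\deg_r(\rosymb)=0\}$ and $\{\deg_r(\rosymb)=1\}$, evaluate each via the hypergeometric counting conditioned on $\deg(\rosymb)=d$, and then average over $\Omega$. One small slip: your parenthetical about $d=1$ is mistaken---no $\binom{k-u}{-1}$ arises there, and in fact both terms contribute (summing to $u/k+(k-u)/k=1$, as they should)---but this does not affect the validity of the argument.
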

\begin{proof}
The probability of $\rosymb$ not being in the cloud is given by the probability of $\rosymb$ having reduced degree $0$ or being in the ripple. Since the two events are mutually exclusive, we can compute such probability as the sum of two probabilities, the probability of $\rosymb$ being in the ripple (i.e., having reduced degree $1$) and the probability of $\rosymb$ having reduced degree $0$.

We will first focus on the probability of $\rosymb$ being in the ripple. Let us assume $\rosymb$ is of degree $d$. The probability that $\rosymb$ has reduced degree $1$ equals the probability of $\rosymb$ having exactly one neighbor among the $u$ active source symbols and the remaining $d-1$ neighbors among the $k-u$ non-active ones. This is given by
\[
d ~\frac{u}{k} ~\frac{\binom{k-u}{d-1}}{\binom{k-1}{d-1}} = u\frac{\binom{k-u}{d-1}}{\binom{k}{d}}
\]
that corresponds to the first term in \eqref{eq:z}.

The probability of $\rosymb$ having reduced degree $0$ is the probability that all $d$ neighbors of $\rosymb$ are in the $k-u$ non-active symbols. This leads to the term

\[
\frac{\binom{k-u}{d}}{\binom{k}{d}}
\]
in \eqref{eq:z}.
\end{proof}

The probability $p_u$ can be finally obtained through \eqref{eq:pu_prob}, making use of \eqref{eq:z_and_l} and of \eqref{eq:z} and corresponds to:
\begin{align}
p_u =  \frac{ \mathlarger {\sum}\limits_{d=2}^{\min(\dmax,k-u+2 )}   \Omega_d \,d \,(d-1)\frac{1}{k}  \frac{u-1}{k-1}  \frac{\binom{k-u}{d-2}}{\binom{k-2}{d-2}} }
{  1 -  \mathlarger{\sum}\limits_{d=1}^{\myop{\min (\dmax, k-u+1)}}    \Omega_d \,u \frac{\binom{k-u}{d-1}}{\binom{k}{d}d}
-  \mathlarger{\sum}\limits_{d=1}^{\myop{\min (\dmax, k-u)}}   \Omega_d \frac{\binom{k-u}{d}}{\binom{k}{d}}    }
\end{align}
where  the upper limits of the summations have been adjusted to take into account that an output symbol cannot have more than $d$ edges going to a set of $d$ input symbols.

Let us now focus on the number of symbols leaving the ripple during the transition from $u$ to $u-1$ active symbols, which we shall denote by $\erv_u$. We denote by $\Erv_u$ the random variable associated with $\erv_u$. We distinguish two cases.
In the first case, the ripple is not empty and no inactivation takes place. Hence, an output symbol $\rosymb$ is chosen at random from the ripple and its only neighbor $v$ is marked as resolvable and it is removed from the graph. Any other output symbol in the ripple which is connected to the input symbol $v$ leaves the ripple during the transition. Hence, for $\ru>0$ we have
\begin{align}
\Pr\{\Erv_u=\erv_u & |\Ru=\ru\} = \binom{\ru-1}{\erv_u-1} \left(\frac{1}{u}\right)^{\erv_u-1} \left( 1- \frac{1}{u} \right)^{\ru-\erv_u}
\end{align}

\vspace{4pt}

\noindent
with $1\leq \erv_u \leq \ru$.

In the second case, the ripple is empty ($\ru=0$). Since no output symbols can leave the ripple, we have
\begin{align}
\Pr\{\Erv_u=\erv_u|\Ru=0\} = \begin{cases} 1  & \text{if } \erv_u = 0 \\ 0  & \text{if } \erv_u \neq 0 \end{cases}.
\end{align}

\medskip

Now we are in the position to derive the transition probability
\[
\Pr\{\S{u-1}=(\c_{u-1},\r_{u-1})|\S{u}=(\c_{u},\r_{u})\}.
\]
Let us recall that, by definition, $\b_u$ denotes the variation of number of cloud elements in the transition from $u$ to $u-1$ active symbols. Formally,
\[\b_u:=\cu-\c_{u-1}.
\]
Let us also observe that  the variation of the ripple size is subject to the following equilibrium constraint
 \[
 \erv_u-\bu=\ru-\r_{u-1}
 \]
 which follows from the definition $\erv_u$ and $\bu$.
 The transition probability is given by:
\begin{align}
\Pr\{\S{u-1}&=(\cu-\bu,\ru-\erv_u+\bu) | \S{u}=(\cu,\ru)\} = \\[2mm]
& \binom{\cu}{\bu} {p_u}^{\bu} (1-p_u)^{\cu-\bu} \binom{\ru-1}{\erv_u-1} \,
\left(\frac{1}{u}\right)^{\erv_u-1} \left( 1- \frac{1}{u} \right)^{\ru-\erv_u}
\label{eq:prob_transition}
\end{align}
for $\ru>0$, while for $\ru=0$ we have
\begin{align}
\Pr\{\S{u-1}&=(\cu-\bu,\bu) | \S{u}=(\cu,0)\} =
 \binom{\cu}{\bu} {p_u}^{\bu} (1-p_u)^{\cu-\bu}.
\label{eq:prob_transition_r_0}
\end{align}

\vspace{4pt}

\noindent
Finally, the probability of the decoder being in state $\S{u-1} =(\c_{u-1},\r_{u-1})$ can be computed in a recursive manner via \eqref{eq:prob_transition}, \eqref{eq:prob_transition_r_0}. The decoder state is initialized as
\begin{align}
 \Pr\{\S{k} =(\c_{k},\r_{k}) \} = \binom{m}{\r_{k}} \Omega_1^{\r_{k}} \left( 1-\Omega_1\right)^{\c_{k}}
\end{align}
for all non-negative $\c_{k},\r_{k}$ such that $\c_{k}+\r_{k} = m$
where $m$ is the number of  output symbols.

Let us denote by $\Y$ the random variable that corresponds to the cumulative number of inactivations after the $k$ steps. \fran{The expected value of $\Y$  is given by}
\begin{align}
\Exp\left[\Y\right]= \sum_{u=1}^{k} \sum_{ \cu }  \Pr\{\S{u} =(\cu,0) \}. \label{eq:avg_inact}
\end{align}

Figure ~\ref{fig:mbms_k_1000} shows the expected number of inactivations for an \ac{LT} code with $k=1000$ and the output degree distribution used in standardized Raptor codes, $\Omegarten$. 
The chart compares the average number of inactivations obtained through  Monte Carlo simulation and by \eqref{eq:avg_inact}. It can be observed how there is a tight match between the analysis and the simulation results.

\begin{figure}[t]
\begin{center}
\includegraphics[width=\figwbigger]{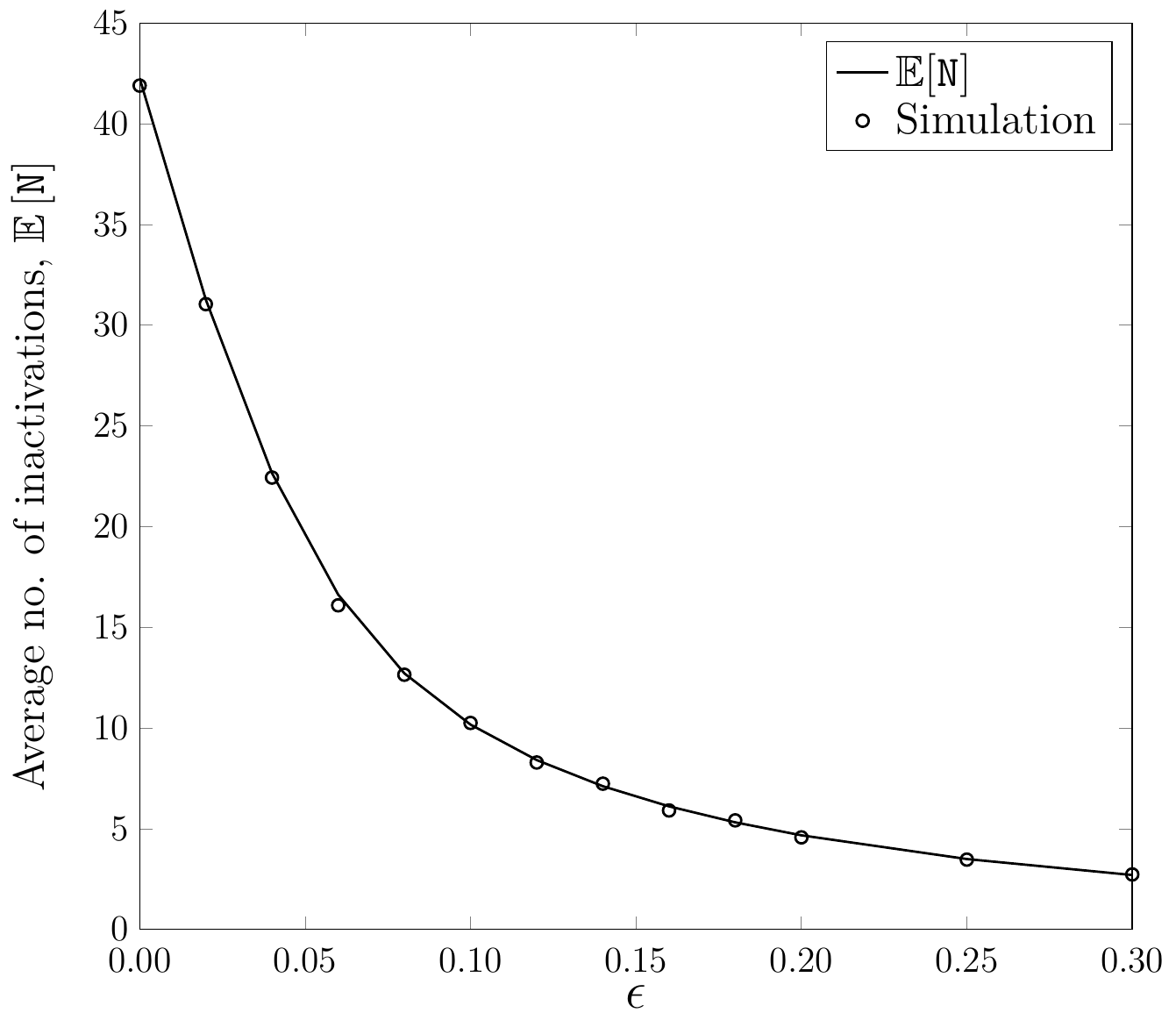}
\centering \caption[Average number of inactivations vs.\  relative overhead for an \ac{LT} code with $k=1000$ and  with  degree distribution $\Omegarten$]{Average number of inactivations vs.\  relative overhead $\reloverhead$ for an \ac{LT} code with $k=1000$ and  with  degree distribution $\Omegarten$.}
\label{fig:mbms_k_1000}
\end{center}
\end{figure}

\FloatBarrier
\subsection{Complete Finite Length Analysis}\label{chap:inact_distribution}
The analysis presented in Section~\ref{chap:inact_first_order} is able to provide the expected number of inactivations (first moment). In this section we shall see that the model can be easily modified to obtain also the complete probability distribution of the number of inactivations.
For this purpose, we first need to include in the state definition the number of inactive input symbols. Hence the state is given by
\[
\S{u}=(\Cu, \Ru, \Nu )
\]
 with $\Nu$ being the random variable that corresponds to the number of inactivations at step $u$. Again, we  proceed by deriving a recursion that allows deriving ${\Pr\{ \S{u-1}=(\c_{u-1}, \r_{u-1}, \n_{u-1} )\}}$ as a function of $\Pr\{ \S{u}=(\cu, \ru, \nu )\}$.
Let us first look at the transition from $u$ to $u-1$ active symbols when $\r_{u}\geq1$, that is, when no inactivation takes place. In this case the number of inactivations stays the same and we have $\n_{u-1} = \nu$. Therefore, we have
\begin{align}
\Pr\{ \S{u-1}=(\c_{u} & -\b_{u}, \r_{u}-\erv_{u}+\b_{u}, \nu) | \S{u}=(\c_u,\r_u,\nu)\} = \\
& \binom{\c_u}{\b_u} {p_u}^{b_u} (1-p_u)^{\c_u-\b_u}\,
 \binom{\r_u-1}{\erv_u-1} \left(\frac{1}{u}\right)^{\erv_u-1} \left( 1- \frac{1}{u} \right)^{\r_u-\erv_u}.
\label{eq:prob_transition_full}
\end{align}
Let us now look at the transition from $u$ to $u-1$ active symbols when $\r_{u}=0$, that is, when an inactivation takes place. In this case the number of inactivations is increased by one yielding
\begin{align}
\Pr\{ \S{u-1}=(\c_{u}-& \b_{u},\b_{u}, \nu+1) | \S{u}=(\c_u,0,\nu)\}  = \binom{\c_u}{\b_u} {p_u}^{\b_u} (1-p_u)^{\c_u-\b_u}.
\label{eq:prob_transition_r_0_full}
\end{align}
The probability of the decoder being in state $\S{u-1} =(\c_{u-1},\r_{u-1}, \n_{u-1})$ can be computed recursively via \eqref{eq:prob_transition_full}, \eqref{eq:prob_transition_r_0_full} starting with the initial condition
\begin{align}
 \Pr\{\S{k}=(\c_k,\r_k, \n_u) \} = \binom{m}{\r} \Omega_1^\r \left( 1-\Omega_1\right)^{\c_k}
 \end{align}
for all non-negative $\c_k, \r_k$ such that $\c_k+\r_k=m$ and $\n_k=0$.

The distribution of the number of inactivations needed to complete the decoding process is finally given by
\begin{align}
f_{\Y}(\y) = \sum_{ \c_0} \sum_{ \r_0} \Pr\{\S{0} =(\c_0,\r_0, \y) \}.\label{eq:distribution}
\end{align}

From \eqref{eq:distribution} we may obtain the cumulative distribution $F_{\Y}(\y)$ which corresponds to the probability of performing at most $\y$ inactivations during the decoding process. The cumulative distribution of the number of inactivations has practical implications. Let us assume the fountain decoder runs on a platform with limited computational capability. For example, the decoder may be able to perform a maximum number of inactivations (recall that the complexity of inactivation decoding is cubic in the number of inactivations, $\y$). Suppose the maximum number of inactivations that the decoder can handle is $\y^*$. For such a decoder, the probability of decoding failure will be lower bounded by ${1- F_{\Y}(\y^*)}$\footnote{The probability of decoding failure is actually higher than $1- F_{\Y}(\y^*)$ since the system of equations to be solved in the \acf{GE} step of inactivation decoding might be rank deficient.}.

Figure~\ref{fig:mbms_dist_inact} shows the distribution of the number of inactivations, for an \ac{LT} code with degree distribution $\Omegarten$ given in \eqref{eq:dist_mbms} and source block size $k = 500$. The chart shows the distribution of the number of inactivations obtained through  both Monte Carlo simulation and by \eqref{eq:distribution}. Again, we can observe how there is a very tight match between the analysis and the simulation results.

\begin{figure}[t]
\begin{center}
\includegraphics[width=\figwbigger]{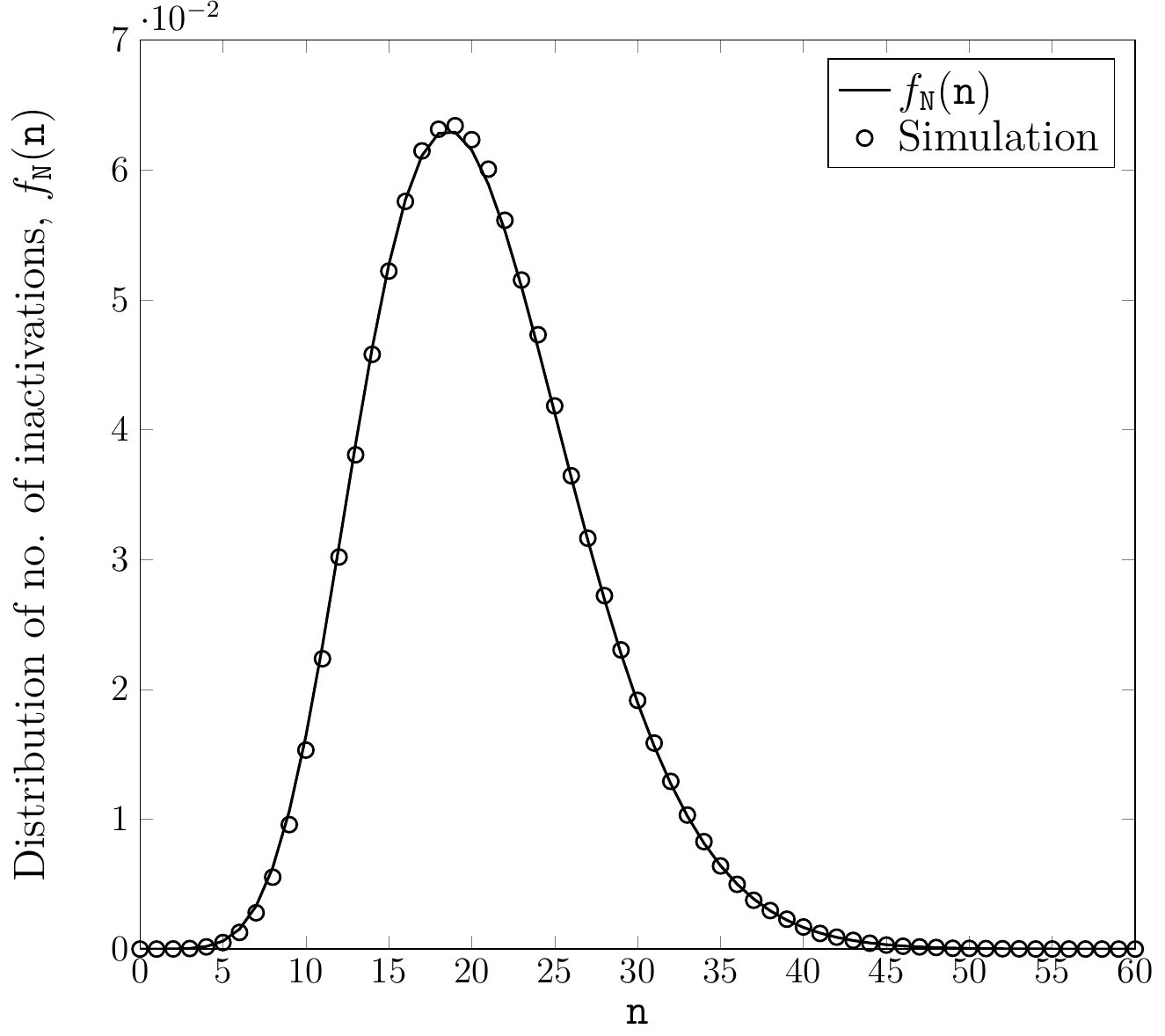}
\centering \caption[Distribution of the number of inactivations for an \acs{LT} code with $k=500$, $m=510$ and degree distribution $\Omegarten$]{Distribution of the number of inactivations for an \ac{LT} code with $k=300$ and  degree distribution $\Omegarten$ given in \eqref{eq:dist_mbms}. }
\label{fig:mbms_dist_inact}
\end{center}
\end{figure}

\subsection{Binomial Approximation}\label{chap:inact_low_complex}

In Sections~\ref{chap:inact_first_order} and \ref{chap:inact_distribution} we have derived recursive methods that can compute the expected number of inactivations and the distribution of the number of inactivations needed to complete the decoding process. The disadvantage of these methods is that their evaluation is computationally complex for large $k$. In this section we propose another recursive method that can provide a reasonably accurate estimation of the number of inactivations with lower computational complexity.

\fran{Let us start by introducing the following definition.
\begin{mydef}[Reduced degree-$d$ set] The reduced degree-$d$ set is defined as the set of output symbols of reduced degree $d$, and we denote it by $\Rijsets{d}$.
\end{mydef}
We shall denote the cardinality of $\Rijsets{d}$ by $\Rijcards{d}$ and its associated random variable by $\Rijreals{d}$.
Obviously, the set $\Rijsets{1}$ corresponds to the ripple. Moreover, the cloud $\cloudset$ corresponds to the union of the sets of output symbols of reduced degree higher than $1$, that is,
\[
\cloudset = \bigcup_{d=2}^{\dmax} \Rijsets{d}.
\]
Furthermore, since the sets $\Rijsets{d}$ are obviously disjoint, we have
\[
\Cloud = \sum_{d=2}^{\dmax} \Rijreals{d}.
\]
}

\fran{As we did in the previous section, we shall add a temporal dimension through subscript $u$, where $u$ corresponds to the number of active input symbols in the graph. Thus, $\Rijset{u}{d}$ is the set of reduced degree $d$ output symbols when $u$ input symbols are still active. Furthermore, $\Rijreal{u}{d}$ and $\Rijcard{u}{d}$ are respectively the random variable associated to the number of reduced degree $d$ output symbols when $u$ input symbols are still active and its realization.
The triangulation process can be modelled by means of a finite state machine with state
\[
\S{u}:=\left(   \Rijreal{u}{1}, \Rijreal{u}{2},\ldots, \Rijreal{u}{\dmax}  \right).
\]
}

\fran{
Let us analyze next how the change of the number of reduced degree $d$ output symbols in the transition from $u$ to $u-1$ active input symbols.
We shall consider first a randomly chosen output symbol $\rosymb$ with reduced degree $d \geq 2$. Denoting by $\probtx{u}{d}$ the probability that the degree of  $\rosymb$ decreases to $d-1$ after the transition from $u$ to $u-1$, that is,
\begin{align}
	\probtx{u}{d} := \Pr \{ \rosymb \in \Rijset{u-1}{d-1} | \rosymb \in \Rijset{u}{d} \}
\end{align}
we have,
\begin{prop}
	\label{prop:approx}
	The probability that a randomly chosen output symbol $\rosymb$, that has reduced degree $d\geq 2$ when $u$ input symbols are active, has reduced degree $d-1$ when $u-1$ input symbols are active is
	\begin{align}
		\probtx{u}{d} = \frac{d}{u}.
	\end{align}
\end{prop}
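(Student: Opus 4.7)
The plan is to argue that in the transition from $u$ to $u-1$ active input symbols, exactly one active input symbol is removed, and that by the symmetry of the LT construction this removed symbol is uniformly distributed over the $u$ currently active input symbols. Since $\rosymb$ has reduced degree $d$, it has exactly $d$ neighbours within the active set; hence its reduced degree drops from $d$ to $d-1$ precisely when the removed symbol is one of these $d$ neighbours, yielding the probability $d/u$ claimed in the proposition.

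First I would verify that Algorithm~\ref{alg:triang} removes exactly one active input symbol per transition: if the ripple is non-empty, the unique neighbour of a chosen ripple symbol is marked as resolvable and leaves; if the ripple is empty, a single active symbol is marked as inactive. Either way the reduced degree of a fixed output symbol $\rosymb$ with $d\geq 2$ active neighbours can decrease by at most one, and cannot increase. Consequently $\rosymb \in \Rijset{u-1}{d-1}$ if and only if the unique input symbol removed at this step lies in the $d$-element active neighbourhood of $\rosymb$.

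Next I would establish the uniformity claim. When the ripple is empty, random inactivation selects the victim uniformly at random over the $u$ active symbols by definition, so each is chosen with probability $1/u$. In the non-empty-ripple case, the removed symbol is the unique active neighbour of a uniformly chosen ripple symbol $\rosymb' \neq \rosymb$ (the inequality holds because $\rosymb$ has reduced degree $d \geq 2$ and therefore $\rosymb \notin \ripple{u}$). Exchangeability of the active input symbols, inherited from the fact that every output symbol independently samples its neighbour set uniformly among the $d$-subsets of the $k$ input symbols, implies that the single active neighbour of $\rosymb'$ is uniformly distributed over the $u$ active symbols, independently of the identities of the neighbours of $\rosymb$. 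Combining both cases, the removed symbol is uniform on the active set.

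The main obstacle is the exchangeability step for the non-empty ripple case, since one must confirm that conditioning on the event $\rosymb \in \Rijset{u}{d}$ together with all structural information accumulated so far does not bias which active input symbol appears as the neighbour of $\rosymb'$. This will follow from a standard relabelling argument: any permutation of the labels of the $u$ active input symbols leaves the joint distribution of the decoder state, the reduced neighbourhood of $\rosymb$, and the reduced neighbourhood of $\rosymb'$ invariant, because the LT encoder chose neighbours symmetrically. Once symmetry is in place, the probability that the uniformly removed symbol falls into the $d$-element neighbour set of $\rosymb$ is exactly $d/u$, which proves the proposition.
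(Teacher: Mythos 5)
Your proof is correct and takes essentially the same approach as the paper's: exactly one active input symbol is removed per transition, it is (by symmetry) uniformly distributed over the $u$ active symbols, and it falls into $\rosymb$'s $d$-element active neighbourhood with probability $d/u$. You are more careful than the paper's one-line proof, which simply asserts that one input symbol is ``selected at random'' without distinguishing the ripple-empty case (uniform by definition of random inactivation) from the ripple-nonempty case (uniform via the exchangeability argument you sketch, using $\rosymb'\neq\rosymb$ since $d\geq 2$) — your added detail fills in exactly the step the paper elides.
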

\begin{proof}
Before the transition we have $u$ active input symbols and output symbol $\rosymb$ has exactly $d$ neighbors among the $u$ active input symbols. In the transition from $u$ to $u-1$ active symbols, $1$ input symbol is selected at random and marked as either resolvable or inactive. The probability that the degree of $\rosymb$ gets reduced corresponds to the probability that one of its $d$ neighbors is marked as resolvable or inactive.
\end{proof}
Since all output symbols choose their neighbors independently, we have that the random variable associated with the number of output symbols of reduced degree  $d$ that become of reduced degree $d-1$ in the transition from $u$ to $u-1$ output symbols, $\ntx{u}{d}$, conditioned to $\Rijreal{u}{d} = \Rijcard{u}{d}$,   is binomially distributed with parameters $\Rijcard{u}{d}$ and $\probtx{u}{d} = d/u$.
}

\fran{Since output symbols select their neighbors without replacement, they cannot have two edges going to the same input symbols. Thus, we have,
\begin{align}
	\Rijreal{u-1}{d}  &=   \Rijreal{u}{d} + \ntx{u}{d+1}  - \ntx{u}{d}
\label{eq:equilibrium}
\end{align}
}

Using a dynamical programming approach similar to the one in Section~\ref{chap:inact_first_order} it is possible to derive a recursion to determine the decoder state probability, which also provides the expected number of inactivations. However, such a recursion would be much more complex to evaluate than the one in Section~\ref{chap:inact_first_order} because the number of possible states is now much larger. Nevertheless, the complexity can be dramatically reduced by introducing a simplifying assumption. \fran{Concretely,
the method presented in this section is based on the assumption that $\Rijreal{u}{d}$ can be approximated by a binomially distributed random variable with parameters $\m$ and $\rij{u}{d}$. Formally, we assume that the distribution of the decoder state at step $u$ can be approximated as a product of binomial distributions,
\begin{equation}
\Pr \{ \S{u}= \Rijcardv{u} \} \approx \mathlarger{\prod}_{d=1}^{\dmax} \binom{\m}{ \Rijcard{u}{d} } {\left(\rij{u}{d}\right)}^{ \Rijcard{u}{d}  }  \left(1- \rij{u}{d}\right)^{\m - \Rijcard{u}{d} }.
\label{eq:assumption}
\end{equation}
where
\[
	\Rijcardv{u}=\left(   \Rijcard{u}{1}, \Rijcard{u}{2},\ldots, \Rijcard{u}{\dmax}  \right)	
\]  }
This binomial distribution assumption is made for the sake of simplicity but it was shown to be reasonably accurate through Monte Carlo simulations.
This assumption greatly simplifies the finite state machine, since the finite length analysis now reduces to deriving a recursion to obtain $\rij{u}{d}$.

\fran{In order to derive a recursion to compute $\rij{u}{d}$, we shall distinguish two cases. First, we  consider the output symbols of reduced degree $d \geq 2$ .
From Proposition~\ref{prop:approx}, for $d \geq 2$ we have that $\ntx{u}{d}$ conditioned to ${\Rijreal{u}{d} = \Rijcard{u}{d}}$ is a binomial random variable with parameters $\Rijcard{u}{d}$ and $\probtx{u}{d} = d/u$.
Thus, we have
\begin{equation}
 \Exp \left[ \Exp \left[\ntx{u}{d} | \Rijreal{u}{d} \right] \right]= \frac{d}{u} \, \Exp \left[  \Rijreal{u}{d} \right].
 \label{eq:ntx_d_2}
\end{equation}
If we now take the expectation on both sides of \eqref{eq:equilibrium}, we have
\begin{align}
	\Exp \left[ \Rijreal{u-1}{d} \right] &=  \Exp \left[  \Rijreal{u}{d}\right] + \Exp \left[ \ntx{u}{d+1}\right]  - \Exp \left[ \ntx{u}{d}\right].
\label{eq:equilibrium_exp}
\end{align}
Substituting \eqref{eq:ntx_d_2} in \eqref{eq:equilibrium_exp} , and we can write
\begin{align}
 \Exp \left[ \Rijreal{u-1}{d} \right] &=  \Exp \left[  \Rijreal{u}{d}\right] + \frac{d+1}{u} \, \Exp \left[  \Rijreal{u}{d+1} \right]  - \frac{d}{u} \, \Exp \left[  \Rijreal{u}{d} \right].
\end{align}
If we now make use of our assumption and consider
\[
\Exp \left[ \Rijreal{u-1}{d} \right] = m \, \rij{u}{d}
\]
we can write
\[
\rij{u-1}{d} = \left(1-\frac{d}{u} \right) \rij{u}{d} + \frac{d+1}{u}  \rij{u}{d+1},
\]
for $d \geq 2$. }

\fran{
We shall now consider the output symbols of reduced degree 1. We are interested in $\ntx{u}{1}$, the random variable associated to the output symbols of reduced degree $d=1$ that become of reduced degree $0$ in the transition from $u$ to $u-1$ active input symbols. If we assume that the ripple is not empty, $\Rijcard{u}{1}  \geq 1$ , an output symbol $\rosymb$ is chosen at random from the ripple and its only neighbor $v$ is marked as resolvable and it is removed from the graph. Moreover, any other output symbol in the ripple being connected to input symbol $v$ also leaves the ripple during the transition. Thus, for $\Rijcard{u}{1}  \geq 1$ we have
\begin{align}
 \Exp \left[\ntx{u}{1} | \Rijreal{u}{1} = \Rijcard{u}{1} \geq 1 \right] &= 1 + \frac{1}{u}  \left( \Rijcard{u}{1} -1 \right) = 1 - \frac{1}{u} + \frac{1}{u}  \Rijcard{u}{1}
 \label{eq:deg1_ripple_not_empty}
\end{align}
whereas when the ripple is empty, $\Rijcard{u}{1}  =0$, we have
\begin{align}
 \Exp \left[\ntx{u}{1} | \Rijreal{u}{1} = 0 \right] = 0.
 \label{eq:deg1_ripple_empty}
\end{align}
If we put together \eqref{eq:deg1_ripple_not_empty} and \eqref{eq:deg1_ripple_empty} we have
\begin{align}
 \Exp \left[\ntx{u}{1}\right] &= \left( 1 - \frac{1}{u} \right) \, \Pr\{  \Rijreal{u}{1} \geq 1\} + \frac{1}{u}  \sum_{\Rijcard{u}{1}=1}^{m} \Rijcard{u}{1} \Pr\{  \Rijreal{u}{1} = \Rijcard{u}{1}\} \\
 &= \left( 1 - \frac{1}{u} \right) \, \left( 1 - \Pr\{  \Rijreal{u}{1} =0\} \right) +  \frac{1}{u} \Exp \left[ \Rijreal{u}{1} \right] \\
 &= \left( 1 - \frac{1}{u} \right) \left( 1 - \left(1- \rij{u}{1}\right)^m\right) +  \frac{1}{u} \, m \, \rij{u}{1}
 \label{eq:ntx_d_1}
\end{align}
If we now replace \eqref{eq:ntx_d_1} into \eqref{eq:equilibrium_exp} we obtain the following recursion for $\rij{u}{1}$,
\[
\rij{u-1}{1} = \left(1-\frac{1}{u} \right) \rij{u}{1} + \frac{2}{u} \rij{u}{2} -
\frac{(1-1/u) \left( 1- (1- \rij{u}{1}  )^m \right)}{m}
\]
}

In order to initialize the finite state machine we shall assume that before triangulation starts $\Rijreal{k}{d}$ follows a binomial distribution $\mathcal B (m, \Omega_d)$, i.e., we assume
\[
\rij{k}{d}= \Omega_d.
\]

The probability of an inactivation occurring in the transition from $u$ to $u-1$ active symbols corresponds simply to the probability of the ripple being empty
\begin{align}
\Pr \{ \Rijreal{u}{1} = 0 \} =  \left( 1-\rij{u}{1} \right)^ {\m}.
\end{align}
Let us recall that $\Y$ is the random variable associated with the cumulative number of inactivations after the $k$ steps of triangulation, we have
\begin{align}
\Exp \left[ \Y \right] = \sum_{u=1}^{k} \Pr \{ \Rijreal{u}{1} = 0 \}.
\end{align}

\begin{figure}[t]
\begin{center}
\includegraphics[width=\figwbigger ]{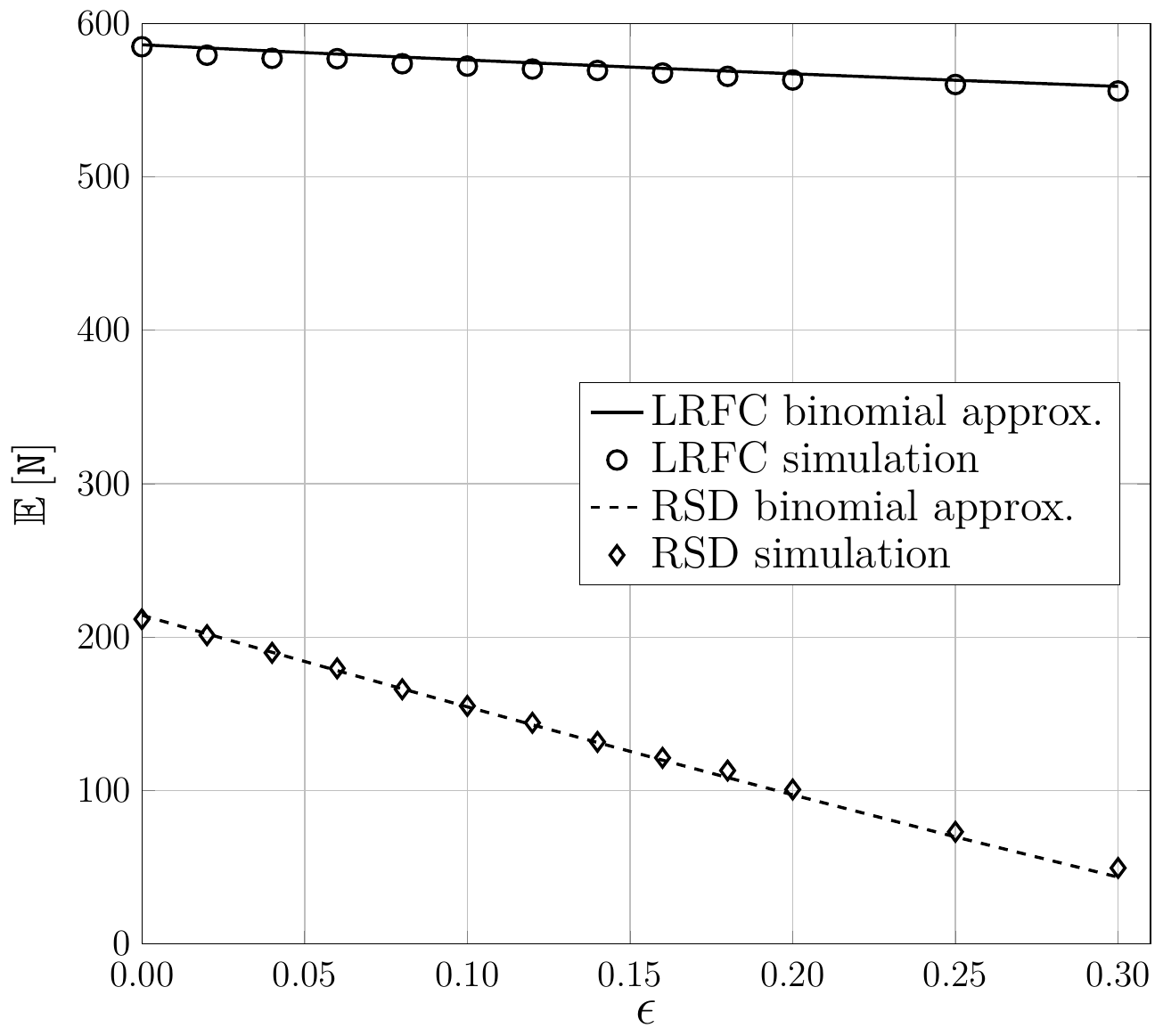}
\centering \caption[Average number of inactivations for an \acs{LRFC} and a \acs{RSD}]{Average number of inactivations needed to decode a \acl{LRFC} and a \ac{RSD} for $k=1000$ and average output degree $\bar \Omega =12$. The markers represent simulation results and the lines represent the predicted number of inactivations for random inactivation using the binomial approximation.}
\label{fig:k_1000_d_10}
\end{center}
\end{figure}

Figure~\ref{fig:k_1000_d_10} shows the average number of inactivations needed to complete decoding for  a \ac{LRFC}\footnote{The degree distribution of a \ac{LRFC} follows a binomial distribution.} and a \ac{RSD} with parameters $\partwo=0.09266$ and $\parone= 0.001993$, both with average output degree $\bar \Omega =12$ and $k=1000$. The figure shows results obtained by Monte Carlo simulation and also the estimation obtained under our binomial approximation. A tight match between simulation results and the estimation can be observed.

In Figure~\ref{fig:ripple} we shows the evolution of the ripple size $\Ripple_u = \Rijreal{u}{1}$ and the cumulative number of inactivations when $u$ input symbols are active. The output degree distribution is again a \ac{RSD} with parameters $\partwo=0.09266$ and $\parone= 0.001993$ and the relative receiver overhead is $\reloverhead = 0.2$. The figure shows the result of Monte Carlo simulations and the estimation obtained with our binomial approximation. It can be observed how the match between simulation results and the outcome of our approximation is tight.

\begin{figure}[t]
\begin{center}
\includegraphics[width=\figwbigger]{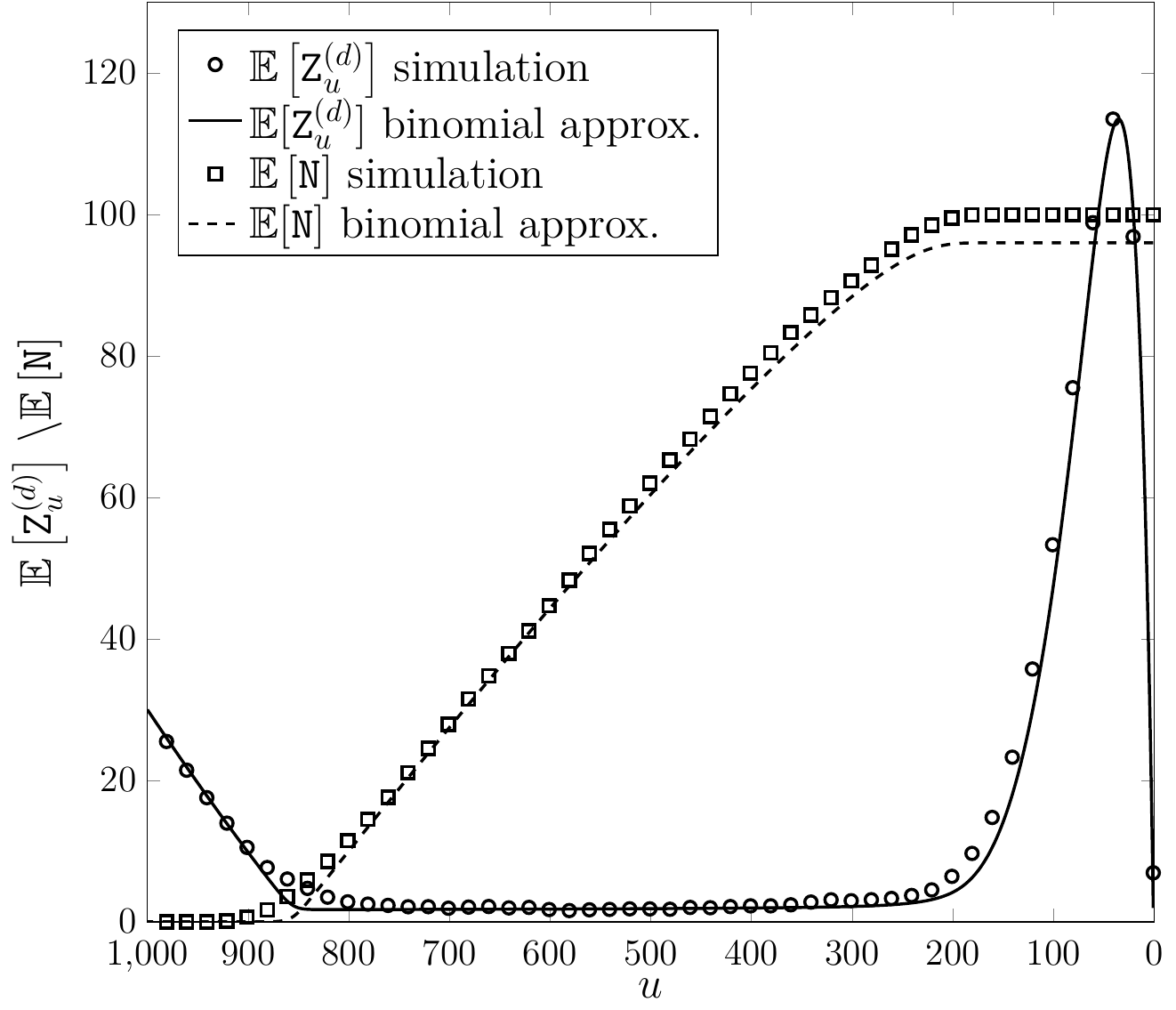}
\centering \caption[Evolution of the ripple and the cumulative number of inactivations for a \ac{RSD} with $k=1000$ and $\reloverhead = 0.2$]{Evolution of $\Ripple_u$ and the cumulative number of inactivations with respect to the number of active input symbols $u$. The output degree distribution is a \ac{RSD} with $k=1000$ and $\reloverhead = 0.2$. The lines represent the prediction obtained under the binomial distribution assumption. The markers represent the average obtained through Monte Carlo simulations.}
\label{fig:ripple}
\end{center}
\end{figure}

\FloatBarrier
\section{Code Design}\label{sec:lt_code_design}
This section focuses on the design of \ac{LT} codes optimized for inactivation decoding. Most of the works on \ac{LT} codes assume iterative decoding rather than inactivation decoding. An exception is the work in  \cite{mahdaviani2012raptor} where the authors derived a analytically family of degree distributions optimized for inactivation decoding. The authors in  \cite{mahdaviani2012raptor} found out that  in the unconstrained case, the optimal degree distribution under inactivation decoding corresponds to the ideal soliton distribution. One of the main shortcomings of this work is that there is no direct control on the average output degree of the degree distributions, \fran{which implies one has no control on the encoding complexity}.

In this section we present method to design \ac{LT} codes for inactivation decoding.
In this chapter three different methods have been presented to obtain the number of inactivations needed to complete decoding (or an estimation thereof). Any of these three methods can be used to perform a numerical optimization of the output degree distribution which leads to a \emph{low} number of inactivations. We will use the approximate method presented in Section~\ref{chap:inact_low_complex} since its evaluation is much faster than that of the other two methods. Nevertheless, it is still feasible to use the other methods for numerical code design provided that the source block size $k$ is not too large (up to several hundreds).

The numerical optimization algorithm we will use is \ac{SA} \cite{kirkpatrick1983optimization}, a fast meta-heuristic method for global optimization that is inspired in the process of annealing in metallurgy in which a material is cooled down slowly to obtain a crystal structure.
The starting point of \ac{SA} corresponds to an initial state $s_{\mathrm{init}}$ plus an initial temperature $T_{\mathrm{init}}$. At every step a number of candidate successive states for the system are generated as a slight variation of the previous state and the temperature decreases.  For high temperatures \ac{SA} allows moving the system to higher energy states with some probability that becomes smaller as the temperature of the system decreases. This process is repeated until the system reaches a target energy or until a maximum number of steps are carried out. In our case the states correspond to degree distributions. The energy of a state (degree distribution) has to be defined so that it is a  monotonically decreasing function of the number of inactivations $\Exp \left[ \Y \right]$ and the probability of decoding failure $\Pf$.  

Concretely, we consider a source block size $k=10000$ and we set the following constraints:
 \begin{itemize}
   \item {A target probability of decoding failure $ {P}_f^*= 10^{-2}$ at $\reloverhead= 0$.}
   \item {Maximum average output degree $\bar \Omega \leq 12$.}
   \item {Maximum output degree $d_{\mathrm{max}} = 150$.}
 \end{itemize}
The second and third constraint are introduced to limit the average and worst case encoding cost of an output symbol.
In order to embed these constraints into \ac{SA} the objective function to be minimized (energy) is defined as
\[
\Upsilon = \Exp \left[ \hat \Y \right] + f_p (\Pflow)
\]
where $f_p$ is defined as
\begin{align}\label{eq:obj_f}
f_p (\Pflow) =
    \begin{cases}
        0,  & \Pflow<{\Pf}^* \\
        b~(1- {\Pf}^* / \Pflow),& \mathrm{otherwise}
    \end{cases}
\end{align}
being ${\Pf}^*$ the target probability of decoding failure, $\Pflow$ the tight lower bound to it given in equation \eqref{eq:low_bound_schotsch} and $b$ a large positive number ($b= 1000$ was used in the example). The large $b$ factor ensures that degree distributions which do not comply with the target probability of decoding failure are discarded.

The use of $\Pflow$ in place of the actual $\Pf$ in the objective functions stems from the need of having a fast (though, approximate) performance estimation to be used within the \ac{SA} recursion (note in fact that the evaluation of the actual $\Pf$ may present a prohibitive complexity). This allows evaluating the energy of a state (i.e., degree distribution) very quickly. Although the lower bound in eq.~\eqref{eq:low_bound_schotsch} may not be tight for $\reloverhead=0$, it is very tight for values of $\reloverhead$ slightly larger than $0$. This means that in practice we will need $\reloverhead$ slightly larger than $0$ to comply with our requirements.

For the sake of illustration we will perform two different optimizations. In the first one we will constrain the degree distribution to be a (truncated) \ac{RSD} and in the second one we will impose no constraints to the degree distribution.

Let $\rsd$  be a \ac{RSD} distribution. For a given maximum degree $\dmax$ we define the truncated \ac{RSD} distribution, $\rsdt$, as

\begin{align}
\rsdt_i =
\begin{cases}
    \rsd_i,  & i < \dmax \\
    \sum_{j=\dmax}^{k}\rsd_j,  & i = \dmax \\
    0,& i > \dmax
\end{cases}.
\end{align}
Therefore, in this first optimization we aim at finding the \ac{RSD} parameters $\parone$ and $\partwo$ (see \eqref{eq:dist_rsd} in Section~\ref{sec:rsd} for the definition of $\parone$ and $\partwo$) that minimize our objective function in \eqref{eq:obj_f}. We shall denote the degree distribution obtained from this optimization process by $\Omega^{(1)}$.

In the second optimization we carry out  we set no constraints at all on the degree distribution, except for the design constraints on the average and maximum output degree. We refer to the distribution obtained by this optimization method as $\Omega^{(2)}$.

Figure~\ref{fig:inact} shows  the number of inactivations needed for decoding as a function of $\reloverhead$ for $\Omega^{(2)}$ and $\Omega^{(1)}$, which has parameters $\partwo = 0.05642$ and $\parone = 0.0317$.  If we compare the number of inactivations needed by $\Omega^{(2)}$ and $\Omega^{(1)}$ we can observe how the $\Omega^{(2)}$, the result of the unconstrained optimization, requires slightly less inactivations. In the figure we can also observe how the estimation of the number of inactivations $\Exp \left[ \hat \Y \right]$ lies slightly below $\Exp \left[ \Y \right]$, which is an effect that had already been observed in Section~\ref{chap:inact_low_complex}.

\begin{figure}[t!]
\begin{center}
\includegraphics[width=\figwbigger]{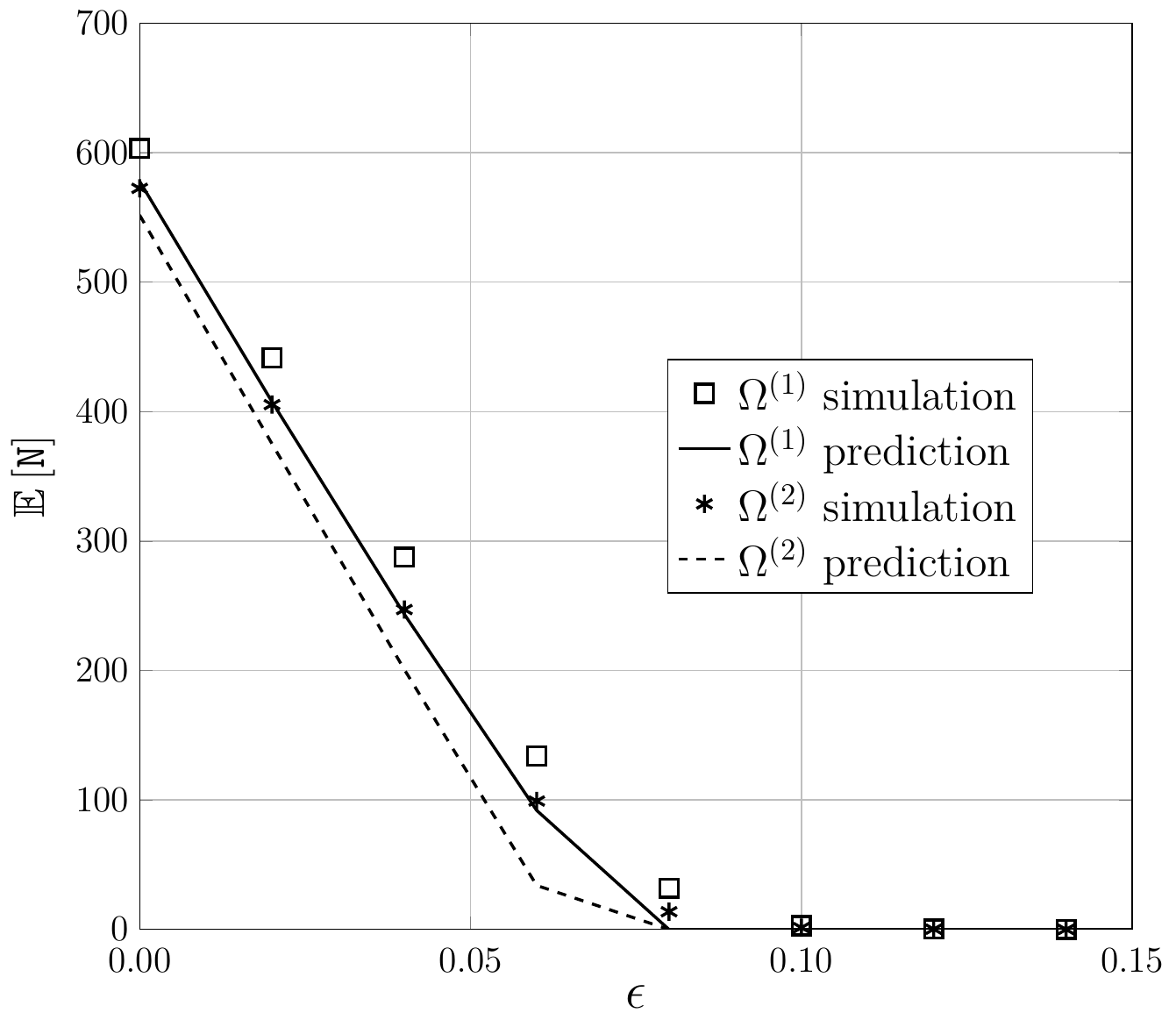}
\centering
\caption[Average number of inactivations vs.\ $\reloverhead$ for $\Omega^{(1)}$ and $\Omega^{(2)}$]{ Average number of inactivations needed for decoding vs.\ $\reloverhead$. The solid and dashed lines represent the predicted number of inactivations, $\Exp [ \hat{\Y} ] $ for $\Omega^{(1)}$ and $\Omega^{(2)}$, respectively.  The markers denote the average number of inactivations $\Exp [ \Y ] $ obtained by Monte Carlo simulations.}
\label{fig:inact}
\end{center}
\end{figure}

For the sake of completeness, the  probability of decoding failure for $\Omega^{(1)}$  and $\Omega^{(2)}$ is provided in Figure~\ref{fig:per}. It can be observed how $\Pflow$ is below the target value ${P}_f^*=10^{-2}$ in both cases, being the probability of decoding failure lower for the truncated \ac{RSD}.
\begin{figure}[t!]
\begin{center}
\includegraphics[width=\figwbigger]{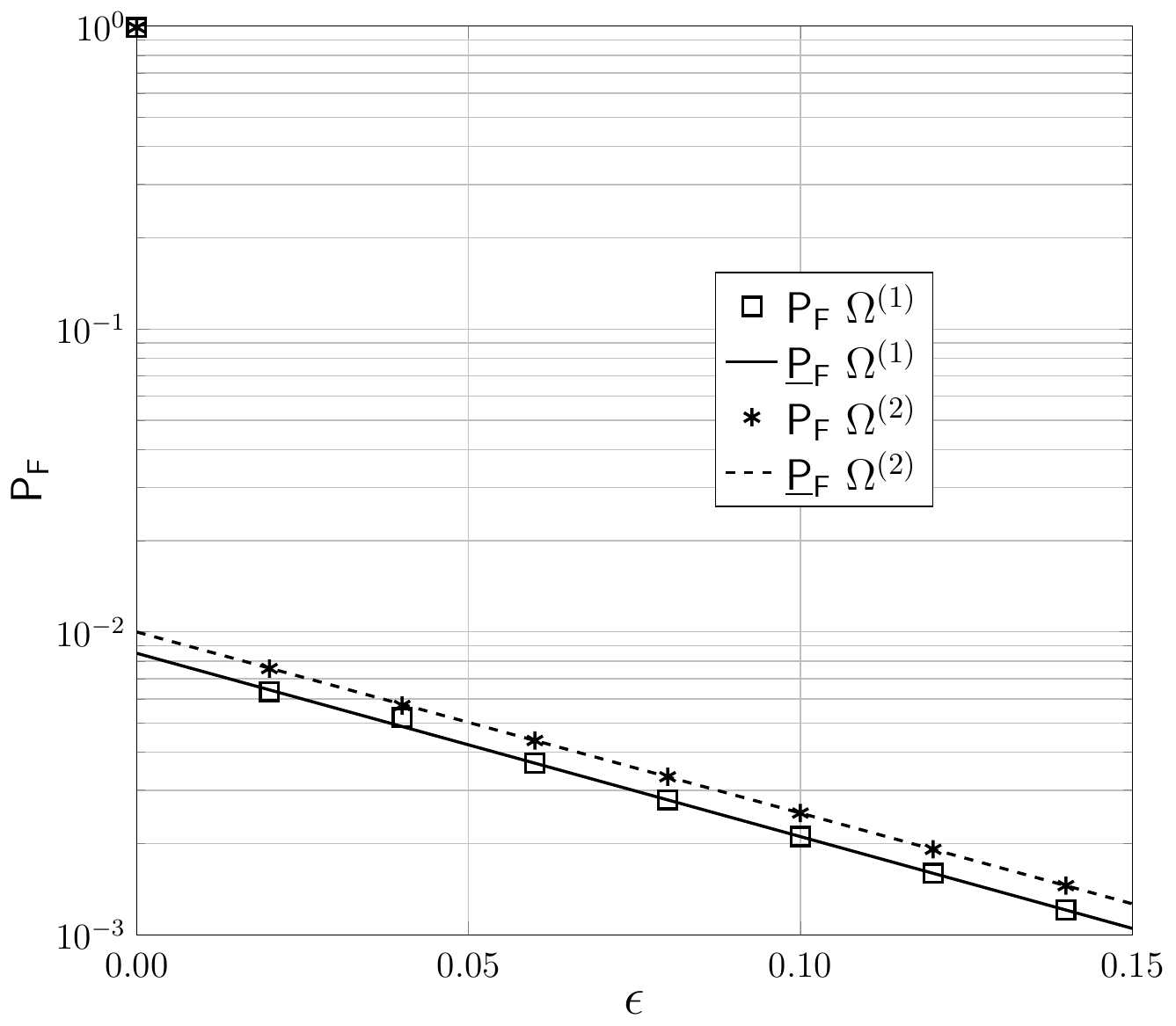}
\centering
\caption[Probability of decoding failure, $\Pf$ vs.\ $\reloverhead$ for $\Omega^{(1)}$ and $\Omega^{(2)}$]{Probability of decoding failure, $\Pf$ vs.\ $\reloverhead$ for $\Omega^{(1)}$ and $\Omega^{(2)}$. The lines represent the lower bound $\Pflow$ and  markers denote simulation results. \label{fig:per} }
\end{center}
\end{figure}

\FloatBarrier


\section{Summary}\label{chap:lt_summary}

In this chapter we have considered \ac{LT} codes under inactivation decoding, an \ac{ML} decoding algorithm belonging to the family of the structured or intelligent Gaussian elimination algorithms. Inactivation decoding aims at reducing the size of the system of equations that needs to be solved with standard Gaussian elimination (number of inactivations). The focus of the chapter is on the decoding complexity under inactivation decoding. In Section~\ref{chap:inact_first_order} we presented a first order analysis of \ac{LT} codes under inactivation decoding that provides the expected number of inactivations given an output degree distribution. The analysis is based on a dynamic programming approach that models the decoder as a finite state machine. This model was extended in Section~\ref{chap:inact_distribution} in order to obtain the probability distribution of the number of inactivations. Section~\ref{chap:inact_low_complex} presented an approximate low complexity method to estimate the expected number of inactivations. Finally in Section~\ref{sec:lt_code_design} we showed how these methods can be used to numerically design degree distributions to minimize the number of inactivations while fulfilling a set of design constraints.



\chapter{Raptor Codes under Inactivation Decoding} \label{chap:Raptor}
\ifpdf
    \graphicspath{{Chapter4/Chapter4Figs/PNG/}{Chapter4/Chapter4Figs/PDF/}{Chapter4/Chapter4Figs/}}
\else
    \graphicspath{{Chapter4/Chapter4Figs/EPS/}{Chapter4/Chapter4Figs/}}
\fi

Within this chapter we consider Raptor codes under inactivation decoding. In Section \ref{chap:raptor_rateless_ml} we develop upper bounds to the probability of decoding failure of $q$-ary Raptor codes under \ac{ML} decoding using the weight enumerator of the outer code (precode), or its expected weight enumerator in case the outer code is not deterministic but drawn at random from an ensemble of codes. The bounds are shown to be tight, specially in the error floor region, by means of simulations.
In Section \ref{chap:raptor_inactivation_decoding} we \fran{consider the decoding complexity of Raptor codes under inactivation decoding, which is an \ac{ML} decoding algorithm. More concretely, we provide a heuristic method to approximate the number of inactivations needed for decoding.} In Section~\ref{chap:rapt_code_design} we show how the results in this chapter can be used for Raptor code design by means of an example. Finally in Section~\ref{chap:raptor_summary} we summarize our contributions.

\section{Performance under ML Decoding}\label{chap:raptor_rateless_ml}

The probability of decoding failure of Raptor codes under \ac{ML} decoding\footnote{Inactivation decoding is a \acl{ML} decoding algorithm.} has been subject of study in several works. In \cite{Rahnavard:07}  upper and lower bounds to the intermediate symbol erasure rate were derived for Raptor codes with outer codes in which the elements of the parity check matrix are \ac{i.i.d.} Bernoulli random variables. This work was extended in \cite{Schotsch:14} by deriving an approximation to the performance of Raptor codes under \ac{ML} decoding  under the assumption that the number of erasures correctable by the outer code is small. Thus, this approximation holds only when the rate of the outer code is sufficiently high.
In \cite{Liva10:fountain} it was shown by means of simulations how the error probability of $q$-ary Raptor codes is very close to that of $q$-ary linear random fountain codes.
In \cite{wang:2015} upper and lower bounds to the probability of decoding failure of Raptor codes where derived. The outer codes considered in \cite{wang:2015} are systematic binary linear random codes.

Although a number of works has studied the probability of decoding failure of Raptor codes, to the best of the knowledge of the author, the available results hold only for specific binary outer codes (see \cite{Rahnavard:07,wang:2015,lazaro:ISIT2015}). In this section we derive an upper bound to the probability of decoding failure of $q$-ary Raptor codes using the weight enumerator of the outer code (precode), or its expected weight enumerator in case the outer code is not deterministic but drawn at random from an ensemble of codes. Thus, the bounds derived in this chapter are generic and can be applied to any Raptor code, provided that the weight enumerator of the outer code is known.

In this section we consider ensembles of $q$-ary Raptor codes under \ac{ML} decoding. More specifically,
Raptor codes constructed over $\mathbb {F}_{q}$ with an $(h,k)$ outer linear block code $\code$ are considered. Hence, the $k$ input (or source) symbols,  ${\vecu=(\Raptorinput_1,~\Raptorinput_2,~\ldots, \Raptorinput_k)}$,  belong to $\mathbb{F}_{q}$.

Denoting by $\Gp$ the employed generator matrix of the outer code, of dimension $(k \times h)$ and with elements in $\mathbb {F}_{q}$, the intermediate symbols can be expressed as
\[
\vecv = \vecu \Gp.
\]
Note that, by definition,  ${\vecv=(\Rintermsymbol_1,~\Rintermsymbol_2,~\ldots, \Rintermsymbol_h)}  \in \code$.
The intermediate symbols serve as input to a $q$-ary LT encoder, that generates an unlimited number of output symbols, ${\mathbf{\Rosymb}=(\Rosymb_1, \Rosymb_2, \ldots, \Rosymb_n)}$, where $n$ can grow unbounded. Again, the elements of $\mathbf{\Rosymb}$ belong to $\mathbb{F}_{q}$.
For any $n$ the output symbols can be expressed as
\[
\mathbf{\Rosymb} = \vecv \GLT = \vecu \Gp \GLT
\]
where $\GLT$ is an $(h \times n)$ with elements in  $\mathbb {F}_{q}$. Each column of $\GLT$ is associated with $\Rosymb_i$.
More specifically, each column of $\GLT$ is generated by first selecting an output  degree $d$ according to the degree distribution $\Omega$, and then selecting $d$ different indices uniformly at random between $1$ and $h$. Finally, the elements of the column corresponding to these indices are drawn independently and uniformly at random from $\mathbb {F}_{q} \backslash \{0\}$, while all other elements of the column are set to zero.

We consider the transmission over a \ac{QEC} at the output of which each transmitted symbol is either correctly received or erased.
Denoting by $m$ the number of output symbols collected by the receiver of interest, and expressing it as $m=k+\absoverhead$, \fran{where $\absoverhead$ is the absolute receiver overhead}. Let us denote by ${\mathbf{\Rrosymb}=(\Rrosymb_1, \Rrosymb_2, \ldots, \Rrosymb_m)}$ the $m$ received output symbols. Let us denote by $\mathcal{I} = \{i_1, i_2, \hdots, i_m \}$ the set of indices corresponding to the $m$ non-erased symbols, we have
\[
\Rrosymb_j = \Rosymb_{i_j}.
\]
In this section we will assume that \ac{ML} Raptor decoding is performed by solving the system of equations\footnote{In practice Raptor decoding is performed by solving the system of equations in \eqref{eq:raptor_sys_eq} that involves the constraint matrix. The two systems of equations are equivalent. \fran{In this section we take the system of equations involving the generator matrix of the Raptor code for convenience.} }
\[
\mathbf{\Rrosymb} = \vecv \GrxR
\]
where
\begin{align}
\GrxR = \Gp \GrxLT
\label{eq:sys_eq2}
\end{align}
with $\GrxLT$ given by the $m$ columns of $\GLT$ with indices in $\mathcal{I}$.


\subsection{Upper Bounds on the Error Probability}\label{sec:perf_bound}

An upper bound on the probability of failure $\Pf$ of a Raptor code constructed over $\mathbb {F}_{q}$ as a function of the receiver overhead $\absoverhead$ is established in the next Theorem.

\begin{theorem}\label{theorem:rateless}
Consider a Raptor code constructed over $\mathbb {F}_{q}$ with an $(h,k)$ outer code $\code$ characterized by a weight enumerator $\weo$, and an inner \ac{LT} code with output degree distribution $\Omega$.
The probability of decoding failure under optimum \fran{(\ac{ML})} erasure decoding given that ${m=k+\absoverhead}$ output symbols have been collected by the receiver can be upper bounded as
\[
\Pf  \leq \sum_{l=1}^h \weo_{\l} \pil^{k+\absoverhead}
\]
where  $\pil$ is the probability that a generic output symbol $Y$ is equal to $0$ given that the vector $\vecv$ of intermediate symbols has Hamming weight $l$. The expression of $\pil$ is
\begin{align}
\pil &= \frac{1}{q} +  \frac{q-1}{q} \sum_{j=1}^{\dmax} \Omega_j    \frac{\krawt_j(l; h,q)}{\krawt_j(0; h,q)}
\label{eq:pl}
\end{align}
where $\krawt_j(l; h,q)$ is the Krawtchouk polynomial of degree $j$ with parameters $h$ and $q$, defined as \cite{MacWillimas77:Book}
\[
\krawt_k(x;n,q) = \sum_{j=0}^k (-1)^j \binom{x}{j} \binom{n-x}{k-j} (q-1)^{k-j}.
\]
\end{theorem}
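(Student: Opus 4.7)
The plan is to proceed via a union bound over the nonzero codewords of the outer code, exploiting the fact that under ML erasure decoding a failure is equivalent to the existence of a nonzero ambiguity vector in $\code$ that is annihilated by the received--columns matrix $\GrxLT$. First I would note that, since the intermediate word $\vecv$ lies in $\code$ and the decoder succeeds exactly when $\vecv$ is the unique codeword of $\code$ consistent with the received symbols $\mathbf{\Rrosymb} = \vecv \GrxLT$, the failure event can be rewritten as
\[
\{\exists\, \vecv' \in \code \setminus \{\mathbf{0}\} : \vecv' \GrxLT = \mathbf{0}\}.
\]
Applying the union bound and grouping codewords by their Hamming weight $\l$ then yields
\[
\Pf \leq \sum_{\l=1}^h \weo_{\l}\, \Pr\{\vecv \GrxLT = \mathbf{0} \mid \hw(\vecv)=\l\}.
\]
Because the $m=k+\absoverhead$ columns of $\GrxLT$ are drawn i.i.d., and the probability that any single LT column maps a fixed $\vecv$ to $0$ depends on $\vecv$ only through its weight $\l$, this conditional probability factorises as $\pil^{\,k+\absoverhead}$, reducing the whole task to the evaluation of the single-column probability $\pil$.

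For the computation of $\pil$ I would condition first on the output degree $d$ (drawn according to $\Omega$) and then on the number $j$ of positions where the randomly chosen $d$-subset of indices intersects $\mathrm{supp}(\vecv)$; the latter is hypergeometric with probability $\binom{\l}{j}\binom{h-\l}{d-j}/\binom{h}{d}$. Conditional on $j$, the output symbol equals $\sum_{i=1}^{j} g_i v_i$ with the $g_i$ i.i.d.\ uniform on $\mathbb{F}_q\setminus\{0\}$ and the $v_i$ fixed and nonzero. The cleanest way to handle this sum is via additive characters of $\mathbb{F}_q$: since $g_i v_i$ is uniform on $\mathbb{F}_q\setminus\{0\}$, every nontrivial character contributes $-1/(q-1)$, so that
\[
\Pr\{Y=0 \mid j\} = \frac{1}{q} + \frac{q-1}{q}\left(\frac{-1}{q-1}\right)^{j}.
\]
Averaging this against the hypergeometric distribution in $j$ and recognising the resulting alternating sum as $\krawt_d(\l;h,q)/[\binom{h}{d}(q-1)^d] = \krawt_d(\l;h,q)/\krawt_d(0;h,q)$, summing over $d$ weighted by $\Omega_d$ gives exactly the expression \eqref{eq:pl}.

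The principal obstacle I expect is not the combinatorics but being careful about the equivalence in the first step: one must verify that when $\Gp$ has full row rank, recovering the input vector $\vecu$ and recovering the intermediate codeword $\vecv = \vecu\Gp$ are equivalent, so that a decoding ambiguity in $\vecu$ is in bijection with a nonzero codeword of $\code$ in the kernel of $\GrxLT$; otherwise the bound would pick up contributions from nonzero vectors in the kernel of $\Gp$. A secondary technical point is checking that multiplication by a fixed nonzero $v_i$ is a bijection on $\mathbb{F}_q\setminus\{0\}$, which is what guarantees that \emph{all} nontrivial characters contribute the same value and is the reason the closed-form Krawtchouk expression appears without an explicit dependence on the specific nonzero values taken by $\vecv$.
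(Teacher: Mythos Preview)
Your proposal is correct and follows essentially the same route as the paper: union bound over nonzero codewords of $\code$, factorisation as $\pil^{\,k+\absoverhead}$ by independence of the LT columns, and hypergeometric conditioning on the overlap with $\mathrm{supp}(\vecv)$ followed by the Krawtchouk identification. The only cosmetic differences are that the paper packages the computation of $\Pr\{Y=0\mid I=i\}$ into a separate lemma proved via the DFT on $\mathbb{Z}_2^m$ (with general $q$ declared a trivial extension) rather than your direct additive-character argument, and it handles your full-rank concern with the one-line remark that, by linearity, the all-zero intermediate word is generated only by the all-zero input.
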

\begin{proof}
An optimum (\ac{ML}) decoder solves the linear system of equations in \eqref{eq:sys_eq2}. Decoding will fail whenever the system does not admit a unique solution, that is, if and only if  $\rank(\GrxR)<k$, i.e. if
${\exists\,  \vecu \in \mathbb {F}_q^k \backslash \{ \textbf{0}\} \,\, \text{s.t.} \,\, \vecu \GrxR = \textbf{0}}$.
Let us consider two  vectors $\vecu \in \mathbb {F}_q^k, \vecv \in \mathbb {F}_q^h$. Define  $E_{\vecu}$ as the event $\vecu \Gp \GrxLT = \mathbf{0}$.
Similarly, Define $E_{\vecv}$ as the event  $\vecv \GrxLT = \mathbf{0}$. We have
\begin{align}
\Pf & =   \Pr\left\{ \bigcup_{\vecu \in \mathbb {F}_q^k \backslash \{ \textbf{0}\}} E_{\vecu}  \right\}  = \Pr\left\{ \bigcup_{\vecv \in \code \backslash \{ \textbf{0}\} } E_{\vecv} \right\}
\label{eq:existence}
\end{align}
where we made use of the fact that due to linearity, the all zero intermediate word is only generated by the all zero input vector.

If we develop \eqref{eq:existence}, we have
\begin{align}
\Pf & = \Pr \left\{ \bigcup_{l=1}^h \,\,  \bigcup_{\vecv \in  \mathbb \code_l }  E_{\vecv}  \right\}
\label{eq:existence2}
\end{align}
where, $\code_l$ is the set of codewords in $\code$ of Hamming weight $l$, formally,
\[
\code_l = \left\{ \vecv \in \code : w_H(\vecv) = l \right\}.
\]

Let $L$ be a discrete random variable representing the Hamming weight of vector $\vecv \in \code$. Moreover, let $J$ and $I$ be discrete random variables representing the degree of the generic output symbol $\Rrosymb$, and the number of non-zero neighbors of such intermediate symbol, respectively. Note that $I \leq L$. By making use of Boole's inequality (also known as the union bound) it is possible to upper bound \eqref{eq:existence2} as
\begin{align}
\Pf & \leq \sum_{l=1}^{h} \Pr \left\{  \bigcup_{\vecv \in  \mathbb \code_l }  E_{\vecv}  \right\} \notag \\
&\leq \sum_{l=1}^{h} \weo_{\l} \Pr \left\{ E_{\vecv} | L=l  \right\} \, .
\label{eq:existence3}
\end{align}
Since output symbols are independent of each other, we have
\[
\Pr \left\{ E_{\vecv} | L=l \right\} = \pil^{k+\absoverhead}
\]
where $\pil = \Pr \{ y=0 | L=l\}$ is the conditional probability that the generic output symbol $\Rrosymb$ is equal to $0 \in \mathbb F_q$ given that $\vecv \in \code_l$. An expression for $\pil$ may be obtained observing that
\begin{align*}
\pil &= \sum_{j=1}^{\dmax} \Pr \{ y=0 | L=l,J=j \} \Pr \{ J=j | L=l \} \\
      &\stackrel{(\mathrm{a})}{=} \sum_{j=1}^{\dmax} \Omega_j \Pr \{ y=0 | L=l,J=j \} \\
      &\stackrel{(\mathrm{b})}{=} \sum_{j=1}^{\dmax} \Omega_j \sum_{i=0}^{\min\{j,l\}} \Pr \{ y=0 | I=i \} \! \Pr \{ I=i | L=l, J=j \}
\end{align*}
where equality $(\mathrm{a})$ is due to \mbox{$\Pr \{ J=j | L=l \} = \Pr \{ J=j \}$} $= \Omega_j$ and equality $(\mathrm{b})$ to $\Pr \{ y=0 | L=l, J=j, I=i \} = \Pr \{ y=0 | I=i \}$. Letting \mbox{$\pifroml = \Pr \{ I=i | L=l, J=j \}$}, since the $j$ intermediate symbols are chosen uniformly at random by the LT encoder we have
\begin{align}\label{eq:neighbors}
\pifroml = \frac{ \binom{\l}{i} \binom{h-\l}{j-i} } { \binom{h}{j}} \, .
\end{align}
Denoting $\Pr \{ y=0 | I=i \}$ by $\qi$ and observing that, due to the elements of $\GrxR$ being \ac{i.i.d.} and uniformly drawn in $\mathbb {F}_{q} \setminus \{0\}$, on invoking Lemma~\ref{lemma:galois} in Appendix~\ref{app:proofs}\footnote{Lemma~\ref{lemma:galois} is only valid for $q=2^m$, the case of most interest for practical purposes. The proof of the general case is a trivial extension of Lemma~\ref{lemma:galois}. The result in Lemma~\ref{lemma:galois} can also be found in \cite{schotsch:2013}, where the proof was derived using a different technique from the one used in Appendix~\ref{app:proofs} (see Appendix~\ref{app:proofs} for more details).} we have
\begin{align}\label{eq:sum}
\qi =\frac{1}{q} \left( 1 + \frac{(-1)^i}{(q-1)^{i-1}}\right).
\end{align}
Hence, $\pil$ is given by
\begin{align}
\pil &=  \sum_{j=1}^{\dmax} \Omega_j  \sum_{i=0}^{\min\{j,l\}} \pifroml  \, \qi
\end{align}
where $\pifroml$ and $\qi$ are given by \eqref{eq:neighbors} and \eqref{eq:sum}, respectively.

Expanding this expression and rewriting it using Krawtchouk polynomials and making use of the Chu-Vandermonde identity\footnote{The Chu-Vandermonde identity, or Vandermonde's convolution formula is given by
\[
\binom{x+a}{n} = \sum_{k=0}^{n}\binom{x}{k} \binom{a}{n-k}
\] }, one obtains \eqref{eq:pl}. 
\end{proof}

The following theorem makes the bound in Theorem~\ref{theorem:rateless} tighter for $q>2$. For $q=2$ the following theorem is equivalent to Theorem~\ref{theorem:rateless} .
\begin{theorem}\label{lemma:bound_tight}
Consider a Raptor code constructed over $\mathbb {F}_{q}$ with an $(h,k)$ outer code $\code$ characterized by a weight enumerator $\weo$, and an inner \ac{LT} with output degree distribution $\Omega$.
The probability of decoding failure under optimum erasure decoding given that ${m=k+\absoverhead}$ output symbols have been collected by the receiver can be upper bounded as
\[
\Pf  \leq \sum_{l=1}^h \frac{\weo_{\l}}{q-1} \pil^{k+\absoverhead}
\]
\end{theorem}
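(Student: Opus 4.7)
The plan is to refine the union-bound step that appears in the proof of Theorem~\ref{theorem:rateless} by exploiting a redundancy among nonzero codewords of the outer code. Recall that in \eqref{eq:existence2} the failure event was decomposed as
\[
\Pf = \Pr\!\left\{ \bigcup_{l=1}^{h}\,\bigcup_{\vecv \in \code_l} E_{\vecv} \right\},
\]
and Boole's inequality was then applied over every nonzero codeword. The key observation is that for any $\alpha \in \mathbb{F}_q \setminus \{0\}$ the events $E_{\vecv}$ and $E_{\alpha\vecv}$ coincide, since $\alpha\vecv\,\GrxLT = \alpha(\vecv\,\GrxLT)$ and $\alpha$ is invertible. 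Consequently, the naive union bound of Theorem~\ref{theorem:rateless} counts each underlying failure event $q-1$ times (once for $\alpha=1$ and once for each nontrivial scaling).

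The first step would be to partition $\code \setminus \{\mathbf{0}\}$ into equivalence classes under the relation $\vecv \sim \vecv'$ iff $\vecv' = \alpha\vecv$ for some $\alpha \in \mathbb{F}_q \setminus \{0\}$. Since $\code$ is a linear code over $\mathbb{F}_q$ and $\alpha\vecv \neq \vecv$ for $\alpha \neq 1$ whenever $\vecv \neq \mathbf{0}$, each class contains exactly $q-1$ elements, all of the same Hamming weight. Hence the number of equivalence classes whose codewords have weight $l$ is $\weo_l/(q-1)$.

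Next, I would pick a single representative $\vecv$ from each equivalence class and note that the union of events over the class equals $E_{\vecv}$ itself. Replacing the inner union in \eqref{eq:existence2} with a union over representatives and applying Boole's inequality yields
\[
\Pf \;\leq\; \sum_{l=1}^{h} \frac{\weo_l}{q-1}\,\Pr\{E_{\vecv}\mid L=l\} \;=\; \sum_{l=1}^{h} \frac{\weo_l}{q-1}\,\pil^{\,k+\absoverhead},
\]
where the identity $\Pr\{E_{\vecv}\mid L=l\} = \pil^{k+\absoverhead}$ is obtained exactly as in Theorem~\ref{theorem:rateless}, i.e., by conditioning on the weight of the intermediate codeword and exploiting the independence of the output symbols together with the expression for $\pil$ derived there.

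The argument is essentially routine once the equivalence-class observation is made; I do not expect any substantive obstacle. The only care needed is to note that the improvement collapses for $q=2$ (equivalence classes are singletons, so the bound reduces to that of Theorem~\ref{theorem:rateless}), consistent with the statement that the two theorems coincide in the binary case.
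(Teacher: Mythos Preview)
Your proposal is correct and follows essentially the same approach as the paper: both exploit the fact that for a linear code over $\mathbb{F}_q$, every nonzero scalar multiple $\alpha\vecv$ is again a codeword and yields the identical event $E_{\alpha\vecv}=E_{\vecv}$, so the union bound need only run over one representative per $(q-1)$-element equivalence class. The paper states this observation in a single line (with a citation), while you spell out the equivalence-class partition explicitly; the substance is the same.
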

\begin{proof}
The bound \eqref{eq:existence3} can be tightened by a factor $q-1$ exploiting the fact that for a linear block code $\code$ constructed over $\mathbb {F}_{q}$, if $\mathbf{\Rosymb}$ is a codeword, $\alpha \mathbf{\Rosymb}$ is also a codeword, $\forall \alpha \in \mathbb F_{q} \backslash \{0\}$, \cite{Liva2013}.
\end{proof}

\begin{remark}
The upper bound in Theorem~\ref{lemma:bound_tight} can also be applied to LT codes. In that case, $\weo_{\l}$ needs to be replaced by the total number of sequences of Hamming weight $l$ and length $k$,
\[
\weo_{\l}= \binom{k}{l} (q-1)^{l-1}.
\]
The upper bound obtained for LT codes coincides with the bound in \cite{schotsch:2013} (Theorem 1), where only LT codes are considered. Thus, we can regard Theorem~\ref{lemma:bound_tight} as an extension of the work in  \cite{schotsch:2013} to Raptor codes.
\end{remark}

\subsection{Random Outer Codes from Linear Parity-Check Based Ensembles}\label{sec:ensemble}

Both Theorem~\ref{theorem:rateless} and Theorem~\ref{lemma:bound_tight} apply to the case of a deterministic outer code. In this section we extend these results to the case of a random outer code drawn from an ensemble of codes. Specifically, a parity-check based ensemble of outer codes is considered, denoted by $\outercodeensemble$, defined by a random matrix of size ${(h - k) \times h}$ whose elements belong to $\mathbb F_q$. A linear block code of length $h$ belongs to $\outercodeensemble$ if and only if at least one of the instances of the random matrix is a valid parity-check matrix for it. Moreover, the probability measure of each code in the ensemble is the sum of the probabilities of all instances of the random matrix which are valid parity-check matrices for that code. Note that all codes in $\outercodeensemble$ are linear, have length $h$, and have dimension $k' \geq k$\footnote{This comes from the fact that the parity-check matrix can be rank deficient.}.

In the following the expression ``Raptor code ensemble'' is used to refer to the set of Raptor codes obtained by concatenating an outer code belonging to the parity-check based ensemble $\outercodeensemble$ with an \ac{LT} encoder having distribution $\Omega$. This ensemble shall be denoted as $(\outercodeensemble, \Omega)$. The following theorem extends the result in Theorem~\ref{lemma:bound_tight} to ensembles of Raptor codes.

\begin{theorem}\label{corollary:rateless}
Consider a Raptor code ensemble $(\outercodeensemble, \Omega)$ and let $\weoensemble= \{ \weoensemble_0,\weoensemble_1,\dots,\weoensemble_h \}$ be the expected weight enumerator
of a code $\code$ that is randomly drawn from ensemble $\outercodeensemble$, i.e., let $\weoensemble_{l} = \Exp_{\code}[A_l(\code)]$ for all $l \in \{0,1,\dots,h\}$. Let $\barPf$
be the average probability of decoding failure of the Raptor code ensemble obtained by concatenating a code $\code$  randomly drawn from ensemble $\outercodeensemble$ with the \ac{LT} encoder with degree distribution $\Omega$, under \fran{\acf{ML}} erasure decoding and given that ${m=k+\absoverhead}$ output symbols have been collected by the receiver. Then
\[
\barPf  \leq  \sum_{l=1}^h \frac{\weoensemble_{\l}}{q-1}  \pil^{k+\absoverhead} \, .
\]
\end{theorem}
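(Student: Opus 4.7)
The plan is to derive the ensemble-averaged bound by first conditioning on the realization of the outer code $\code$ drawn from $\outercodeensemble$ and then averaging over the ensemble. The crucial observation is that the quantity $\pil$ appearing in Theorem~\ref{lemma:bound_tight} depends only on the inner \ac{LT} output degree distribution $\Omega$, the field order $q$, and the number of intermediate symbols $h$; it is therefore identical for every code in $\outercodeensemble$. The only random object on the right-hand side of the bound is the weight enumerator $\{\weo_{\l}(\code)\}$.

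First I would fix an arbitrary $\code \in \outercodeensemble$. Since $\code$ is a linear block code of length $h$ over $\mathbb{F}_q$, Theorem~\ref{lemma:bound_tight} applies verbatim to the Raptor code obtained by concatenating $\code$ with the \ac{LT} encoder of distribution $\Omega$, yielding the conditional bound
\[
\Pf(\code) \leq \sum_{l=1}^{h} \frac{\weo_{\l}(\code)}{q-1}\, \pil^{k+\absoverhead}.
\]
Next I would take expectation over $\code \in \outercodeensemble$ with respect to the probability measure induced by the random parity-check matrix. By linearity of expectation, and because $\pil$ and the factor $1/(q-1)$ are deterministic constants of the ensemble, the expectation commutes with the finite sum and acts only on the weight enumerator:
\[
\barPf = \Exp_{\code}\!\left[\Pf(\code)\right] \leq \sum_{l=1}^{h} \frac{\Exp_{\code}\!\left[\weo_{\l}(\code)\right]}{q-1}\, \pil^{k+\absoverhead} = \sum_{l=1}^{h} \frac{\weoensemble_{\l}}{q-1}\, \pil^{k+\absoverhead},
\]
which is precisely the claimed inequality.

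The only delicate point is that codes in $\outercodeensemble$ may have dimension $k' \geq k$ rather than exactly $k$, since the random parity-check matrix can be rank-deficient. However, Theorem~\ref{lemma:bound_tight} is stated purely in terms of the weight enumerator of the outer code, and its proof proceeds via a union bound over the nonzero codewords of $\code$ regardless of the code's dimension; provided the generator matrix $\Gp$ used in the Raptor construction is chosen to be injective (which can always be arranged whenever $k' \geq k$), the conditional bound above is valid for every $\code$ in the support of the ensemble. Hence averaging the deterministic bound against the ensemble measure is legitimate, and the theorem follows without further work.
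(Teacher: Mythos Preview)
Your proof is correct and follows the same route as the paper: condition on the outer code, apply Theorem~\ref{lemma:bound_tight} codewise, take expectation, and use linearity. The only difference is in how the dimension issue $k_\code \geq k$ is handled. The paper first writes the exponent from Theorem~\ref{lemma:bound_tight} as $k_\code + \absoverhead$ and then uses $\pil \leq 1$ together with $k_\code \geq k$ to relax it to $k+\absoverhead$; you instead observe that the exponent is simply the number $m=k+\absoverhead$ of received output symbols and that the union bound over nonzero codewords of $\code$ goes through regardless of its actual dimension (assuming an injective $\Gp$), so no relaxation step is needed. Your handling is slightly more direct.
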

\begin{proof}
We can express the average probability of decoding failure as
\begin{align}
\barPf = \Exp_{\outercodeensemble} [ \Pf(\code)]
\label{eq:ensemble}
\end{align}
where $\Pf(\code)$ is the probability of decoding failure when outer code $\code$ is selected as outer code, and the expectation is taken over all the outer codes $\code$ in the ensemble $\outercodeensemble$.

From Theorem~\ref{lemma:bound_tight}  we have
\begin{align}
\barPf \leq \Exp \left[ \sum_{l=1}^h \frac{\weo_{\l}(\code) }{q-1} \pil^{k_\code+\absoverhead} \right]
\label{eq:ensemble2}
\end{align}
where $\weo_{\l}(\code)$ is the number of codewords of weight $l$ in $\code$ and $k_\code$ is the dimension of $\code$.

For all codes $\code$ in the ensemble $\outercodeensemble$ we have $k_\code \geq k$. Furthermore, since $\pil$ is a probability we have $\pil \leq 1$ and we can write
\[
\pil^{k_\code+\absoverhead} \leq \pil^{k+\absoverhead}
\]
which allows us to upper bound \eqref{eq:ensemble2} as
\[
\barPf \leq \Exp \left[ \sum_{l=1}^h \frac{\weo_{\l}(\code) }{q-1} \pil^{k+\absoverhead} \right]
\]
which by linearity of the expectation becomes
\[
\barPf \leq  \sum_{l=1}^h \frac{ \Exp \left[ \weo_{\l}(\code) \right] }{q-1} \pil^{k+\absoverhead} = \sum_{l=1}^h \frac{\weoensemble_{\l}}{q-1}  \pil^{k+\absoverhead}
\]
\end{proof}
Theorem~\ref{theorem:rateless} can be extended in the same way as Theorem~\ref{lemma:bound_tight} to consider the case when the outer code is drawn from an ensemble of codes.

\subsection{Numerical Results}

All the results presented in this section use the \ac{LT} output degree distribution from standard R10 Raptor codes,  \cite{MBMS12:raptor,luby2007rfc} with degree generator polynomial:
\begin{align}
\Omega(\x) &= 0.0098\x + 0.4590\x^2+ 0.2110\x^3+0.1134\x^4 \\
&+ 0.1113\x^{10} + 0.0799\x^{11} + 0.0156\x^{40}.
\label{eq:dist_mbms_ch4}
\end{align}

The bound used in the different figures in this section is that given in Theorem~\ref{lemma:bound_tight} for deterministic outer codes.
When considering outer codes drawn for an ensemble the bound in Theorem~\ref{corollary:rateless} is used.

\subsubsection{Hamming outer code}
In this section we consider Raptor codes with binary Hamming outer codes. The weight enumerator of a Hamming code can be derived easily using the following recursion,
\[
(i+1)\, A_{i+1} + A_i + (n-i+1)\, A_{i-1}= \binom{n}{i}
\]
with $A_0=1$ and $A_1=0$ \cite{MacWillimas77:Book}. The expression obtained from this recursion can then be used together with Theorem~\ref{theorem:rateless} to derive an upper bound on the probability of decoding failure.

In Figure~\ref{fig:Hamming_sim} we show the probability of decoding failure for a Raptor code with a $(63,57)$ binary Hamming outer code as a function of the absolute overhead, $\absoverhead$.
In order to obtain the average probability of failure, Monte Carlo simulations were run until $200$ errors were collected. It can be observed how the upper bound is tight, specially in the error floor region (when $\absoverhead$ is large).
\begin{figure}[t]
    \centering
    \includegraphics[width=\figwbigger]{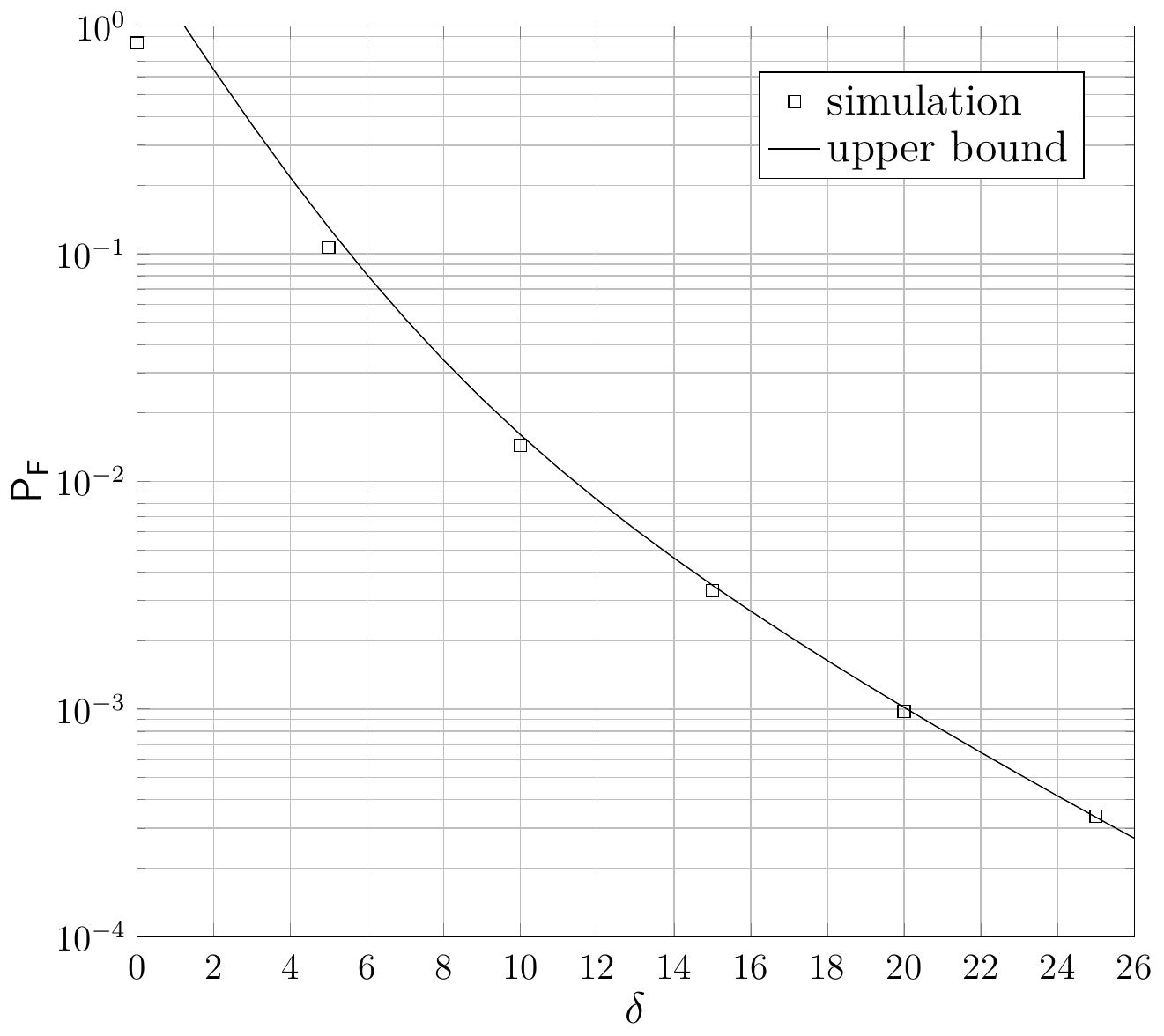}
    \caption[Probability of decoding failure $\Pf$ vs.\ $\absoverhead$ for a Raptor with a $(63,57)$ Hamming outer code]{Probability of decoding failure $\Pf$ vs.\ absolute overhead for a Raptor code with a $(63,57)$ Hamming outer code \fran{and \ac{LT} output degree distribution given in \eqref{eq:dist_mbms_ch4}}. The solid line denotes the upper bound on the probability of decoding failure \fran{ in Theorem~\ref{lemma:bound_tight}}. The markers denote simulation results.}\label{fig:Hamming_sim}
\end{figure}

\subsubsection{Linear random outer code}
In this section an ensemble of Raptor codes is consider where the outer code is selected from the $q$-ary linear random ensemble. The average weight enumerator of the  linear random ensemble was first derived in \cite{Gallager63} for the binary case and then in \cite{ashikhmin:98} for the $q$-ary case and  has the expression
\begin{align}
\weoensemble_{\l} = \binom{h}{\l} q^{-h (1-\ro)}  (q-1)^l.
\end{align}

In order to simulate the average probability of decoding failure of the ensemble, $6000$ different outer codes were generated. For each outer code and overhead value $10^3$ decoding attempts were carried out. The average probability of decoding failure was obtained averaging the probabilities of decoding failure obtained with the different outer codes. Note that the objective of the simulation was not characterizing the performance of every individual code but to characterize the average performance of the ensemble.
In order to select the outer code an $(h-k)\times h$ parity check matrix was selected at random by generating each of its elements according to a uniform distribution in $\mathbb F_{q}$.

\begin{figure}[t]
    \centering
    \includegraphics[width=\figwbigger]{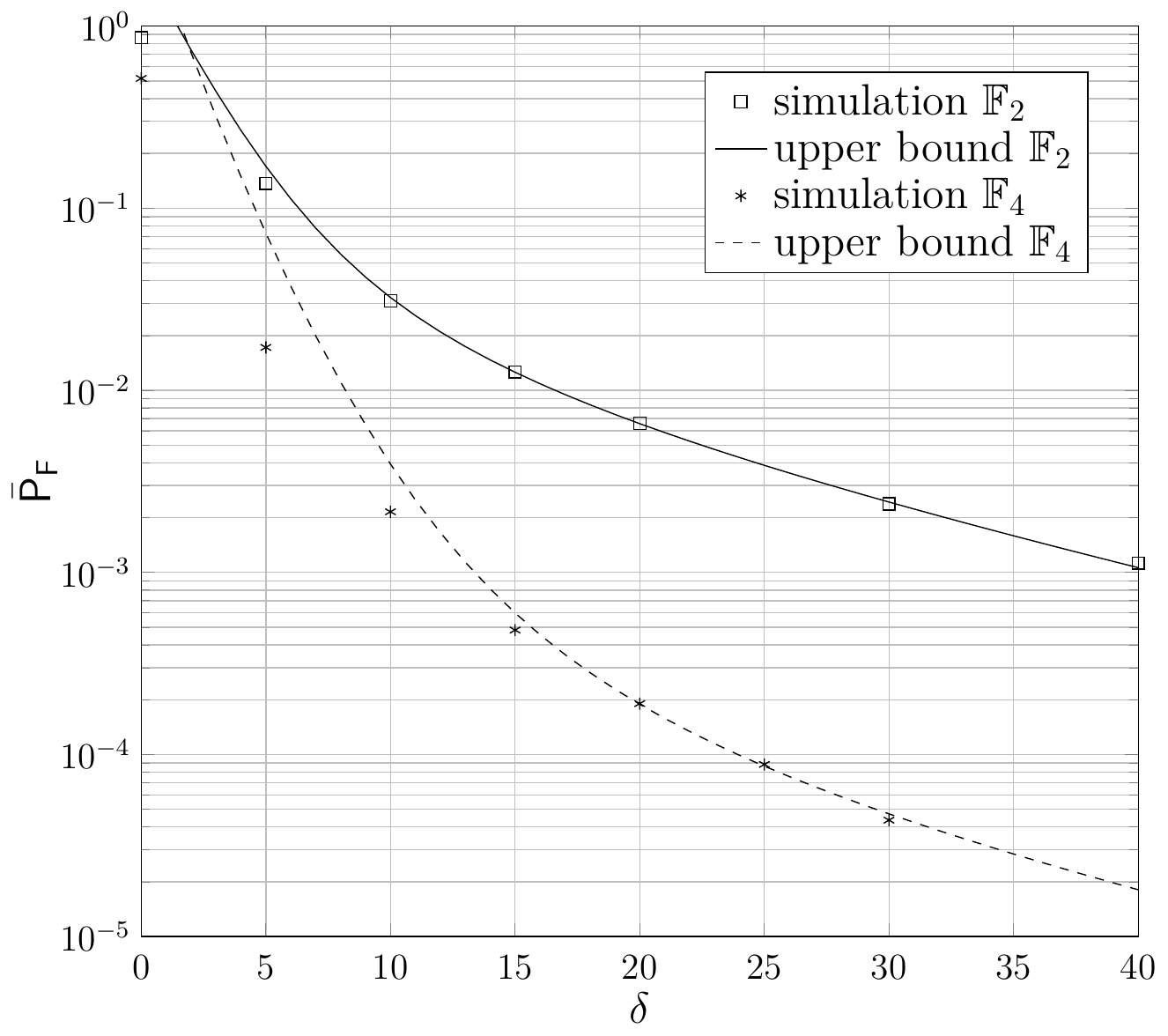}
    \caption[Expected probability of decoding failure $\barPf$ vs.\ $\absoverhead$ for two Raptor code ensembles where the outer code is selected from the (70,64) linear random ensemble  constructed over $\mathbb{F}_2$ and $\mathbb{F}_4$]{Expected probability of decoding failure $\barPf$ vs.\ absolute overhead for two Raptor code ensembles where the outer code is selected from the (70,64) linear random ensemble  constructed over $\mathbb{F}_2$ and $\mathbb{F}_4$ \fran{and with \ac{LT} output degree distribution given in \eqref{eq:dist_mbms_ch4}}. The solid and dashed lines denote the upper bound on the probability of decoding failure for
    the ensembles constructed over $\mathbb{F}_2$ and $\mathbb{F}_4$ respectively \fran{(see Theorem~\ref{corollary:rateless})}.
    The square and asterisk markers denote simulation results for $\mathbb{F}_2$ and  $\mathbb{F}_4$ respectively.}\label{fig:pf_k_64_h70}
\end{figure}

Figure~\ref{fig:pf_k_64_h70} shows the simulation results for Raptor codes with a linear random outer code with $k=64$ input symbols and $h=70$ intermediate symbols and the degree distribution in \eqref{eq:dist_mbms_ch4}. Two different ensembles were considered, a binary one and one constructed over $\mathbb{F}_4$. We can observe how in both cases the bounds hold and are tight except for very small values of $\absoverhead$.

\FloatBarrier

\FloatBarrier
\section{Inactivation Decoding Analysis}\label{chap:raptor_inactivation_decoding}

In this section we will consider Raptor codes under inactivation decoding, the efficient \ac{ML} decoding algorithm that is described in Section~\ref{chap:LT_inact} within the context of \ac{LT} codes. For simplicity, in this section we will consider only binary Raptor codes, being the extension to non-binary Raptor codes straightforward.

Let us recall, that \ac{ML} decoding of \ac{LT} codes consists of solving the system of equations in \eqref{eq:ml_eq_sys}:
\[
\mathbf{\rosymb}^T = \Grx^T \mathbf{v}^T
\]
where $\mathbf{\rosymb}$ is the (row) vector of received output symbols, $\mathbf{v}$ is the (row) vector of source symbols and $\Grx^T$ is the transposed generator matrix of the \ac{LT} code after removing the rows associated to output symbols that were erased by the channel. The matrix $\Grx^T$ has dimensions $m \times k$, $m$ being the number of output symbols collected and $k$ the number of source symbols.

\ac{ML} decoding of Raptor codes consists of solving the system of equations in \eqref{eq:raptor_sys_eq}. That is,
\[
\begin{bmatrix}
\zeros \\
\mathbf{\Rrosymb^T}
\end{bmatrix} =
\constmatrix \, \vecv^T
\]
where $\vecv$ is a row vector representing the intermediate symbols, that are the input to the \ac{LT} encoder, $\mathbf{\Rrosymb}$ is a column vector representing the received output symbols and $\constmatrix$ is the constraint matrix of the Raptor code with dimension $\left( \left(h-k+m \right) \times h\right)$. Vector $\zeros$ is a $(h-k) \times 1$ column zero vector (see Section~\ref{chap:raptor_under_ML} for more details).

If we compare the two systems of equations, we can see how Raptor \ac{ML} decoding is very similar to \ac{ML} decoding of an \ac{LT} code with $h$ source symbols and $h-k+m$ output symbols. The role of matrix $\Grx$ for \ac{LT} codes is played by the constraint matrix of the Raptor code, $\constmatrix$. The main difference is that while for \ac{LT} codes all the rows of $\Grx$ are independent and identically distributed according to the degree distribution $\Omega$, this is no longer true for matrix $\constmatrix$ (for a generic Raptor code). Let us recall that the constraint matrix of a Raptor code is defined as:
\[
\constmatrix =
\begin{bmatrix}
 \hmatrixpre \\
 \GrxLT^T
\end{bmatrix}
\]
where $\hmatrixpre$ is the parity check matrix of the outer code, and $\GrxLT$ is a binary matrix that  corresponds to the generator matrix of the \ac{LT} code after removing the columns associated with the output symbols that were erased by the channel. Thus, while the rows in $\GrxLT$  are independent and identically distributed this is not true for matrix $\constmatrix$.

Let us introduce the following definition.

\begin{mydef}[Surrogate  \ac{LT} code]
Consider a Raptor code with outer code parity-check matrix $\hmatrixpre$ and an inner \ac{LT} code with degree distribution $\Omega$ and assume the receiver has collected $m=k + \absoverhead$ output symbols.
The surrogate \ac{LT} code of the Raptor code is defined as an \ac{LT} code with $h$ input symbols, $m$ output symbols, and degree distribution $\Omegaeq$ given by the expected Hamming weight distribution of  the rows of the constraint matrix $M$ of the Raptor code. Formally
\[
\Omegaeq= \frac{h-k}{h-k+m} \Theta + \frac{m}{h-k+m} \Omega = \Omegaeq= \frac{h-k}{h+\absoverhead} \Theta + \frac{m}{h+\absoverhead} \Omega
\]
where $\Theta=\{ \Theta_1,\Theta_2,\hdots,\Theta_h \}$, being  $\Theta_i$ the fraction of rows of Hamming weight $i$ in $\hmatrixpre$.
\end{mydef}
\begin{remark} The degree distribution of the surrogate \ac{LT} code, $\Omegaeq$ depends on the receiver overhead $\absoverhead$.
\end{remark}

Given the similarity between inactivation decoding of Raptor and \ac{LT} codes, it is possible to approximate inactivation decoding of a Raptor code as inactivation decoding of its surrogate \ac{LT} code. Using this heuristic approximation we will show how the approaches derived in Sections~\ref{chap:inact_first_order}, \ref{chap:inact_distribution} and \ref{chap:inact_low_complex} for \ac{LT} codes can be adapted to approximate inactivation decoding of Raptor codes with a reasonable accuracy. 


\subsection{Raptor Codes with Linear Random Outer Codes}\label{sec:raptor_inact_rand}

In the case of Raptor codes with a linear Random outer code the constraint matrix corresponds to:
\[
\constmatrix =
\begin{bmatrix}
 \hmatrixpre \\
 \GrxLT^T
\end{bmatrix}
\]
where
\begin{itemize}
\item $\hmatrixpre$ is the parity check matrix of a linear random code with size $\left( \left(h-k \right) \times h\right)$. The Hamming weight of each row corresponds to a binomial random variable with parameters $h$ and $1/2$.
\item $\GrxLT$ is a $\left( h \times m\right)$ binary matrix which defines the relation between the intermediate symbols and the output symbols due to the LT encoding. The Hamming weight of each column corresponds to the output degree distribution of the inner \ac{LT} code $\Omega$.
\end{itemize}
Thus the Hamming weight distribution of $\hmatrixpre$ corresponds to
\[
\Theta_i =  \mathcal B (h, 1/2)
\]
where $\mathcal B (h, 1/2)$ is a binomial distribution with parameters $h$ and $1/2$. Therefore, the degree distribution of the surrogate \ac{LT} code corresponds to
\[
\Omegaeq = \frac{m}{h-k+m} \Omega +  \frac{h-k}{h-k+m} \mathcal B (h, 1/2)
\]

For illustration, in Figure~\ref{fig:omega_equiv} we provide an example of degree distribution of the  surrogate \ac{LT} code for a  Raptor code with a $(106,80)$ linear random outer code with $m=80$ and degree distribution $\Omegarten$. The contribution of the outer code can be clearly distinguished, it corresponds to the bell shaped curve around degree $d=53$.
\begin{figure}
        \centering
        \includegraphics[width=\figwbigger]{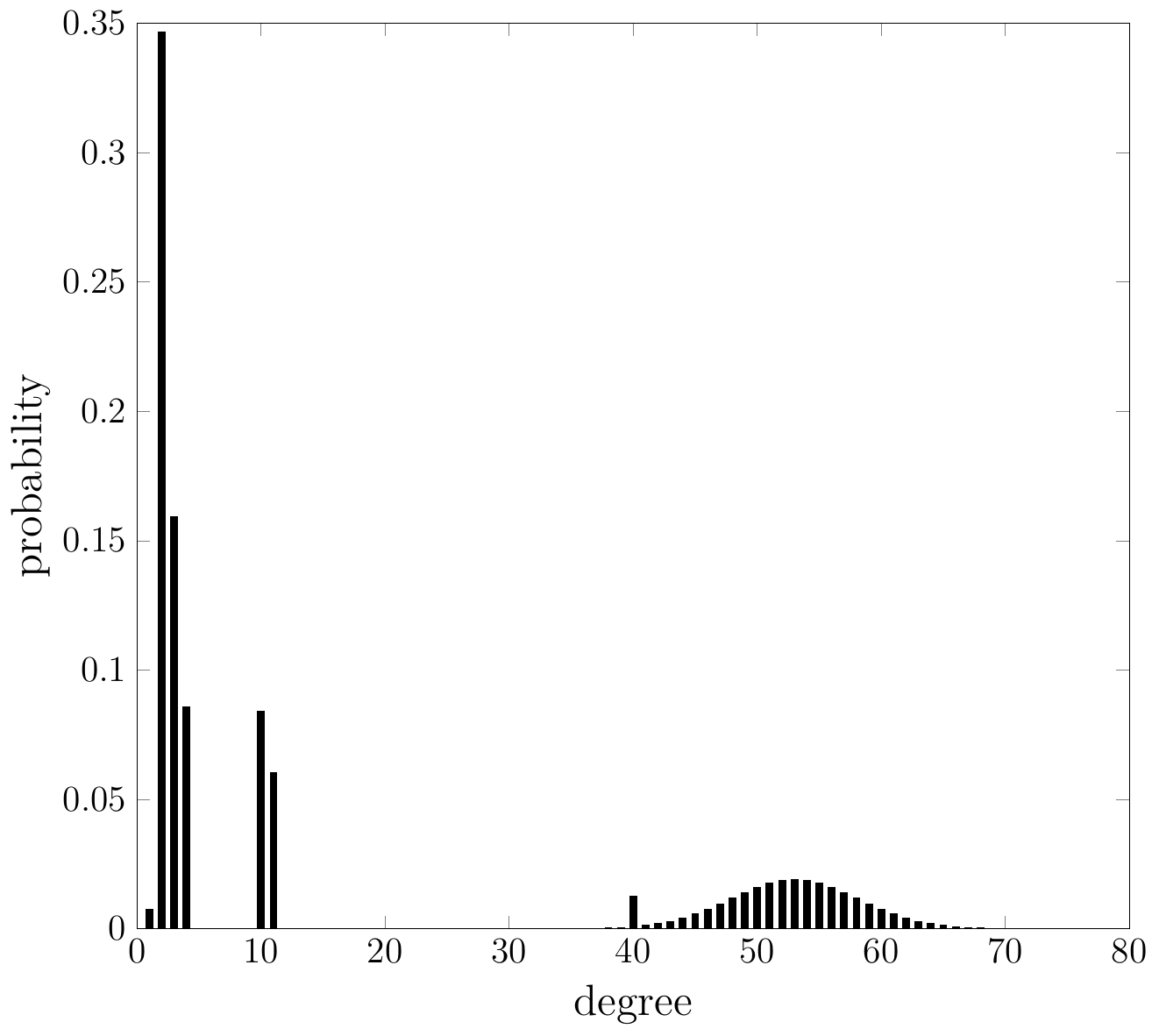}
        \caption[Surrogate \ac{LT} output degree distribution for a Raptor code with a linear random precode.]{Surrogate \ac{LT} degree distribution $\Omegaeq$ for a Raptor code with a $(106,80)$ linear random outer code with $m=80$ and degree distribution $\Omegarten$. }
\label{fig:omega_equiv}
\end{figure}

Figure~\ref{fig:raptor_inact} shows the average number of inactivations vs.\ the absolute receiver overhead $\absoverhead$ for a Raptor code with a $(233,200)$ linear random precode with degree distribution $\Omegarten$. The figure shows results obtained by Monte Carlo simulations and the approximations obtained using the methods in Sections~\ref{chap:inact_first_order} and \ref{chap:inact_low_complex} for the surrogate \ac{LT} code. The match between the simulation results and the approximation is good.

\begin{figure}[t]
        \centering
        \includegraphics[width=\figwbigger]{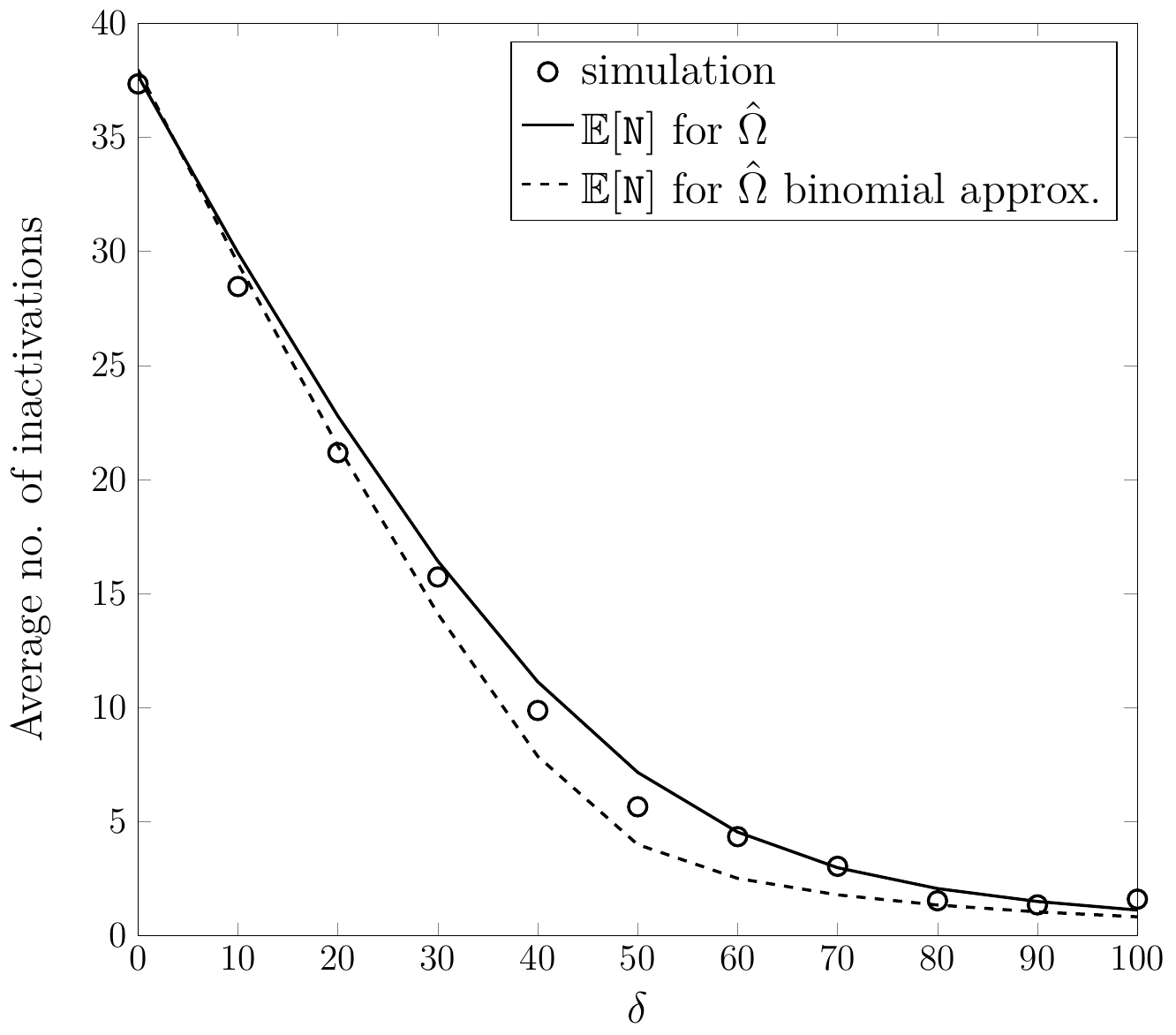}
        \caption[Average number of inactivations vs.\ $\absoverhead$ for a Raptor code with a $(233,200)$ linear random outer code with degree distribution $\Omegarten$]{Average number of inactivations vs.\ absolute receiver overhead $\absoverhead$ for a Raptor code with a $(233,200)$ linear random outer code with degree distribution $\Omegarten$. The markers represent the results of Monte Carlo simulations. The solid and dashed lines represent the number of inactivations for the surrogate \ac{LT} code using the methods Sections~\ref{chap:inact_first_order} and \ref{chap:inact_low_complex} respectively.}
\label{fig:raptor_inact}
\end{figure}

Finally, Figure~\ref{fig:raptor_inact_dist} shows the distribution of the number of inactivations for a Raptor code with a $(233,200)$ linear random outer code with $m=200$ and degree distribution $\Omegarten$ and the approximation obtained using the method in Section~\ref{chap:inact_distribution} for the surrogate \ac{LT} code. We can observe how the estimation of the distribution of the number of inactivations is not very accurate. While the average value is estimated correctly, the actual distribution of the number of inactivations is more concentrated around the mean than its estimation. \fran{A possible explanation for this effect is that the Raptor code has a constant number large Hamming weight rows in its constraint matrix, which correspond to the parity check matrix of the outer code. However, its surrogate \ac{LT} code implicitly makes the assumption that the number of large weight rows is random. Thus, it also considers realizations with too many/few large Hamming weight rows (output symbols), leading to a higher dispersion (less concentration) of the number of inactivations around the mean value.}
\begin{figure}[t]
        \centering
        \includegraphics[width=\figwbigger]{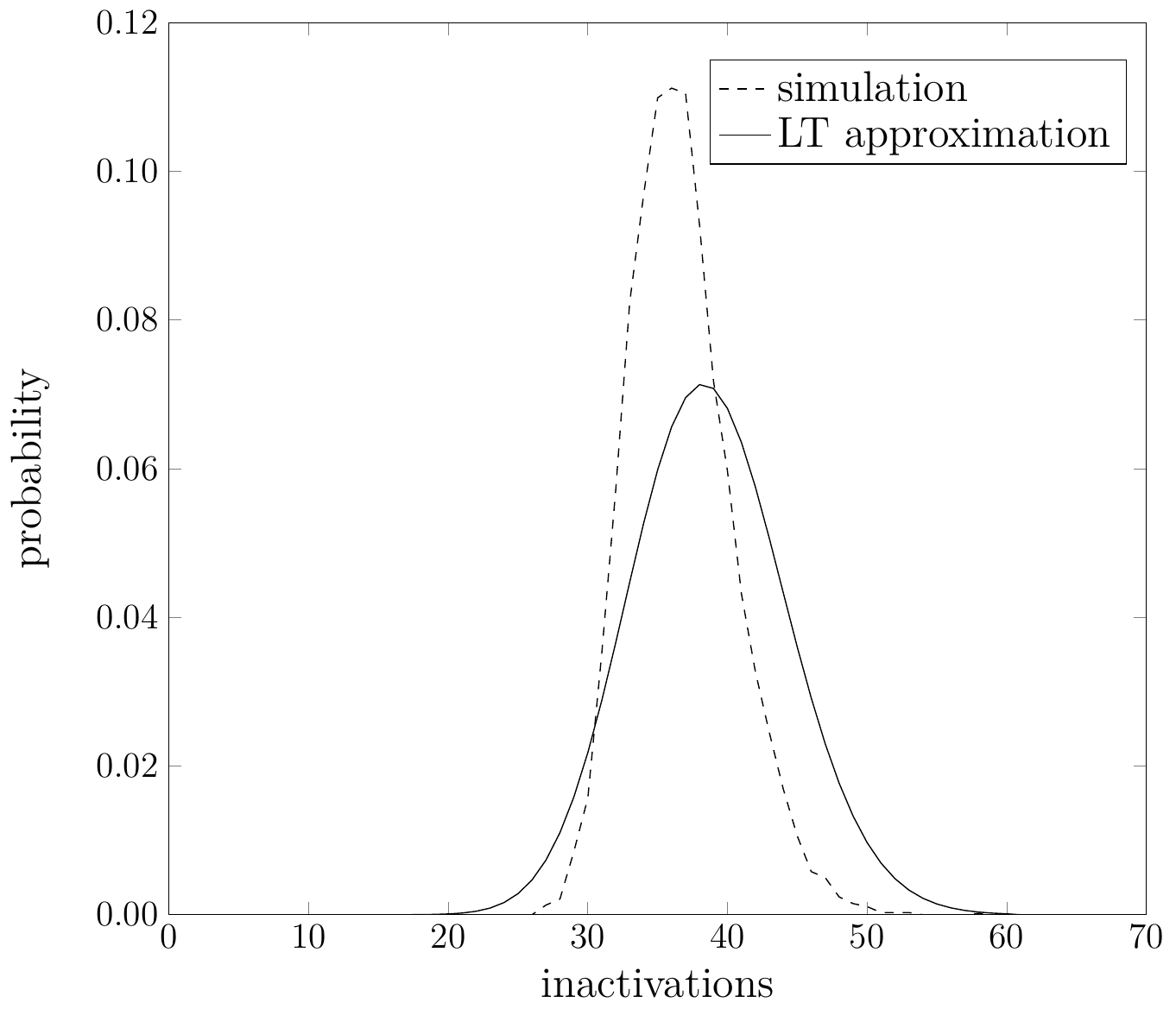}
        \caption[Distribution of number of inactivations for a Raptor code with a $(233,200)$ linear random outer code with $m=200$ and degree distribution $\Omegarten$]{Distribution of the number of inactivations for a Raptor code with a $(233,200)$ linear random outer code with $m=200$ and degree distribution $\Omegarten$. The dashed line represents the results of Monte Carlo simulations. The solid line represents the estimated number of inactivations using the method in Section~\ref{chap:inact_distribution} for the surrogate \ac{LT} code.}
\label{fig:raptor_inact_dist}
\end{figure}

\FloatBarrier

\subsection{R10 Raptor Codes}\label{sec:raptor_inact_r10}

In the case of R10 Raptor codes, the precode is a concatenation of two systematic codes, an \ac{LDPC} code and an \ac{HDPC} code (see Section~\ref{chap:raptor_r10}). Hence, two different parts can be distinguished in the parity check matrix of the precode:
\begin{itemize}
\item \ac{LDPC} part. There are $\srten$ rows associated to \ac{LDPC} redundant symbols. The Hamming weight of each row is approximately  $3 \lfloor k/\srten\rceil +1$, where $\lfloor x \rceil$ denotes the closest integer to $x$.
\item \ac{HDPC} part. There are $\hrten$ rows associated to \ac{HDPC} redundant symbols. The Hamming weight of each row is approximately $ \lfloor(k+\srten)/2\rceil + 1$
\end{itemize}
Thus, the distribution of the Hamming weight of the rows of $\hmatrixpre$ is approximately
\[
\Theta_i \approx \frac{\srten}{\srten+\hrten}  \mathcal{D}\left(3 \left\lfloor \frac{k}{\srten} \right\rceil +1\right)+  \frac{\hrten}{\srten+\hrten}  \mathcal{D}\left(\left\lfloor \frac{k+\srten}{2}\right\rceil + 1\right),
\]
where $\mathcal{D}(i)$ denotes a (discrete) Kronecker delta at $i$. Therefore, the degree distribution of the surrogate \ac{LT} code is approximately
\begin{align}
\Omegaeq &\approx \frac{m}{\srten+\hrten-k+m} \Omega +  \frac{\srten}{\srten+\hrten+m}  \mathcal{D}\left(3 \left \lfloor  \frac{k}{\srten} \right \rceil +1 \right) \\
&+ \frac{\hrten}{\srten+\hrten+m}  \mathcal{D}\left( \left\lfloor\frac{k+\srten}{2} \right\rceil + 1\right).
\label{eq:omega_eq_r10}
\end{align}

For illustration, in Figure~\ref{fig:omega_equiv_r10} we provide the surrogate \ac{LT} code degree distribution for a R10 Raptor code with $k=80$ and $m=80$.  In this case the  redundant \ac{LDPC} symbols are modeled as degree $16$ output symbols and the \ac{HDPC} symbols by degree $50$ output symbols.
\begin{figure}
        \centering
        \includegraphics[width=\figwbigger]{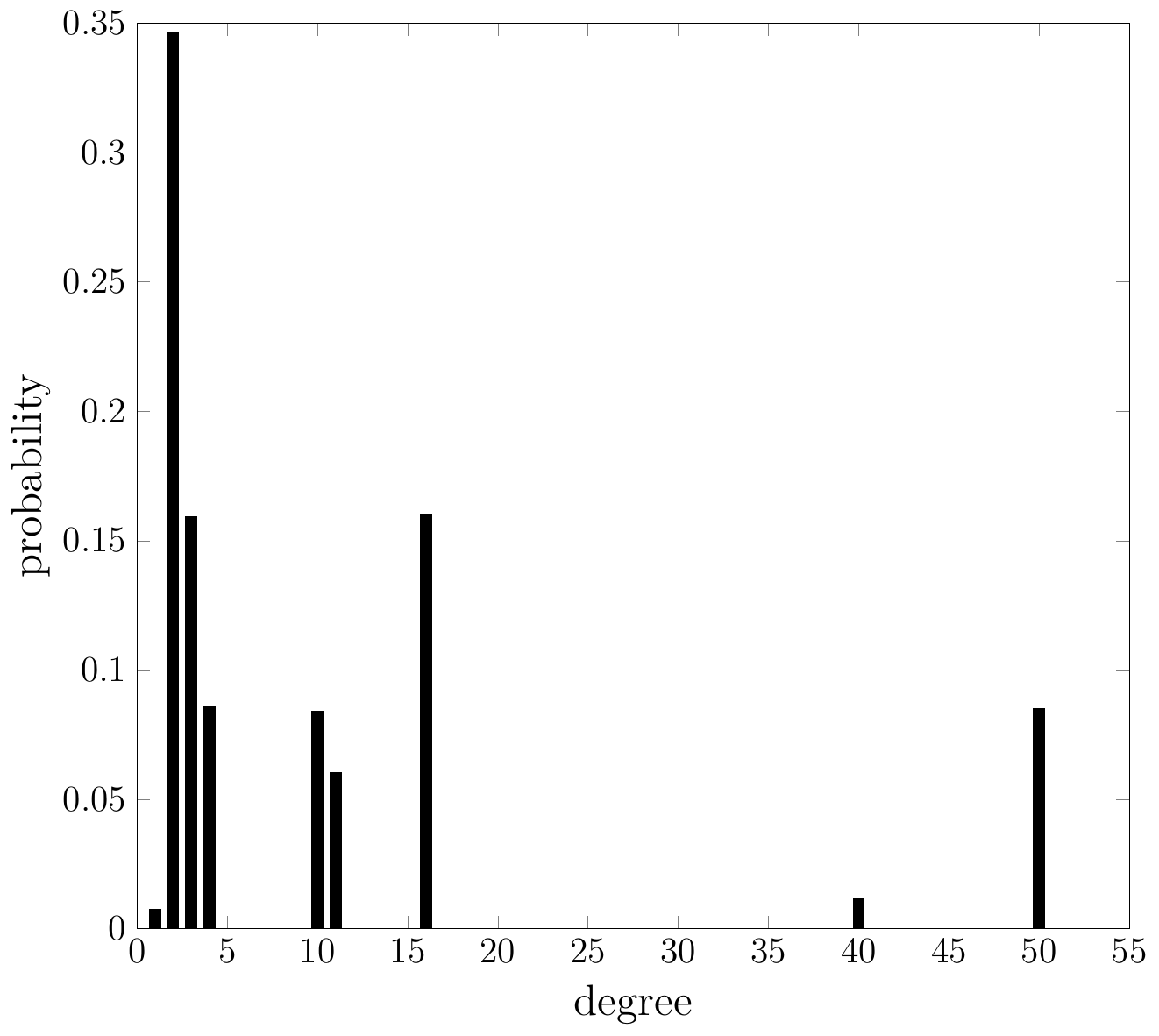}
        \caption[Surrogate \ac{LT} output degree distribution for an R10 Raptor code.]{Surrogate \ac{LT} degree distribution $\Omegaeq$ for an R10 Raptor code with $k=80$.}
\label{fig:omega_equiv_r10}
\end{figure}

Using the surrogate \ac{LT} code approximation, the methods presented in Sections~\ref{chap:inact_first_order}, \ref{chap:inact_distribution} and \ref{chap:inact_low_complex} can be used to estimate the number of inactivations needed to complete Raptor decoding.

\begin{figure}
        \centering
        \includegraphics[width=0.579\columnwidth]{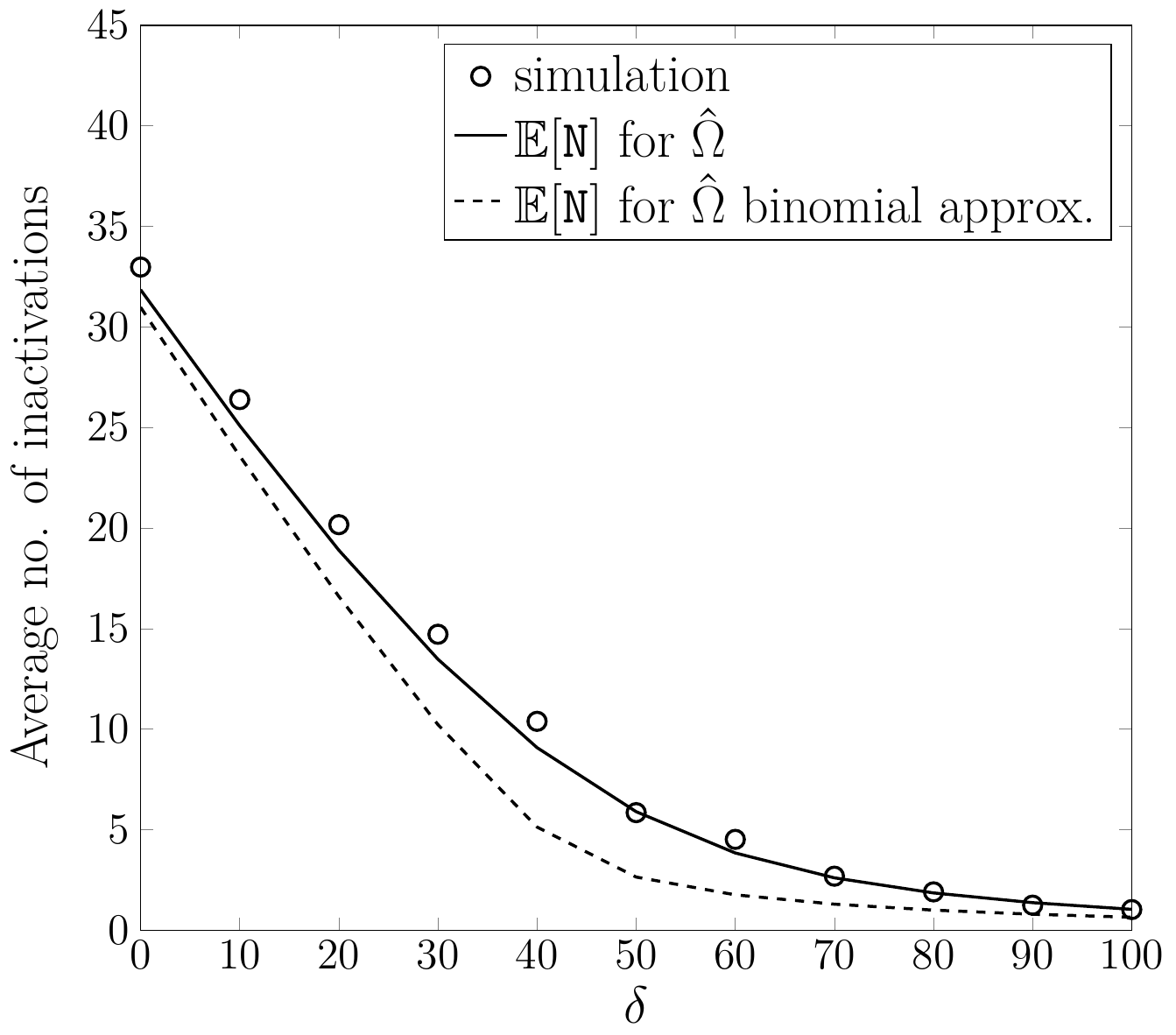}
        \caption[Average number of inactivations vs.\ $\absoverhead$ for a R10 Raptor code with $k=200$]{Average number of inactivations vs.\ absolute receiver overhead $\absoverhead$ for a R10 Raptor code with $k=200$. The markers represent the results of Monte Carlo simulations. The solid and dashed lines represent the estimated number of inactivations for the surrogate \ac{LT} code using the methods in Sections~\ref{chap:inact_first_order} and \ref{chap:inact_low_complex} respectively.}
\label{fig:raptor_inact_r10}
        \hspace{-4mm}\includegraphics[width=\figw]{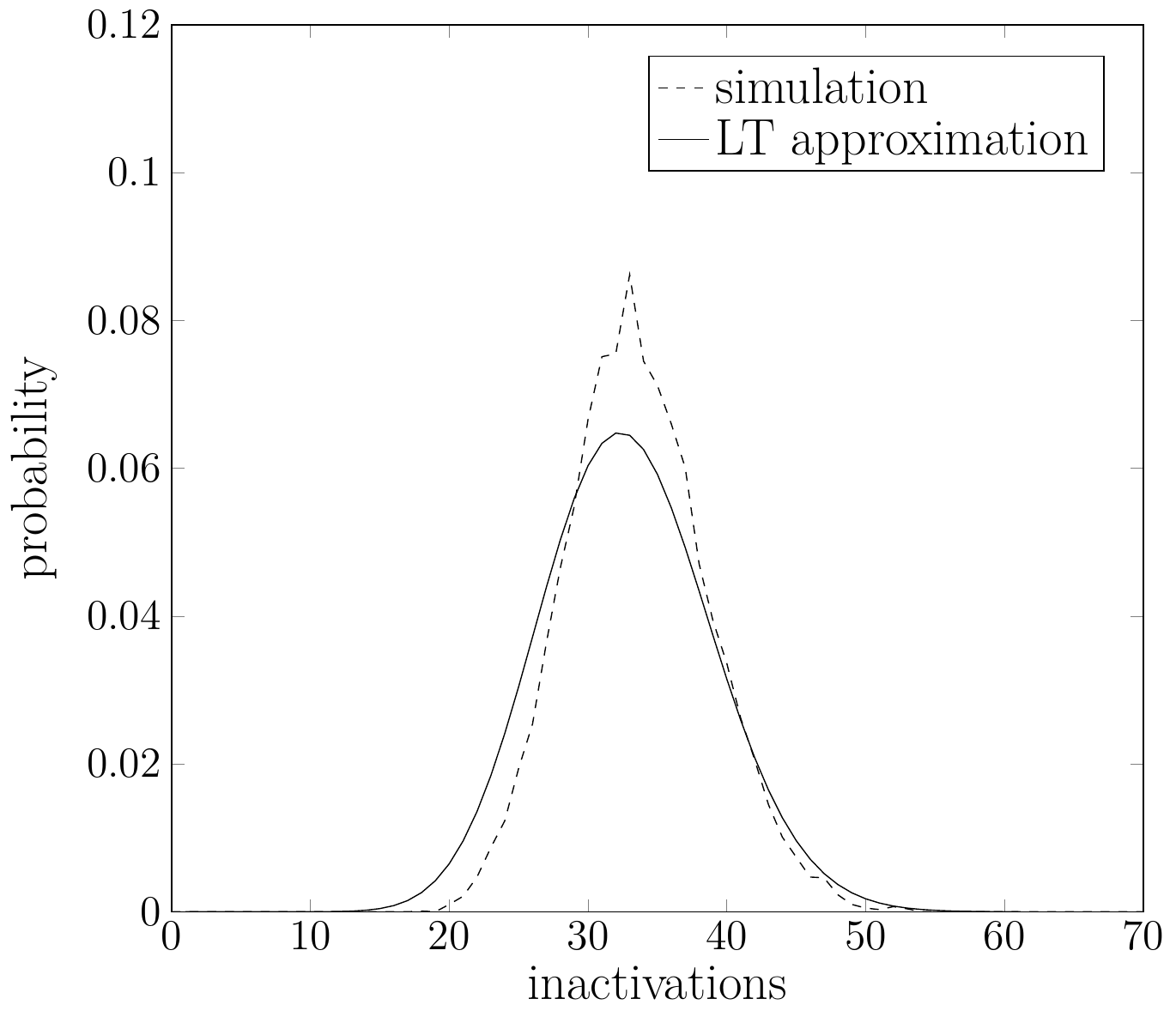}
        \caption[Distribution of number of inactivations for a R10 Raptor code with $k=200$ and $m=200$]{Distribution of the number of inactivations for a R10 Raptor code with $k=200$ and $m=200$. The dashed line represents the results of Monte Carlo simulations. The solid line represents the estimated number of inactivations using the surrogate \ac{LT} code approximation and the method in Section~\ref{chap:inact_distribution}.}
\label{fig:raptor_r10_inact_dist}
\end{figure}

Figure~\ref{fig:raptor_inact_r10} shows the average number of inactivations vs.\ the absolute receiver overhead $\absoverhead$ for a  R10 Raptor code with $k=200$. The figure shows results obtained by Monte Carlo simulations and the approximations obtained using the methods in Sections~\ref{chap:inact_first_order} and \ref{chap:inact_low_complex} for the surrogate \ac{LT} code. The match between the simulation results and the method in Section~\ref{chap:inact_low_complex} is good.

Finally, Figure~\ref{fig:raptor_r10_inact_dist} shows the distribution of the number of inactivations for a R10 Raptor code with a $k=200$ and $m=200$ and its estimation using the surrogate \ac{LT} code approximation and the method in Section~\ref{chap:inact_distribution}.
\fran{If we compare this figure with Figure~\ref{fig:raptor_inact_dist} we can see how the estimation of the distribution of the number of inactivations works better for R10 Raptor codes than for codes with a linear random outer code.}

\subsection{Discussion}
In this section we have proposed an approximate analysis of Raptor codes by introducing the concept of the surrogate \ac{LT} code of a Raptor code. The simulation results presented in Sections~\ref{sec:raptor_inact_rand} and \ref{sec:raptor_inact_r10} show how the approximation is reasonably good in order to estimate the expected number of inactivations. However, the approximation is not accurate enough to estimate the distribution of the number of inactivations.  The reason for this deviation is the fact that the surrogate \ac{LT} approximation makes implicitly the assumption that the rows of the constraint matrix $\constmatrix$ are independent and identically distributed. \fran{More concretely, it assumes there is a random number of large Hamming weight rows, which correspond to the rows in the parity check matrix of the outer code. However, for a Raptor code, the number of rows of $\constmatrix$ that correspond to the parity check matrix of the outer code is fixed.}

\section{Code Design}\label{chap:rapt_code_design}

Within this section we provide a Raptor code design example. More concretely we design a degree distribution for the \ac{LT} component of a binary Raptor code with a $(63,57)$ outer Hamming code. The design goal is  achieving a target probability of decoding failure lower than ${\Pf}^*=10^{-3}$ for $\absoverhead=15$ while minimizing the number of inactivations needed for decoding at $\absoverhead=15$.
Moreover, we will constraint the output degree distribution to have exactly the same maximum and average output degree as standard R10 Raptor codes ($\avgd=4.6314$ and $\dmax=40$). \fran{Note that a constraint on the average output degree is equivalent to a constraint on the average encoding complexity /cost. Moreover, the constraint on the maximum output degree gives us control on the worst case encoding complexity.} Furthermore we will constraint the output degree distribution to have the same support as the degree distribution of R10 raptor codes, that is, only degrees $1,2,3,4,10,11$ and $40$ will be assigned a probability larger than $0$. 
These constraints are chosen to illustrate the fact that arbitrary constraints can be introduced in the code design.

The design of the \ac{LT} output degree distribution is  formulated as a numerical optimization problem. More concretely, the numerical optimization algorithm that is used is \acf{SA} (see Section~\ref{sec:lt_code_design} for more details). The objective function to be minimized is defined as:
\[
\Upsilon = \Exp \left[ \hat \Y \right] + f_p (\Pf^{\mathrm{up}})
\]
where $\Exp \left[ \hat \Y \right]$ is the estimated number of inactivations needed for decoding the surrogate \ac{LT} code and $f_p$ is defined as
\begin{align}
f_p (\Pf^{\mathrm{up}}) =
    \begin{cases}
        0,  & \Pf^{\mathrm{up}}<{\Pf}^* \\
        b~(1- {\Pf}^* / \Pf^{\mathrm{up}}),& \mathrm{else}
    \end{cases}
\end{align}
being ${\Pf}^*$ the target probability of decoding failure at $\absoverhead=15$ , $\Pf^{\mathrm{up}}$ its upper bound given in Theorem~\ref{theorem:rateless} and $b$ a large positive number ($b= 10000$ was used in the example). The large $b$ factor ensures that degree distributions which do not comply with the target probability of decoding failure are discarded.

The degree distribution obtained from our optimization is the following:
\begin{align}
\Omega^*(\x) &= 0.0490 ~\x^{1} + 0.3535 ~\x^{2} +  0.1135 ~\x^{3} +  0.2401 ~\x^{4}  \\
&+  0.1250 ~\x^{10} +  0.1183 ~\x^{11}  +  0.0006 ~\x^{40}.
\label{eq:omega_opt}
\end{align}

\begin{figure}[t]
        \begin{center}
        \includegraphics[width=\figwbigger]{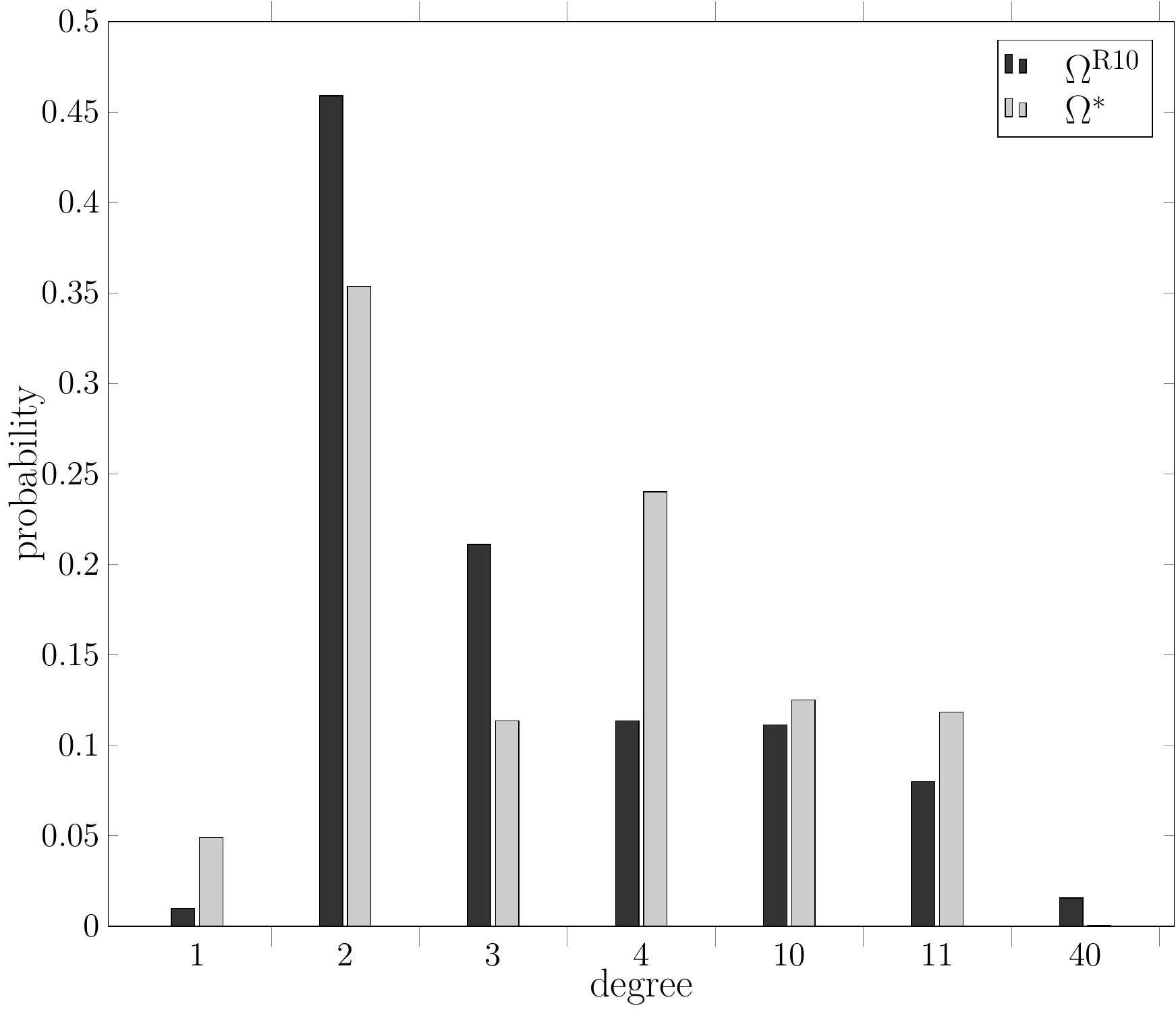}
        \centering
        \caption[Comparison of the output degree distribution of R10 Raptor codes, $\Omegarten$ with the output degree distribution obtained through optimization $\Omega^*$.]{Comparison of the output degree distribution of R10 Raptor codes, $\Omegarten$ with the output degree distribution obtained through optimization $\Omega^*$.}
        \label{fig:omega_comp}
        \end{center}
\end{figure}

\fran{Figure~\ref{fig:omega_comp} compares the output degree distribution of R10 Raptor codes $\Omegarten$,  given in \eqref{eq:dist_mbms_ch4}, with the degree distribution obtained from our optimization $\Omega^*$, given in \eqref{eq:omega_opt}.  Both distributions have the same average output degree, and in both cases the degree with maximum probability is $2$. However, the distributions are quite different.}

In order to compare the performance of the two Raptor codes considered Monte Carlo simulations were carried out. In order to derive the probability of decoding failure for each overhead value $\absoverhead$ simulations were run until 200 errors were collected. To obtain the average number of inactivations, 1000 decodings were carried out for each overhead value $\absoverhead$.

Figure~\ref{fig:overhead_perf} shows the probability of decoding failure $\Pf$ vs.\ the absolute receiver overhead $\absoverhead$ for the two binary Raptor codes with Hamming outer codes, with degree distributions $\Omegarten$ and $\Omega^*$. The upper bound to the probability of failure is also shown for both Raptor codes. We can observe how the Raptor code with degree distribution $\Omega^*$ meets the design goal, its probability of decoding failure at $\absoverhead=15$ is below $10^{-3}$. If we compare the two Raptor codes, we can see how the probability of decoding failure of the Raptor code with $\Omega^*$ is lower than that with $\Omegarten$. For $\Pf$ below $10^{-3}$,  the Raptor code with degree distribution $\Omega^*$ needs approximately $5$ less overhead symbols to achieve the same $\Pf$ as the Raptor code with degree distribution $\Omegarten$.

In Figure~\ref{fig:inact_Hamming_Design} the average number of inactivations is shown as a function of the absolute receiver overhead for the two binary Raptor codes with Hamming outer codes, with degree distributions $\Omegarten$ and $\Omega^*$.  We can observe how the degree distribution obtained from the optimization process, $\Omega^*$, leads to a higher number of inactivations, and, thus, to a higher decoding complexity.

\begin{figure}[t]
        \begin{center}
        \includegraphics[width=\figwbigger]{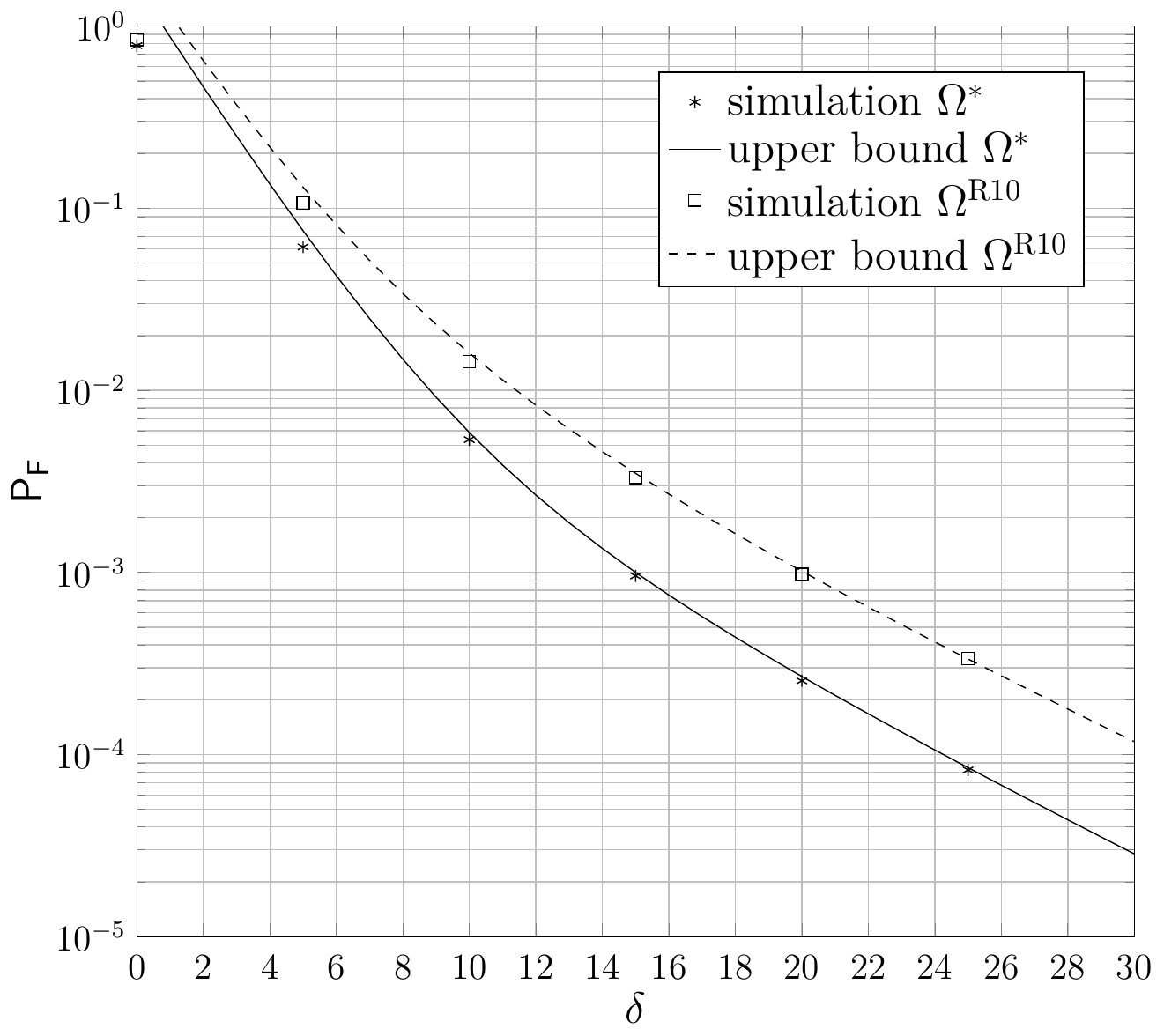}
        \centering
        \caption[Probability of decoding failure $\Pf$ vs.\ $\absoverhead$ for binary Raptor codes with a $(63,57)$ Hamming outer code and \acs{LT} degree distributions $\Omega^*$ and $\Omegarten$]{Probability of decoding failure $\Pf$ vs.\ absolute receiver overhead $\absoverhead$ for binary Raptor codes with a $(63,57)$ Hamming outer code and \acs{LT} degree distributions $\Omega^*$ and $\Omegarten$ The markers represent the result of simulations, while the lines represents the upper bound to the probability of decoding failure in Theorem~\ref{theorem:rateless}.}
        \label{fig:overhead_perf}
        \end{center}
\end{figure}
\begin{figure}[t]
        \begin{center}
        \includegraphics[width=\figwbigger]{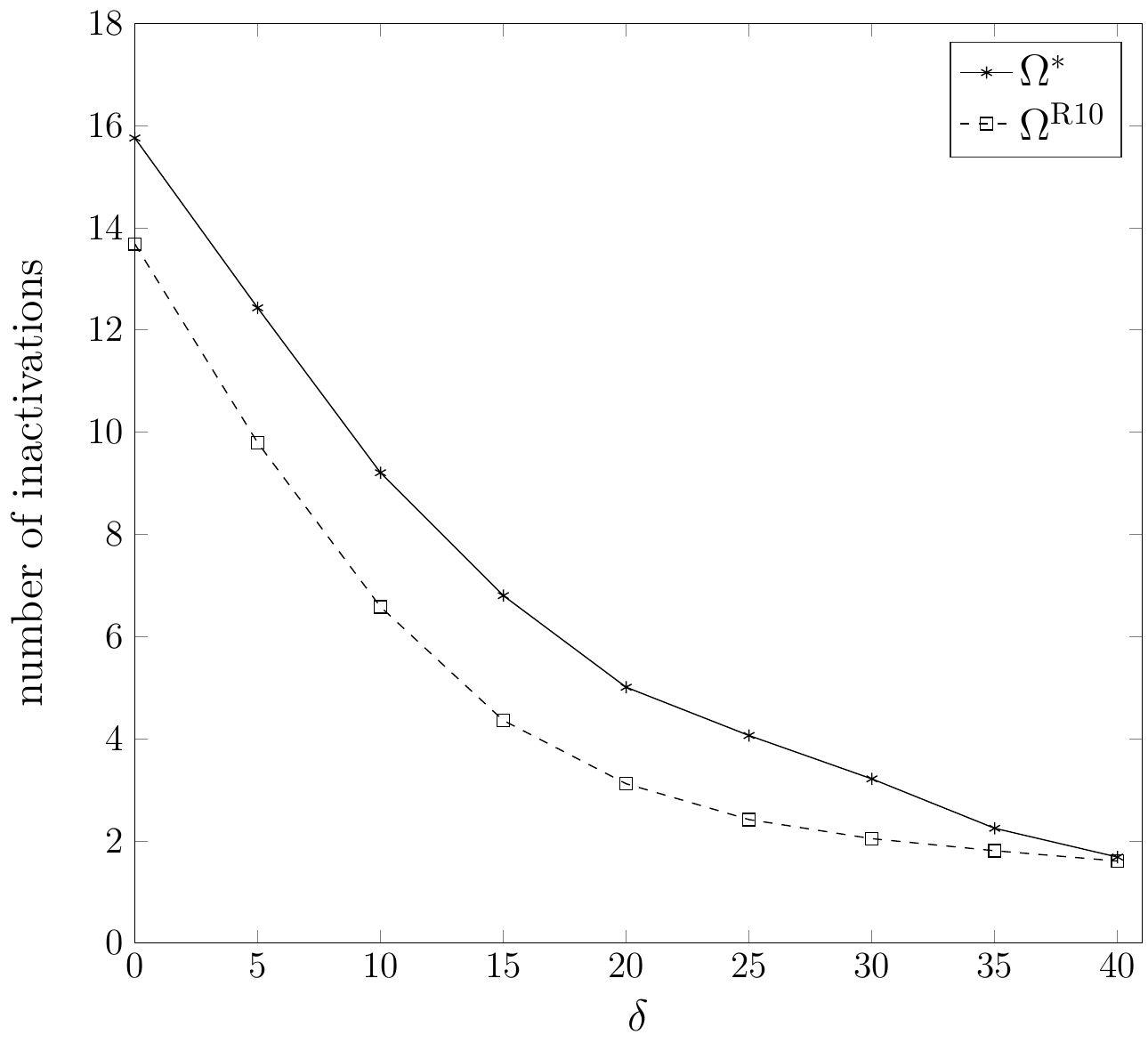}
        \centering
        \caption[Number of inactivations vs.\ $\absoverhead$ for binary Raptor codes with a $(63,57)$ Hamming outer code and \acs{LT} degree distributions $\Omega^*$ and $\Omegarten$]{Number of inactivations vs.\ absolute receiver overhead $\absoverhead$ for binary Raptor codes with a $(63,57)$ Hamming outer code and \acs{LT} degree distributions $\Omega^*$ and $\Omegarten$.}
        \label{fig:inact_Hamming_Design}
        \end{center}
\end{figure}

The results in Figures~\ref{fig:overhead_perf} and \ref{fig:inact_Hamming_Design} illustrate the tradeoff between probability of decoding failure and number of inactivations (decoding complexity). In general if one desires to improve the probability of decoding failure it is necessary to accept a higher decoding complexity.


\section{Summary}\label{chap:raptor_summary}
In this chapter we have focused on Raptor codes under inactivation decoding.
In Section \ref{chap:raptor_rateless_ml} an upper bound to the probability of decoding failure of $q$-ary Raptor codes under \ac{ML} decoding has been derived. This bound is based on the weight enumerator of the outer code, or its average weight enumerator when the outer code is randomly drawn from a code ensemble. The bounds derived are tight, specially in the error floor region, as it is shown by means of simulations.
In Section \ref{chap:raptor_inactivation_decoding} a heuristic method is presented that yields an approximate analysis of Raptor codes under inactivation decoding. The method is shown to be accurate for several examples.
Finally in Section~\ref{chap:rapt_code_design} a code design is presented based on the results presented in this chapter. More concretely, we  have designed the degree distribution of the \ac{LT} component of a binary Raptor code with a $(63,57)$  Hamming outer code. The design goal was obtaining a probability of decoding failure $\Pf<10^{-3}$ at $\absoverhead=15$ while minimizing the number of inactivations needed for decoding.



\chapter{Fixed-Rate Raptor Codes} \label{chap:Raptor_fixed_rate}
\ifpdf
    \graphicspath{{Chapter5/Chapter5Figs/PNG/}{Chapter5/Chapter5Figs/PDF/}{Chapter5/Chapter5Figs/}}
\else
    \graphicspath{{Chapter5/Chapter5Figs/EPS/}{Chapter5/Chapter5Figs/}}
\fi

Despite the fact that Raptor codes were originally designed for a rateless setting, they are sometimes used as fixed-rate codes due to their excellent performance and low complexity (see Section \ref{chap:raptor_r10}). In this chapter we focus on the performance of Raptor codes under \ac{ML} decoding in a fixed-rate setting. More concretely we analyze the distance properties of an ensemble of (fixed-rate) Raptor codes with linear random outer codes that resembles R10 Raptor codes. This chapter is organized as follows. In Section~\ref{sec:raptor_ensemble} we introduce the ensemble of raptor codes to be studied. In Section~\ref{sec:dist_spectrum_raptor} the average distance spectrum of the ensemble is derived. Section~\ref{sec:dist_region} presents sufficient and necessary conditions for the ensembles of Raptor codes to have a minimum distance growing linearly with the block length. In Section~\ref{sec:raptor_fr_sims} simulations are presented that validate the analytical results obtained in this chapter. Moreover, it is shown by means of simulations how the erasure correcting properties of the ensemble studied in this chapter resemble those of standard R10 Raptor codes as a first order approximation. Finally, the main contributions of this chapter are summarized in Section~\ref{chap:raptor_fixed_rate_summary}.

\section{Raptor Code Ensemble}\label{sec:raptor_ensemble}

A fixed-rate Raptor can be seen as the concatenation of a precode with a fixed-rate \ac{LT} code, as shown in Figure~\ref{fig:raptor_serial}.

\begin{figure}[t]
        \centering
        {\includegraphics[width=0.90\columnwidth]{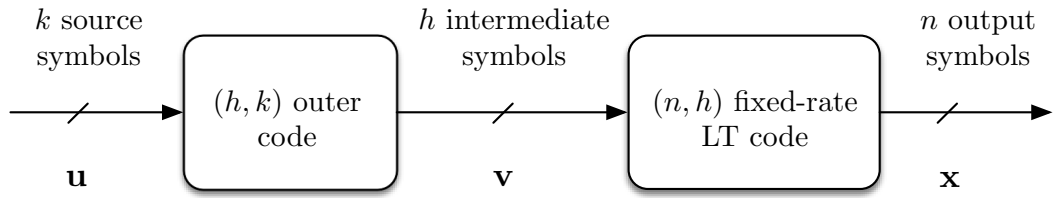}}
        \caption[Fixed-rate Raptor code as serially concatenated code]{A fixed-rate Raptor code consists of a serial concatenation of a linear block code (pre-code) with a fixed-rate \ac{LT} code.}
        \label{fig:raptor_serial}
\end{figure}

In general analyzing the distance properties of one particular code is very difficult. In \cite{berlekamp78:intractability} it was shown how the problem of finding the weights of a linear code is NP complete, that is, no fast solution to this problem is known. Instead of focusing on one particular Raptor code we will focus on an \emph{ensemble} of Raptor codes and derive average results for this ensemble. We focus on Raptor code ensembles where the outer code belongs to the linear random ensemble.
The choice of this ensemble is not arbitrary. The outer code used by the R10 Raptor code, the most widespread version of binary Raptor codes (see Section~\ref{chap:raptor_r10}),  is a concatenation of two systematic codes, the first being a high-rate regular \ac{LDPC} code and the second a pseudo-random code characterized by a dense parity check matrix. The outer codes of  R10 Raptor codes were designed to behave as codes drawn from the linear random ensemble in terms of rank properties, but allowing a fast algorithm for matrix-vector multiplication \cite{shokrollahi2011raptor}. Thus,  the ensemble we analyze may be seen as a simple model for practical Raptor codes with outer codes specifically designed to mimic the behavior of linear random codes. This model has the advantage to make the analytical investigation tractable. Moreover, \fran{in spite of its simplicity, this model provides us an insight into the behaviour of R10 Raptor codes in a fixed-rate setting, as illustrated by simulation results in this chapter.}

The ensemble of Raptor codes we will analyze is obtained by a serial concatenation of an outer code in the $\left(\ri n,\ro\ri n\right)$ binary linear random block code ensemble $\oensemble$\footnote{This ensemble was first analyzed by Gallager in his PhD Thesis \cite{Gallager63} and is sometimes known as the Gallager random code ensemble.}, with all possible realizations of an $\left(n,\ri n\right)$ fixed-rate \ac{LT} code with output degree distribution $\Omega= \{ \Omega_1, \Omega_2,\Omega_3, \ldots, \Omega_{\dmax}\}$.  We denote this ensemble as $\ensemble(\oensemble,\Omega, \ri, \ro, n)$.

In our analysis we often talk about expected properties of a code selected randomly in the ensemble $\ensemble(\oensemble,\Omega, \ri, \ro, n)$.
This random selection is performed first by randomly drawing the parity-check matrix of the linear random precode. This is achieved drawing $(h-k)h$ independent and identically distributed Bernoulli uniform random variables, each of which is associated to one element of the parity check matrix. Second, the \ac{LT} code is generated according to the usual \ac{LT} encoding process. Each output symbol  is generated independently from all other symbols by drawing a degree $d$ according to $\Omega$ and then choosing uniformly at random $d$ distinct symbols out of the $h$ intermediate ones.

For illustration in Figure.~\ref{fig:raptor_random_precode} we provide the constraint matrix for a Raptor code with a linear random precode, with $k=20$, $h=38$ and $m=30$ with the \ac{LT} degree distribution of R10 Raptor codes.  In the upper part, highlighted in blue, the parity check matrix of the precode code can be distinguished. It can be observed how this sub-matrix is dense. The lower part of the constraint matrix (highlighted in red) corresponds to the \ac{LT} symbols and is sparse. \fran{If we compare this constraint matrix with the constraint matrix of R10 Raptor codes in Figure~\ref{fig:raptor_r10_matrix}, we can see how the parity check matrix of the outer code is now considerably denser. Hence, if we were to use a Raptor code with a linear random precode in practice, in general decoding would be more complex. For example, if we would use inactivation decoding we would need in general more inactivations for decoding.}

\begin{figure}
\begin{center}
\includegraphics[width=0.5\columnwidth]{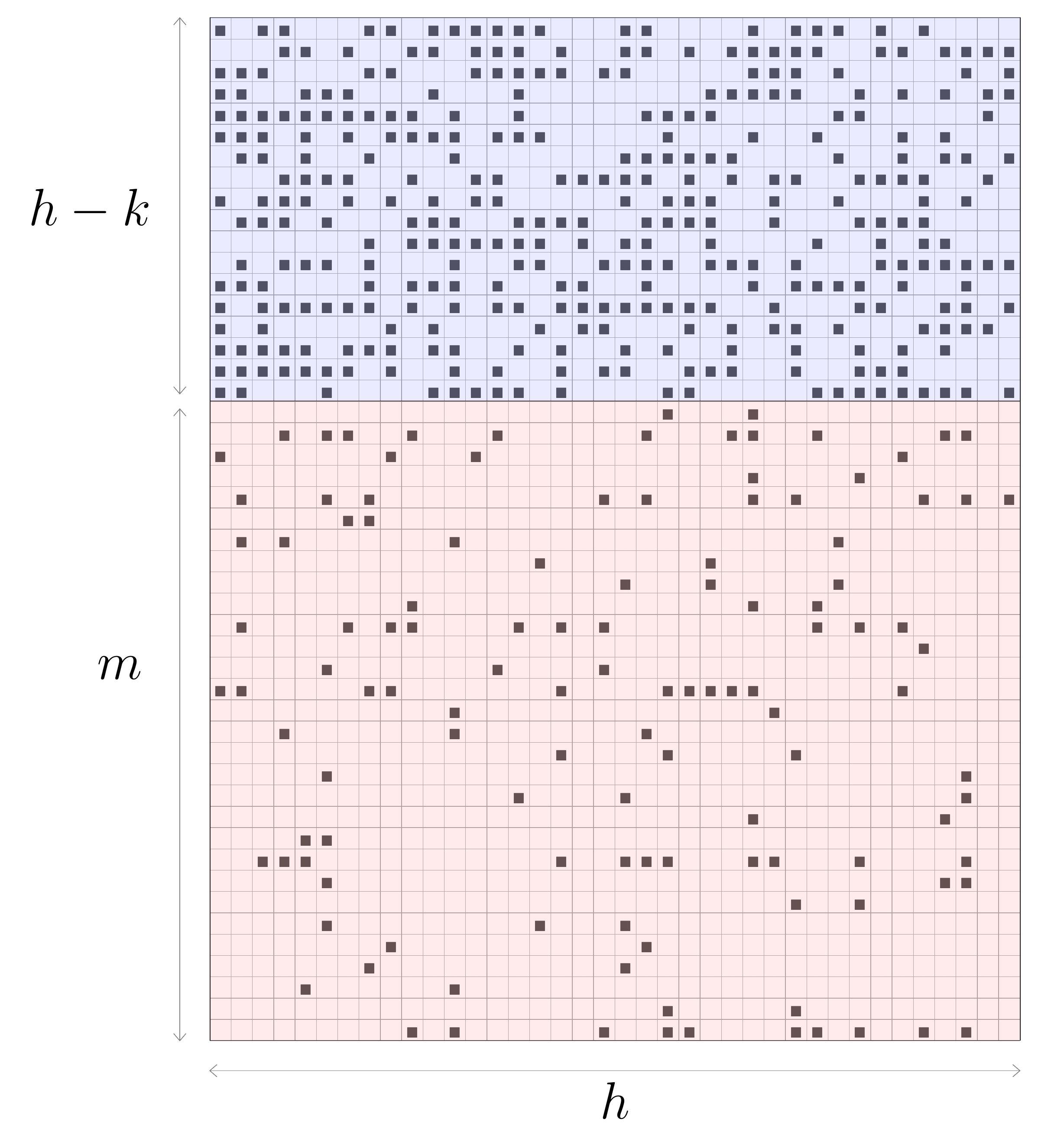}
\centering \caption[Constraint matrix of a Raptor code with a linear random precode, with $k=20$, $h=38$ and $m=30$]{Constraint matrix of a Raptor code with a linear random precode, with $k=20$, $h=38$ and $m=30$. The blue sub-matrix represents the parity check matrix of the precode. The red sub-matrix represents the transposed generator matrix of the \ac{LT} code.}
\label{fig:raptor_random_precode}
\end{center}
\end{figure}
\FloatBarrier

A related ensemble was analyzed in \cite{barg01:concat}, where lower bounds on the distance and error exponent are derived for a concatenated scheme with random outer code and a fixed inner code.

\section{Distance Spectrum}\label{sec:dist_spectrum_raptor}

In this section the expected \acf{WE} of a fixed-rate Raptor code picked randomly in the ensemble $\ensemble(\oensemble,\Omega, \ri, \ro, n)$ is characterized. We first obtain the expression for the expected \acl{WE}. Then, we analyze the asymptotic exponent of the \acl{WE}.

\begin{theorem}\label{theorem:we}
Let $A_\d$ be the expected multiplicity of codewords of weight $\d$ for a code picked randomly in the ensemble $\ensemble(\oensemble,\Omega, \ri, \ro, n)$. For $\d\geq1$ we have
\begin{align}\label{eq:WEF_Raptor}
A_\d = \binom {n}{\d} 2^{-h (1-\ro)} \sum_{\l=1}^h \binom{h}{\l}   \pl^\d (1-\pl)^{n-\d}
\end{align}
where
\begin{align}\label{eq:pl_finite}
\pl &= \sum_{j=1}^{\dmax} \Omega_j \sum_{\substack{i=\max(1,\l+j-h)\\ i~\mathrm{odd}}}^{ \min (\l,j)} \frac{ \binom{j}{i} \binom{h-j}{\l-i} } { \binom{h}{\l}} 
         = \sum_{j=1}^{\dmax} \Omega_j \sum_{\substack{i=\max(1,\l+j-h)\\ i~\mathrm{odd}}}^{ \min (\l,j)} \frac{ \binom{\l}{i} \binom{h-\l}{j-i} } { \binom{h}{j}} \, .
\end{align}
\end{theorem}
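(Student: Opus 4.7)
The plan is to compute $\Exp[A_d]$ by conditioning on the Hamming weight of the intermediate word $\vecv = \vecu \Gp$. Since the Raptor ensemble is built as a serial concatenation in which the outer code and the \ac{LT} matrix are drawn independently, the expectation factors as
\begin{align}
\Exp[A_d] = \sum_{l=1}^h \Exp_{\oensemble}[A^{\mathsf o}_l] \cdot \Pr\{ w_{\mathsf H}(\vecv \GLT) = d \mid w_{\mathsf H}(\vecv) = l \}
\end{align}
where $A^{\mathsf o}_l$ denotes the weight enumerator of the outer code. For the $(h, \ro h)$ binary linear random ensemble (with $h = \ri n$), a standard computation (already recalled in the background chapter) gives $\Exp_{\oensemble}[A^{\mathsf o}_l] = \binom{h}{l} 2^{-h(1-\ro)}$ for $l \geq 1$. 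This immediately produces the prefactor $\binom{h}{l} 2^{-h(1-\ro)}$ appearing in \eqref{eq:WEF_Raptor}.

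Next I would argue that, conditioned on $w_{\mathsf H}(\vecv) = l$, the $n$ coordinates of the \ac{LT} output $\vecv \GLT$ are \emph{i.i.d.}\ Bernoulli. This follows because the $n$ columns of $\GLT$ are generated independently, each by drawing a degree $j$ according to $\Omega$ and then selecting a neighborhood of $j$ distinct positions in $\{1,\dots,h\}$ uniformly at random. Let $\pl$ denote the probability that a single such output bit equals $1$ when $\vecv$ has weight $l$. Then the conditional weight distribution is binomial:
\begin{align}
\Pr\{w_{\mathsf H}(\vecv \GLT) = d \mid w_{\mathsf H}(\vecv) = l\} = \binom{n}{d} \pl^d (1-\pl)^{n-d}
\end{align}
which produces the remaining factor in \eqref{eq:WEF_Raptor}.

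It only remains to evaluate $\pl$. An output bit of degree $j$ equals the modulo-$2$ sum of $j$ distinct entries of $\vecv$, so it is $1$ iff an odd number of the chosen coordinates fall among the $l$ nonzero positions of $\vecv$. The number $I$ of such ``hits'' is hypergeometric with parameters $(h,l,j)$, giving
\begin{align}
\Pr\{I = i\} = \frac{\binom{l}{i}\binom{h-l}{j-i}}{\binom{h}{j}}
\end{align}
with support $\max(0, l+j-h) \leq i \leq \min(l,j)$. Averaging over $j$ with weights $\Omega_j$ and restricting $i$ to odd values yields the second expression for $\pl$ in \eqref{eq:pl_finite}; the first expression follows from the symmetry identity
\begin{align}
\frac{\binom{l}{i}\binom{h-l}{j-i}}{\binom{h}{j}} = \frac{\binom{j}{i}\binom{h-j}{l-i}}{\binom{h}{l}}
\end{align}
(both sides equal $l!(h-l)!j!(h-j)!/[i!(l-i)!(j-i)!(h-l-j+i)!h!]$). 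The lower bound $i \geq 1$ in the sum is justified because only odd $i$ contribute. The main subtlety is purely bookkeeping: being careful with the range of $i$ so that the hypergeometric support is correctly intersected with the parity constraint, and verifying the symmetry identity above; beyond that, the proof is a clean application of the independence of the outer ensemble draw from the \ac{LT} column draws.
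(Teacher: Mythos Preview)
Your proof is correct and follows essentially the same approach as the paper. The paper phrases the factorization via the standard serial-concatenation identity $A_d = \sum_{l} A^{\mathsf o}_{l} A^{\mathsf i}_{l,d}/\binom{h}{l}$, with $A^{\mathsf i}_{l,d}=\binom{h}{l}\binom{n}{d}\pl^{d}(1-\pl)^{n-d}$ the average input-output weight enumerator of the \ac{LT} code, whereas you go directly to the conditional probability $\Pr\{w_{\mathsf H}(\vecv\GLT)=d\mid w_{\mathsf H}(\vecv)=l\}$; these are the same computation.
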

\begin{proof}
For serially concatenated codes we have
\begin{align}
A_\d = \sum_{\l=1}^{h} \frac{\weo_{\l} \wei_{\l,\d}}{ \binom {h} {\l}},
\label{eq:we_serial}
\end{align}
being $\weo_{\l}$  the average \acl{WE} of the outer code, and $\wei_{\l,\d}$ the average \acl{IO-WEF} of the inner (fixed-rate) \ac{LT} code.
For an $(h,k)$ linear random code, the average \acl{WE} is known to be \cite{Gallager63}
\begin{align}
\weo_{\l} = \binom{h}{\l} 2^{-h (1-\ro)}.
\label{eq:wef_random}
\end{align}
Let us now focus on the average \acl{IO-WEF} of the fixed-rate \ac{LT} code.  We denote by $\l$ the Hamming weight of the input word to the \ac{LT} encoder and by $\pjl$ the probability that a randomly chosen output symbol generated by the \ac{LT} encoder takes the value $1$ given that the intermediate word has Hamming weight $\l$ and the degree of the \ac{LT} code output symbol is $j$, i.e.,
\[
\pjl:=\Pr\{X_i=1|\hw(\vecV)=\l,\deg(X_i)=j\}
\]
for any $i\in \{1,\dots,n\}$. We may express this probability as
\begin{align}
\pjl =
\sum_{\substack{i=\max(1,\l+j-h)\\ i~\textrm{odd}}} ^{ \min (\l,j)} \frac{ \binom{j}{i} \binom{h-j}{\l-i} } { \binom{h}{\l} } \,= \sum_{\substack{i=\max(1,\l+j-h)\\ i~\textrm{odd}}} ^{ \min (\l,j)} \frac{ \binom{\l}{i} \binom{h-\l}{j-i} } { \binom{h}{j}}
\label{eq:p_j_l}
\end{align}
By removing the conditioning on $j$, $\pl$ is obtained, that is the probability of any of the $n$ output bits of the fixed-rate \ac{LT} encoder taking value $1$ given a Hamming weight $\l$ for the intermediate word, i.e.,
\[
\pl:=\Pr\{X_i=1|\hw(\vecV)=l\}
\]
for any $i\in \{1,\dots,n\}$.
We have
\begin{align}
\pl = \sum_{j=1}^{\dmax} \Omega_j \pjl.
\label{eq:p_l}
\end{align}
Given the fact that every output bit is generated independently, the Hamming weight of the \ac{LT} codeword conditioned to an intermediate word of weight $\l$ is a binomially distributed random variable with parameters $n$ and $\pl$. Hence,
\begin{align}
\Pr\{\hw(\vecX) = \d | \hw(\vecV) = \l\}
=\binom {n}{\d} \pl^\d (1-\pl)^{n-\d}.\label{eq:distr_weight_LT}
\end{align}
We are now in the position of calculating the average \acl{IO-WEF} of a \ac{LT} code by multiplying \eqref{eq:distr_weight_LT} by the number of weight-$\l$ intermediate words, yielding
\begin{align}
\wei_{\l,\d}= \binom {h}{\l} \binom {n}{\d} \pl^\d (1-\pl)^{n-\d}.
\label{eq:iowef_lt}
\end{align}
Finally, by making use of \eqref{eq:we_serial}, \eqref{eq:wef_random} and \eqref{eq:iowef_lt}, we obtain \eqref{eq:WEF_Raptor}.
\end{proof}
\begin{corollary}\label{corollary:A_0}
As opposed to $A_d$ with $d \geq 1$, whose expression is given in Theorem~\ref{theorem:we}, the expected number of codewords of weight $0$, $A_0$, is given by
\begin{align}
A_0 &= 1 + \sum_{l=1}^h \frac{\weo_{\l} \wei_{\l,0}}{ \binom {h} {\l}} \\
       &= 1 + 2^{-n \ri (1 - \ro) } \sum_{l=1}^h \binom {h} {\l} (1 - \pl)^n \, .\label{eq:A_0}
\end{align}
\end{corollary}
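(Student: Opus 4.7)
The plan is to adapt the derivation used in the proof of Theorem~\ref{theorem:we}, being careful to retain the contribution of the all-zero intermediate word, which is the sole source of the difference between $A_0$ and the generic formula for $A_d$ with $d \geq 1$. First I would recall that for a serial concatenation the correct bookkeeping for the expected weight distribution is
\[
A_d = \sum_{\l=0}^{h} \frac{\weo_{\l} \, \wei_{\l,\d}}{\binom{h}{\l}} \, ,
\]
where the $\l=0$ term explicitly accounts for the intermediate word $\vecv = \mathbf{0}$. Since the all-zero word is a codeword of every linear outer code (so $\weo_0 = 1$) and is deterministically mapped by the LT encoder to the all-zero output (so $\wei_{0,0} = 1$ and $\wei_{0,\d} = 0$ for $\d \geq 1$), this term contributes exactly $1$ to $A_0$ and $0$ to $A_d$ for $d \geq 1$. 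This explains why in Theorem~\ref{theorem:we} the summation could be started at $\l = 1$ without any loss, whereas for $A_0$ the $\l = 0$ term must be retained, producing the leading ``$+1$'' in \eqref{eq:A_0}.

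Next I would reuse the intermediate expressions already established in the proof of Theorem~\ref{theorem:we} for the terms with $\l \geq 1$. Specifically, from \eqref{eq:wef_random} we have $\weo_{\l} = \binom{h}{\l} 2^{-h(1-\ro)}$, while setting $\d = 0$ in \eqref{eq:iowef_lt} yields $\wei_{\l,0} = \binom{h}{\l}(1-\pl)^n$. Substituting these into $\sum_{\l=1}^{h} \weo_{\l} \, \wei_{\l,0} / \binom{h}{\l}$, the binomial denominator cancels one of the two $\binom{h}{\l}$ factors, and using the ensemble relation $h = \ri n$ to rewrite the exponent as $2^{-n\ri(1-\ro)}$ gives the second equality in \eqref{eq:A_0}.

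There is no real technical obstacle; the argument is essentially bookkeeping on top of the machinery of Theorem~\ref{theorem:we}. The only subtle point worth stating clearly is the origin of the ``$+1$'': it is the deterministic contribution of $\vecv = \mathbf{0}$ under both the outer code and the LT mapping, an event not captured by the probabilistic derivation of $\wei_{\l,d}$ (which is built from the probability $\pl$ that an LT output bit is $1$ conditioned on a weight-$\l$, and hence nonzero, intermediate word).
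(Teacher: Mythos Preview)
Your proposal is correct and matches the paper's intent: the corollary is stated without proof in the paper, being an immediate consequence of the machinery in Theorem~\ref{theorem:we}, and your derivation spells out precisely that bookkeeping (isolating the $\l=0$ term as the ``$+1$'' and specializing \eqref{eq:wef_random} and \eqref{eq:iowef_lt} at $\d=0$ for the remaining sum).
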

\begin{remark}
An expected number of weight-$0$ codewords larger than one implies that different input messages can be mapped into the same codeword. The fact that  $A_0>1$ is because we are drawing both the outer and inner code at random. Hence, we do not ensure by construction that different input messages are mapped into different codewords. In Section~\ref{sec:rate_reg}, Theorem~\ref{theorem:zero_codeword}, it will be shown that if the $(\ri,\ro)$ pair belongs to a region that is refereed to as ``positive normalized typical minimum distance region'', the expected number $A_0$ of zero weight codewords approaches $1$ (exponentially) as $n$ increases.


\end{remark}

So far we have considered finite length Raptor codes. Often, when dealing with ensembles of codes, their distance properties can be captured in a very compact form by letting the block length $n$ tend to infinity, while keeping the code rate constant. Such analysis of \ac{LDPC} codes was performed by Gallager in his Ph.D. Thesis \cite{Gallager63}.
Hereafter, we denote the normalized output weight of the fixed-rate Raptor code by $\nd = \d/n$ and the normalized output weight of the outer code (input weight to the \ac{LT} encoder) by $\nl = \l/h$. The asymptotic exponent of the weight distribution of an ensemble is defined as
\begin{align}
G(\nd) = \lim_{n \to \infty} \frac{1}{n} \log_2 \we_{\nd n} \, . \label{eq:growth_rate_def}
\end{align}
Commonly, $G(\nd)$ is also referred to as growth rate. \fran{The growth rate of a code or code ensemble is a compact representation of the properties of the code when its block length is asymptotically large. In particular, if for a given normalized output weight $\nd$, we have $G(\nd)>0$, we expect to have asymptotically many codewords with normalized weight $\nd$. On the other hand, if for a given $\nd$ we have $G(\nd)<0$, we expect to have asymptotically few codewords with weight $\nd$.}

Next we compute the growth rate of the weight distribution for the ensemble $\msr{C}_{\infty}(\oensemble,\Omega, \ri, \ro)$, that is the ensemble $\msr{C}(\oensemble,\Omega, \ri, \ro, n)$ in the limit where $n$ tends to infinity for constant $\ri$ and $\ro$.

\begin{theorem}\label{theorem:growth_rate}
The asymptotic exponent of the weight distribution of the fixed-rate Raptor code ensemble $\msr{C}_{\infty}(\oensemble,\Omega, \ri, \ro)$ is given by
\begin{align}\label{eq:growth_rate}
G(\nd) = \Hb(\nd) - \ri  (1-\ro) +  \fmax(\nd)
\end{align}
where $\Hb$ is the binary entropy function and
\begin{align}\label{eq:max}
\fmax(\nd) := \max_{ \nl \in \mathscr D_{\nl}} \f(\nd, \nl),
\end{align}
being $\f(\nd, \nl)$ and $\mathscr D_{\nl}$ defined as follows,
\begin{align}\label{eq:f}
\f(\nd, \nl) := \ri \Hb(\nl) + \nd \log_2 \npnl + (1- \nd) \log_2 \left(1 - \npnl\right) \, ,
\end{align}
\begin{align}
\mathscr D_{\nl} = \left\{ \begin{array}{cl} (0,1) & \textrm{if  } \Omega_j = 0 \textrm{  for all even } j\\
(0,1] & \textrm{otherwise} \, , \end{array} \right.
\end{align}
and with $\npnl$ defined as
\begin{align}
\npnl := \frac{1}{2} \sum_{j=1}^{\dmax} \Omega_j  \left[  1-\left( 1-2\nl\right)^j \right].
\label{eq_npnl}
\end{align}
\end{theorem}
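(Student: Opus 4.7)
The plan is to apply Stirling's formula to every binomial coefficient in \eqref{eq:WEF_Raptor}, show that $\pl$ converges pointwise (in $\nl = l/h$) to $\npnl$, and then invoke Laplace's method to replace the sum over $l$ by its dominant term. Set $\d=\nd n$, $\l=\nl h=\nl\ri n$, and write the $l$-th summand of \eqref{eq:WEF_Raptor} (together with the factors $\binom{n}{\d}$ and $2^{-h(1-\ro)}$) as $2^{n\,\psi_n(\nd,\nl)}$, where
\begin{align}
\psi_n(\nd,\nl) := \tfrac{1}{n}\log_2\!\binom{n}{\nd n} + \tfrac{1}{n}\log_2\!\binom{\ri n}{\nl \ri n} - \ri(1-\ro) + \nd\log_2 \pl + (1-\nd)\log_2(1-\pl).
\end{align}
By Stirling's approximation one has $\tfrac{1}{n}\log_2\binom{n}{\nd n}\to \Hb(\nd)$ and $\tfrac{1}{n}\log_2\binom{\ri n}{\nl \ri n}\to \ri\,\Hb(\nl)$ as $n\to\infty$.

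The next step is to determine the asymptotic value of $\pl$. From \eqref{eq:pl_finite}, $\pjl$ equals the probability that a uniformly chosen $j$-subset of $\{1,\dots,h\}$ hits a fixed weight-$\l$ set in an odd number of indices. Since $j$ is bounded by $\dmax$ while $h\to\infty$ with $\l/h\to\nl$, sampling without replacement is asymptotically equivalent to $j$ i.i.d.\ Bernoulli$(\nl)$ trials, yielding
\begin{align}
\lim_{n\to\infty}\pjl \;=\; \sum_{\substack{i=0\\ i~\textrm{odd}}}^{j}\binom{j}{i}\nl^{i}(1-\nl)^{j-i} \;=\; \tfrac{1}{2}\bigl[1-(1-2\nl)^j\bigr].
\end{align}
Summing against $\Omega_j$ gives $\pl\to \npnl$ as defined in \eqref{eq_npnl}. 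Hence, for every $\nl\in\mathscr D_{\nl}$, $\psi_n(\nd,\nl)\to \Hb(\nd)-\ri(1-\ro)+\f(\nd,\nl)$. The definition of $\mathscr D_{\nl}$ reflects exactly when $\npnl$ can reach $1$: if $\Omega_j=0$ for every even $j$, then $\npnl\big|_{\nl=1}=1$ and the $\log_2(1-\pl)$ term blows up, forcing $\nl=1$ out of the feasible region; otherwise $\npnl<1$ on $(0,1]$ and this boundary point is admissible.

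Finally, because the sum in \eqref{eq:WEF_Raptor} has only $h=\ri n$ terms (polynomial in $n$) and each term is non-negative, it is exponentially equivalent to its maximum term:
\begin{align}
\tfrac{1}{n}\log_2 \we_{\nd n} \;\doteq\; \max_{1\le \l\le h}\psi_n(\nd,\l/h).
\end{align}
Passing to the limit and exchanging $\lim$ with $\max$ (justified by uniform convergence of $\psi_n$ to $\Hb(\nd)-\ri(1-\ro)+\f(\nd,\cdot)$ on compact subsets of $\mathscr D_{\nl}$, together with equicontinuity of the relevant quantities in $\nl$) yields $G(\nd)=\Hb(\nd)-\ri(1-\ro)+\fmax(\nd)$, which is \eqref{eq:growth_rate}.

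The main technical obstacle is the last step: one has to legitimately interchange the limit in $n$ with the maximum over $\nl$. This requires showing that $\psi_n$ converges uniformly on compact subsets of the admissible region and controlling the behaviour as $\nl$ approaches the boundary (where either $\npnl\to 0$ or, in the even-degree case, $\npnl\to 1$), to ensure that no sequence of near-boundary optimizers escapes the limit. The careful treatment of these edge effects, and the verification that Stirling's approximation is uniform in $\nl$ bounded away from $0$ and $1$, is what distinguishes a clean proof from a heuristic saddle-point argument.
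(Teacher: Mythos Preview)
Your proposal is correct and follows essentially the same route as the paper: Stirling on the binomials, pointwise convergence $\pl\to\npnl$, and the Laplace/largest-term principle to replace the sum over $l$ by a maximum in $\nl$. The paper makes the step you flag as the ``main technical obstacle'' explicit by proving a uniform bound $|\pl-\npnl|<K/n$ and by a direct comparison of $\p_h$ and $\p_{h-1}$ to rule out the boundary $l=h$, which is precisely the edge control you identify as needed.
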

\begin{proof}
Let us define $\mathbb N^*_h = \{1,2,\dots,h\}$. From \eqref{eq:WEF_Raptor} we have
\begin{align} \label{eq:proof_G}
                                              & \frac{1}{n} \log_2 A_{\nd n} \notag \\
                                              &=\frac{1}{n} \log_2 {n \choose \nd n} -  \ri (1- \ro) + \frac{1}{n} \log_2  \sum_{l=1}^h {h \choose l} \pl^d (1-\pl)^{n-d}  \notag \\
                                              &\stackrel{\mathrm{(a)}}{\leq} \Hb(\nd) -\frac{1}{2n} \log_2 \left(2 \pi n \nd (1-\nd)\right) -  \ri (1- \ro) + \frac{1}{n} \log_2 \sum_{l=1}^h {h \choose l} \pl^d (1-\pl)^{n-d} \notag \\
                                              &\stackrel{\mathrm{(b)}}{\leq} \Hb(\nd) -\frac{1}{2n} \log_2 \left(2 \pi n \nd (1-\nd)\right) -  \ri (1- \ro) + \frac{1}{n}\log_2 ( \ri n) \notag \\
                                              &+ \frac{1}{n} \log_2 \max_{l \in \mathbb N^*_{h-1}} \left\{ {h \choose l} \pl^d (1-\pl)^{n-d} \right\} \notag \\
                                              &\stackrel{\mathrm{(c)}}{\leq} \Hb(\nd) -\frac{1}{2n} \log_2 (2 \pi n \nd (1-\nd)) -  \ri (1- \ro) + \frac{1}{n}\log_2 ( \ri n) \notag \\
                                             &+ \max_{l \in \mathbb N^*_{h-1}} \left\{  \ri \Hb\left(\frac{l}{h}\right) - \frac{1}{2n}\log_2\left(2 \pi  \ri n\frac{l}{h}\left(1-\frac{l}{h}\right)\right) \right. \notag \\
                                             &  + \nd \log_2 \pl + (1-\nd) \log_2(1-\pl) \bigg\} \notag \\
                                             &= \Hb(\nd) -\frac{1}{2n} \log_2 (2 \pi n \nd (1-\nd)) -  \ri (1- \ro) + \frac{1}{n}\log_2 ( \ri n) \notag \\
                                             &+ \max_{\nl \in \left\{\frac{1}{\ri n},\dots,\frac{\ri n -1}{\ri n}\right\}} \left\{  \ri \Hb\left(\nl\right) - \frac{1}{2n}\log_2\left(2 \pi  \ri n \nl \left(1- \nl\right)\right) + \nd \log_2 p_{\ri n \lambda} \right. \notag \\
                                             & + (1-\nd) \log_2(1-p_{\ri n \lambda}) \bigg\}
\end{align}
Inequality $\mathrm{(a)}$ follows from the well-known tight bound \cite{Gallager63}
\begin{align}
{n \choose \sigma n} \leq \frac{2^{n \Hb(\sigma)}}{\sqrt{2 \pi n \sigma (1-\sigma)}}, \qquad 0<\sigma<1
\label{eq:gallagher_upper}
\end{align}
where $\Hb$ is the binary entropy function, while $\mathrm{(b)}$ follows from
\begin{align}
\sum_{l=1}^h {h \choose l} \pl^d (1-\pl)^{n-d} \leq h \max_{l \in \mathbb N^*_h} {h \choose l} \pl^d (1-\pl)^{n-d}
\label{eq:proof_G_summ}
\end{align}
and from the fact that the maximum cannot be taken for $l=h$ for large enough $n$ (as shown next). Inequality $\mathrm{(c)}$ is due again to \eqref{eq:gallagher_upper}, to $\log_2(\cdot)$ being a monotonically increasing function, and to $1/n$ being a scaling factor not altering the result of the maximization with respect to $l$.

We may prove the fact that the maximum is not taken for $l=h$, for large enough $h$, as follows. By calculating directly \eqref{eq:p_l} for $l=h$ and $l=h-1$ it is easy to show that we have
\begin{align*}
\p_h = \sum_{\substack{j=1\\ j~\textrm{odd}}}^{d_{\max}} \Omega_j \, ,
\end{align*}
and
\begin{align*}
\p_{h-1} = \sum_{\substack{j=1\\ j~\textrm{odd}}}^{d_{\max}} \frac{h-j}{h} \Omega_j + \sum_{\substack{j=1\\ j~\textrm{even}}}^{d_{\max}} \frac{j}{h} \Omega_j \, .
\end{align*}
%
For increasing $h$ we have $\p_{h-1}/\p_h \rightarrow 1$. Hence, there exists $h_0(\Omega)$ such that
\begin{align}
h\, \p_{h-1}^d (1-\p_{h-1})^{n-d} > \p_h^d (1-\p_h)^{n-d}
\end{align}
for all $h>h_0(\Omega)$. Hence, for all such values of $h$ the maximum cannot be taken at $l=h$.

Next, by defining
\begin{align}
\hat{\lambda}_n = \mathop{\mathrm{argmax}}_{\lambda \in \left\{\frac{1}{\ri n}, \frac{2}{\ri n}, \dots, \frac{\ri n-1}{\ri n} \right\} } \Big\{ &  \ri H_b(\lambda)  - \frac{1}{2n}\log_2(2 \pi  \ri n\lambda(1-\lambda)) \\
&  + \nd \log_2 \p_{\ri n \lambda} + (1-\nd) \log_2(1-\p_{\ri n \lambda}) \Big\}
\label{eq:lambda_hat}
\end{align}
the right-hand side of \eqref{eq:proof_G} may be recast as
\begin{align}
                                             &\Hb(\nd) -\frac{1}{2n} \log_2 \left(2 \pi n \nd (1-\nd)\right) -  \ri (1- \ro) + \frac{1}{n}\log_2 ( \ri n) \notag \\
                                             & +  \ri \Hb(\hat{\lambda}_n) - \frac{1}{{2n}}\log_2(2 \pi  \ri n\hat{\lambda}_n(1-\hat{\lambda}_n)) + \nd \log_2 p_{\ri n \hat{\lambda}_n} + (1-\nd) \log_2(1-p_{\ri n \hat{\lambda}_n}) \, .
\end{align}
The two terms $\frac{1}{2n} \log_2 (2 \pi n \nd (1-\nd))$ and $\frac{1}{n} \log_2 (\ri n)$ in the last expression converge to zero as $n \rightarrow \infty$. Moreover, also the term $\frac{1}{{2n}}\log_2 (2 \pi  \ri n\hat{\nl}_n(1-\hat{\nl}_n))$ converges to zero regardless of the behavior of the sequence $\hat{\lambda}_n$. In fact, it is easy to check that the term $\frac{1}{{2n}}\log_2(2 \pi  \ri n\hat{\lambda}_n(1-\hat{\lambda}_n))$ converges to zero in the limiting cases $\hat{\lambda}_n=\frac{1}{\ri n}$ $\forall n$ and $\hat{\lambda}_n = \frac{\ri n-1}{\ri n}$ $\forall n$, so it does in all other cases. 

If we now develop the right hand side of \eqref{eq:proof_G} further, for large enough $n$, we have
\begin{align}\label{eq:proof_G_2}
                                             & \Hb(\nd) -\frac{1}{2n} \log_2 (2 \pi n \nd (1-\nd)) -  \ri (1- \ro) + \frac{1}{n}\log_2 ( \ri n) \notag \\
                                             &+ \max_{\nl \in \left\{\frac{1}{\ri n},\dots,\frac{\ri n -1}{\ri n}\right\}} \left\{  \ri \Hb\left(\nl\right) - \frac{1}{2n}\log_2\left(2 \pi  \ri n \nl \left(1- \nl\right)\right) + \nd \log_2 p_{\ri n \lambda} \right.\\
                                             &+ (1-\nd) \log_2(1-p_{\ri n \lambda}) \bigg\} \notag \\
& \stackrel{\mathrm{(d)}}{\leq} \Hb(\nd) -\frac{1}{2n} \log_2 (2 \pi n \nd (1-\nd)) -  \ri (1- \ro) + \frac{1}{n}\log_2 ( \ri n) \notag \\
& + \sup_{\nl \in \mathbb Q \cap (0,1)} \bigg\{  \ri \Hb\left(\nl\right) - \frac{1}{2n}\log_2\left(2 \pi  \ri n \nl \left(1- \nl\right)\right) + \nd \log_2 \left(\npnl + \frac{K}{n} \right) \notag \\
& \qquad \qquad \quad \,\, + (1-\nd) \log_2 \left(1-\npnl + \frac{K}{n} \right) \bigg\} \notag \\
& \stackrel{\mathrm{(e)}}{=} \Hb(\nd) -\frac{1}{2n} \log_2 (2 \pi n \nd (1-\nd)) -  \ri (1- \ro) + \frac{1}{n}\log_2 ( \ri n) \notag \\
& + \sup_{\nl \in (0,1)} \bigg\{  \ri \Hb\left(\nl\right) - \frac{1}{2n}\log_2\left(2 \pi  \ri n \nl \left(1- \nl\right)\right) + \nd \log_2 \left(\npnl + \frac{K}{n} \right) \notag \\
& \qquad \qquad \quad \,\, + (1-\nd) \log_2 \left(1-\npnl + \frac{K}{n} \right) \bigg\} \\
&:= \Gamma_n (\nd).
\end{align}
being $\mathbb Q$ the set of rational numbers. Inequality $\mathrm{(d)}$ follows from the fact that, as it can be shown, $|\npnl - p_{\ri n \lambda}|<K/n$ (uniformly in $\lambda$) for large enough $n$ and from the fact that the supremum over $\mathbb Q \cap (0,1)$ upper bounds the maximum over the finite set $\left\{ \frac{1}{\ri n}, \dots, \frac{\ri n - 1}{\ri n} \right\}$. Equality $\mathrm{(e)}$ is due to the density of $\mathbb Q$. In equality $\mathrm{(e)}$, the function of $\nl$ being maximized is regarded as a function over the real interval $(0,1)$ (i.e., $\lambda$ is regarded as a real parameter).

The upper bound \eqref{eq:proof_G_2} on $\frac{1}{n} \log_2 A_{\nd n}$ is valid for any finite but large enough $n$. If we now let $n$ tend to infinity, all inequalities $\mathrm{(a)}$--$\mathrm{(d)}$ are satisfied with equality. In particular: for $\mathrm{(a)}$ this follows from the well-known exponential equivalence ${n \choose \nd n} \doteq 2^{n \Hb(\nd)}$; for $\mathrm{(b)}$ from the exponential equivalence $\sum_l 2^{n f(l)} \doteq \max_l 2^{n f(l)}$; for $\mathrm{(c)}$ from ${\ri n \choose \hat{\nl}_n \ri n} \doteq 2^{n \Hb(\hat{\nl}_n)}$ (due to $\frac{1}{{2n}}\log_2 (2 \pi  \ri n\hat{\lambda}_n(1-\hat{\lambda}_n))$ vanishing for large $n$); for $\mathrm{(d)}$ from the fact that, asymptotically in $n$, applying the definition of limit we can show that the maximum over the set $\left\{\frac{1}{\ri n}, \dots, \frac{\ri n -1}{\ri n} \right\}$ upper bounds the supremum over $\mathbb Q \cap (0,1)$ (while at the same time being upper bounded by it for any $n$). The expression of $\npnl$ is obtained by assuming $n$ tending to $\infty$ using the expression of $\pl$. Alternatively, the same expression is obtained by assuming $n$ tending to $\infty$ and letting an output symbol of degree $i$ choose its $i$ neighbors \emph{with} replacement.

By letting $n$ tend to infinity and by cancelling all vanishing terms, we finally obtain the statement. Note that we can replace the supremum by a maximum over $\mathscr D_{\nl}$ as this maximum is always well-defined.\footnote{In fact, for any $\nd \in [0,1]$ the function $\f(\nd, \nl)$ diverges to $-\infty$ as $\nl \rightarrow 0^+$. Moreover, it diverges to $-\infty$ as $\nl \rightarrow 1^-$ if $\Omega_j=0$ for all even $j$ and converges as $\nl \rightarrow 1^-$ otherwise. Finally, for all $\nd \in [0,1]$ it is continuous for all $\nl \in \mathscr D_{\nl}$.}
\end{proof}

In the next two lemmas, which will be useful in the sequel,  the derivative of the growth rate function is characterized. For the sake of clarity, we introduce the notation $\np(\nl)$ instead of $\np_{\nl}$. Hence, we stress the fact that $\np(\nl)$ is a function of $\nl$.

\begin{lemma} \label{lemma:growth_rate_derivative}
The derivative of the growth rate of the weight distribution of a fixed-rate Raptor code ensemble $\msr{C}_{\infty}(\oensemble,\Omega, \ri, \ro)$ is given by
\begin{equation}
G'(\nd) = \log_2 \frac{1-\nd}{\nd} + \log_2 \frac{\np(\nlo)}{1-\np(\nlo)}  \,  \nonumber
\end{equation}
where
\begin{align}\label{eq:lo_def}
\nlo(\nd) := \argmax_{\nl \in \mathscr D_{\nl}} \left\{ \f(\nd, \nl) \right\} \, .
\end{align}
\end{lemma}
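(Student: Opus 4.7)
The plan is to differentiate the decomposition $G(\nd) = \Hb(\nd) - \ri (1-\ro) + \fmax(\nd)$ provided by Theorem~\ref{theorem:growth_rate} term by term. The constant $-\ri(1-\ro)$ drops out, and the binary entropy term is standard: $\Hb'(\nd) = \log_2\frac{1-\nd}{\nd}$. The remaining task is to show
\[
\fmax'(\nd) = \log_2 \frac{\np(\nlo(\nd))}{1-\np(\nlo(\nd))}.
\]
Since $\fmax(\nd)$ is the pointwise maximum of $\f(\nd,\nl)$ over $\nl \in \mathscr{D}_\nl$, the natural tool is the envelope theorem (Danskin's theorem).

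First I would compute the two partial derivatives of $\f(\nd,\nl)$ with respect to $\nd$ and $\nl$ explicitly. From \eqref{eq:f} one gets directly
\[
\frac{\partial \f}{\partial \nd}(\nd,\nl) = \log_2 \np(\nl) - \log_2(1-\np(\nl)) = \log_2\frac{\np(\nl)}{1-\np(\nl)},
\]
and differentiating \eqref{eq_npnl} gives
\[
\frac{\partial \f}{\partial \nl}(\nd,\nl) = \ri\,\Hb'(\nl) + \left(\frac{\nd}{\np(\nl)\ln 2} - \frac{1-\nd}{(1-\np(\nl))\ln 2}\right)\np'(\nl),
\]
with $\np'(\nl)$ obtained by term-by-term differentiation of \eqref{eq_npnl}. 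Both partials are continuous on the interior of $[0,1]\times\mathscr{D}_\nl$.

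Next I would invoke the envelope theorem: at any $\nd$ for which $\nlo(\nd)$ is an interior maximizer of $\f(\nd,\cdot)$ on $\mathscr{D}_\nl$, the first-order condition $\partial_\nl \f(\nd,\nlo)=0$ holds, and the derivative of the maximum value is simply the partial derivative of the integrand evaluated at the maximizer,
\[
\fmax'(\nd) = \left.\frac{\partial \f}{\partial \nd}(\nd,\nl)\right|_{\nl=\nlo(\nd)} = \log_2\frac{\np(\nlo)}{1-\np(\nlo)}.
\]
Adding the entropy derivative yields the claimed expression.

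The main obstacle is justifying the envelope step rigorously. Two points need care: (i) that the maximizer $\nlo(\nd)$ lies in the interior of $\mathscr{D}_\nl$, so that the first-order condition is valid rather than a boundary KKT condition; and (ii) that $\fmax$ is differentiable at $\nd$, which in the envelope theorem is guaranteed when the maximizer is unique. Interiority follows because $\f(\nd,\nl)\to -\infty$ as $\nl\to 0^+$ (shown in the proof of Theorem~\ref{theorem:growth_rate} via the footnote there), and either $\f(\nd,\nl)\to -\infty$ as $\nl\to 1^-$ when $\Omega_j=0$ for all even $j$ or the endpoint $\nl=1$ is included in $\mathscr{D}_\nl$ but can be compared against interior candidates to confirm it is not optimal for the rates of interest. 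For uniqueness, one can argue that $\f(\nd,\nl)$ is smooth in $\nl$ and strictly concave on the relevant range for the degree distributions considered (or, alternatively, take $\nlo$ to denote any selection from the argmax and note that $\partial_\nd \f(\nd,\nl)$ is constant in $\nl$ only through $\np(\nl)$, so the stated formula holds for any selection as long as $\np(\nlo)$ is determined uniquely by $\nd$). These regularity issues are where the bulk of the technical work sits; once they are dispatched, the differentiation itself is a one-line envelope computation.
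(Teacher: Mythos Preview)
Your proposal is correct and is essentially the same argument as the paper's: the paper writes $G(\nd)=\Hb(\nd)-\ri(1-\ro)+\f(\nd,\nlo(\nd))$, differentiates via the chain rule, and cancels the $\frac{\partial \f}{\partial \nl}(\nd,\nlo)\,\frac{\mathrm d \nlo}{\mathrm d \nd}$ term using the first-order condition at the interior maximizer---which is precisely the envelope theorem you invoke by name. If anything, you are more careful than the paper about the regularity hypotheses (interiority of $\nlo$ and uniqueness/differentiability), which the paper simply asserts without discussion.
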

\begin{proof}
Let us rewrite the expression of $G(\nd)$ in \eqref{eq:growth_rate} as
\[
{G(\nd)=\Hb(\nd) - \ri(1-\ro) + \f(\nd,\nlo(\nd))} \, .
\]
We must have
\begin{equation}\label{eq:critical_point}
\frac{\partial \f}{\partial \nl} (\nd, \nlo) = 0 \, .
\end{equation}
Where $\nlo$ is actually a function of $\nd$, $\nlo(\nd)$.
Taking the derivative with respect to $\nd$, and after elementary algebraic manipulation we obtain
\begin{align*}
G'(\nd) &= \log_2\frac{1-\nd}{\nd} + \log_2 \frac{\np(\nlo)}{1-\np(\nlo)} + \frac{\partial \f}{\partial \nl} (\nd, \nlo) \, \frac{\mathrm d \nlo}{\mathrm d \nd}
\end{align*}
which, applying \eqref{eq:critical_point}, yields the statement.
\end{proof}
\begin{lemma}\label{corollary:der}
For all $0< \nd < 1/2$, the derivative of the growth rate of the weight distribution of a fixed-rate Raptor code ensemble $\msr{C}_{\infty}(\oensemble,\Omega, \ri, \ro)$ fulfills
\[
G'(\nd)>0.
\]
\end{lemma}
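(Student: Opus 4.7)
By Lemma~\ref{lemma:growth_rate_derivative} the two logarithms may be combined into
\[
G'(\nd) \;=\; \log_2 \frac{(1-\nd)\,\np(\nlo)}{\nd\,(1-\np(\nlo))},
\]
so the claim $G'(\nd)>0$ on $(0,1/2)$ is equivalent to proving $\np(\nlo(\nd))>\nd$ on the same interval. My plan is to extract this inequality from the first-order optimality condition that defines $\nlo(\nd)$ through \eqref{eq:lo_def}.

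Before doing so I would record two preliminary facts about $\np$. Differentiating \eqref{eq_npnl} yields $\np'(\nl)=\sum_{j} j\,\Omega_j\,(1-2\nl)^{j-1}$, whose summands are all non-negative on $[0,1/2)$ and not identically zero, while $\np(0)=0$ and $\np(1/2)=1/2$. Hence $\np$ is a continuous, strictly-increasing bijection of $[0,1/2]$ onto itself, and for every $\nd\in(0,1/2)$ there exists a unique $\nl^{\dagger}\in(0,1/2)$ with $\np(\nl^{\dagger})=\nd$. Differentiating \eqref{eq:f} gives
\[
\frac{\partial f}{\partial \nl}(\nd,\nl) \;=\; \ri\log_2\frac{1-\nl}{\nl} \;+\; \frac{\np'(\nl)}{\ln 2}\cdot\frac{\nd-\np(\nl)}{\np(\nl)\,(1-\np(\nl))}.
\]
At $\nl=\nl^{\dagger}$ the second summand vanishes and the first is strictly positive (since $\nl^{\dagger}<1/2$); for $\nl\in(0,\nl^{\dagger})$ both summands are non-negative, with the first strictly positive. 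Hence $f(\nd,\cdot)$ is strictly increasing on $(0,\nl^{\dagger}]$. Because $\nlo(\nd)$ is the global maximizer over $\mathscr D_{\nl}$, it cannot lie in $(0,\nl^{\dagger}]$; so $\nlo>\nl^{\dagger}$, and whenever $\nlo\leq 1/2$ the monotonicity of $\np$ on $[0,1/2]$ immediately yields $\np(\nlo)>\np(\nl^{\dagger})=\nd$, which closes the argument.

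The main obstacle is the residual case $\nlo>1/2$, where $\np$ need not be monotone. Here I would rewrite the first-order condition at $\nlo$ as
\[
\ri\,\np(\nlo)\,(1-\np(\nlo))\,\log_2\frac{1-\nlo}{\nlo} \;=\; \frac{\np'(\nlo)}{\ln 2}\,\bigl(\np(\nlo)-\nd\bigr),
\]
whose left-hand side is strictly negative because $\nlo>1/2$, forcing $\np'(\nlo)$ and $\np(\nlo)-\nd$ to carry opposite signs. The only sub-case that does not directly give $\np(\nlo)>\nd$ is $\np'(\nlo)>0$ together with $\np(\nlo)<\nd$, and I would rule this out by a direct comparison $f(\nd,\nl^{\dagger})>f(\nd,\nlo)$ built on the KL-type bound $\nd\log_2 p+(1-\nd)\log_2(1-p)\leq -\Hb(\nd)$ (with equality iff $p=\nd$) together with the observation that in this pathological geometry $\nl^{\dagger}$ is at least as close to $1/2$ as $\nlo$, so $\Hb(\nl^{\dagger})\geq \Hb(\nlo)$, contradicting the optimality of $\nlo$. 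This last comparison, rather than any routine calculus, is the least mechanical step of the proof.
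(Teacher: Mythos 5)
Your route is genuinely different from the paper's. The paper does not attempt to show $\np(\nlo)>\nd$ pointwise. Instead it observes that imposing $G'(\nd)=0$ forces $\nd=\np(\nlo)$, which makes the second summand of the first-order condition for $\nlo$ vanish identically, so the condition collapses to $\ri\log_2\frac{1-\nlo}{\nlo}=0$, giving $\nlo=1/2$ and hence $\nd=\np(1/2)=1/2$. Thus the only zero of $G'$ is at $\nd=1/2$, and the sign on $(0,1/2)$ then follows from continuity of $G'$ together with $G'(\nd)\to+\infty$ as $\nd\to 0^+$, which is established separately in the appendix. This is a short algebraic argument that sidesteps any case split on where $\nlo$ sits; the price is a reliance on continuity of $G'$ and on the limit behaviour at $0^+$. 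Your pointwise argument avoids both, which is a real advantage in principle.

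Your part of the argument for $\nlo\leq 1/2$ is complete and correct: the facts you record about $\np$ (strictly increasing bijection of $[0,1/2]$ onto itself, $\np(0)=0$, $\np(1/2)=1/2$), the computation of $\partial f/\partial\nl$, and the deduction $\nlo>\nl^{\dagger}$ hence $\np(\nlo)>\nd$ are all sound.

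The gap is in the residual case $\nlo>1/2$: the claim ``$\nl^{\dagger}$ is at least as close to $1/2$ as $\nlo$'' is exactly what must be proved, and you assert it rather than prove it. It \emph{is} true, and here is the missing step. From \eqref{eq_npnl} one obtains, by pairing the $j$-th terms,
\begin{equation}
\np(1-\nl)-\np(\nl)=\frac{1}{2}\sum_{j}\Omega_j\,(1-2\nl)^j\bigl(1-(-1)^j\bigr)=\sum_{j\ \mathrm{odd}}\Omega_j\,(1-2\nl)^j \, ,
\end{equation}
which is $\leq 0$ for every $\nl\geq 1/2$ (each odd power of $1-2\nl<0$ is negative). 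In the pathological sub-case $\np'(\nlo)>0$, $\np(\nlo)<\nd$ this gives $\np(1-\nlo)\leq\np(\nlo)<\nd=\np(\nl^{\dagger})$, and since $\np$ is strictly increasing on $[0,1/2]$ and both $1-\nlo$ and $\nl^{\dagger}$ lie there, $\nl^{\dagger}>1-\nlo$; hence $\Hb(\nl^{\dagger})>\Hb(1-\nlo)=\Hb(\nlo)$, and combined with the strict KL inequality $\nd\log_2 p+(1-\nd)\log_2(1-p)< -\Hb(\nd)$ for $p\neq\nd$, this yields $f(\nd,\nl^{\dagger})>f(\nd,\nlo)$, contradicting optimality. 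You should also note briefly that $\nlo$ must be interior (the FOC you use requires it): the same KL argument at the boundary $\nl=1$, where $\Hb(1)=0<\Hb(\nl^{\dagger})$, rules out $\nlo=1$. With these two pieces supplied the proof is complete; without them the key inequality in the third case is unjustified.
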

\begin{proof}
If in the expression for $G'(\nd)$ in Lemma~\ref{lemma:growth_rate_derivative} we impose $G'(\nd)=0$, we obtain
$$
\frac{1-\nd}{\nd} = \frac{1-\varrho(\nlo)}{\varrho(\nlo)}.
$$
Since the function $(1-x)/x$ is monotonically decreasing for $x \in (0,1)$, this implies $\nd = \varrho(\nlo)$. Next, observing \eqref{eq:lo_def}, by the definition of $\nlo$,  its partial derivative $\partial \mathsf f(\nd,\lambda) / \partial \lambda$ must be zero when calculated for $\lambda=\nlo$. The expression of this partial derivative is
$$
\frac{\partial \mathsf f}{\partial \lambda}(\nd,\lambda) = \mathsf r_{\mathsf i} \log_2 \frac{1 - \lambda}{\lambda} + \frac{\varrho'(\lambda)}{\log 2} \cdot \frac{\nd - \varrho(\lambda)}{\varrho(\lambda)(1-\varrho(\lambda))} \,
$$
so we obtain
$$
\mathsf r_{\mathsf i} \log_2 \frac{1 - \nlo}{\nlo} + \frac{\varrho'(\nlo)}{\log 2} \cdot \frac{\nd - \varrho(\nlo)}{\varrho(\nlo)(1-\varrho(\nlo))} = 0 \, .
$$
As shown above, for any $\nd$ such that $G'(\nd)=0$ we have $\nd=\varrho(\nlo)$. Substituting in the latter equation we obtain $\nlo=1/2$ which implies $\nd=\varrho(1/2)=1/2$. Therefore, the only value of $\nd$ such that $G'(\nd)=0$ is $\nd=1/2$. Due to continuity of $G'(\nd)$ and to the fact that $G'(\nd) \rightarrow +\infty$ as $\nd \rightarrow 0^+$ (as shown in  Appendix~\ref{sec:necessity}). Therefore, we conclude that $G'(\nd)>0$ for all $0 < \nd < 1/2$.
\end{proof}

\fran{A useful concept when characterizing the distance properties of an ensemble is the (normalized) typical minimum distance, which we define formally as follows.}

\begin{mydef}The normalized typical minimum distance of a fixed-rate Raptor code ensemble ${\msr{C}_{\infty}(\oensemble,\Omega, \ri, \ro)}$  is the real number
%
\begin{align*}
\dmint := \begin{cases}
0 & \text{if } \lim_{\nd \to 0^+} G(\nd) \geq 0 \\
\inf \{ \nd>0 : G(\nd) > 0 \} & \text{otherwise.}
\end{cases}
\end{align*}
\end{mydef}

\begin{example}
Figure~\ref{fig:growth} shows the growth rate $G(\nd)$ for three different ensembles $\msr{C}_{\infty}(\oensemble,\Omegarten, \ri, \ro)$, where $\Omegarten$ is the output degree distribution used in the standards \cite{MBMS12:raptor}, \cite{luby2007rfc} (see details in Table~\ref{table:dist}) and $\ro=0.99$ for three different  $\ri$ values. The growth rate of a linear random code ensemble with rate $\rate=0.99$ is also shown. It can be observed how the curve for $\ri = 0.95$ does not  cross the $x$-axis, the curve for $\ri = 0.88$ has $\dmint=0$ and the curve for $\ri=0.8$ has $\dmint=0.0005$.
\newline
\fran{This example highlights an important fact, if we fix the outer code rate to a very high value, concretely $\ro=0.99$, our ensemble still can achieve a (normalized) typical minimum distance larger than 0 when the rate of the inner code is low enough, (in this case $\ri < 0.88$).
}
\begin{figure}[!t]
\begin{center}
\includegraphics[width=\figwBigger]{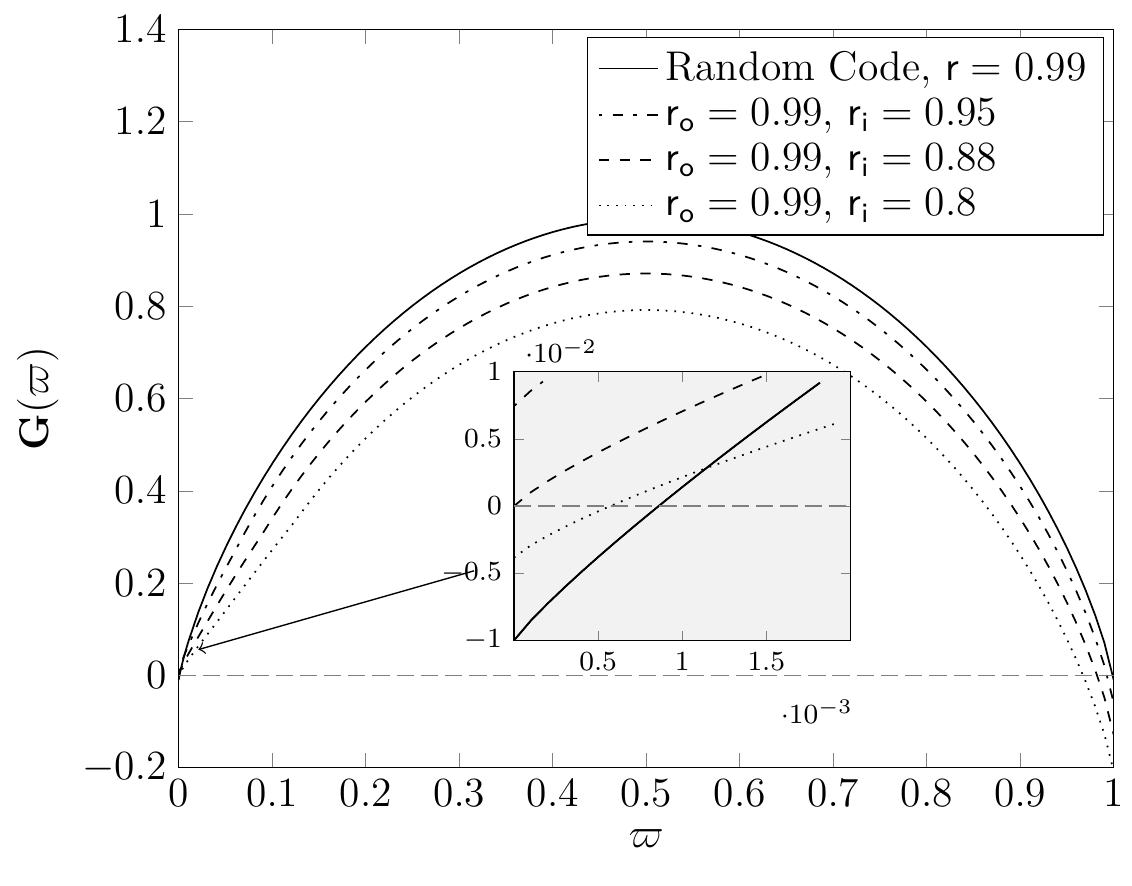}
\centering \caption[Growth rate vs.\ normalized output weight $\nd$ for a linear random code and $3$ different ensembles of Raptor codes]{Growth rate vs.\ normalized output weight $\nd$. The solid line shows the growth rate of a linear random code with rate $\rate=0.99$. The dot-dashed, dashed, and dotted lines show the growth rates $G(\nd)$ of the ensemble $\msr{C}_{\infty}(\oensemble,\Omegarten, \ri, \ro=0.99)$ for $\ri=0.95$, $0.88$ and $0.8$, respectively.}
\label{fig:growth}
\end{center}
\end{figure}
\end{example}

\begin{example}
Figure~\ref{fig:gilbert} shows the overall rate $\rate$ of the fixed-rate Raptor code ensemble $\msr{C}_{\infty}(\oensemble,\Omegarten, \ri~=~\rate/\ro, \ro)$ versus the normalized typical minimum distance $\dmint$. It can be observed how, for constant overall rate $\rate$, $\dmint$ increases as the outer code rate $\ro$ decreases. We can also observe how decreasing $\ro$ allows to get closer to the asymptotic Gilbert-Varshamov bound. Thus, if the overall rate $\rate$ is kept constant, the distance properties of the fixed-rate Raptor code ensemble improve
as the rate of the outer code decreases. Note that by decreasing the outer code rate the decoding complexity will generally increase.
\begin{figure}[t]
\begin{center}
\includegraphics[width=\figwBigger]{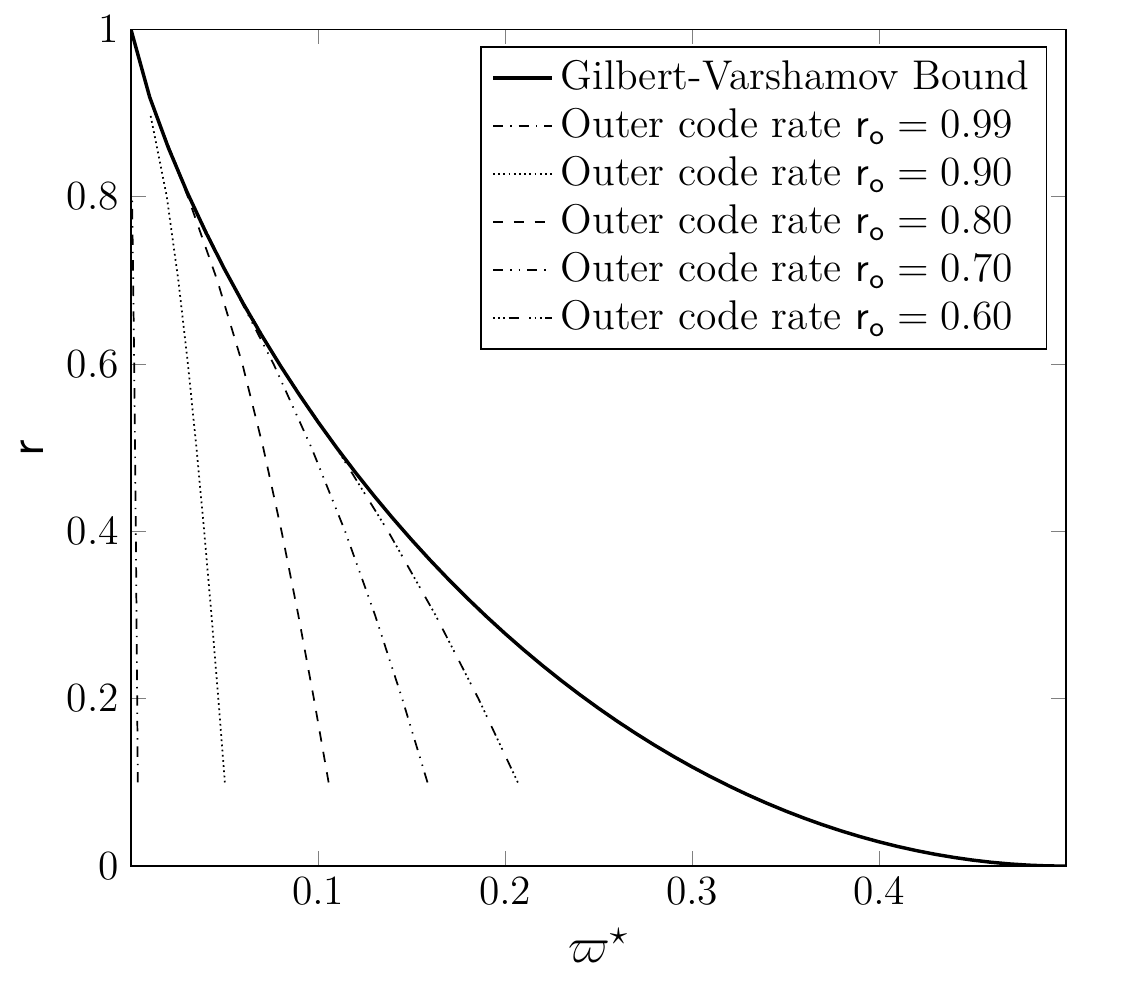}
\centering \caption[Overall rate $\rate$  vs.\ the normalized typical minimum distance $\dmint$ for different ensembles of Raptor codes]{Overall rate $\rate$  vs.\ the normalized typical minimum distance $\dmint$. The solid line represents the asymptotic Gilbert-Varshamov bound. The markers represent Raptor codes ensembles $\msr{C}_{\infty}(\oensemble,\Omegatwo, \ri=\rate/ \ro, \ro)$ with different outer code rates, $\ro$.}
\label{fig:gilbert}
\end{center}
\end{figure}
\end{example}

\section{Positive Distance Region} \label{sec:dist_region}

\label{sec:rate_reg}
The objective of this section is determining the conditions that a fixed-rate Raptor ensemble $\msr{C}_{\infty}(\oensemble,\Omega,\ri,\ro)$ needs to fulfil in order to
exhibit good normalized typical distance properties. More specifically, given a degree distribution $\Omega$ and an overall rate $\rate$ for the Raptor code, we are interested in the allocation of the rate between the outer code and the fixed-rate LT code to achieve a strictly positive normalized typical minimum distance.
\begin{mydef}[Positive normalized typical minimum distance region] The positive normalized typical minimum distance region of an ensemble $\msr{C}_{\infty}(\oensemble,\Omega,\ri,\ro)$ is defined as the set $\region$ of code rate pairs $\left( \ri, \ro \right)$ for which the ensemble possesses a positive normalized typical minimum distance. Formally:
\begin{align}
\region:=\left\{(\ri,\ro) \succeq (0,0) | \dmint(\Omega, \ri,\ro)> 0 \right\}
\nonumber
\end{align}
where we have used the notation $\dmint= \dmint(\Omega, \ri,\ro)$ to emphasize the dependence on $\Omega$, $\ri$ and $\ro$.
\end{mydef}
In the following theorem the positive normalized typical distance region for an LT output degree distribution $\Omega$ is developed .
\begin{theorem} The region  $\region$ is given by
\label{theorem_inner}
\begin{align}
\region:=\left\{\left( \ri, \ro \right) \succeq (0,0) | \ri (1-\ro) >
\max_{ \nl \in \mathscr D_{\nl}} \left\{\ri \Hb(\nl) + \log_2 \left(1 - \npnl\right)\right\}
\right\}\, .
\label{eq:theorem_region}
\end{align}
\end{theorem}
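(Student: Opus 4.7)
The plan is to characterize $\region$ by reducing ``$\dmint>0$'' to the strict negativity of $\lim_{\nd\to 0^+}G(\nd)$, and then evaluating that limit in closed form. By Lemma~\ref{corollary:der} we have $G'(\nd)>0$ on $(0,1/2)$, so $G$ is strictly monotone and in particular continuous there, and the one-sided limit at $0^+$ exists. If this limit is nonnegative, the defining cases of $\dmint$ set $\dmint=0$ directly. If instead it is strictly negative, continuity of $G$ yields a neighborhood $(0,\epsilon)$ on which $G(\nd)<0$, so $\{\nd>0:G(\nd)>0\}\subseteq[\epsilon,1]$ and therefore $\dmint\geq\epsilon>0$. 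Hence $(\ri,\ro)\in\region$ if and only if $\lim_{\nd\to 0^+}G(\nd)<0$.

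Next I would compute this limit using the expression in Theorem~\ref{theorem:growth_rate}. Since $\Hb(\nd)\to 0$ as $\nd\to 0^+$, the key step reduces to showing
\[
\lim_{\nd\to 0^+}\fmax(\nd)\;=\;\max_{\nl\in\mathscr D_{\nl}}\bigl\{\ri\Hb(\nl)+\log_2(1-\npnl)\bigr\}\;=:\;M,
\]
from which $\lim_{\nd\to 0^+}G(\nd)=-\ri(1-\ro)+M$ and the condition ``$<0$'' rewrites as $\ri(1-\ro)>M$, matching the defining inequality of $\region$. The lower bound $\liminf_{\nd\to 0^+}\fmax(\nd)\geq M$ follows from the identity
\[
\f(\nd,\nl)=\ri\Hb(\nl)+\log_2(1-\npnl)+\nd\log_2\frac{\npnl}{1-\npnl},
\]
since for any fixed interior $\nl_0$ the correction term vanishes as $\nd\to 0^+$, so $\f(\nd,\nl_0)\to\ri\Hb(\nl_0)+\log_2(1-\np_{\nl_0})$, and then taking the supremum over $\nl_0$ delivers $M$.

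The hard part will be the matching upper bound $\limsup_{\nd\to 0^+}\fmax(\nd)\leq M$, because $\log_2\npnl\to-\infty$ as $\nl$ approaches the boundary of $\mathscr D_{\nl}$ where $\npnl\to 0$, so the maximum over $\nl$ does not obviously commute with the limit in $\nd$. I would handle this by splitting $\mathscr D_{\nl}$ into a compact interior piece $[\eta,1-\eta]$ and the remaining boundary piece(s) of width $\eta$. On the compact piece, continuity in the joint variable gives uniform convergence $\f(\nd,\nl)\to\f(0,\nl)$, bounding the corresponding supremum by $M$ up to an error that vanishes with $\nd$. On the boundary piece the correction term $\nd\log_2(\npnl/(1-\npnl))$ drives $\f(\nd,\nl)$ to $-\infty$ for any fixed $\nd>0$, so its supremum is eventually dominated by the interior contribution once $\eta$ is chosen small enough; a standard $\eta\to 0$ argument then closes the upper bound and completes the proof.
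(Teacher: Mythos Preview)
Your overall strategy---reduce $\dmint>0$ to $\lim_{\nd\to 0^+}G(\nd)<0$ and then interchange $\lim$ and $\max$ to identify that limit with $-\ri(1-\ro)+M$---is exactly the paper's approach for sufficiency. On necessity you are in fact more direct than the paper: you invoke the first branch of the definition of $\dmint$ to get $\dmint=0$ whenever the limit is $\geq 0$, whereas the paper carries out a separate analysis of $\lim_{\nd\to 0^+}G'(\nd)$ on the boundary.

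One caution: your appeal to Lemma~\ref{corollary:der} is potentially circular, since the paper's proof of that lemma forward-references Appendix~\ref{sec:necessity} (for $G'(\nd)\to+\infty$), which is part of the very proof of Theorem~\ref{theorem_inner}. Fortunately you do not need it: once $\lim_{\nd\to 0^+}\fmax(\nd)=M$ is established, the limit of $G$ exists and equals $-\ri(1-\ro)+M$, and the \emph{definition of limit} alone gives a neighborhood $(0,\epsilon)$ on which $G<0$ when that quantity is negative. No monotonicity or continuity of $G$ is required.

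Your upper-bound sketch for the interchange is looser than the paper's and mis-describes the right endpoint. Near $\nl\to 1$ the correction term $\nd\log_2(\npnl/(1-\npnl))$ can be \emph{positive} (whenever $\npnl>1/2$), so it does not ``drive $\f$ to $-\infty$'' there; control at that endpoint instead comes from $\log_2(1-\npnl)\to-\infty$ in the all-odd-degree case, or from $\npnl$ being bounded away from $1$ otherwise. The paper handles this by showing directly (via $\partial\f/\partial\nl$) that the maximizer $\nlo(\nd)$ stays in a fixed compact interval $[a,b]\subset\mathscr D_\nl$ for all $\nd\in[0,1/2)$, after which uniform convergence on $[a,b]$ finishes the job. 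Your compact-interior/boundary split is the right idea, but the boundary piece near $\nl=1$ needs this correction.
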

\begin{proof}
See Appendix~\ref{sec:proof_inversion}.
\end{proof}
The next two theorems characterize the distance properties of a fixed-rate Raptor code with linear random outer code picked randomly in the ensemble $\ensemble(\oensemble,\Omega, \ri, \ro, n)$ with $(\ri, \ro)$ belonging to $\region$.
\begin{theorem} \label{theorem:min_dist}
Let the random variable $\D$ be the minimum nonzero Hamming weight in the code book of a fixed-rate Raptor code picked randomly in an ensemble $\ensemble(\oensemble,\Omega, \ri, \ro, n)$. If $(\ri, \ro) \in \region$ then
\begin{align*}
\lim_{n \rightarrow \infty} \Pr \{ \D \leq \nd n \} = 0
\end{align*}
exponentially in $n$, for all $0 < \nd < \dmint$.
\end{theorem}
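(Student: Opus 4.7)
The natural approach is the first moment method: I would bound $\Pr\{\D \leq \nd n\}$ in terms of the expected weight enumerator computed in Theorem~\ref{theorem:we} and then use the growth rate analysis of Theorem~\ref{theorem:growth_rate} to show that this expectation decays exponentially. Since the theorem concerns the minimum \emph{nonzero} Hamming weight, the all-zero codeword must be excluded; so for a random code $\code$ in the ensemble I would let $N(\nd n)$ denote the number of codewords of $\code$ with weight in $\{1,2,\dots,\lfloor\nd n\rfloor\}$. By Markov's inequality,
\begin{align*}
\Pr \{\D \leq \nd n\} = \Pr\{N(\nd n) \geq 1\} \leq \Exp[N(\nd n)] = \sum_{d=1}^{\lfloor \nd n \rfloor} A_d,
\end{align*}
with $A_d$ given by \eqref{eq:WEF_Raptor}.

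The plan is to show that each term in this finite sum is exponentially small in $n$, uniformly in $d$. First I would re-examine the derivation leading to \eqref{eq:proof_G_2} in the proof of Theorem~\ref{theorem:growth_rate}; that chain of inequalities actually provides the finite-$n$ upper bound $\frac{1}{n}\log_2 A_{\nd n} \leq \Gamma_n(\nd)$, where $\Gamma_n(\nd) \to G(\nd)$ as $n \to \infty$. Hence, for any fixed $\nd' \in (0,\dmint)$ and any $\eta>0$, there exists $n_0$ such that $A_{\lfloor \nd' n \rfloor} \leq 2^{n(G(\nd')+\eta)}$ for all $n \geq n_0$.

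Next I would exploit the monotonicity result of Lemma~\ref{corollary:der}: since $G'(\nd) > 0$ on $(0,1/2)$ and, by the assumption $(\ri,\ro) \in \region$ and the definition of $\dmint$, $G(\dmint) \leq 0$, the growth rate is strictly negative and strictly increasing on $(0,\nd]$ for any $\nd < \dmint$. In particular $G(\nd'') \leq G(\nd) < 0$ for all $\nd'' \in (0,\nd]$. Choosing $\eta = -G(\nd)/2 > 0$ and applying the uniform bound above (with the supremum over $\nd'' \leq \nd$ absorbed into the finite-$n$ bound $\Gamma_n$, which I would argue is itself monotone in $\nd''$ over the relevant range for $n$ large enough), I would obtain
\begin{align*}
\sum_{d=1}^{\lfloor \nd n \rfloor} A_d \leq (\nd n + 1) \, 2^{n G(\nd)/2},
\end{align*}
which tends to $0$ exponentially in $n$. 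This would conclude the proof.

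The main obstacle I anticipate is making the finite-$n$ bound on $A_d$ \emph{uniform} in $d$ over the range $1 \leq d \leq \lfloor \nd n \rfloor$. The proof of Theorem~\ref{theorem:growth_rate} extracts an asymptotic $G$ only pointwise, whereas here I need to sum $\mathcal O(n)$ terms and so require a bound that is uniform (or at least monotone) in $d$. Resolving this would involve showing that the auxiliary function $\Gamma_n(\nd'')$ is nondecreasing in $\nd''$ on $(0,\nd]$ for $n$ sufficiently large, which in turn rests on the same monotonicity argument used in Lemma~\ref{corollary:der} together with careful control of the vanishing correction terms $K/n$ appearing in inequality $\mathrm{(d)}$ of the proof of Theorem~\ref{theorem:growth_rate}. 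Once this uniformity is in place, the union bound collapses to an exponentially small quantity and the claim follows.
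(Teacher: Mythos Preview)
Your proposal is correct and follows essentially the same route as the paper: union-bound $\Pr\{\D\le\nd n\}$ by $\sum_{d=1}^{\nd n}A_d$, use the finite-$n$ upper bound $A_{\nd n}\le 2^{n\Gamma_n(\nd)}$ from the proof of Theorem~\ref{theorem:growth_rate}, and invoke the monotonicity provided by Lemma~\ref{corollary:der} together with $\Gamma_n\to G$ to control all terms by the last one, yielding $\Pr\{\D\le\nd n\}\le \nd n\,2^{n\Gamma_n(\nd)}\to 0$ exponentially since $G(\nd)<0$ on $(0,\dmint)$. The only cosmetic difference is that the paper phrases the monotonicity step as $A_d$ being nondecreasing (via the asymptotic identification of $\frac{1}{n}\log_2 A_{\nd n}$ with $\Gamma_n(\nd)$), whereas you argue monotonicity of the bound $\Gamma_n$ directly; both versions rest on the same ingredient, namely $G'(\nd)>0$ on $(0,1/2)$, and the uniformity issue you flag is exactly the one the paper handles by this argument.
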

\begin{proof}
We can upper bound this probability via union bound as
\begin{align}
\label{eq:pr_dmin}
\Pr\{ \D \leq \nd n\} & \leq   \sum_{w=1}^{\nd n} A_w.
\end{align}
We will start by proving that the sequence $A_{\d}$ is non-decreasing for $\d < n/2$ and sufficiently large $n$.  As $n \rightarrow \infty$, the expression $\frac{1}{n} \log_2 \frac{A_{\nd n}}{A_{\nd n -1}}$ converges to $\Gamma_n (\nd) - \Gamma_n (\nd -\frac{1}{n})$, being $\Gamma_n(\nd)$ given in \eqref{eq:proof_G_2}. From Lemma~\ref{corollary:der}  we know that  $G'(\nd)> 0$ for $0 < \nd < 1/2$. Moreover, as $n \rightarrow \infty$, from Theorem~\ref{theorem:growth_rate} we have $\Gamma_n(\nd) \rightarrow G(\nd)$. Hence, for sufficiently large $n$, $\Gamma_n (\nd) \geq \Gamma_n (\nd -\frac{1}{n})$. This implies that $A_{\d}$ is non decreasing.

We can now write
\begin{align}
\Pr\{ \D \leq \nd n\}  \leq  \nd n A_{\nd n} \leq\nd n 2^{n \Gamma_n(\nd) }\,  ,
\end{align}
where we have used $A_{\nd n}\leq 2^{n \Gamma_n(\nd) }$, being $\Gamma_n(\nd)$ given in \eqref{eq:proof_G_2}.

As $n \rightarrow \infty$ we have $\Gamma_n(\nd) \rightarrow G(\nd)$. Moreover, $G(\nd)<0$ for all $0 < \nd < \dmint$, provided $(\ri,\ro) \in \region$. Hence, $\Pr\{ \D \leq \nd n\}$ tends to $0$ exponentially on $n$.
\end{proof}
\begin{remark}
From Theorem~\ref{theorem:min_dist}, we have that when a rate point $(\ri, \ro)$ belongs to the region $\region$, there is  an exponential decay of the probability to find codewords with weight less than $\dmint n$, \fran{which is a very favorable property for code ensembles.} This exponential decay shall be attributed to the presence of the linear random outer code that is characterized by a dense parity-check matrix, and makes the growth rate function monotonically increasing for $\nd$ for which it is negative, $0 < \nd < \dmint$.

As a comparison, for \ac{LDPC} code ensembles characterized by a positive normalized typical minimum distance, the growth rate function starts from $G(0)=0$ with negative derivative, reaches a minimum, and then increases to cross the $x$-axis. In this case, for $\nd < \dmint$ the sum in the upper bound is dominated by those terms corresponding to small values of $w$, yielding either a polynomial decay (as for Gallager's codes \cite{Gallager63} ) or even  $\Pr \{ D \leq \nd n \}$ tending to a constant (as it is for irregular unstructured \ac{LDPC} ensembles \cite{orlitsky05:stopping,di06:weight}). \fran{This is in general worse for the distance properties of the code compared to the exponential decay observed in our ensemble}

\fran{However, one should remark that this exponential decay of the probability of having codewords with weight less than  $\dmint n$ comes at the cost of complexity, since the outer code is dense, and hence complex to encode and decode.}
\end{remark}
\begin{theorem}\label{theorem:zero_codeword}
Let the random variable $\Z$ be the multiplicity of codewords of weight zero in the code book of a fixed-rate Raptor code picked randomly in the ensemble $\ensemble(\oensemble,\Omega, \ri, \ro, n)$. If $(\ri, \ro) \in \region$ then
\begin{align*}
\Pr \{ \Z > 1 \} \rightarrow 0 \quad \textrm{as } n\rightarrow \infty \, .
\end{align*}
\end{theorem}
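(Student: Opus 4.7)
The plan is to bound $\Pr\{\Z > 1\}$ by a first-moment argument and then show that the resulting expectation decays to zero exponentially in $n$ whenever $(\ri, \ro) \in \region$.

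First I would observe that, since the all-zero input always maps to the all-zero codeword, we have $\Z \geq 1$ deterministically, so $\Z > 1$ is equivalent to $\Z - 1 \geq 1$. Applying Markov's inequality to the non-negative integer random variable $\Z - 1$ yields
\[
\Pr\{\Z > 1\} = \Pr\{\Z - 1 \geq 1\} \leq \Exp[\Z - 1] = A_0 - 1,
\]
where $A_0 = \Exp[\Z]$ is the expected number of weight-zero codewords characterized in Corollary~\ref{corollary:A_0}. Using that corollary,
\[
A_0 - 1 = 2^{-n \ri (1 - \ro)} \sum_{l=1}^h \binom{h}{l} (1 - \pl)^n,
\]
so it suffices to show that this quantity vanishes exponentially in $n$ when $(\ri,\ro) \in \region$.

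Next I would perform an asymptotic analysis of $\frac{1}{n} \log_2 (A_0 - 1)$ along exactly the same lines as the growth-rate computation in the proof of Theorem~\ref{theorem:growth_rate}. Specifically, I would invoke the exponential equivalence $\binom{\ri n}{\nl \ri n} \doteq 2^{\ri n \Hb(\nl)}$ together with the exponential equivalence $\sum_l 2^{n f(l)} \doteq \max_l 2^{n f(l)}$ applied to the sum over $l$, and use the fact (already established in the proof of Theorem~\ref{theorem:growth_rate}) that $\pl \to \npnl$ uniformly in $\nl$ with error $O(1/n)$. This gives
\[
\lim_{n \to \infty} \frac{1}{n} \log_2 (A_0 - 1) = -\ri(1-\ro) + \max_{\nl \in \mathscr D_{\nl}} \left\{ \ri \Hb(\nl) + \log_2 \bigl( 1 - \npnl \bigr) \right\}.
\]

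Finally I would close the argument by invoking the characterization of $\region$ given in Theorem~\ref{theorem_inner}: the assumption $(\ri,\ro) \in \region$ is exactly the statement that
\[
\ri(1-\ro) > \max_{\nl \in \mathscr D_{\nl}} \left\{ \ri \Hb(\nl) + \log_2 \bigl( 1 - \npnl \bigr) \right\},
\]
so the limit above is strictly negative. Hence $A_0 - 1$ decays exponentially in $n$, and by the Markov bound so does $\Pr\{\Z > 1\}$. The main obstacle is not conceptual but technical: one must justify the interchange of limit and maximization and carefully handle the boundary behavior of the function $\ri \Hb(\nl) + \log_2(1-\npnl)$ on $\mathscr D_{\nl}$, which can be done by reproducing verbatim the truncation and density arguments (inequalities (a)--(d)) already used in the proof of Theorem~\ref{theorem:growth_rate}, so no new analytical machinery is required.
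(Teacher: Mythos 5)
Your proposal is correct and follows essentially the same route as the paper: both proofs reduce the problem to showing that $A_0-1 = \Exp[\Z]-1$ decays exponentially in $n$, and both establish that decay by the very same exponential-equivalence machinery already used in the proof of Theorem~\ref{theorem:growth_rate}, together with the identity $\lim_{\nd\to 0^+} G(\nd) = G(0)$ and the characterization of $\region$ from Theorem~\ref{theorem_inner}. The only difference is the final step: where you apply Markov's inequality to the nonnegative random variable $\Z-1$ (valid because, as you correctly note, the zero input always produces the zero codeword so $\Z\ge 1$ almost surely), the paper instead observes that $\Exp[\Z]\ge 1$ with equality only when $\Pr\{\Z=1\}=1$, and deduces concentration from $\Exp[\Z]\to 1$. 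Your Markov step is a touch cleaner and even gives a rate, but this is a cosmetic rather than a substantive difference.
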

\begin{proof}
In order to prove the statement we have to show that the probability measure of any event $\{ \Z = t \}$ with $t \in \mathbb N \setminus \{0, 1\}$ vanishes as $n \rightarrow \infty$. We start by analyzing the behavior of $\mathbb E[\Z]=A_0$, whose expression is $\mathbb E[\Z] = 1 + 2^{-n \ri (1 - \ro) } \sum_{l=1}^h \binom {h} {\l} (1 - \pl)^n$. Using an argument analogous to the one adopted in the proof of Theorem~~\ref{theorem:growth_rate}, for large enough $n$ we have
\begin{align*}
\frac{1}{n} \log_2 \left( 2^{-n \ri (1 - \ro) } \sum_{l=1}^h \binom {h} {\l} (1 - \pl)^n \right) \leq \Xi_n
\end{align*}
where
\begin{align*}
\Xi_n := - \ri (1 - \ro) + \frac{1}{n}\log_2(\ri n) + \sup_{\nl \in (0,1)} \Big\{ & \ri \Hb(\nl) - \frac{1}{2n} \log_2 \left( 2 \pi \ri n \nl ( 1 - \nl) \right) \\
&+ \log_2 (1 - \np_{\nl} + K/n) \Big\} \, .
\end{align*}
Therefore we can upper bound $\mathbb E[\Z]$ as $\mathbb E [\Z] \leq 1 + 2^{n \Xi_n}$ which, if $(\ri,\ro) \in \region$, implies $\mathbb E[\Z] \rightarrow 1$ exponentially as $n \rightarrow \infty$ due to $\Xi_n \rightarrow G(0)$ and $G(0)<0$.\footnote{It is worth noting that $G(\nd)$ is right-continuous at $\nd=0$. This follows from the expression of $G(\nd)$ proved in Theorem~\ref{theorem:growth_rate} and from the fact that $\fmax(\nd)$ is right-continuous at $\nd=0$ as shown in the proof of Theorem~\ref{theorem_inner}.} Next, it is easy to show that $\mathbb E[\Z] \geq 1$ and, via linear programming, that the minimum is attained if and only if $\Pr \{\Z=1\}=1$ and $\Pr\{\Z=t\}=0$ for all $t \in \mathbb N \setminus \{0,1\}$. Since in the limit as $n \rightarrow \infty$ of $\mathbb E[\Z] \rightarrow 1$, we necessarily have a vanishing probability measure for any event $\{ \Z = t \}$ with $t \in \mathbb N \setminus \{0, 1\}$.
\end{proof}
\begin{remark}
From Theorem~\ref{theorem:min_dist} and Theorem~\ref{theorem:zero_codeword}, a fixed-rate Raptor code picked randomly in the ensemble $\ensemble(\oensemble,\Omega, \ri, \ro, n)$  is characterized first by a minimum distance that is exponentially concentrated around $\dmint n$ and second by  an encoding function whose kernel only includes the all-zero length $k$ message (hence bijective), with probability approaching $1$ as $n \rightarrow \infty$. \fran{In other words, for rate pairs within region $\region$ the probability of having more than 1 weight-0 codeword tends to 0 exponentially with $n$. This means that the dimension of the code is $k$ with high probability. Furthermore, for rate pairs within region $\region$ the minimum distance grows linearly with the block length $n$, which is a very favorable property.}
\end{remark}

In the following an outer region $\outer$ to region $\region$ is introduced. The outer region $\outer$ only depends on the average output degree of the inner \ac{LT} code.
\begin{theorem}
\label{pro:outer}
The positive normalized typical minimum distance region $\region$ of a fixed-rate Raptor code ensemble $\msr{C}_{\infty}(\oensemble,\Omega, \ri, \ro)$ fulfills $\region \subseteq \outer$, where
\begin{equation}
\outer := \left\{(\ri,\ro) \succeq (0,0) | \ri \leq \min \left( \phi(\ro), \frac{1}{\ro}\right) \right\}
 \label{eq:outer_bound_2}
\end{equation}
with

\[
\phi(\ro)=
\begin{cases}
\frac{\bar \Omega \log_2 (1/\ro)}{\Hb(1-\ro) -(1-\ro)} \qquad  \, \ro> \ro^* \\
1/\ro \qquad \qquad \qquad \ro \leq \ro^*
\end{cases}
\]
being $\ro^*$ the only root of $\Hb(1-\ro) -(1-\ro)$ in  $\ro \in (0,1)$, numerically $\ro^* \approx 0.22709$.
\end{theorem}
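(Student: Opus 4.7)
The plan is to derive the two inequalities defining $\outer$ as necessary consequences of the inner-region characterization in Theorem~\ref{theorem_inner}: for $(\ri,\ro) \in \region$ the strict inequality
\[
\ri (1-\ro) \, > \, \ri \, \Hb(\nl) + \log_2\bigl(1 - \npnl\bigr)
\]
must hold for \emph{every} admissible $\nl$. Each constraint of $\outer$ will be extracted by evaluating this inequality at a judicious $\nl$ and massaging the result using a Jensen-type bound that depends only on $\bar \Omega$.

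The bound $\ri \leq 1/\ro$ follows from the choice $\nl = 1/2$. Since LT output degrees are strictly positive ($\Omega_0=0$), we have $\Omega(0) = 0$, and therefore $\np_{1/2} = 1/2$ and $\log_2(1 - \np_{1/2}) = -1$. The inequality collapses to $\ri(1-\ro) > \ri - 1$, which rearranges to $\ri \ro < 1$, i.e., $\ri \leq 1/\ro$. Since for $\ro \leq \ro^*$ the piecewise definition of $\phi$ gives $\phi(\ro) = 1/\ro$, this already proves the full claim in that range.

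For the non-trivial bound $\ri \leq \phi(\ro)$ in the range $\ro > \ro^*$, I would evaluate the condition at $\nl^* = 1-\ro$. Using $\Hb(1-\ro) = \Hb(\ro)$, rearrangement yields
\[
\ri \bigl[ \Hb(1-\ro) - (1-\ro) \bigr] \, < \, -\log_2\!\bigl(1 - \np(1-\ro)\bigr) \, .
\]
Since $\ro > \ro^*$ makes the bracket strictly positive, matching the right-hand side against $\bar\Omega \log_2(1/\ro)$ reduces to proving the key inequality
\[
1 - \np(1-\ro) \; = \; \tfrac{1}{2}\bigl(1 + \Omega(2\ro - 1)\bigr) \; \geq \; \ro^{\bar\Omega} \, .
\]
I would establish this in two steps. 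First, Jensen's inequality applied to the convex map $j \mapsto x^j$ for fixed $x \in (0,1)$ (against the output-degree measure) yields the universal lower bound $\Omega(x) \geq x^{\bar\Omega}$, reducing the target to the scalar inequality $(2\ro-1)^{\bar\Omega} \geq 2\ro^{\bar\Omega} - 1$. Second, this scalar inequality follows from a monotonicity argument: the function $\psi(\ro) := (2\ro-1)^{\bar\Omega} - 2\ro^{\bar\Omega} + 1$ satisfies $\psi(1) = 0$ and has derivative $\psi'(\ro) = 2\bar\Omega \bigl[(2\ro-1)^{\bar\Omega-1} - \ro^{\bar\Omega-1}\bigr] \leq 0$ on $[1/2, 1]$ (since $2\ro-1 \leq \ro$ and $\bar\Omega \geq 1$), hence $\psi(\ro) \geq 0$ throughout $[1/2,1]$. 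The sub-range $\ro \in (\ro^*, 1/2)$ is handled by the symmetric substitution $\nl^* = \ro$, which leaves the left-hand side intact thanks to $\Hb(\ro) = \Hb(1-\ro)$, and reduces the target to $\tfrac{1}{2}(1 + \Omega(1-2\ro)) \geq \ro^{\bar\Omega}$; here $2\ro^{\bar\Omega} - 1 \leq 2(1/2)^{\bar\Omega} - 1 \leq 0$ for $\bar\Omega \geq 1$, so the bound becomes trivial.

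The hard part is not the algebra itself but the identification of the correct test point: picking $\nl^* \in \{\ro, 1-\ro\}$ is what makes the left-hand side of the rearranged inequality equal to $\ri[\Hb(1-\ro) - (1-\ro)]$, producing the exact denominator of $\phi(\ro)$, and the Jensen step is essentially forced by the fact that $\outer$ depends on the degree distribution only through $\bar\Omega$, so only bounds that hold uniformly over all $\Omega$ with a prescribed mean are admissible in the argument.
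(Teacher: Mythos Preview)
Your argument is correct and shares the same backbone as the paper's proof: evaluate the characterizing inequality of Theorem~\ref{theorem_inner} at the test point $\nl = 1-\ro$ and collapse the dependence on $\Omega$ to a dependence on $\bar\Omega$ alone via Jensen's inequality on the convex map $j \mapsto x^j$.

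The technical execution differs in one respect that is worth noting. The paper obtains the key estimate $1-\np(\nl) \geq (1-\nl)^{\bar\Omega}$ in a single stroke: it lower-bounds the probability that an output symbol equals zero by the probability that \emph{all} its neighbours are zero, which for degree $j$ is $(1-\nl)^j$, and then applies Jensen to $\sum_j \Omega_j (1-\nl)^j$. At $\nl = 1-\ro$ this gives $1-\np(1-\ro) \geq \ro^{\bar\Omega}$ directly and uniformly for every $\ro \in (\ro^*,1)$, with no case split and no auxiliary scalar inequality. Your route instead keeps the exact identity $1-\np(1-\ro) = \tfrac{1}{2}\bigl(1+\Omega(2\ro-1)\bigr)$, applies Jensen to $\Omega(2\ro-1)$ (which forces $\ro>1/2$), and then needs the additional inequality $(2\ro-1)^{\bar\Omega} \geq 2\ro^{\bar\Omega}-1$ together with a separate treatment of $\ro \in (\ro^*,1/2)$ via the symmetric test point $\nl^*=\ro$. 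Both are valid; the paper's probabilistic lower bound on $1-\np$ is the cleaner shortcut.
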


\begin{proof}
See Appendix~\ref{sec:proof_outer}.
\end{proof}

\begin{example}
\label{example_region}
Figures~\ref{fig:regiona} and \ref{fig:regionb} show the positive normalized typical minimum distance regions, $\region$ for $\Omegarten$ and $\Omegatwo$ (see Table~\ref{table:dist}) together with their outer bounds $\outer$. We can observe how the outer bound is tight in both cases except for inner codes rates close to $\ri=1$. In the figures several isorate curves are also shown, along which the rate of the Raptor code $\rate=\ri~\ro$ is constant. For example, in order to have a positive normalized typical minimum distance and an overall rate $\rate=0.95$, the figures show that the rate of the outer code must lay below $\ro<0.978$ for both distributions. Let us assume for a moment we would like to design a fixed-rate Raptor code, with degree distribution $\Omegarten$ or $\Omegatwo$, overall rate $\rate=0.95$ and for a given length $n$, which is assumed to be large. Different choices for $\ri$ and $\ro$ are possible. If $\ro$ is not chosen as $\ro<0.978$ the average minimum distance of the ensemble will not grow linearly on $n$. Hence, many codes in the ensemble will exhibit high error floors even under \ac{ML} erasure decoding. \fran{This example illustrates how the concepts introduced in this chapter can be used to design a fixed-rate Raptor code. More concretely, for  a constant overall rate of the Raptor code $\rate$ we can distribute the rate among the outer code and the inner \ac{LT} code such that the typical minimum distance of the Raptor code is positive (i.e., in order to have a low error floor).}

\end{example}

\begin{figure}[t]
        \centering
        \includegraphics[width=\figw]{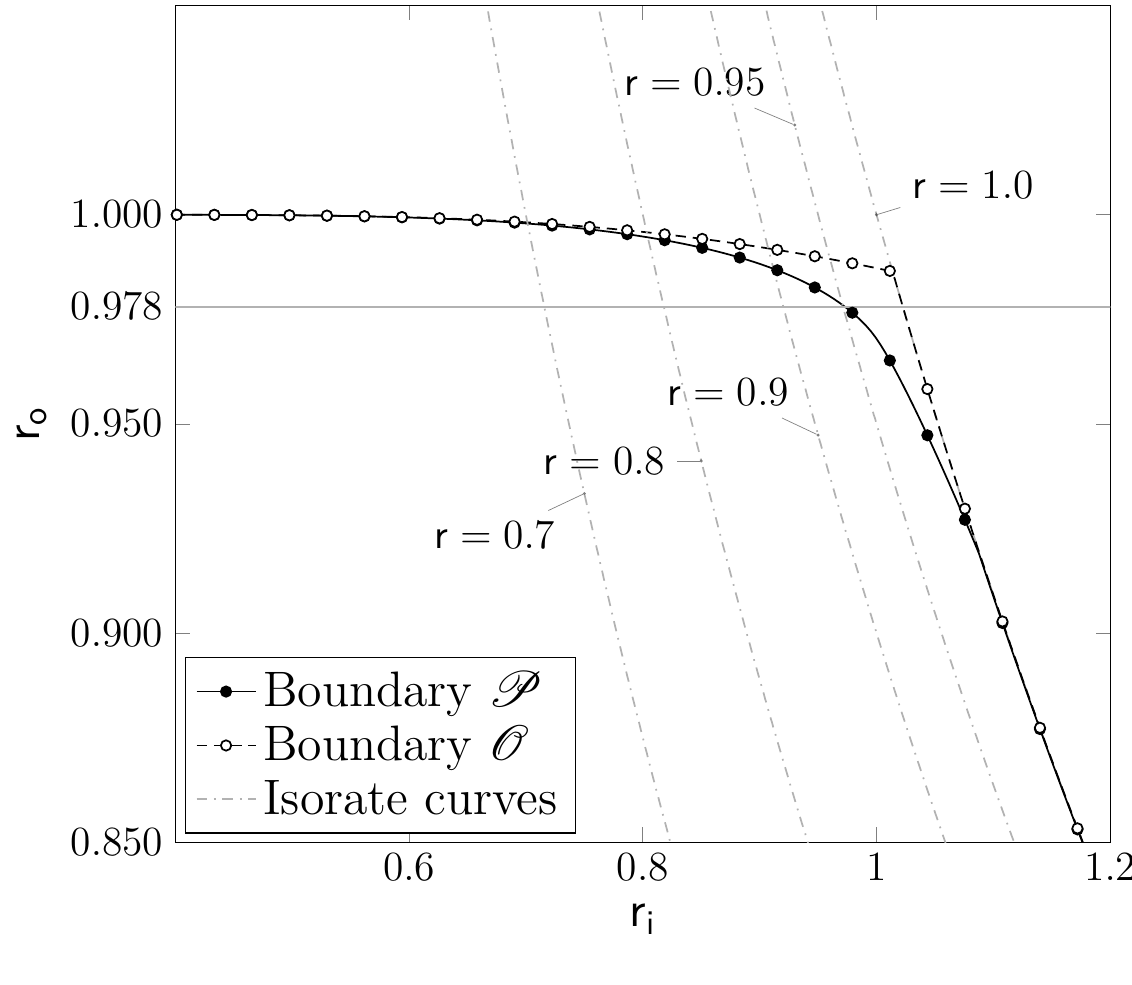}
        \caption[Positive growth rate regions of a fixed-rate Raptor code ensemble with degree distribution $\Omegarten$.]{Positive growth rate region of a fixed-rate Raptor code ensemble with degree distribution $\Omegarten$. The solid lines with black markers represent the positive growth-rate $\region$ and the dashed lines with white markers represents its outer bound $\outer$. The gray dashed lines represent isorate curves for different rates $\rate$.}
        \label{fig:regiona}
        \centering
        \includegraphics[width=\figw]{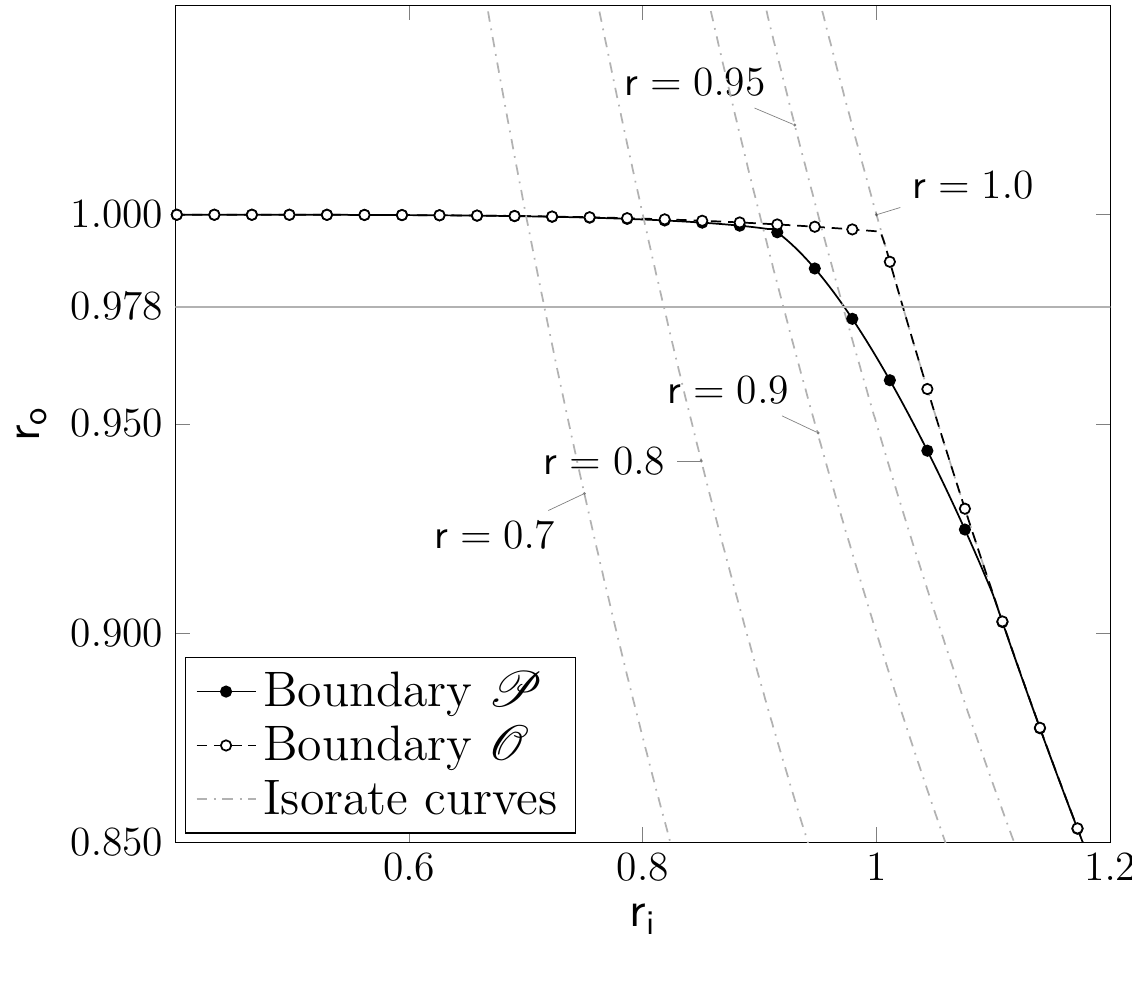}
        \caption[Positive growth rate regions of a fixed-rate Raptor code ensemble with degree distribution $\Omegatwo$.]{Positive growth rate region of a fixed-rate Raptor code ensemble with degree distribution $\Omegatwo$. The solid lines with black markers represent the positive growth-rate $\region$ and the dashed lines with white markers represents its outer bound $\outer$. The gray dashed lines represent isorate curves for different rates $\rate$.}
        \label{fig:regionb}
\end{figure}

\begin{table}[t]
\caption[Degree distributions $\Omegarten$ and $\Omegatwo$ ]{Degree distributions $\Omegarten$, defined in \cite{MBMS12:raptor,luby2007rfc} and $\Omegatwo$, defined in \cite{shokrollahi06:raptor}}
\begin{center}
\begin{tabular}{|c|c|c|c|c|c|c|c|c|}
\hline
  Degree  &         $1$    & $2$    &   $3$&    $4$ & $5$   &  $8$  & $9$ & \\ \hline
  $\Omegarten$ \rule{0pt}{3ex}    &  0.0098 & 0.459 & 0.211 & 0.1134 &       &      &      &\\ \hline
  $\Omegatwo$ \rule{0pt}{3ex}    &  0.0048 &0.4965 & 0.1669& 0.0734 &0.0822 &0.0575&0.036 & \\ \hline \hline
  Degree         & $10$    & $11$  & $18$  & $19$   & $40$  & $65$ &  $66$ & $\bar \Omega$ \\ \hline
  $\Omegarten$ \rule{0pt}{3ex}    & 0.1113  & 0.0799&       &        & 0.0156&      &       & 4.6314      \\ \hline
  $\Omegatwo$ \rule{0pt}{3ex}    &         &       &0.0012 &0.0543  & 0.0156&0.0182&0.0091 & 5.825 \\ \hline
\end{tabular}
\end{center}\label{table:dist}
\end{table}

\section{Experimental Finite Length Results}\label{sec:raptor_fr_sims}

In this section we present experimental results in order to validate the analytical results obtained in the previous sections. First, by means of examples it is illustrated how the developed results can be used to make accurate statements about the performance of fixed-rate Raptor code ensembles in the finite length regime.
Next, we provide some results that show that a tradeoff exists between performance and decoding complexity.
Finally,  some simulation results are presented that illustrate that the distance properties obtained for linear random outer codes are a fair approximation for the distance properties obtained with the standard R10 Raptor outer code (see \cite{MBMS12:raptor,luby2007rfc}).

\subsection{Results for Linear Random Outer Codes}\label{sec:results_CER}

In this section we consider Raptor code ensembles $\ensemble(\oensemble,\Omegarten, \ri, \ro, n)$ for different values of $\ri$, $\ro$, and $n$ but keeping the overall rate of the Raptor code constant to $\rate=0.9014$. In Figure~\ref{fig:region_results} the boundary of $\region$ and $\outer$ is shown for the \ac{LT} degree distribution $\Omegarten$ together with an isorate curve for $\rate=0.9014$. The markers placed along the isorate curve in the figure represent the two different $\ri$ and $\ro$ combinations that are considered in this section. The first point ($\ri=0.9155$, $\ro=0.9846$), marked with an asterisk, is inside but very close to the boundary of $\region$ for $\Omegarten$. We will refer to ensembles corresponding to this point as \emph{bad} ensembles. The second point, ($\ri=0.9718$, $\ro=0.9275$) marked with a triangle, is inside but quite far from the boundary of $\region$ for $\Omegarten$. We will refer to ensembles corresponding to this point as \emph{good} ensembles. In general, one would expect the good ensembles to have better distance properties than the bad ensembles, and hence, better erasure correcting properties.

\begin{figure}[t]
        \centering
        \includegraphics[width=\figwbigger]{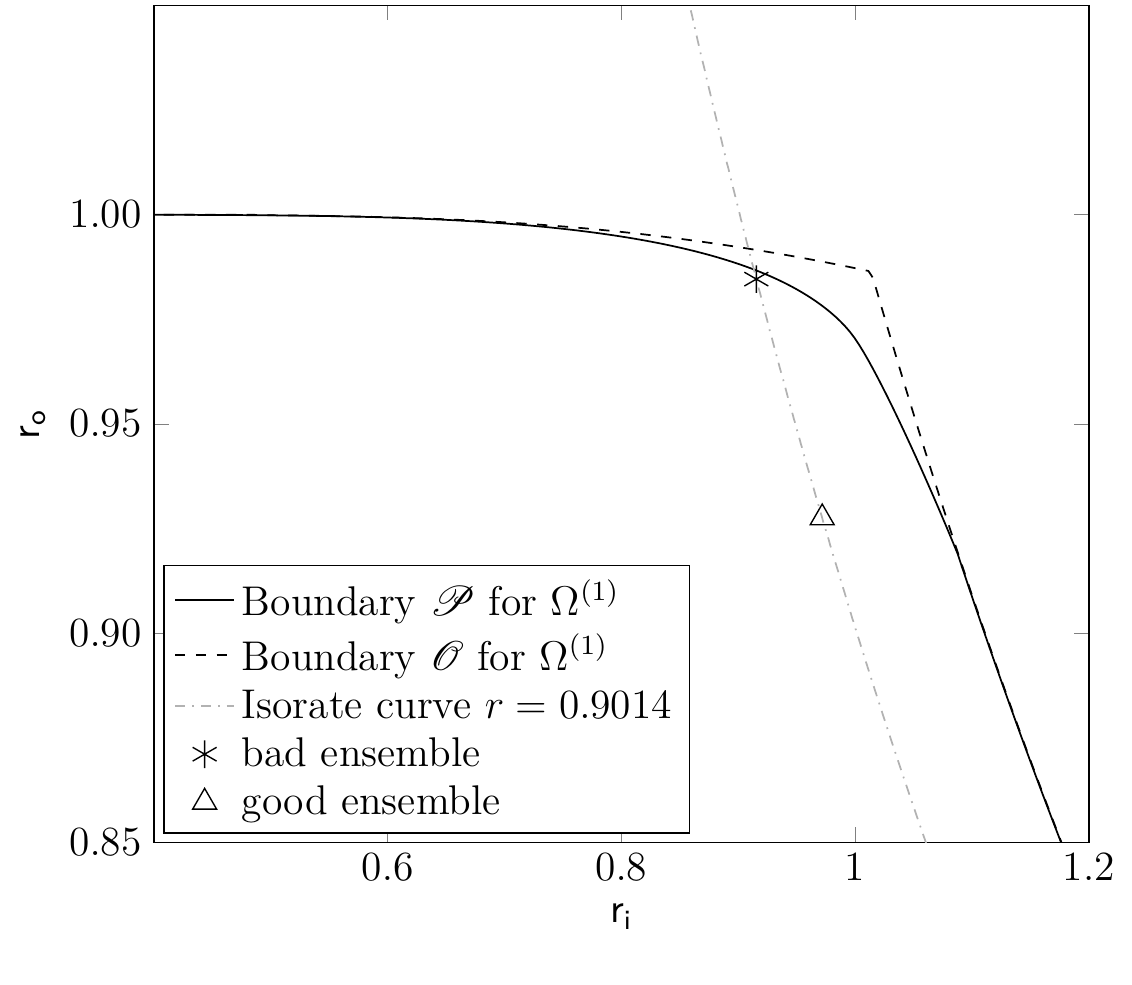}
        \caption[Positive growth rate region of for $\Omegarten$]{Positive growth rate region. The solid and dashed lines represent the positive growth-rate region of $\Omegarten$ its outer bound. The dashed-dotted line represents the isorate curve for $\rate=0.9014$ and the markers represent two different points along the isorate curve that correspond to two different code configurations with the same rate $\rate$ but different values of $\ri$ and $\ro$. The asterisk marker represents the bad ensemble, whereas the triangle marker represents the good ensemble.}
\label{fig:region_results}
\end{figure}

\fran{Following  \cite{Gallager63} we introduce the notion of typical minimum distance for finite length, which is useful when considering ensembles of finite length Raptor codes.}
\begin{mydef}
The typical minimum distance, $\dmintt$  of an ensemble $\ensemble(\oensemble,\Omega, \ri, \ro, n)$ is defined as the integer number
\begin{align*}
\dmintt := \begin{cases}
0 & \text{if } A_0 > 1 + 1/2 \\
\max \{ d\geq0 : \sum_{i=0}^{d} \we_i -1 < 1/2 \} & \text{otherwise.}
\end{cases}
\end{align*}
\end{mydef}

According to this definition, at least half of the codes in the ensemble will have a minimum distance of $\dmintt$ or larger. The equivalent of $\dmintt$ in the asymptotic regime is $\dmint$, the (\emph{asymptotic}) normalized minimum distance of the ensemble $\msr{C}_{\infty}(\oensemble,\Omega, \ri, \ro)$.
For sufficiently large $n$ it is expected that $\dmintt$ converges to $\dmint n$.
In Figure~\ref{fig:d_min_delta_star} $\dmintt$ and $\dmint n$ are shown as a function of the block length $n$. We can observe how the good ensemble has a larger typical minimum distance than the bad ensemble. In fact for all values of $n$ shown in Figure~\ref{fig:d_min_delta_star} the bad ensemble has typical minimum distance $\dmintt=0$\footnote{Since the bad ensemble is inside the positive growth rate region its minimum distance increases linearly with $n$. Thus for large enough $n$ its typical minimum distance will be strictly positive.}. It can also be observed how already for small values of $n$ the $\dmintt$ and $\dmint n$ are very similar. Therefore, we can say that at least for this example the result of our asymptotic analysis of the minimum distance holds already for small values of $n$.
\begin{figure}[t]
        \centering
        \includegraphics[width=\figwbigger]{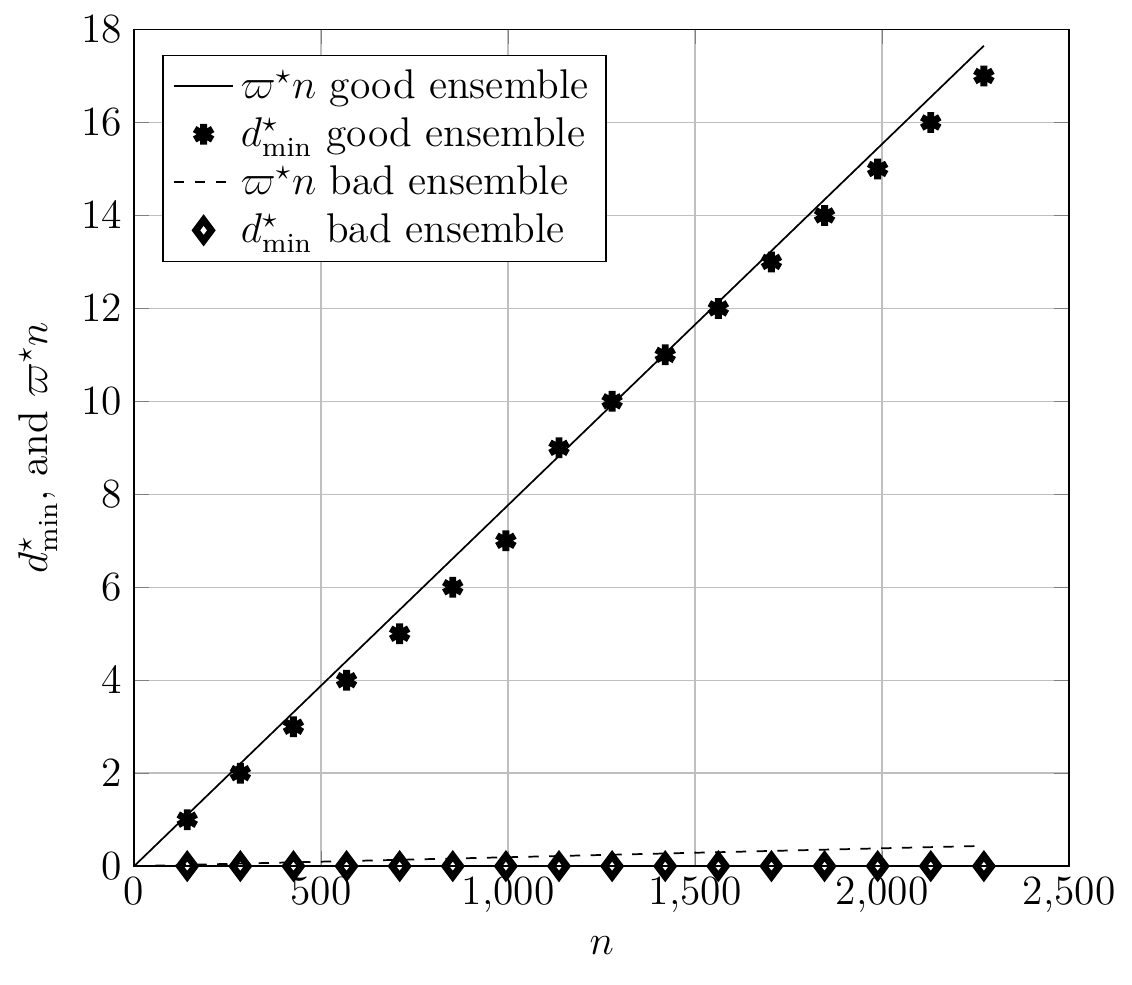}
        \caption[Typical minimum distance $\dmintt$ as a function of the block length $n$]{Typical minimum distance $\dmintt$ as a function of the block length $n$ for ensembles with $\ro= 0.9275$ and $\ro=0.9846$ and $\rate=0.9014$. The markers represent $\dmintt$ whereas the lines represent $\dmint n$.}
\label{fig:d_min_delta_star}
\end{figure}

The expression of the average weight enumerator in Theorem~\ref{theorem:we} can be used in order to upper bound the average \acf{CER}  over a \ac{BEC} with erasure probability $\erasprob$ according to \eqref{eq:bound_Gavg}, \cite{CDi2001:Finite}. However, the upper bound  in \eqref{eq:bound_Gavg} needs to be slightly modified to take into account codewords of weight $0$. We have

\begin{align}
\label{eq:bound_Gavg_modif}
&\Exp_{\ensemble(\oensemble,\Omega, \ri, \ro, n)} \left[P_B(\erasprob)\right]\leq
P^{(\mathsf S)}_{B}(n,k,\erasprob) \nonumber \\
& + \sum_{e=1}^{n-k} {n \choose e} \erasprob^e (1-\erasprob)^{n-e} \min \left\{1, \sum_{w=1}^e {e \choose w} \frac{\we_w}{{n \choose w}}\right\} + \we_0-1
\end{align}
where $P^{(\mathsf S)}_{B}(n,k,\erasprob)$ is the Singleton bound given by \eqref{eq:bound_S}.

Since we consider Raptor codes in a fixed-rate setting, it is possible to expurgate Raptor code ensembles as it was done by Gallager in his PhD thesis for \ac{LDPC} code ensembles \cite{Gallager63}. Let us consider an integer
$\ds \geq 0$ so that
\begin{align}
\label{eq:pr_d_min_ex}
\Pr\{ \dmin \leq \ds\} & \leq   \sum_{w=0}^{\ds} A_w -1 = \theta < 1/2.
\end{align}
We define the expurgated ensemble $\ensemble^{\text{ex}}(\oensemble,\Omega, \ri, \ro, n, \ds)$ as the ensemble composed of the codes in the ensemble $\ensemble(\oensemble,\Omega, \ri, \ro, n)$  whose minimum distance is $\dmin >\ds$. The expurgated ensemble contains at least a fraction $1 - \theta>1/2$ of the codes in the original ensemble. From \cite{Gallager63} it is known that the average \acl{WE} of the expurgated ensemble can be upper bounded by:
\begin{align*}
\we^{\text{ex}}_d
\begin{cases}
\leq  2 \we_d  & \text{if } d > \ds  \\
= 0             & \text{if } 1 \leq d \leq\ds \\
\end{cases}
\end{align*}

In order to obtain the simulation results of this section, in each ensemble  $6000$ codes\footnote{For clarity of presentation only 300 codes are shown in the figures.} were selected randomly from the ensemble. For each code Monte Carlo simulations over a \ac{BEC} were performed  until $40$ errors were collected or a maximum of $10^5$ codewords were simulated. We remark that the objective here was characterizing the average performance of the ensemble, and not so much the performance of every single code (this would have required more codewords to be simulated for each code).

In Figure~\ref{fig:UB_128} we show the \ac{CER} vs.\ the erasure probability $\erasprob$ for two ensembles with $\rate=0.9014$ and $k=128$ that have different outer code rates, $\ro=0.9275$ (good ensemble) and $\ro=0.9846$ (bad ensemble). According to Figure~\ref{fig:d_min_delta_star} the good ensemble is characterized by a typical minimum distance $\dmintt=2$ whereas the bad ensemble is characterized by $\dmintt=0$.
For both ensembles the upper bound in \eqref{eq:bound_Gavg_modif} holds for the average \ac{CER}. However, it can be observed how the performance of the codes in the ensemble shows a high dispersion due to the short block length ($n=142$),i.e., the \ac{CER} curves of different codes from the ensemble can be quite far apart. In fact, in both ensembles a fraction of the codes has a minimum distance equal to zero, that leads to \ac{CER}$=1$ for all values of $\erasprob$ (around $1\%$ for the good ensemble and $30\%$ for the bad ensemble).
If one compares Figure~\ref{fig:UB_good_128} and Figure~\ref{fig:UB_bad_128} it can be seen how  the fraction of codes performing close to the random coding bound is larger in the good ensemble than in the bad ensemble. For the good ensemble Figure~\ref{fig:UB_good_128} shows also an upper bound on the average \ac{CER} for the expurgated ensemble with $\ds=1$. It can be observed how the expurgated ensemble has a lower error floor.
For the bad ensemble no expurgated ensemble can be defined (no $\ds\geq 0$ exists that leads to $\theta<1/2$ in \eqref{eq:pr_d_min_ex}).

\begin{figure}[t]
        \centering
        \begin{subfigure}[b]{0.99\textwidth}
        \centering
            \includegraphics[width=\figw]{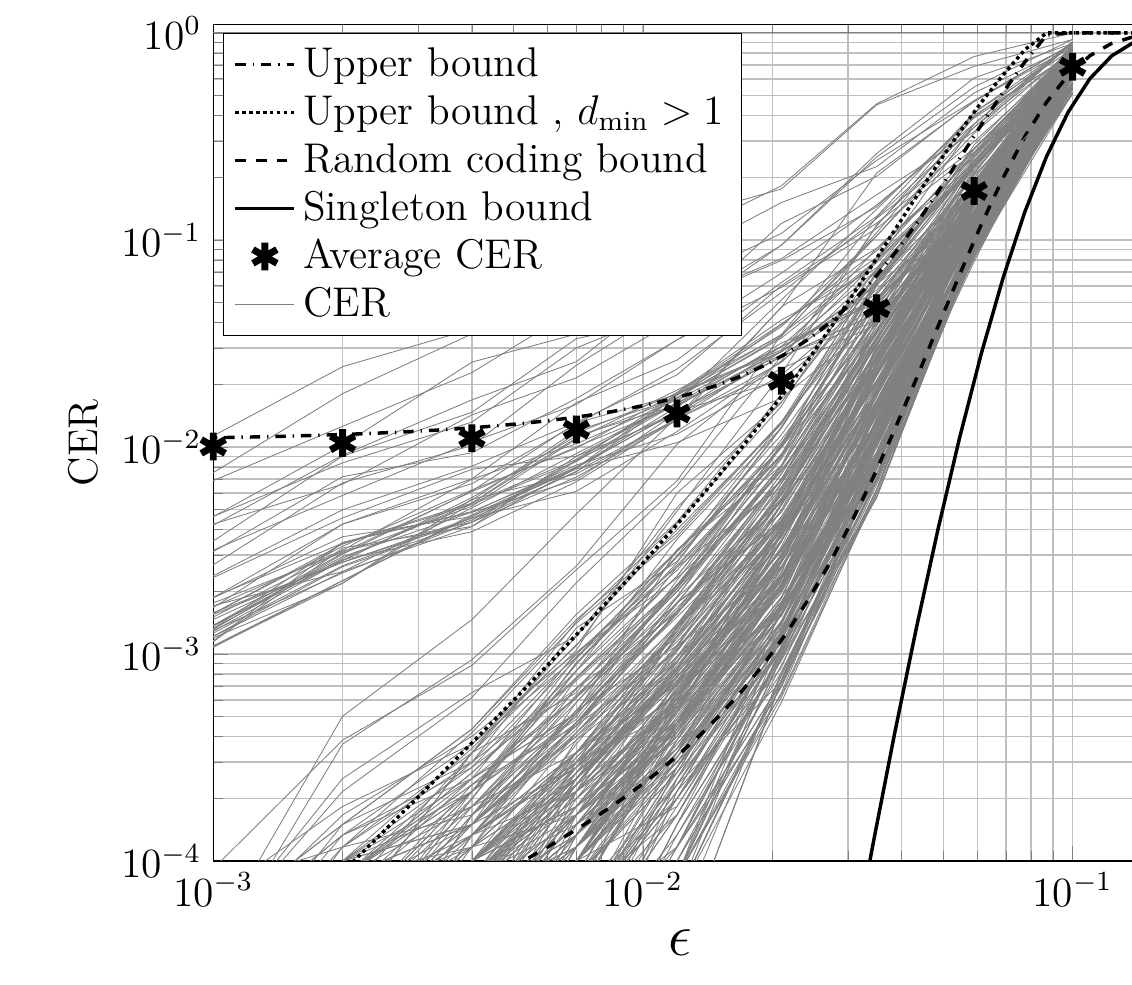}
            \subcaption{good ensemble, $\ro=0.9275$, $\rate=0.9014$} \label{fig:UB_good_128}
        \end{subfigure}
        \begin{subfigure}[b]{0.99\textwidth}
        \centering
            \includegraphics[width=\figw]{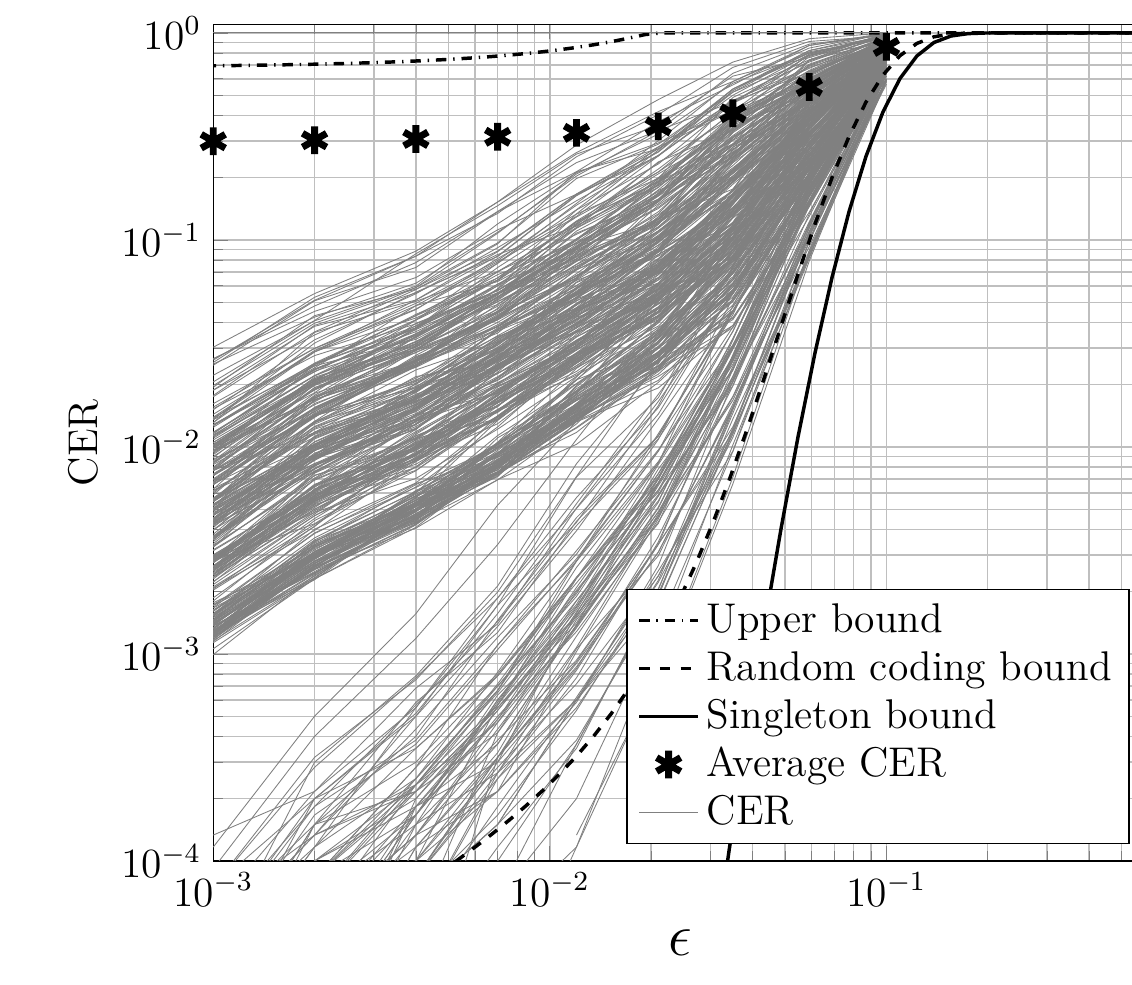}
            \subcaption{bad ensemble, $\ro=0.9846$, $\rate=0.9014$}\label{fig:UB_bad_128}
        \end{subfigure}
        \caption[Codeword error rate \ac{CER} vs.\ erasure probability $\erasprob$ for two ensembles with $\rate=0.9014$ and $k=128$ but different values of $\ro$.]{Codeword error rate \ac{CER} vs.\ erasure probability $\erasprob$ for two ensembles with $\rate=0.9014$ and $k=128$ but different values of $\ro$.  The solid, dashed and dot-dashed lines represent respectively the Singleton bound, the Berlekamp random coding bound and the upper bound in \eqref{eq:bound_Gavg_modif}. The dotted line represents the upper bound for the expurgated ensemble for $\ds=1$. The markers represent the average \ac{CER} of the ensemble and the thin gray curves represent the performance of the different codes in the ensemble, both obtained through Monte Carlo simulations.}\label{fig:UB_128}
\end{figure}

Figure~\ref{fig:UB_256} shows the \ac{CER} vs.\ $\erasprob$ for two ensembles using the same outer code rates as in Figure~\ref{fig:UB_128} but this time for $k=256$. We can observe how the \ac{CER} shows somewhat less dispersion compared to $k=128$. Moreover, comparing Figure~\ref{fig:UB_good_256} and Figure~\ref{fig:UB_good_128} it can be seen how for the good ensemble ($\ro=0.9275$) the error floor is much lower for $k=256$ than for $k=128$, due to a larger  typical minimum distance. Actually, whereas for $k=128$ there were some codes with minimum distance zero in the good ensemble, for $256$ we did not find any code with minimum distance zero out of the $6000$ codes that were simulated. For the good ensemble it is possible again to considerably lower the error floor by expurgation.
However, for the bad ensemble, comparing Figure~\ref{fig:UB_bad_256} and Figure~\ref{fig:UB_bad_128} we can see how the error floor is approximately the same for $k=128$ and $k=256$. In fact, in both cases the typical minimum distance is zero.

\begin{figure}[t]
        \centering
        \begin{subfigure}[b]{0.99\textwidth}
        \centering
            \includegraphics[width=\figw]{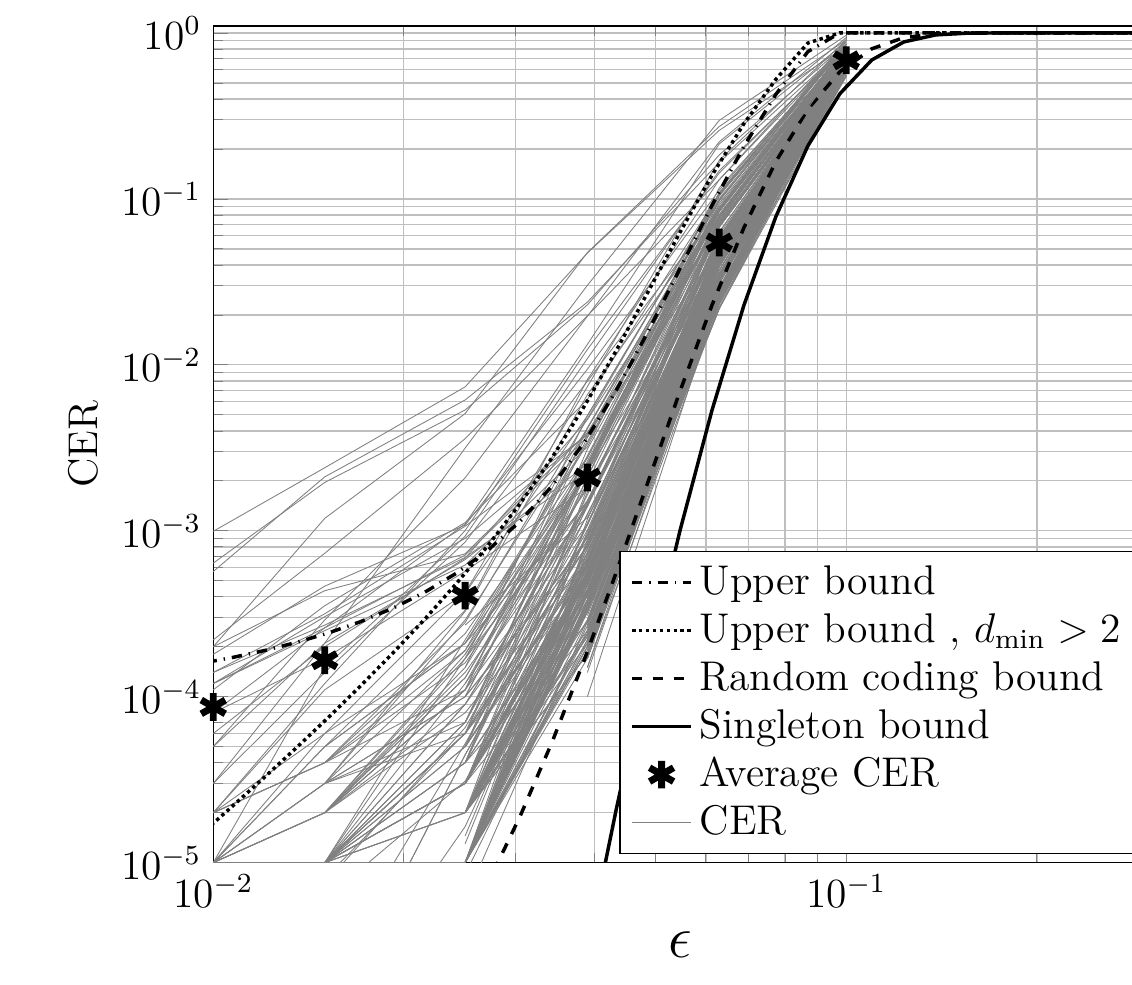}
            \subcaption{good ensemble, $\ro=0.9275$, $\rate=0.9014$} \label{fig:UB_good_256}
        \end{subfigure}
        \begin{subfigure}[b]{0.99\textwidth}
        \centering
            \includegraphics[width=\figw]{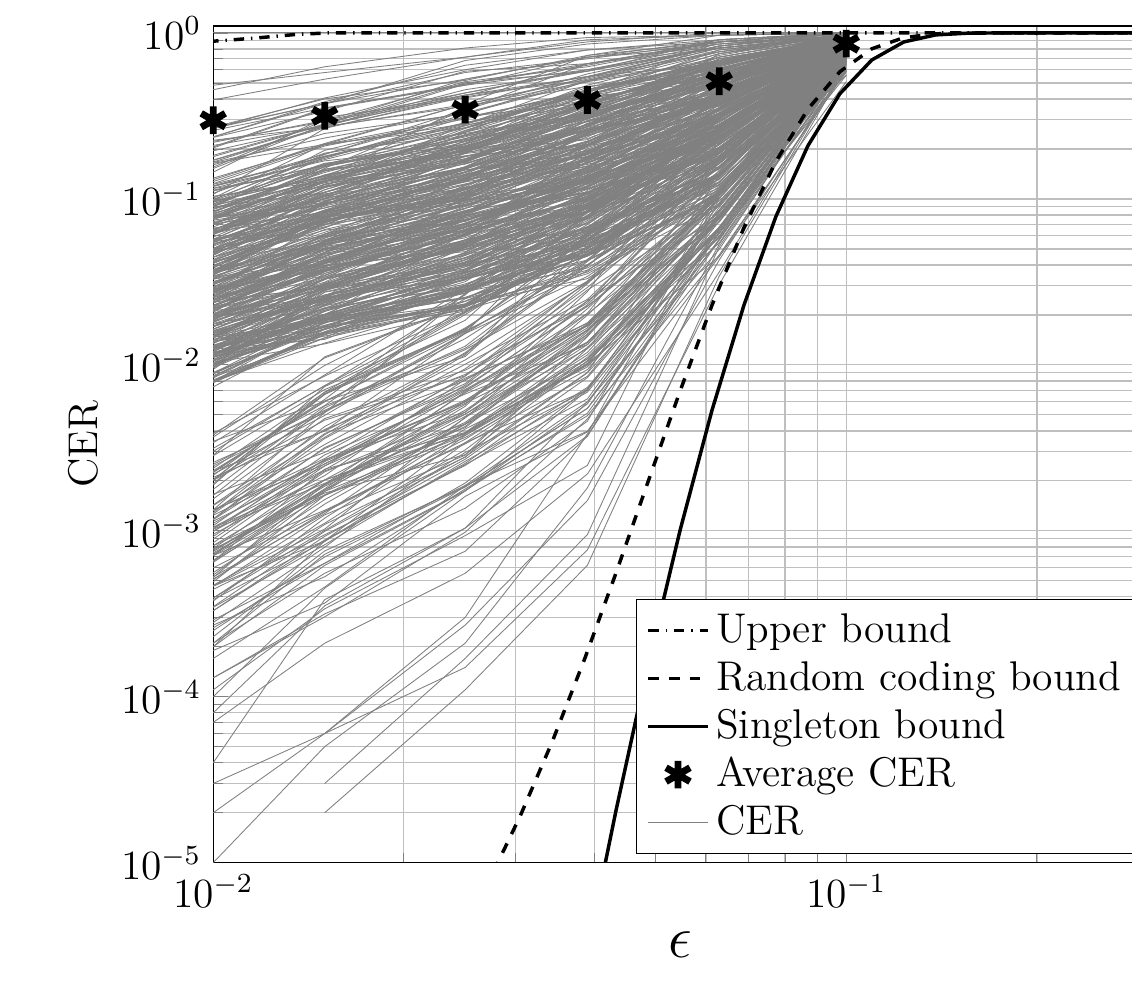}
            \subcaption{bad ensemble, $\ro=0.9846$, $\rate=0.9014$}\label{fig:UB_bad_256}
        \end{subfigure}
        \caption[Codeword error rate \ac{CER} vs.\ erasure probability $\erasprob$ for two ensembles with $\rate=0.9014$ and $k=256$ but different values of $\ro$.]{Codeword error rate \ac{CER} vs.\ erasure probability $\erasprob$ for two ensembles with $\rate=0.9014$ and $k=256$ but different values of $\ro$.  The solid, dashed and dot-dashed lines represent respectively the Singleton bound, the Berlekamp random coding bound and the upper bound in \eqref{eq:bound_Gavg_modif}. The dotted line represents the upper bound for the expurgated ensemble for $\ds=2$. The markers represent the average \ac{CER} of the ensemble and the thin gray curves represent the performance of the different codes in the ensemble, both obtained through Monte Carlo simulations.}\label{fig:UB_256}

\end{figure}

%
%

\subsection{Comparison with Raptor Codes with Standard R10 Outer Codes}
In this section we provide a numerical example that illustrates how the results obtained for linear random outer codes closely approximate the results with the standard R10 Raptor outer code  (c.f. Section~\ref{chap:raptor_r10}). Specifically, we consider Raptor codes with an \ac{LT} degree distribution $\Omega(x) = 0.0098 x   + 0.4590  x^2 +  0.2110  x^3 +  0.1134 x^4 +   0.2068 x^5$. Figure~\ref{fig:region_compare} shows the positive growth rate region for such a degree distribution (assuming a linear random outer code) and three different rate points, two of which are inside the region $\region$ while the third one lies outside. The $(\ri,\ro)$ rate pairs for the three points are specified in the figure caption.

In Figure~\ref{fig:CER_compare} we show the average \ac{CER} obtained through Monte Carlo simulations for the ensembles of Raptor codes with $k=1024$, output degree distribution $\Omega(x)$ and two different outer codes, the standard R10 outer code and a linear random outer code. The three different rate points given in Figure~\ref{fig:region_compare} are considered. For each rate point the average CER is given for the ensemble using the standard R10 outer code and for the a linear random outer code.
The \acp{CER} obtained with both precodes are very similar in all cases. Furthermore, the error floor behavior of the Raptor code ensemble with R10 outer code is in agreement with the  position of the corresponding point on the $(\ri,\ro)$ plane with respect to the $\region$ region, although this region is obtained using the simple linear random outer code model. Concretely, for rate points inside $\region$ the error floor is low, and it tends to become lower the further the point is from the boundary of $\region$. However, for rate points outside region $\region$, we have a very high error floor. \fran{Thus, our analysis, which is done for linear random outer codes, can be used to make accurate predictions on the behaviour of Raptor codes employing the R10 outer code.}

\begin{figure}
\begin{center}
    \includegraphics[width=\figw]{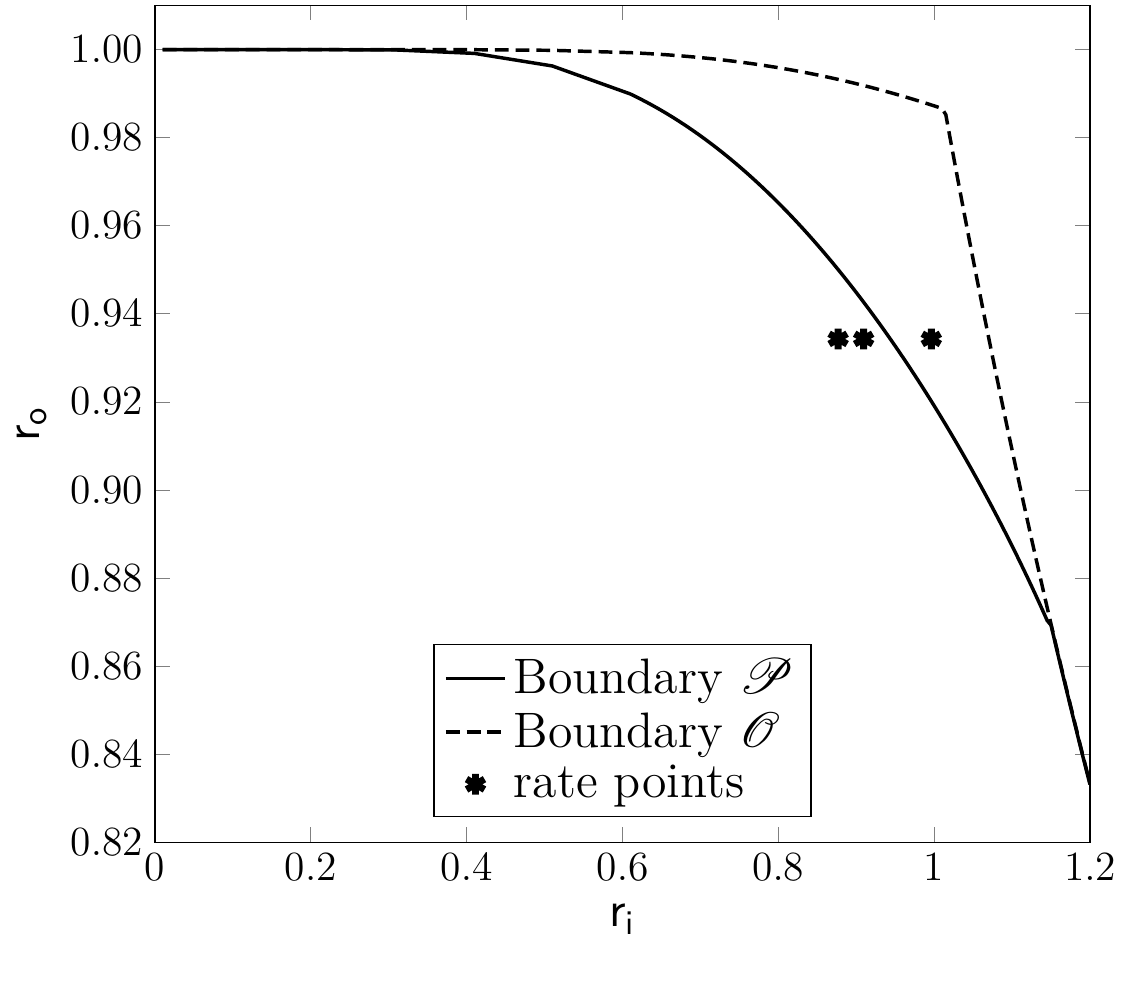}
    \centering \caption[Positive growth rate region for the degree distribution  $\Omega(x) = 0.0098 x   + 0.4590  x^2 +  0.2110  x^3 +  0.1134 x^4 +   0.2068 x^5$]{Positive growth rate region for the degree distribution  $\Omega(x) = 0.0098 x   + 0.4590  x^2 +  0.2110  x^3 +  0.1134 x^4 +   0.2068 x^5$.  The markers represent three different rate points all of them with $\ro=1024/1096$ but with different inner code rates,    $\ri=1096/1100$, $\ri=1096/1205$ and $\ri=1096/1250$.} \label{fig:region_compare}
    \end{center}
\end{figure}
\begin{figure}
    \begin{centering}
    \hspace{2.5cm}
        \includegraphics[width=0.78\columnwidth]{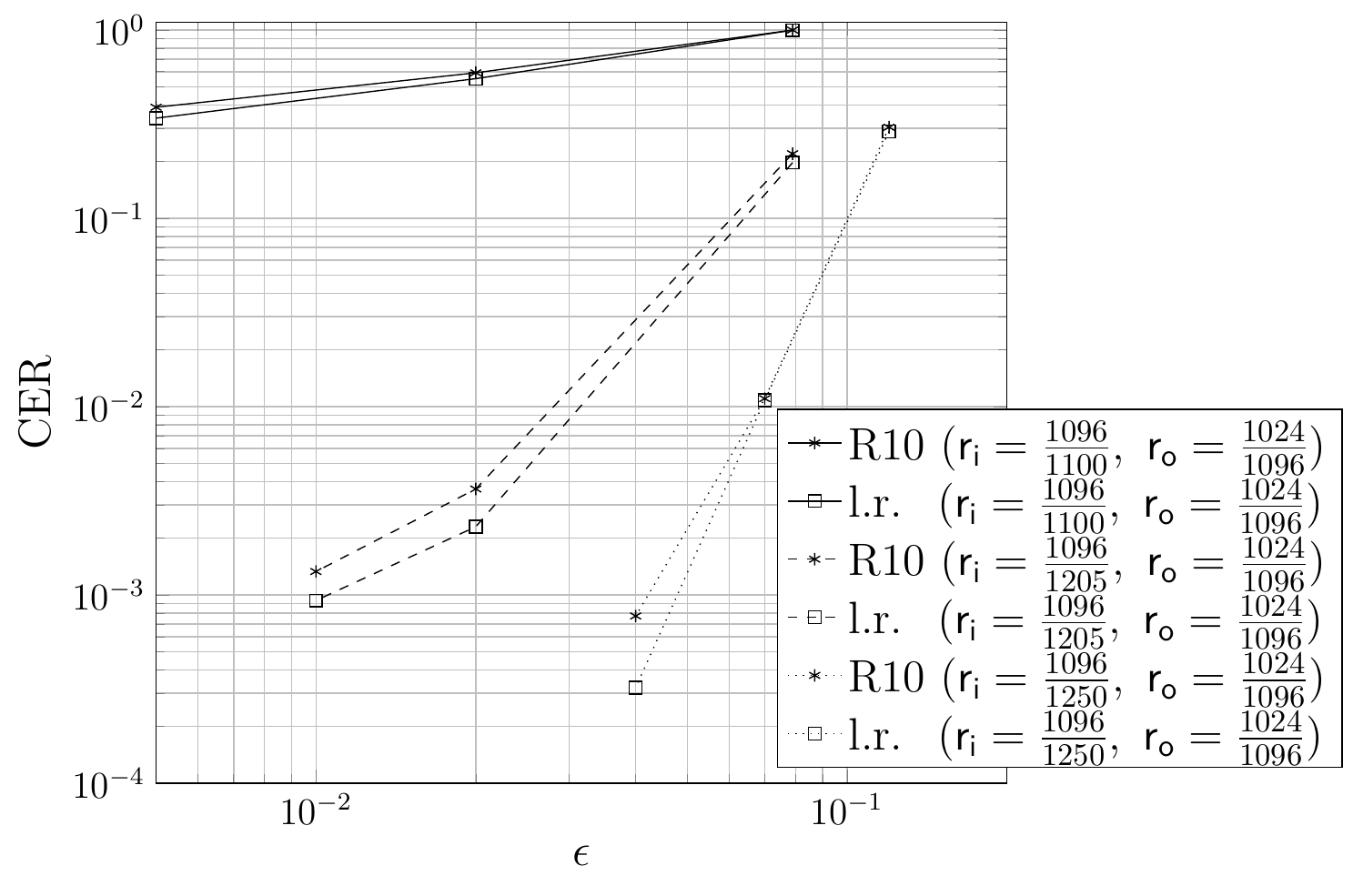}
        \caption[CER  vs.\ $\reloverhead$ for two Raptor codes, one with linear random outer code and the other with the standard outer code of R10 Raptor codes]{Average \ac{CER} for Raptor code ensembles using $\Omega(x) = 0.0098 x   + 0.4590  x^2 +  0.2110  x^3 +  0.1134 x^4 +   0.2068 x^5$ as output degree distribution and two different outer codes, the standard outer code of R10 Raptor codes and a linear random outer code, (l.r.) in the legend.}\label{fig:CER_compare}
        \end{centering}
\end{figure}


\section{Summary}\label{chap:raptor_fixed_rate_summary}
In this chapter we have analyzed the distance spectrum of fixed-rate Raptor codes with outer codes from the linear random ensemble. The expression of the average \acl{WE} and the growth rate of the \acl{WE} as functions of the rate of the outer code and the rate and degree distribution of the inner \ac{LT} code have been derived. Based on these expressions necessary and sufficient conditions to have Raptor code ensembles with a positive typical minimum distance were derived. \fran{These conditions lead to a region $\region$ defined in the $(\ri,\ro)$ plane, where $\ri$ is the rate of the inner \ac{LT} code and $\ro$ is the rate of the outer code. Points inside region $\region$ correspond to fixed-rate Raptor code ensembles with a positive typical minimum distance.} Moreover, a simple necessary condition has been developed too, that only requires  (besides the inner and outer code rates) the knowledge of the average output degree. \fran{This condition leads to a region $\outer$  in the $(\ri,\ro)$ plane, which provides an outer bound to $\region$ and holds for all degree distributions having the same average output degree.} The applicability of the theoretical results has been demonstrated by means of simulation results. Furthermore, simulation results have been presented that show that the performance of Raptor codes with  linear random outer codes is close to that of Raptor codes with the standard outer code of R10 Raptor codes. Thus, we speculate that the results obtained for Raptor codes with linear random outer codes hold as first approximation for standard R10 Raptor codes.



\chapter{Parallel Concatenated Fountain Codes} \label{chap:parallel}
\ifpdf
    \graphicspath{{Chapter6/Chapter6Figs/PNG/}{Chapter6/Chapter6Figs/PDF/}{Chapter6/Chapter6Figs/}}
\else
    \graphicspath{{Chapter6/Chapter6Figs/EPS/}{Chapter6/Chapter6Figs/EPS}}
\fi

In this chapter we present a novel fountain coding scheme that is specially suited for small values of $k$.
The proposed scheme consists of a parallel concatenation of  a $(\nc,k)$ block code with a \acf{LRFC}.
 The scheme is specially interesting when the block code is \acf{MDS}. The remainder of this chapter is organized as follows. In Section~\ref{sec:concatenation} the proposed concatenated scheme is described. In Section \ref{sec:concat_mds_pre} of the scheme is analyzed for the case in which the block code is \ac{MDS}. In
 Section~\ref{sec:generic_precode} the performance of the scheme is analyzed for a generic block code in the fixed-rate setting.  In Section~\ref{sec:results} numerical results are presented for a multicasting system making use of the proposed fountain coding scheme, and the performance is compared with that of \ac{LRFC} codes. Finally, a summary of the results in this chapter is presented in Section~\ref{sec:conc}.


\section{Scheme Description}\label{sec:concatenation}

Let us define the source block $\mathbf{\isymbc}=(\isymbc_1, \isymbc_2, \ldots, \isymbc_k)$ as a vector of source symbols belonging to a finite field of order $q$, i.e., $\mathbf{\isymbc}\in \mathbb {F}_q^k$. In the proposed scheme, the source block is first encoded via a $(\nc,k)$ linear block code $\code'$ over $\mathbb {F}_q$ with generator matrix $\Gpre$. We will make use of Raptor code terminology and call this block code also precode. The encoded block is hence given by
\[
\mathbf{\osymbpre}=\mathbf{\isymbc}\Gpre=(\osymbpre_1,\osymbpre_2,\ldots,\osymbpre_{\nc}).
\]
Additional redundancy symbols can be obtained using an \ac{LRFC}, that is,  by computing linear random combinations of the $k$ source symbols as
\begin{align}
\osymbconcat_i=\osymblrfc_{i-\nc}=\sum_{j=1}^{k}g_{j,i}\isymbc_j, \qquad i=\nc+1,\ldots, \lc
\label{eq:encoding}
\end{align}
where the coefficients $g_{j,i}$ in \eqref{eq:encoding} are selected from $\mathbb {F}_q$ uniformly at random.

Thus, the  encoded sequence corresponds to:
\[
\mathbf{\osymbconcat}=(\mathbf{\osymbpre}|\mathbf{\osymblrfc}).
\]
Where, $\mathbf{\osymbpre}$ and $\mathbf{\osymblrfc}$ are respectively the output of the block code and the \ac{LRFC}. The generator matrix of the concatenated code has the form
\begin{align}
\mathbf{G}=
\underbrace{\left(\begin{array}{cccc}
  g_{1,1} & g_{1,2} & \ldots & g_{1,\nc} \\
  g_{2,1} & g_{2,2} & \ldots & g_{2,\nc} \\
  \vdots  & \vdots  & \ddots & \vdots  \\
  g_{k,1} & g_{k,2} & \ldots & g_{k,\nc}
\end{array}\right|}_{\Gpre}  \underbrace{\left|\begin{array}{cccc}
  g_{1,\nc+1} & g_{1,\nc+2} & \ldots  & g_{1,\lc} \\
  g_{2,\nc+1} & g_{2,\nc+2} & \ldots  & g_{2,\lc} \\
  \vdots    & \vdots    & \ddots  & \vdots  \\
  g_{k,\nc+1} & g_{k,\nc+2} & \ldots  & g_{k,\lc}
\end{array}\right)}_{\Glrfc}
\end{align}
where $\Gpre$ and $\Glrfc$ are the generator matrices of the precode and the \ac{LRFC} respectively. The encoded sequence can be written as:
\[
\mathbf{\osymbconcat}=\mathbf{\isymbc}\mathbf{G}=(\osymbconcat_1,\osymbconcat_2,\ldots,\osymbconcat_\lc).
\]

This scheme can actually be seen as a parallel concatenation of the linear block code $\code '$ and of an \ac{LRFC} (Figure~\ref{fig:par}),
where the first $\nc$ output symbols are the codeword symbols of the block code.\footnote{This represents a difference with Raptor codes, for which the output of the precode is further encoded by a \ac{LT} Code. Hence the first $n$ output symbols of a Raptor encoder do not coincide with the output of the precode.}

We remark that, being the \ac{LRFC} rateless, the number of  output symbols $\lc$ can grow indefinitely. Thus, the proposed scheme is also rateless.
The encoder may be seen as a modified fountain  encoder, whose first $\nc$ output symbols $(\osymbconcat_1,\osymbconcat_2,\ldots,\osymbconcat_\nc)$ correspond to the codeword output by the encoder of $\code'$, whereas the following $\lc-\nc$ symbols are the output of the  \ac{LRFC} encoder. A related rateless construction was proposed in \cite{kasai:fountain}, where a mother non-binary \ac{LDPC} code was modified by replicating the codeword symbols (prior  multiplication by a non-zero field element) and thus by (arbitrarily) lowering the code rate. In our work, the mother code corresponds to the block code, and the additional redundant symbols are produced by the \ac{LRFC}.

\begin{figure}
\begin{center}
\includegraphics[width=0.65\columnwidth]{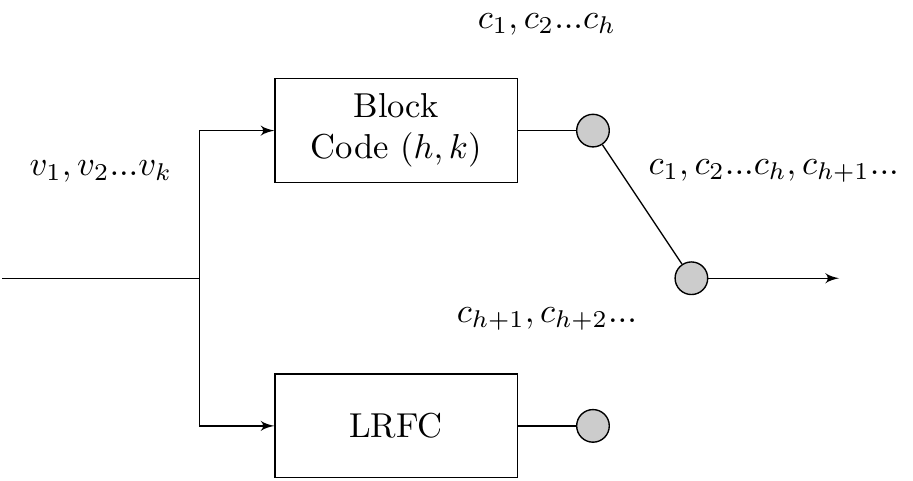}
\centering \caption{Novel fountain coding scheme seen as a parallel concatenation of a $(\nc,k)$ linear block code and a \ac{LRFC}.}\label{fig:par}
\end{center}
\end{figure}

We will assume that the output symbols $\mathbf{\osymbconcat}$ are transmitted over an erasure channel with erasure probability $\erasprob$. Let us assume that at the receiver side $m=k+\absoverhead$ output symbols are collected, where $\absoverhead$ is the (absolute) receiver overhead. Let us denote by $\msr J=\{j_1, j_2, \ldots, j_{m}\}$ the set of the indices of the output symbols of $\mathbf{\osymbconcat}$ that have been collected by a specific receiver. The received vector $\mathbf{\rosymb}$ is hence given by
\[
\mathbf{\rosymb}=(\rosymb_1, \rosymb_2, \ldots,
y_{m})=(\osymbconcat_{j_1},\osymbconcat_{j_2},\ldots,\osymbconcat_{j_{m}})
\]
and it can be related to the source block $\mathbf{\isymbc}$ as
\[
\mathbf{\rosymb}=\mathbf{\isymbc}\Grx
\]
 Here, $\Grx$ denotes the $k\times m$ matrix made by the columns of $\mathbf{G}$ with indices in $\msr J$, i.e.,
\[
\Grx= \left(\begin{array}{cccc}
  g_{1,j_1} & g_{1,j_2} & \ldots & g_{1,j_{m}} \\
  g_{2,j_1} & g_{2,j_2} & \ldots & g_{2,j_{m}} \\
  \vdots  & \vdots  & \ddots & \vdots  \\
  g_{k,j_1} & g_{k,j_2} & \ldots & g_{k,j_{m}}
\end{array}\right).
\]
The recovery of $\mathbf{\isymbc}$ reduces to solving the system of $m=k+\absoverhead$ linear equations in $k$ unknowns
\begin{align}
\Grx^T\mathbf{\isymbc}^T=\mathbf{y}^T.
\label{eq:solve}
\end{align}
The solution of \eqref{eq:solve} can be obtained \fran{by means of a \acf{ML} decoding algorithm (e.g., via Gaussian elimination or via inactivation decoding)}  if and only if $\textrm{rank} (\tilde{\mathbf{G}})=k$.

\section{Maximum Distance Separable Precode} \label{sec:concat_mds_pre}
\label{sec:bounds}

In this section we consider the case in which the precode is \acf{MDS}. \fran{The reasons to consider \ac{MDS} codes are twofold. First, \ac{MDS} codes meet the Singleton bound with equality, which means that over an erasure channel, decoding succeeds with probability one if the receiver is able to collect at least $k$ symbols. Second, the use of \ac{MDS} precodes leads to a very simple performance model, as it will be shown in this section. In particular}, when binary codes are used, we assume $(k+1,k)$ \acf{SPC} codes. When operating on higher order finite fields, we consider  \ac{GRS} codes.

Based on the bounds in {\eqref{eq:tightbounds}}, we will derive tight upper and lower bounds for the decoding failure probability $\Pf$ of our parallel concatenated fountain coding scheme for memoryless erasure channels.  In our analysis we will assume that an encoded sequence $\mathbf{\osymbconcat}$ composed of $\lc \ge \nc$ symbols is transmitted over a memoryless erasure channel with erasure probability of $\erasprob$.\footnote{The case $l < n$ is not considered since it is equivalent to shortening the linear block code.}


In our analysis we will consider two different cases. In the first case among the $m$ received symbols, at least $k$ have indices in $\{1, 2, \ldots, \nc\}$. That is, at least $m'\geq k$ symbols produced by the linear block encoder have been received. In this case, since the precode $\code'$ is \ac{MDS}, the system of equations in \eqref{eq:solve} will be solvable with probability $1$.
The probability of this event (collecting at least $k$ output symbols out of the first $\nc$) is given by:
\begin{align}
Q(\erasprob)=\sum_{i=k}^\nc {\nc \choose i} (1-\erasprob)^i
\erasprob^{\nc-i}.
\end{align}

In the second, less trivial case,  $m'<k$ among the $m$ received symbols have indices in $\{1, 2, \ldots, \nc\}$. That is, less than $k$ output symbols from the precode are collected. This second case is complementary to the first one and will occur with probability
\[
P(\erasprob)=1-Q(\erasprob).
\]
In this case, matrix $\Grx^T$ can be partitioned as
\begin{align}
\Grx^T=\left(\begin{array}{c} \Grxpre^T\\
\Grxlrfc^T \end{array}\right)=\left(\begin{array}{cccc}
  g_{1,j_1} & g_{2,j_1} & \ldots & g_{k,j_1} \\
  g_{1,j_2} & g_{2,j_2} & \ldots & g_{k,j_2} \\
  \vdots  & \vdots  & \ddots & \vdots  \\
  g_{1,j_{m'}} & g_{2,j_{m'}} & \ldots & g_{k,j_{m'}}\\ \hline
  g_{1,j_{m'+1}} & g_{2,j_{m'+1}} & \ldots & g_{k,j_{m'+1}} \\
  g_{1,j_{m'+2}} & g_{2,j_{m'+2}} & \ldots & g_{k,j_{m'+2}} \\
  \vdots  & \vdots  & \ddots & \vdots  \\
  g_{1,j_{m}} & g_{2,j_{m}} & \ldots & g_{k,j_{m}}
\end{array}\right). \label{eq:G_partition}
\end{align}
The fact that the precode $\code'$ is  \ac{MDS} assures that $\textrm{rank} (\Grxpre)=m'$, i.e., the first $m'$ rows of
$\tilde{\mathbf{G}}^T$ are linearly independent. The remaining rows of $\Grx^T$ correspond to $\Grxlrfc^T$ that has size
$m''\times k$, with $m''=m-m'$. The elements in $\Grxlrfc^T$  are uniformly distributed  in $\mathbb F _q$. It follows that the matrix in
\eqref{eq:G_partition} can be put (via column permutations over $\Grx^T$ and row permutations/combinations over $\Grxpre^T$) in the form
\begin{align}
\hat{\mathbf{G}}^T=\left(\begin{array}{ccc} \mathbf{I} & \vline &
\mathbf{A} \\\hline
\mathbf{0} & \vline & \mathbf{B}\\
\end{array}\right), \label{eq:G_partition_manipulation}
\end{align}
where $\mathbf I$ is the $m' \times m'$ identity matrix, $\mathbf{0}$ is a $m'' \times m'$ all-$0$ matrix, and  $\mathbf{A}$, $\mathbf{B}$ have respective sizes $m' \times (k-m')$ and $m''
\times (k-m')$. The lower part of $\hat{\mathbf{G}}^T$, given by $\left(\mathbf{0} | \mathbf{B}\right)$, is obtained by adding to each row of $\Grxlrfc^T$ a linear  combination of rows from $\Grxpre^T$, in a way that the $m'$ leftmost columns of $\Grxlrfc^T$ are all set to zero.
\fran{Thus, the elements of submatrix $\mathbf{B}$ are obtained by adding a deterministic symbol of $\mathbb F_q$ to an element of $\Grxlrfc^T$, which is uniformly distributed in $\mathbb F_q$. It follows that}
the statistical properties of $\Grxlrfc^T$ are inherited by the $m'' \times (k-m')$ submatrix $\mathbf{B}$, whose elements are, hence, uniformly distributed in $\mathbb F_q$. It follows that \eqref{eq:solve} is solvable if and only if $\mathbf{B}$ is full rank, i.e., if and only if $\textrm{rank}(\mathbf{B})=k-m'$.
\noindent
Let us denote the decoding failure event as $\fevent$. The conditional decoding failure probability can be expressed as
\begin{align}
\Pr\{\fevent|m',m'<k,\absoverhead\}=\Pr\{\textrm{rank}(\mathbf{B})<k-m'\}.\label{eq:cond_F_prob_1}
\end{align}
Matrix $\mathbf{B}$ is a $m'' \times (k-m') = (k+\absoverhead-m') \times (k-m')$ random matrix having $\absoverhead$ rows in excess with respect to the number of columns. Hence, we can replace  \eqref{eq:cond_F_prob_1} in \eqref{eq:tightbounds}, obtaining the bounds
\begin{align}
q^{-\absoverhead-1}\leq\Pr\{F|m',m'<k,\absoverhead\}<\frac{1}{q-1}q^{-\absoverhead}.
\label{eq:cond_F_prob_2_bounds}
\end{align}
The bounds in  \eqref{eq:tightbounds} are independent from the size of the matrix, they
depend only on the overhead. Therefore, we can remove the conditioning on $m'$ from \eqref{eq:cond_F_prob_2_bounds}, leaving
\[
q^{-\absoverhead-1}\leq\textrm{Pr}\{F|m'<k,\absoverhead\}<\frac{1}{q-1}q^{-\absoverhead}.
\]
The failure probability can now be expressed as
\begin{align}
\begin{array}{cc}
\Pf=& \Pr\{F|m'<k,\absoverhead\}\Pr\{m'<k\}\\
&+\Pr\{F|m'\geq k,\absoverhead\}\Pr\{m'\geq k\}
\label{eq:general_bound}
\end{array}
\end{align}
where $\Pr\{F|m'\geq k,\absoverhead\}=0$ (since at least $k$ of the symbols produced by the \ac{MDS} encoder have been collected) and
$\Pr\{m'<k\}=P(\erasprob)$. It results that
\begin{align}
P(\erasprob) q^{-\absoverhead-1}\leq
\Pf<P(\erasprob)\frac{1}{q-1}q^{-\absoverhead}.
\label{eq:final_bounds}
\end{align}
If one inspects \eqref{eq:tightbounds} and {\eqref{eq:final_bounds}}, one can see how the bounds on the failure probability of the concatenated scheme are scaled down by a factor
$P(\erasprob)$, which is a monotonically increasing function of $\erasprob$. Therefore, when the channel conditions are \emph{bad} (i.e., for large $\erasprob$) $P(\erasprob)\rightarrow
1$, and the bounds in {\eqref{eq:final_bounds}} tend to coincide with the bounds in \eqref{eq:tightbounds}. On the other hand, if the channel conditions are \emph{good} (i.e., for small $\erasprob$), most of the time $m'\geq k$ symbols produced by the linear block encoder are received and decoding succeeds (recall the assumption of \ac{MDS} code). In these conditions, $P(\erasprob)\ll 1$, and according to the bounds in  {\eqref{eq:final_bounds}} the failure probability may decrease by  several orders of magnitude.

Given the fact that the probability of decoding failure of the concatenated scheme is a function of the erasure probability, the scheme is not universal anymore in a strict sense\footnote{In this concatenated fountain coding scheme the output symbols are not statistically identical and independent from each other. As a consequence its performance depends on the channel erasure rate and its performance will also vary if the channel is not memoryless.}. At low channel erasure probabilities the proposed scheme will outperform \acp{LRFC}, whereas for large erasure probabilities it will perform as \acp{LRFC}. Hence, the performance of our scheme is lower bounded by that of \ac{LRFC}, which are universal codes (their performance \fran{depends only on the number of output symbols received and not on the erasure probability of the channel}). Therefore, one could argue that the proposed scheme is universal in a broad sense, although its probability of decoding failure does depend on the erasure probability of the channel

Figure~\ref{GF_2} shows the probability of decoding failure $\Pf$ as a function of the number of overhead symbols $\absoverhead$ for a concatenated code built using a $(11,10)$ \ac{SPC} code {over} $\mathbb {F}_2$. We can observe how, for lower erasure probabilities, the gain in performance of the concatenated code with respect to a \ac{LRFC}  is larger. For $\erasprob=0.01$ the decoding failure probability is more than $2$ orders of magnitude lower than that of a \ac{LRFC}.
\begin{figure}
\begin{center}
\includegraphics[width=\figwbigger,draft=false]{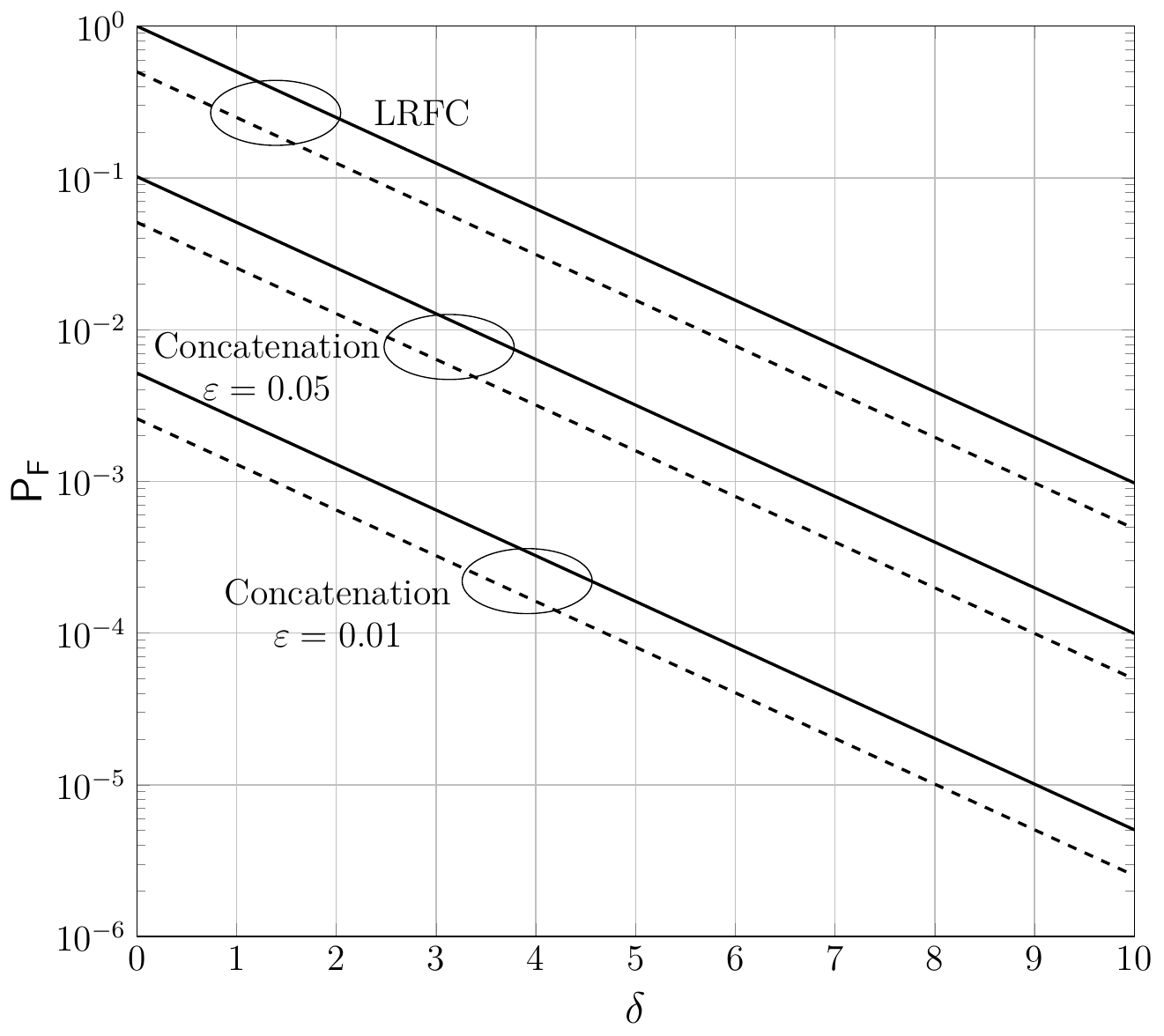}
\centering \caption[$\Pf$ vs.\ $\absoverhead$  for a concatenated code built using a $(11,10)$ \ac{SPC} over $\mathbb {F}_{2}$]{$\Pf$ vs.\ $\absoverhead$  for a concatenated code built using a $(11,10)$ \ac{SPC} code over $\mathbb {F}_{2}$ for  different values of erasure probability $\erasprob$.
Upper bounds are represented by solid lines and lower bounds are represented by dashed lines.} \label{GF_2}
\end{center}
\end{figure}

Figure~\ref{GF_16} shows the probability of decoding failure vs.\ the number of overhead symbols $\absoverhead$ for the concatenation of a $(15,10)$ \ac{RS} and a \ac{LRFC} over $\mathbb {F}_{16}$. The performance of the concatenated code is compared with that of the \ac{LRFC} built on the same field for different erasure probabilities. In this case the decrease in terms of probability of decoding failure is even more notable than in binary case. For a channel with an erasure probability $\erasprob=0.05$, the probability of decoding failure of the concatenated scheme is $4$ orders of magnitude lower than {that of} the \ac{LRFC}.  \fran{If we compare Figure~\ref{GF_16} with Figure~\ref{GF_2}, we can observe how the upper and lower bounds are closer to each other for the codes constructed over $\mathbb {F}_{16}$ compared to the binary codes. This effect stems from the fact that the bounds in \eqref{eq:tightbounds} become tighter as the Galois field order  $q$ increases.}
\begin{figure}
\begin{center}
\includegraphics[width=\figwbigger,draft=false]{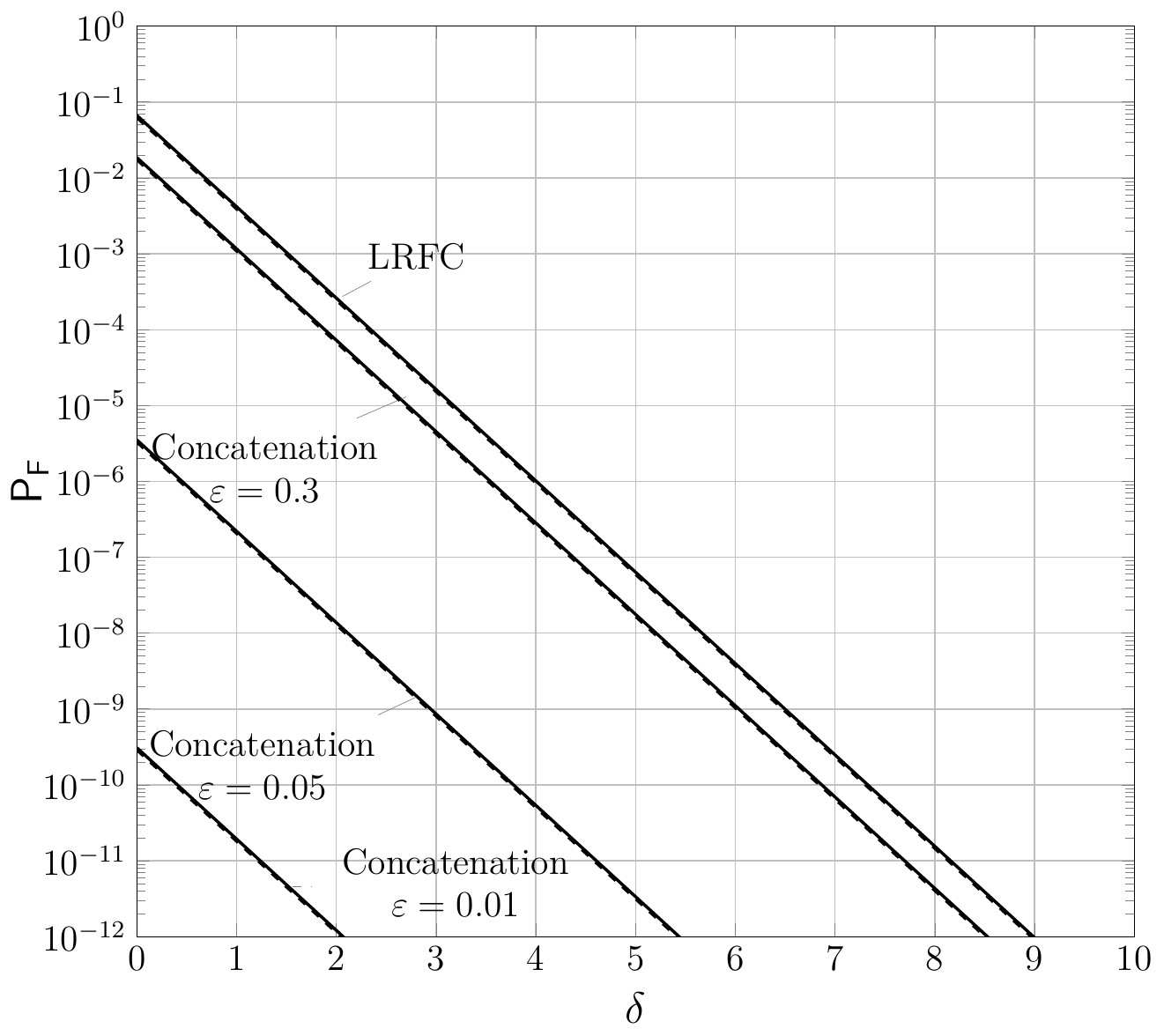}
\centering \caption[$\Pf$ vs.\  $\absoverhead$  for a concatenated code built using a $(15,10)$ \ac{RS} over $\mathbb {F}_{16}$]{$\Pf$ vs.\  $\absoverhead$  for a concatenated code built using a $(15,10)$ \ac{RS} over $\mathbb {F}_{16}$ for  different values of  of erasure probability $\erasprob$.
Upper bounds are represented by solid lines and lower bounds are represented by dashed lines.} \label{GF_16}
\end{center}
\end{figure}

Figure~\ref{GF_2_sim} shows the probability of decoding failure $\Pf$, as a function of the receiver overhead $\absoverhead$, obtained via Monte Carlo  simulations together with the bounds in \eqref{eq:final_bounds}. The results refer to a concatenation of a $(11,10)$ \ac{SPC} with an \ac{LRFC} over $\mathbb {F}_{2}$, and a channel with an erasure probability $\erasprob=0.1$. As expected, the simulation results tightly match the bounds.

Figure~\ref{GF_16_sim} shows similar simulation results for  a concatenation of a $(15,10)$ \ac{RS} code with an \ac{LRFC} over $\mathbb {F}_{16}$, for a channel erasure probability $\erasprob=0.1$. Also in this case, the results are very close to the bounds.

\begin{figure}
\begin{center}
\includegraphics[width=\figwbigger,draft=false]{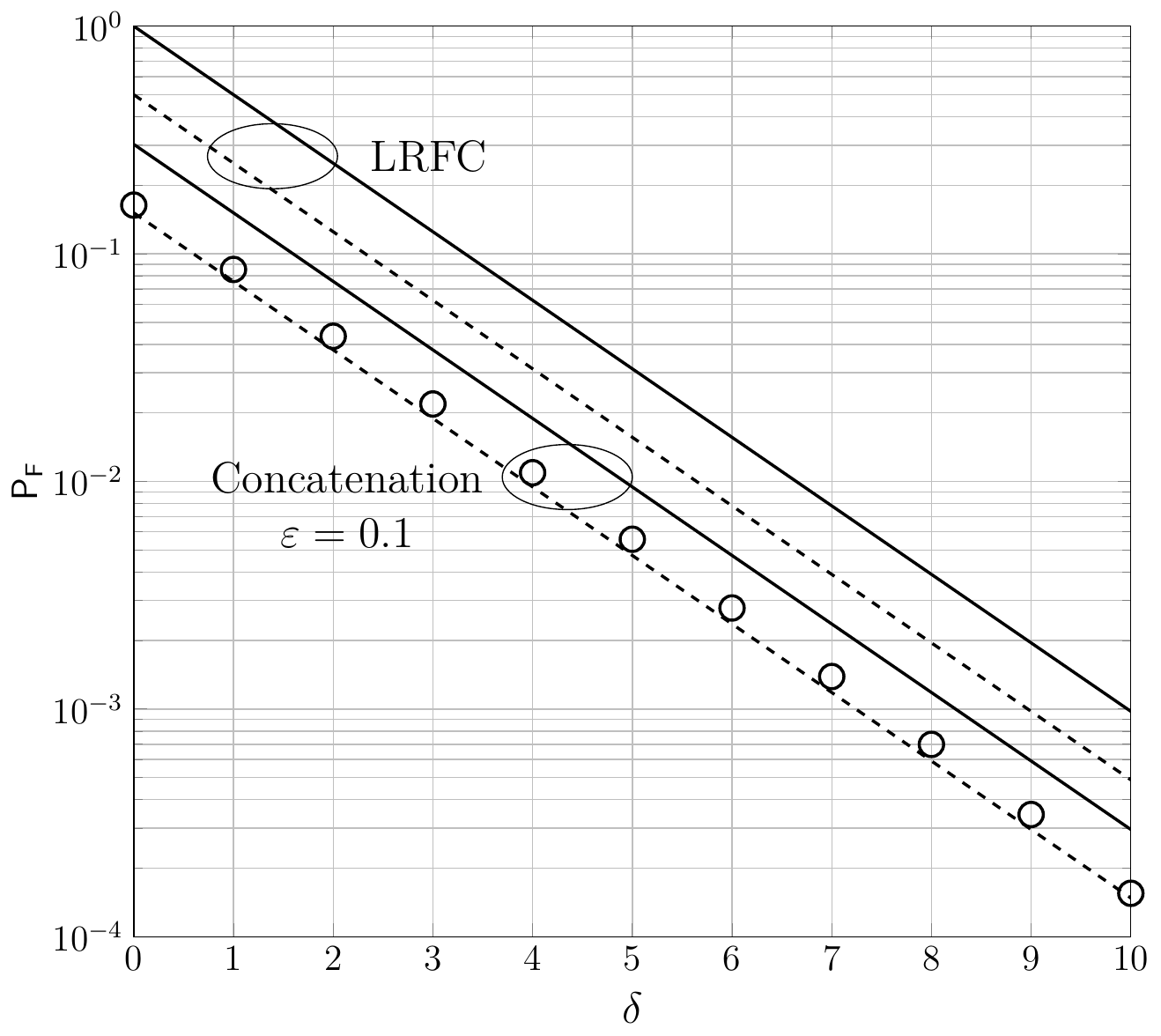}
\centering \caption[$\Pf$ vs.\ $\absoverhead$ symbols for
the concatenation of a $(11,10)$ \ac{SPC} code and a \ac{LRFC} over $\mathbb {F}_{2}$ and $\erasprob=0.1$]{$\Pf$ vs.\ $\absoverhead$ for
the concatenation of a $(11,10)$ \ac{SPC} code and a \ac{LRFC} over $\mathbb {F}_{2}$ and $\erasprob=0.1$. Upper bounds are represented by solid lines and lower bounds are represented by dashed lines. The points marked with '$\circ$' denote actual simulations.} \label{GF_2_sim}
\end{center}
\end{figure}
\begin{figure}
\begin{center}
\includegraphics[width=\figwbigger,draft=false]{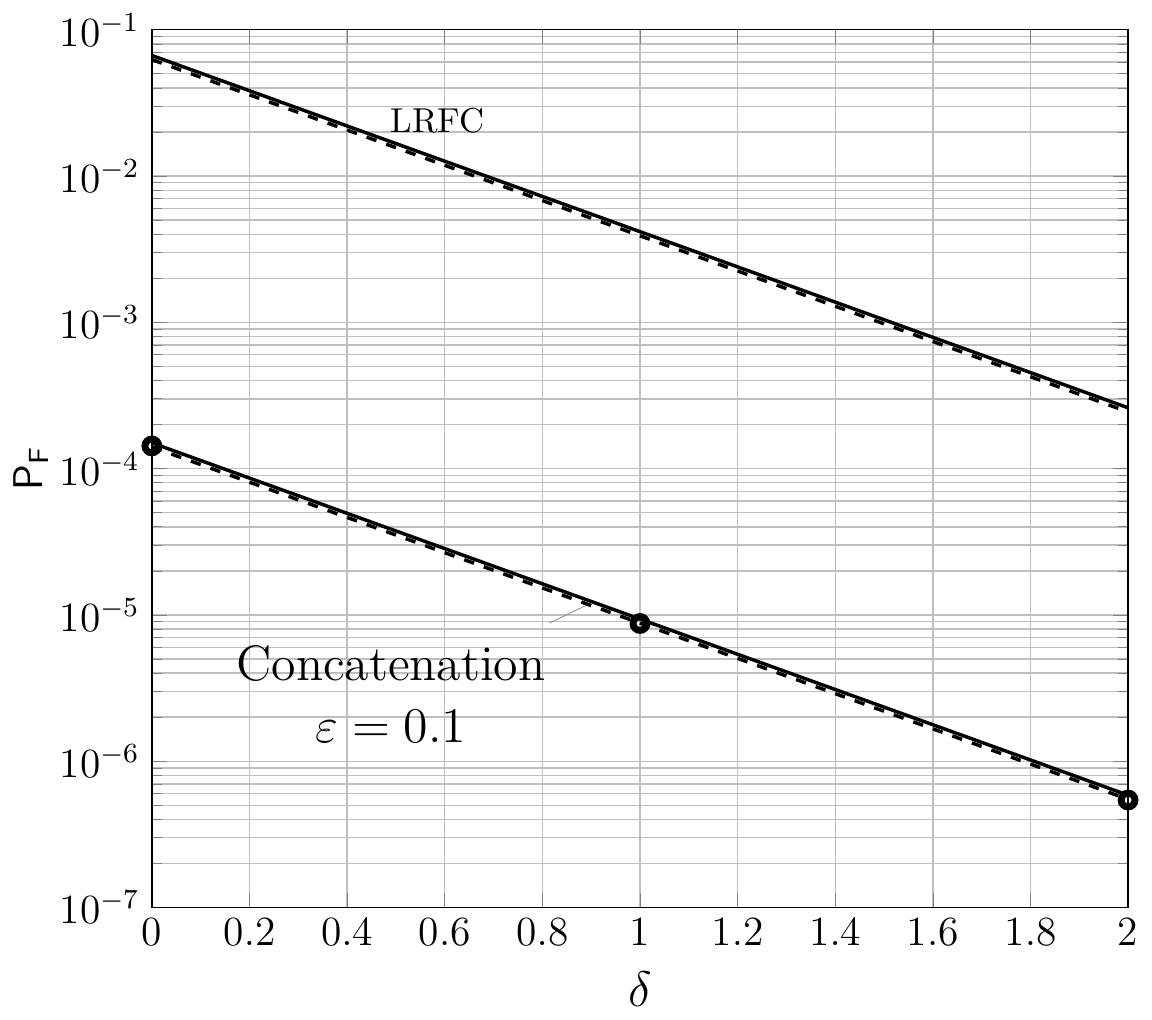}
\centering \caption[$\Pf$ vs.\ $\absoverhead$  for
the concatenation of a $(15,10)$ \ac{RS} and \ac{LRFC} over $\mathbb {F}_{16}$ and $\erasprob=0.1$]{$\Pf$ vs.\ $\absoverhead$  for
the concatenation of a $(15,10)$ \ac{RS} and \ac{LRFC} over $\mathbb {F}_{16}$ and $\erasprob=0.1$. Upper and lower bounds are represented by solid and dashed lines, respectively. The markers '$\circ$' denote simulations.} \label{GF_16_sim}
\end{center}
\end{figure}

\section{Generic Precode in a Fixed-Rate Setting}\label{sec:generic_precode}

Fountain codes are often used in a fixed-rate setting (see Chapter~\ref{chap:Raptor_fixed_rate}). In this context, the main advantage in the use of fountain codes with respect to block erasure correcting codes stems from the possibility of adapting code rate and block length to the transmission needs (e.g., channel conditions) in a flexible manner. In this section, we consider the concatenated scheme in the general case where the block code $\code'$ is not necessarily  \acf{MDS} in a fixed-rate setting. We derive the \acf{WE} of the concatenated code and use it to derive a tight upper bound on the block error probability of the code.

The coding scheme considered in this chapter is a parallel concatenation of a linear block code and a \ac{LRFC}, which for a finite rate setting is a random generator matrix code.  Let us denote as $\msr{C}(\code',k,\lc,q)$ the ensemble of codes obtained by a parallel-concatenation of a $(\nc,k)$ linear block code over $\mathbb{F}_q$, $\code'$, with all possible realizations of an \ac{LRFC}, where $k$ is the number of source symbols, $\lc$ is the total number of output symbols and $q$ is the finite field order. Note that the codes in the ensemble have fixed-rate $r=k/\lc$. We denote as $\wef_i(X)$ the \acf{CO-WEF} averaged over the ensemble $\msr{C}(\code',k,\lc,q)$  conditioned to the input source block having
weight $i$,
\begin{align}
\wef_i(X)= \sum_{w=1}^{\lc} \we_{i,w}X^w
\nonumber
\end{align}
where $\we_{i,w}$ is the average number of codewords of Hamming weight $w$ produced by Hamming weight-$i$ inputs, that is, $\we_{i,w}$ is the \acf{IO-WEF} of the code. For the ensemble of parallel-concatenated codes the average \acl{CO-WEF}  admits a very compact expression:
\begin{align}
\wef_i(X)= \frac{\wef_i^{\code'}(X) \wef_i^{\msr {L}(k,\hc,q)}(X)  }{{k \choose i}},
\label{eq:iowe}
\end{align}
where $\wef_i^{\code'}(X)$ is the \acl{CO-WEF} of the linear block code, and $\wef_i^{\msr {L}(k,\hc,q)}(X)$ is the average \acl{CO-WEF} of the ensemble $\msr {L}(k,\hc,q)$, being $\msr {L}(k,\hc,q)$ the ensemble of linear block codes over $\mathbb{F}_q$ with $k \times \hc$ generator matrix $\Glrfc$, with $\hc=\lc-\nc$.
 Let us assume that $\wef_i^{\code'}(X)$ is known.\footnote{In general, the derivation of the \acl{CO-WEF} $\wef_i^{\code'}(X)$ for a code is not trivial, unless the code $\code'$ (or its dual code) has small dimension \cite{MacWillimas77:Book}.} In this case, the derivation of  $\wef_{i,w}$ reduces to the calculation of  $\wef_i^{\msr {L}(k,\hc,q)}(X)$.

The average number of codewords of Hamming weight $w$ produced by Hamming weight-$i$ inputs for the ensemble ${\msr {L}(k,\hc,q)}$, ${\we_{i,w}^{\msr {L}(k,\hc,q)}}$,  is given by:
\begin{align}
\we_{i,w}^{\msr {L}(k,\hc,q)} = {{k \choose i}} {{\hc \choose w}} \probi^w \left( 1-\probi \right) ^{\hc-w},
\label{eq:iowe_LRFC_0}
\end{align}
where $\probi$ the probability of one of the  $\hc$ output symbols having a non-zero value conditioned to having an input of Hamming weight $i$. Given that  the coefficients of $\Glrfc$ are picked with uniform probability over $\mathbb{F}_q$, we have that{\footnote{{Note that when $i=0$ the encoder input is given by the all-zero word. Thus, the encoder output is zero with probability $1$ due to the linearity of the code ensemble $\msr {L}(k,\hc,q)$.}}}
\begin{align}
\begin {array}{lll}
&\probi = \frac{q-1}{q}&, \; i\neq 0 \\
&\probi = 0&, \; i = 0.
\end{array}
\label{eq:iowe_LRFC_1}
\end{align}
Using \eqref{eq:iowe}, \eqref{eq:iowe_LRFC_0} and \eqref{eq:iowe_LRFC_1} the \acl{CO-WEF} of our concatenated scheme is  obtained.

Once the \acl{CO-WEF} has been derived, the average \ac{WEF} $\wef(X)$  can be obtained from the average \acl{CO-WEF} by summing over all possible input weights
\[
\wef(X) = \sum_i \wef_i X^i.
\]
Finally, the average number of codewords of Hamming weight $w$ (average \acl{WE}) $\we_w$  is simply obtained as the coefficient of degree $w$ in the \ac{WEF}.
The average \acl{WE} of the concatenated ensemble can be used now to derive a tight upper bound on the expected block error probability for the codes of the ensemble using Di's upper bound, \eqref{eq:bound_Gavg} \cite{CDi2001:Finite}.

As an example, we consider a concatenated scheme where the block code is a binary $(63,57)$ Hamming code. The \acl{CO-WEF} $\wef_i(X)$ of a $(\nc=2^t-1,k=\nc-t)$ Hamming code is known from \cite{Chiaraluce:hamming} and corresponds to
\begin{align}
\wef(x,X) = &\frac {(1+x)^{2^{t-1}-t-1}} {2^t} \times  \Big( 2^t(1-x)^{2^{t-1}-t} (1-xX)^t \\
& - (1-x)^{2^{t-1}} (1+X)^t + (1+x)^{2^{t-1}} (1+X)^t \Big)
\end{align}
where $\wef(x,X)=\sum_i \wef_i(X)x^i$.

Figure~\ref{dist_spectrum} shows the average weight enumerator vs.\ the normalized weight, $\nd= w/\lc$ for the concatenated code for rates $r=1/2$ and $r=1/4$ and the weight enumerator of the precode alone (Hamming). The figure also shows the average weight enumerator of binary linear random generator matrix based ensembles with the same block length and rate. Codes in this ensemble are characterized by having a $k \times n$ generator matrix whose elements are picked uniformly at random in the binary field. Thus the ensemble is equivalent to the fixed-rate \ac{LRFC} ensemble. The average weight enumerator of this ensemble can be found in \cite{barg01:random}. In the figure we can observe how the weight spectrum of the concatenated ensemble is better than that of the binary linear random ensemble, in the sense that for same block length and rate the expected multiplicity of low weight codewords is lower. This will lead to a lower error floor.

Figure~\ref{sim_hamming} shows the upper bounds on the \acf{CER} of the ensemble, as a function of the channel erasure probability $\erasprob$ for different coding rates. The solid lines represent the upper bound on the \ac{CER} in \eqref{eq:bound_Gavg}, and the dashed  and red lines represent respectively the Berlekamp random coding bound \cite{berlekamp:bound},
which is an upper bound on the average block error probability of random codes, and the Singleton bound, which provides the block error probability of \ac{MDS} codes. The markers represent the results of Monte Carlo simulations. In order to obtain average results for the ensemble, the \ac{CER} was averaged over $1000$ different \ac{LRFC} realizations. As expected, the bound in \eqref{eq:bound_Gavg} is very tight in all cases. Results for three different rates are shown in the figure. The highest rate corresponds to the use of the Hamming code alone, and the other two rates are $r=0.8$ and $r=0.5$. While for the Hamming code the performance lies in between the one of random codes and the one of \ac{MDS} codes, as the code rate decreases the performance of the scheme gets closer to the Berlekamp random coding bound, which means that for low rates our scheme performs almost as a random code. However, for high rates the concatenated scheme performs substantially better than a random code, whose performance would be very close to the Berlekamp bound.

\begin{figure}
\begin{center}
\includegraphics[width=\figw,draft=false]{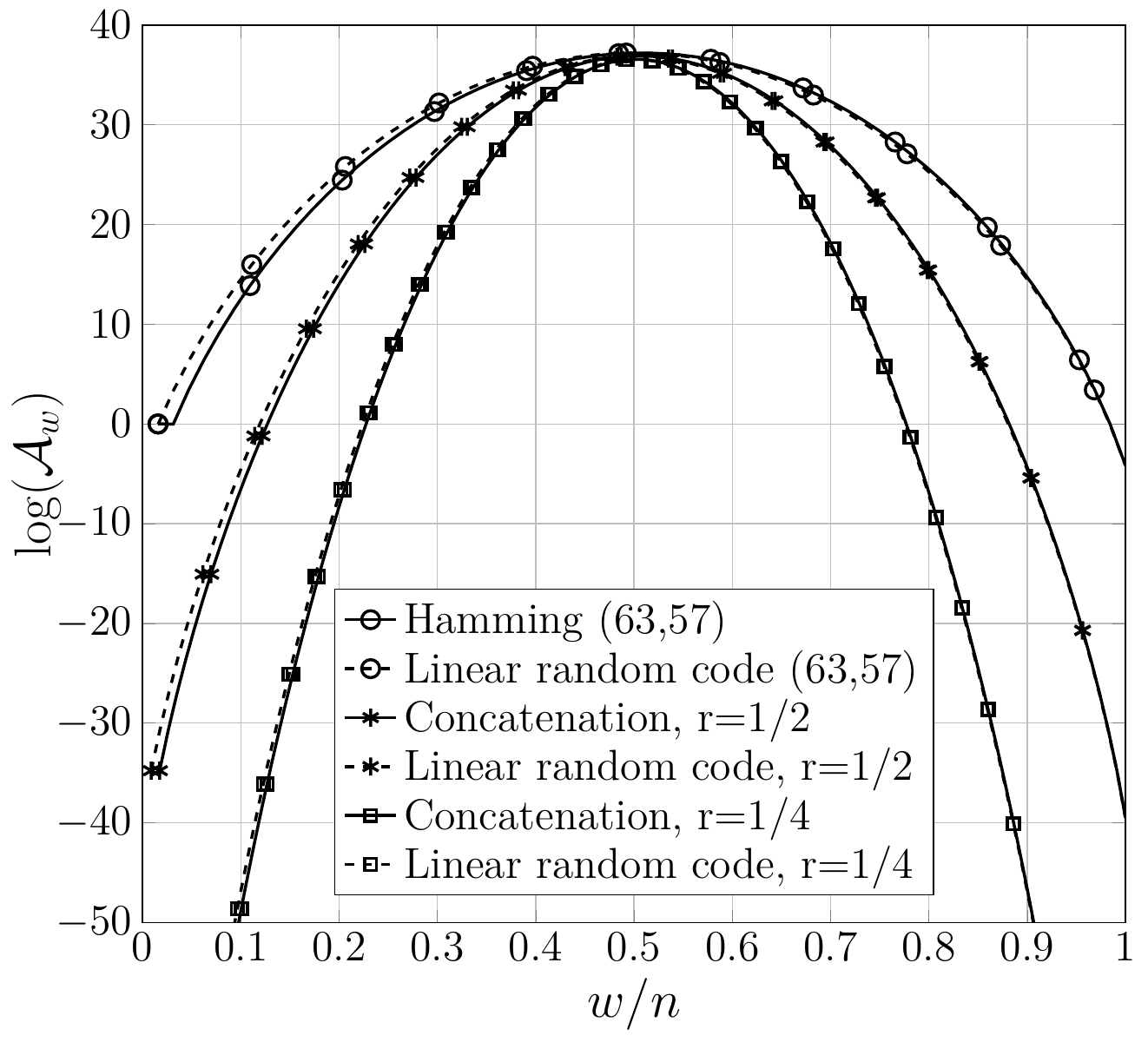}
\centering \caption[$\log (\we_w)$ vs.\ $w/\lc$ for the concatenation of a (63,57) Hamming code with a \ac{LRFC} code in $\mathbb {F}_{2}$]{$\log (\we_w)$ vs.\ $w/\lc$ for the concatenation of a (63,57) Hamming code with a \ac{LRFC} code in $\mathbb {F}_{2}$ and for the concatenated scheme with rates $r=\frac{1}{2}$ and $r=\frac{1}{4}$.}\label{dist_spectrum}
\end{center}
\end{figure}
\begin{figure}
\begin{center}
\includegraphics[width=\figw,draft=false]{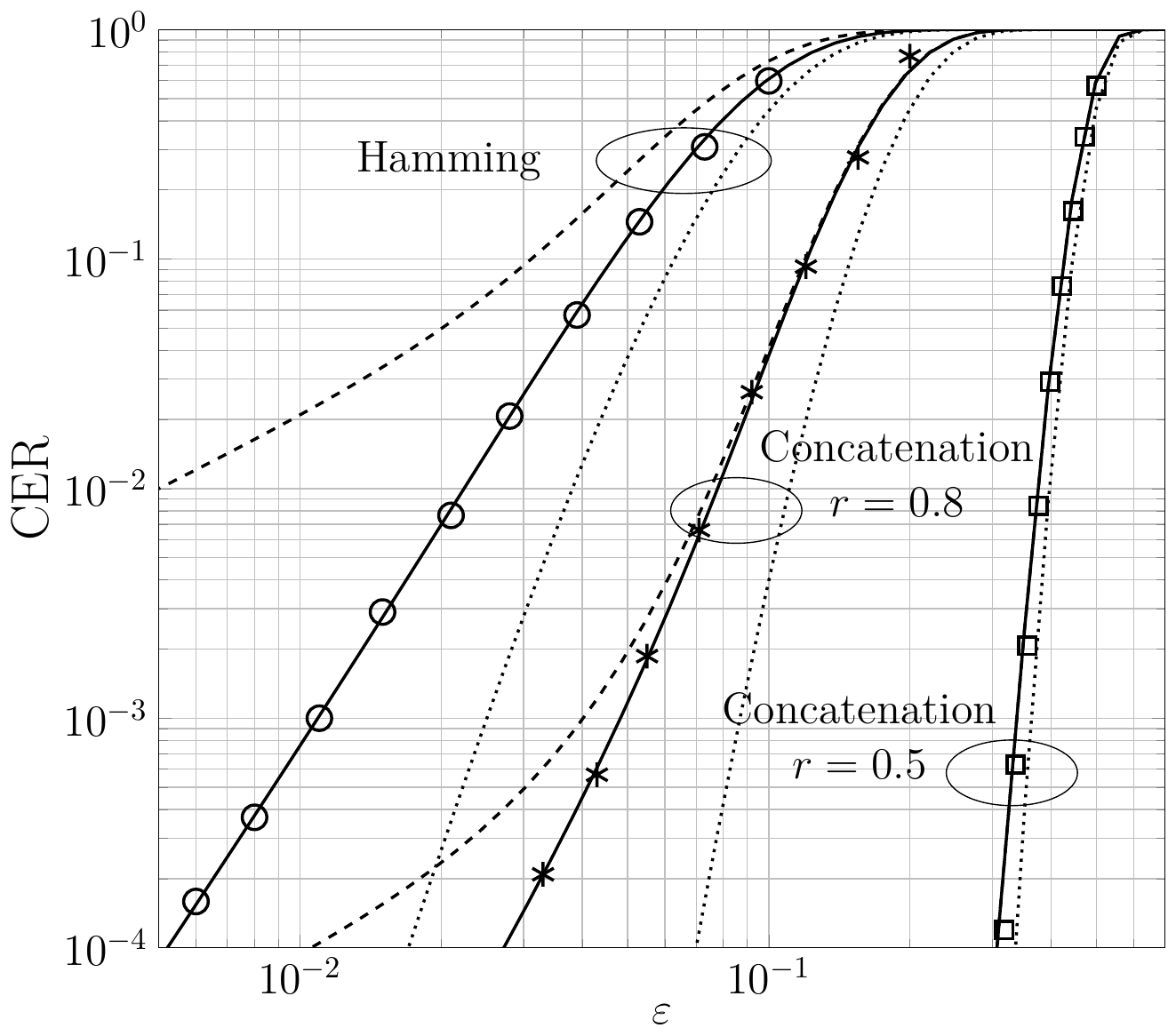}
\centering \caption[CER vs.\ erasure probability $\erasprob$ for the concatenation of a (63,57) Hamming code with a \ac{LRFC} code in $\mathbb {F}_{2}$]{CER vs.\ erasure probability $\erasprob$ for the concatenation of a (63,57) Hamming code with a \ac{LRFC} code in $\mathbb {F}_{2}$. The markers represent the result of Monte Carlo simulations. The solid line represents the upper bound in \cite{CDi2001:Finite}, and the dashed and  dotted lines represent the Berlekamp random coding bound and the Singleton bound respectively.}\label{sim_hamming}
\end{center}
\end{figure}

\section{Numerical Results}\label{sec:results}

In this section we investigate the performance of the concatenated scheme in a reliable multicasting scheme. Let us assume a transmitter wants to deliver a source block (a data file) to a set of $N$ of receivers. We will assume that the erasure channels from the transmitter to the different receivers are independent and have an identical erasure probability $\erasprob$. Furthermore, we assume that the receivers send an acknowledgement to the transmitter whenever they successfully decode the source block though an ideal (error- and delay-free) feedback channel.  After retrieving all the acknowledgments, the transmitter stops encoding additional symbols from the source block.

Let us denote by $\Delta$ the number of symbols in excess with respect to $k$ transmitted by the sender. We refer to $\Delta$ as the transmission overhead. When  $k+\Delta$ symbols have been transmitted, the probability that a specific receiver gathers exactly $m$ symbols is
\begin{align}
\ S\left(\Delta,m\right) = \binom{k+\Delta}{m}(1-\erasprob)^{m}\erasprob^{k+\Delta-m}.
\label{system_prob m}
\end{align}
The probability of decoding failure at the receiver given that the transmitter has sent $k+\Delta$ symbols is hence
\begin{align}
\Pse =& \sum_{m=0}^{k-1}\ S\left(\Delta,m\right)+\\
& + \sum_{m=k}^{k+\Delta}\ S\left(\Delta,m\right) \Pf|(\absoverhead=m-k,\erasprob).
\end{align}
Let us define the error probability in our system, $\Pe$, as he probability that at least one receiver is not able to decode the source block. This probability is given by
\begin{align}
\Pe (N,\Delta,\erasprob) = 1-(1-\Pse)^{N}
\label{system_failure_one user}
\end{align}
Observe that $\Pe(N,\Delta,\erasprob)$ can be easily bounded by means of \eqref{eq:final_bounds}. Following this approach, we compare the performance of the proposed concatenation to that of \acp{LRFC} and to that of an ideal fountain code. Let us recall that for an ideal fountain code the probability of decoding failure is zero whenever $k$ or more output symbols are collected (see Section~\ref{sec:fountain_bounds}).

We consider a system with $N=10^{4}$ receivers and a channel with an erasure probability $\erasprob=0.01$. The performance of \ac{LRFC} codes over $\mathbb {F}_{2}$ and $\mathbb {F}_{16}$ is depicted in Figure~\ref{sim_sender_side} together with that of two concatenated schemes: a concatenation of a $(11,10)$ \ac{SPC} code with a \ac{LRFC} code over $\mathbb {F}_{2}$, and a concatenation of a $(15,10)$ \ac{RS} code and a \ac{LRFC} code over $\mathbb {F}_{16}$.
We can observe how the binary concatenated scheme outperforms the binary \ac{LRFC}. For example, in order to achieve a target probability of error $\Pe = 10^{-4}$ the concatenated scheme needs only $\Delta=20$ overhead symbols whereas the
\ac{LRFC} requires a transmission overhead $\Delta=27$. In the binary case, both the \ac{LRFC} and the concatenated scheme are far from the performance of an ideal fountain code. If we now look at the non binary case, we can observe how the performance gap of the \ac{LRFC} w.r.t an ideal fountain code is much smaller than in the binary case. Furthermore, we can observe how the non-binary concatenated scheme is able to improve the performance of the \ac{LRFC} and almost completely close the performance gap w.r.t. an ideal fountain code.
\begin{figure}
\begin{center}
\includegraphics[width=\figwbigger,draft=false]{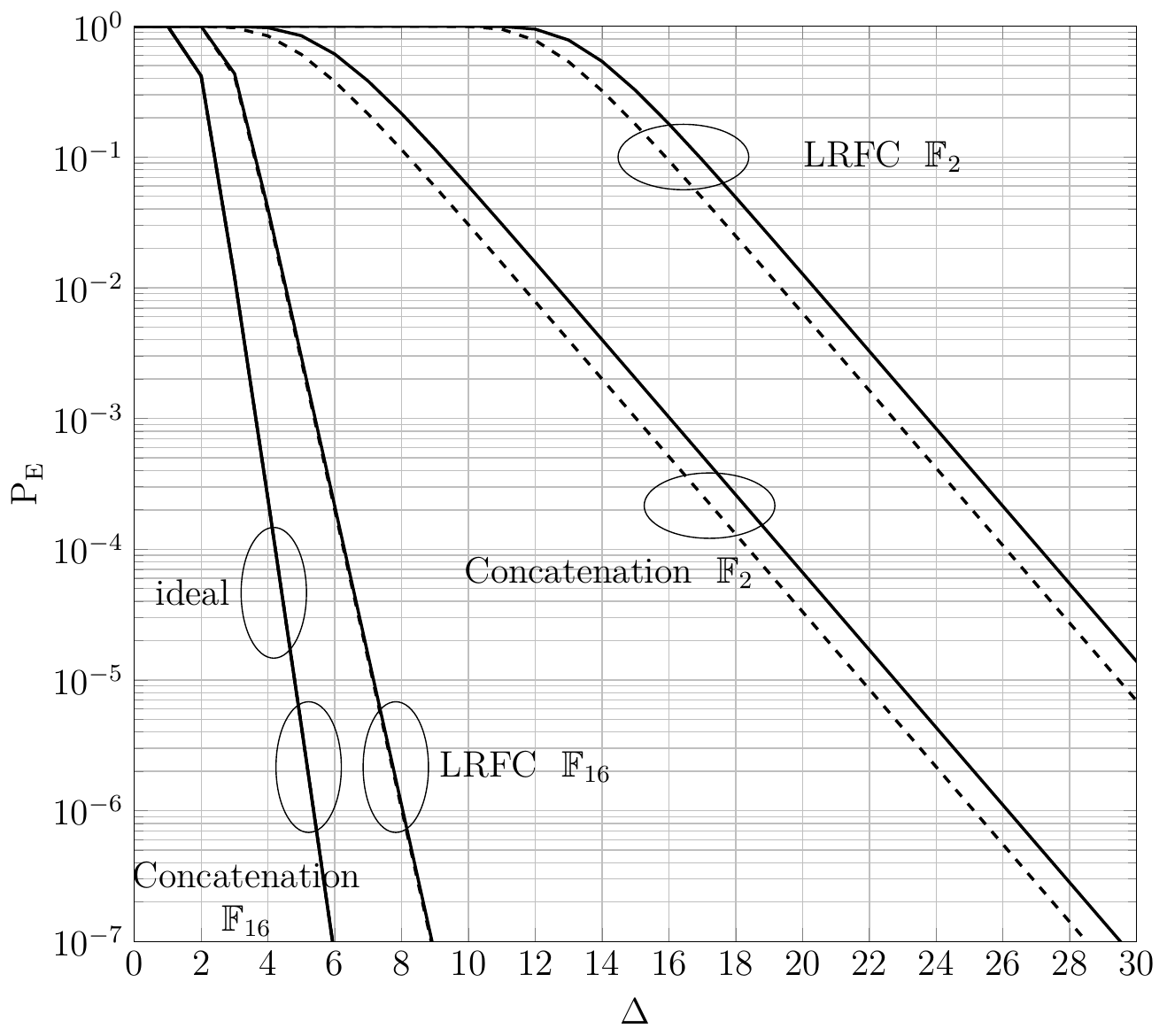}
\centering \caption[$\Pe$ vs.\ transmitter overhead $\Delta$ in a system with $N=10000$ receivers and $\erasprob=0.01$ for different fountain codes.]{ $\Pe$ vs.\ overhead at the transmitter in a system with $N=10000$ receivers and $\erasprob=0.01$. Results are shown for different fountain codes: \ac{LRFC} in $\mathbb {F}_{2}$, \ac{LRFC} in $\mathbb {F}_{16}$, concatenation of a (11,10) \ac{SPC} code with a \ac{LRFC} code in $\mathbb {F}_{2}$, and a concatenation of a {$(15,10)$} \ac{RS} code and a \ac{LRFC} code over $\mathbb {F}_{16}$.}\label{sim_sender_side}
\end{center}
\end{figure}

\section{Summary}\label{sec:conc}
In this chapter a novel coding scheme has been introduced. The scheme consists of a parallel concatenation of a block code with a \acf{LRFC} code, both constructed over the same finite field. The performance of the concatenated coding scheme under \ac{ML} decoding has been analyzed through the derivation of tight bounds on the probability of decoding failure.  This scheme can be seen as a way of turning any block code rateless (or rate flexible), so that additional output symbols can be generated on demand. The proposed scheme is in general only practical \fran{when the code dimension $k$ is small.}

Specially interesting is the case in which the block code is a \ac{MDS} code. In this case, the scheme can provide failure probabilities lower than those of \ac{LRFC} codes by several orders of magnitude, provided that the erasure probabilities of the channel is not too high. The general case in which the block code is not \ac{MDS} has also been analyzed. In this case the scheme has been analyzed in a fixed-rate setting, and it has been shown by means of examples how the concatenated scheme outperforms \acp{LRFC}. Given the fact that Raptor codes are essentially random codes, their performance would be at best as good as that of \acp{LRFC}. Thus, the proposed scheme also outperforms Raptor codes in terms of decoding failure probability. However, \fran{one should remark that the proposed  scheme is only practical when the code dimension $k$ is small}, since its decoding complexity is rather high.


The focus in this chapter has been exclusively on the performance under \ac{ML} decoding. However, it is possible to exploit the structure of the block code (precode) in order to decrease the decoding complexity of the scheme. For example, in \cite{lazaro2013parallel} an enhanced decoding algorithm  was proposed for the case in which the precode is a \acf{GRS} code whose generator matrix is in Vandermonde form.



\chapter{Conclusion}\label{chap:conclusions}
\ifpdf
    \graphicspath{{Conclusions/ConclusionsFigs/PNG/}{Conclusions/ConclusionsFigs/PDF/}{Conclusions/ConclusionsFigs/}}
\else
    \graphicspath{{Conclusions/ConclusionsFigs/EPS/}{Conclusions/ConclusionsFigs/}}
\fi

\def\baselinestretch{1.66}

In this dissertation we have investigated fountain codes under \acf{ML} \fran{erasure} decoding. In particular three types of fountain codes have been considered, \ac{LT} codes, Raptor codes and a new class of parallel concatenated fountain codes.

Regarding \ac{LT} codes, the main contribution of this thesis is a detailed analysis of a particular \ac{ML} decoding algorithm, inactivation decoding. More concretely, the focus has been on the decoding complexity of \ac{LT} codes under inactivation decoding in terms of the number of inactivations. Given an \ac{LT} degree distribution and $k$, the code dimension or equivalently the number of input symbols,  dynamic programming approaches have been used to derive the expected number of inactivations and its probability distribution. Furthermore, a low complexity algorithm has been proposed to estimate the number of inactivations. Additionally, we have shown by means of an example how the analysis of \ac{LT} codes presented can be used to numerically design \ac{LT} codes optimized for inactivation decoding.

Raptor codes have also been considered in this thesis. First, upper bounds to the probability of decoding failure of Raptor codes have been derived, using the weight enumerator of the outer code, or the average weight enumerator when the outer code is drawn at random from an ensemble. These bounds show that Raptor codes can be analyzed similarly to fixed-rate block codes. Furthermore, we have shown how the complexity of Raptor codes under inactivation decoding can be approximated introducing the concept of a surrogate \ac{LT} code. Moreover, we have shown by means of an example how the results obtained for Raptor codes can be used to design finite length Raptor codes with a good tradeoff between probability of decoding failure and complexity under inactivation decoding.

Additionally, an analysis of the distance spectrum of ensembles of fixed-rate Raptor codes has been presented, for the case in which the outer code is picked from the linear random ensemble. This ensemble of Raptor codes resembles standard R10 Raptor codes as a first order approximation. For this ensemble, the average weight enumerator and its growth rate have been derived. Furthermore, sufficient and necessary conditions for the ensemble to have a minimum distance growing linearly with the block length (positive typical minimum distance) have been derived. By means of simulations, it has been shown how the results obtained for the ensemble of Raptor codes studied can be extrapolated to Raptor codes using the standard R10 outer code as a first approximation.

The last contribution of the dissertation is the introduction of a novel class of fountain codes, that consists of a parallel concatenation of a block code with a \acf{LRFC}. This scheme is specially interesting when the block code is a \acf{MDS} code.  In this case, the scheme's performance can be tightly upper and lower bounded by means of very simple formulae. Furthermore, the scheme can provide failure probabilities lower than those of \ac{LRFC} codes by several orders of magnitude, provided that the erasure probabilities of the channel is not too high, \fran{which is usually the case in most of the applications of erasure codes}.  Thus, in this setting the proposed scheme outperforms Raptor codes in terms of probability of decoding failure. However, this novel scheme is in general only practical for when the code dimension $k$ is small due to its high decoding complexity.




\begin{appendices}
\chapter{Comparison of Inactivation Strategies} \label{app:practical}
\ifpdf
    \graphicspath{{Appendix1/Appendix1/PNG/}{Appendix1/Appendix1Figs/PDF/}{Chapter4/Chapter4Figs/}}
\else
    \graphicspath{{Appendix1/Appendix1Figs/EPS/}{Chapter4/Appendix1Figs/EPS}}
\fi
In this appendix the performance of the different inactivation techniques presented in Section~\ref{sec:inact_strat} are compared by means of simulations.
More concretely, we simulated a (non-systematic) R10 Raptor code for source block sizes ranging from $k=128$ to $k=8192$ for different absolute overheads $\absoverhead$. For each different value of $\absoverhead$, 300 decodings were carried out and the average number of inactivations was obtained for random inactivation, maximum reduced degree inactivation, maximum accumulated degree inactivation and maximum component inactivation.

The simulation results can be observed in Figures~\ref{fig:inact_strategies_128} to \ref{fig:inact_strategies_8192}. Looking at these figures it can be observed how random inactivation leads to the largest number of inactivations, followed by  maximum reduced degree inactivation, then maximum accumulated degree and finally maximum component inactivation, that leads to the least inactivations (lowest decoding complexity). It is remarkable that this ordering holds for all values of $k$ and $\absoverhead$.

\begin{figure}
        \begin{center}
        \includegraphics[width=\figw]{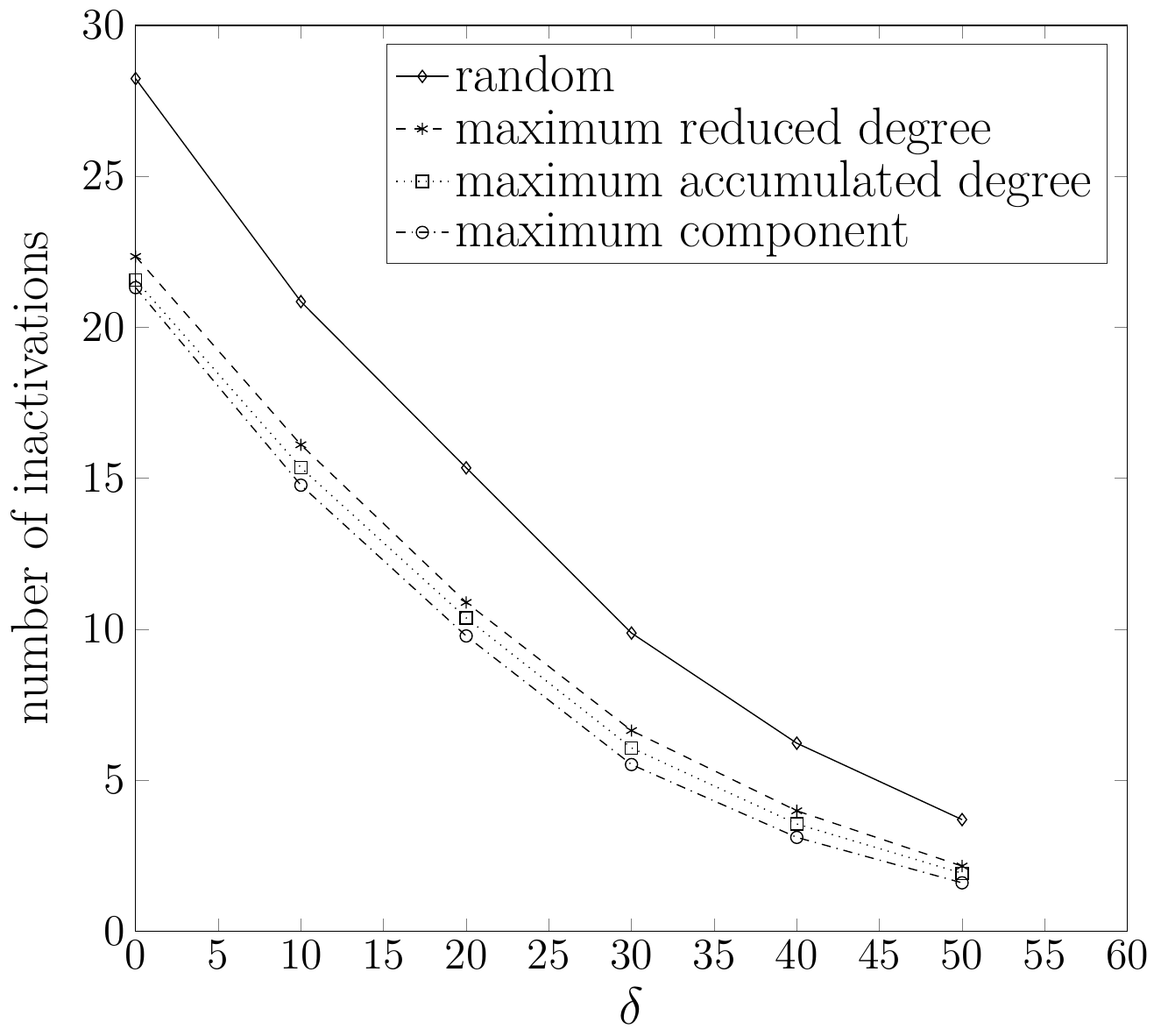}
        \centering
        \caption[Number of inactivations vs.\ $\absoverhead$ for a R10 Raptor code and $k=128$]{Number of inactivations vs.\ absolute receiver overhead $\absoverhead$ for a R10 Raptor code and $k=128$}
        \label{fig:inact_strategies_128}
        \end{center}
\end{figure}
\begin{figure}
        \begin{center}
        \includegraphics[width=\figw]{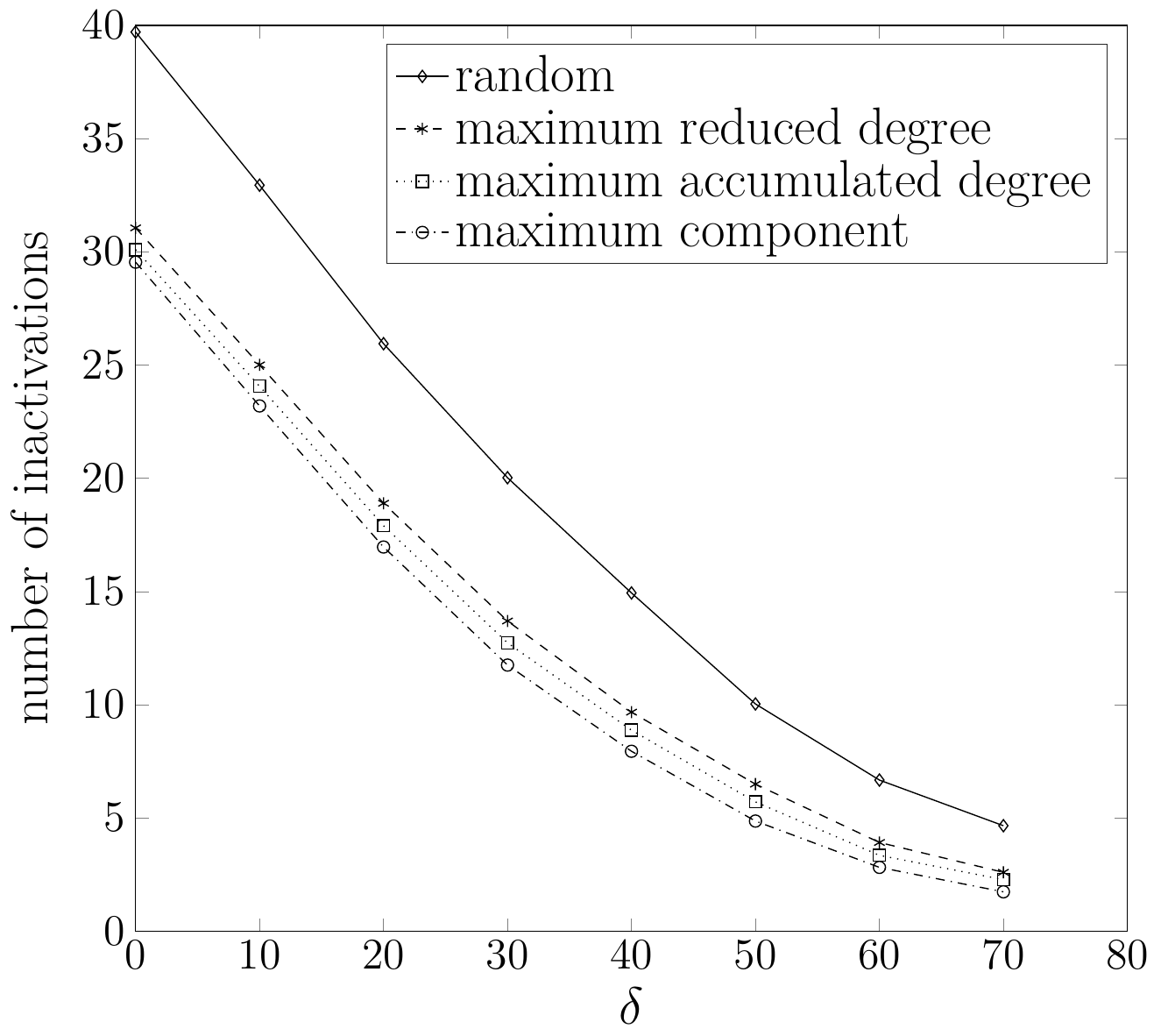}
        \centering
        \caption[Number of inactivations vs.\ $\absoverhead$ for a R10 Raptor code and $k=256$]{Number of inactivations vs.\ absolute receiver overhead $\absoverhead$ for a R10 Raptor code and $k=256$}
        \label{fig:inact_strategies_256}
        \end{center}
\end{figure}
\begin{figure}
        \begin{center}
        \includegraphics[width=\figw]{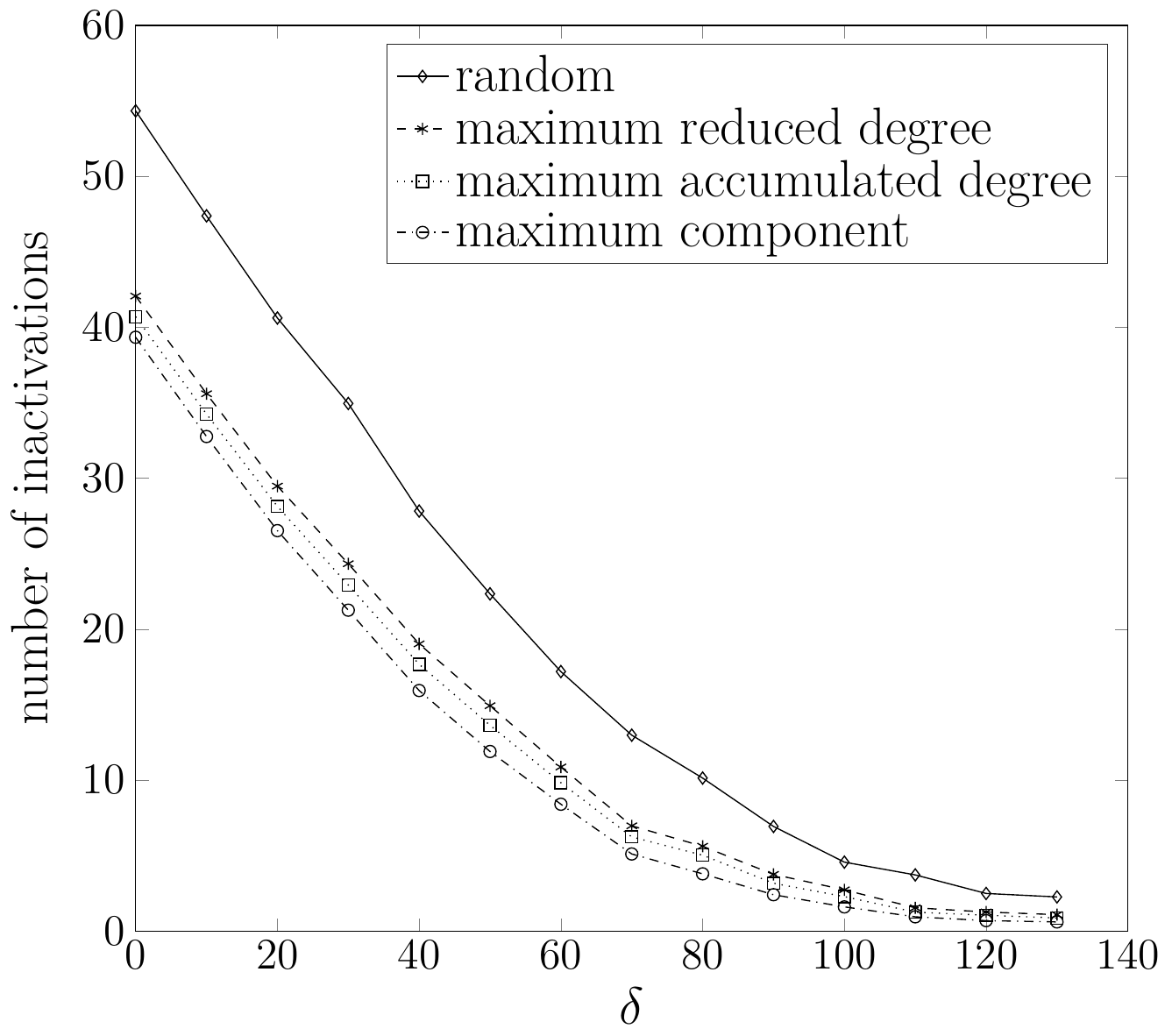}
        \centering
        \caption[Number of inactivations vs.\ $\absoverhead$ for a R10 Raptor code and $k=512$]{Number of inactivations vs.\ absolute receiver overhead $\absoverhead$ for a R10 Raptor code and $k=512$}
        \label{fig:inact_strategies_512}
        \end{center}
\end{figure}
\begin{figure}
        \begin{center}
        \includegraphics[width=\figw]{inact_strategies_k_1024}
        \centering
        \caption[Number of inactivations vs.\ $\absoverhead$ for a R10 Raptor code and $k=1024$]{Number of inactivations vs.\ absolute receiver overhead $\absoverhead$ for a R10 Raptor code and $k=1024$}
        \label{fig:inact_strategies_1024}
        \end{center}
\end{figure}
\begin{figure}
        \begin{center}
        \includegraphics[width=\figw]{inact_strategies_k_2048}
        \centering
        \caption[Number of inactivations vs.\ $\absoverhead$ for a R10 Raptor code and $k=2048$]{Number of inactivations vs.\ absolute receiver overhead $\absoverhead$ for a R10 Raptor code and $k=2048$}
        \label{fig:inact_strategies_2048}
        \end{center}
\end{figure}
\begin{figure}
        \begin{center}
        \includegraphics[width=\figw]{inact_strategies_k_4096}
        \centering
        \caption[Number of inactivations vs.\  $\absoverhead$ for a R10 Raptor code and $k=4096$]{Number of inactivations vs.\ absolute receiver overhead $\absoverhead$ for a R10 Raptor code and $k=4096$}
        \label{fig:inact_strategies_4096}
        \end{center}
\end{figure}
\begin{figure}[t]
        \begin{center}
        \includegraphics[width=\figw]{inact_strategies_k_8192}
        \centering
        \caption[Number of inactivations vs.\ $\absoverhead$ for a R10 Raptor code and $k=8192$]{Number of inactivations vs.\ absolute receiver overhead $\absoverhead$ for a R10 Raptor code and $k=8192$}
        \label{fig:inact_strategies_8192}
        \end{center}
\end{figure}

\chapter{Omitted Proofs} \label{app:proofs}
In this Appendix we provide some results that were omitted in the body of the thesis but are necessary for the proofs in Chapters~\ref{chap:Raptor} and \ref{chap:Raptor_fixed_rate}.

\section{Proof of Theorem~\ref{theorem:rateless}}
The following lemma is used in the proof of Theorem~\ref{theorem:rateless}.

\begin{lemma}\label{lemma:galois}
Let $X_1$, $X_2$ ... $X_l$  be i.i.d random variables with uniform distribution over $\mathbb F_{2^m} \backslash \{0\}$. Then
\[
\Pr \{X_1 + X_2+ \hdots + X_l = 0 \}= \frac{1}{q} \left( 1 + \frac{(-1)^i}{(q-1)^{i-1}}\right).
\]
\end{lemma}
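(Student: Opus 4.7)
The plan is to establish this identity by a short induction on $l$, with the base case $l=1$ handled directly. For $l=1$, the random variable $X_1$ is uniform over $\mathbb{F}_{2^m}\setminus\{0\}$, so $\Pr\{X_1=0\}=0$, which matches the claimed formula since $\frac{1}{q}\bigl(1+\frac{-1}{1}\bigr)=0$.

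For the inductive step, I would let $p_l := \Pr\{X_1+\cdots+X_l = 0\}$ and derive the recursion $p_{l+1}=\frac{1}{q-1}(1-p_l)$. This follows by conditioning on the partial sum $S_l=X_1+\cdots+X_l$: if $S_l=0$ then the event $S_l+X_{l+1}=0$ requires $X_{l+1}=0$, which has probability $0$ given the support constraint on $X_{l+1}$; if $S_l\neq 0$ then $S_l+X_{l+1}=0$ requires $X_{l+1}=-S_l$, a specific nonzero element (using that in $\mathbb{F}_{2^m}$ additive inverses of nonzero elements are nonzero), whose probability is $\frac{1}{q-1}$ since $X_{l+1}$ is uniform over $\mathbb{F}_{2^m}\setminus\{0\}$. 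Plugging the inductive hypothesis for $p_l$ into this recursion and simplifying yields exactly the claimed expression for $p_{l+1}$, closing the induction.

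As a sanity check and alternative derivation, I would note that the same result follows cleanly from additive character theory on $\mathbb{F}_q$: writing $\Pr\{S_l=0\} = \frac{1}{q}\sum_{\chi}\prod_{i=1}^{l}\mathbb{E}[\chi(X_i)]$, observing that the trivial character contributes $1$ and each of the $q-1$ nontrivial characters contributes $\bigl(\tfrac{-1}{q-1}\bigr)^l$ (since $\sum_{x\neq 0}\chi(x)=-1$), and collecting terms. This gives an independent verification and could be included as a remark.

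There is essentially no serious obstacle here; the only mild subtlety is the characteristic-$2$ remark ensuring $-S_l$ is well-defined and nonzero whenever $S_l\neq 0$, but this is immediate. The body of the thesis actually states the lemma only for $q=2^m$ while remarking that the general case is analogous, so I would keep the inductive argument written so that it transparently extends to arbitrary prime-power $q$ by replacing $X_{l+1}$ with $-S_l$ rather than $S_l$.
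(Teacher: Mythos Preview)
Your proof is correct. Your primary route---induction on $l$ via the recursion $p_{l+1}=\frac{1}{q-1}(1-p_l)$---is genuinely different from the paper's argument, which is essentially the character-theoretic computation you sketch as a sanity check: the paper identifies the additive group of $\mathbb{F}_{2^m}$ with $\mathbb{Z}_2^m$, writes $P_W$ as the $l$-fold convolution of $P_X$, passes to the $m$-dimensional $2$-point DFT where convolution becomes pointwise $l$-th power, and reads off $P_W(0)$ by inverse-transforming at zero. Your inductive argument is more elementary (no transforms, no characters) and, as you observe, extends verbatim to general $q$ once one writes $X_{l+1}=-S_l$; the paper's DFT argument is specific to $q=2^m$ as written and the text explicitly notes that the general case ``is a trivial extension.'' The paper also remarks that the same result appears in \cite{schotsch:2013} via a combinatorial counting of closed walks in the complete graph $K_q$, which is yet another equivalent route. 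So: your main proof is a clean alternative; your secondary remark coincides with the paper's actual proof.
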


\begin{proof}
The proof starts observing that the additive group of $\mathbb F_{2^m}$ is isomorphic to the vector space $\mathbb Z_2^m$. Thus, we consider
$X_1$, $X_2$ ... $X_l$ to be i.i.d random variables with uniform distribution over the vector space $\mathbb Z_2^m \backslash \{0\}$.

Let us introduce the auxiliary random variable
\[
W := X_1 + X_2+ \hdots + X_l
\]
Denote by $P_W(w)$ and by $P_X(x)$ the probability mass function of $W$ and $X_i$, with
\[
P_X(x) =
\begin{cases}
 0 & \text{if } x=0 \\
 \frac{1}{q-1} & \text{otherwise.}
 \end{cases}
\]
We have that
\[
P_W(w) = P_X(x) \ast  P_X(x) \ast \hdots \ast P_X(x)
\]
which can be re-stated via the m-dimensions 2-points DFT $\msr{J} \{\cdot\}$ as
\[
\msr J \{ P_W(w) \} =  \left( \msr J \{ P_X(x) \} \right)^l.
\]
We have that
\[
\hat P_X(t) := \msr J \{ P_X(x) \}=
\begin{cases}
 1 & \text{if } t=0 \\
 \frac{-1}{q-1} & \text{otherwise}
 \end{cases}
\]
Thus,
\[
\hat P_W(t) := \msr J \{ P_W(w) \}=
\begin{cases}
 1 & \text{if } t=0 \\
 \frac{(-1)^l}{(q-1)^l} & \text{otherwise.}
 \end{cases}
\]
We are interested in  $P_W(0)$ whose expression corresponds to
\[
 P_W(0) = \frac{1}{q} \sum_t \hat P_W(t) = \frac{1}{q} + \frac{1}{q} (q-1)  \frac{(-1)^l}{(q-1)^l}
\]
from which the statement follows.
\end{proof}

The result in this lemma can also be found in \cite{schotsch:2013}. However, the proof in \cite{schotsch:2013} uses a different approach based on a known result on the number of closed walks of length $l$ in a complete graph of size $q$ from a fixed but arbitrary vertex back to itself.

\section{Proof of Theorem \ref{theorem_inner}}\label{sec:proof_inversion}

We first prove that for all $(\ri,\ro)$ pairs in $\region$ we have a positive normalized typical minimum distance. Then, we prove that this is not possible for any other $(\ri,\ro)$ pair.

\subsection{Proof of Sufficiency}
A sufficient condition for a positive normalized typical minimum distance is
\begin{equation}
\lim_{\nd \to 0^+} G(\nd) < 0.
\end{equation}
From Theorem \ref{theorem:growth_rate} this is equivalent to
\begin{equation}
\ri  (1-\ro) >  \lim_{\nd \to 0^+} \max_{ \nl \in \mathscr D_{\nl}} \f(\nd, \nl).
\label{eq:lim_G}
\end{equation}
As we did  in Lemma~\ref{lemma:growth_rate_derivative} and Lemma~\ref{corollary:der}, let us  use the notation $\np(\nl)= \npnl$ to emphasize the dependence on $\nl$. We shall now show that
\begin{align} \label{eq:lim_max}
\lim_{\nd \to 0^+} \max_{ \nl \in \mathscr D_{\nl}} \f(\nd, \nl) &= \max_{ \nl \in \mathscr D_{\nl}} \lim_{\nd \to 0^+} \f(\nd, \nl) 
  = \max_{ \nl \in \mathscr D_{\nl}}  \left[ \ri \Hb(\nl) + \log_2 \left(1 - \np(\nl) \right)  \right]
\end{align}
that is, the maximization with respect to $\lambda$ and the limit as $\nd \rightarrow 0^+$ can be inverted, so that the region $\region$ in \eqref{eq:theorem_region} is obtained.

This fact is proved by simply showing that
\begin{align}
\lim_{\nd \rightarrow 0^+} \fmax(\nd) =  \fmax(0),
\end{align}
that is, the function $\fmax(\nd)=\max_{ \nl \in \mathscr D_{\nl}} \f(\nd,\nl)$ is right-continuous at $\nd = 0$. For this purpose it suffices to show
\begin{align}\label{eq:max_a_b}
\fmax(\nd) = \max_{\nl \in (a, b)} \f(\nd, \nl)
\end{align}
where $(a,b)$ is an interval independent of $\nd \in [0,\frac{1}{2})$ such that the function
\begin{align}
\log_2 \np(\nl)-\log_2(1-\np(\nl))
\end{align}
 is bounded over it, i.e.,
$$
\sup_{\nl \in (a,b)} \left| \log_2 \npnl - \log_2(1-\npnl) \right| = K \, .
$$
Under these conditions we have uniform convergence of $\f(\nd,\nl)$ to $\f(0,\nl)$ in the interval $(a,b)$ as $\nd \rightarrow 0^+$, namely,
\begin{align}\label{eq:uniform_convergence}
\f(0,\nl) - K \nd \leq \f(\nd,\nl) \leq \f(0,\nl) + K \nd, \qquad \forall \lambda \textrm{ s.t. } a < \nl < b \, .
\end{align}
The second inequality in \eqref{eq:uniform_convergence} implies $\fmax(\nd) \leq \fmax(0) + K \nd$. Furthermore, denoting by $\hat{\nl} \in (a,b)$ the maximizing $\nl$, we have
$$
\fmax(0) - K \nd = \f(0,\hat{\nl}) - K \nd \leq \f(\nd,\hat{\nl})
$$
which implies $\fmax(0) - K \nd \leq \fmax(\nd)$. Hence, we have
$$
\fmax(0) - K \nd \leq \fmax(\nd) \leq \fmax(0) + K \nd
$$
that yields $\lim_{\nd \rightarrow 0^+} \fmax(\nd) = \fmax(0)$, as desired.

Next, we shall prove \eqref{eq:max_a_b}. We start by observing that
in the case $\Omega_j=0$ for all even $j$ (in this case $\np(\nl)$ is strictly increasing) by direct computation we have $\partial\, \f(\nd,\nl) / \partial \nl < 0$ for all $0 \leq \nd < 1/2$ and for all $1/2 \leq \nl < 1$. Thus, in this case we can take $b=1/2$. In all of the other cases there exists $\xi$ such that $\np(\nl) \leq \xi < 1$ for all $0 < \nl < 1$ and we can take $b=1$. We prove the existence of $0 < a < 1/2$ (independent of $0 \leq \nd < 1/2$) such that the maximum is not taken for all $0 < \nl \leq a$  as follows. Denoting $c = \log_2 e$ and $\np'(\nl)=\mathrm d \np(\nl) / \mathrm d \nl$, we have
\begin{align*}
\frac{\partial\, \f(\nd,\nl)}{\partial \nl} = \ri \log_2(1-\nl) - \ri \log_2 \nl + c\, \nd \, \frac{\np'(\nl)}{\np(\nl)} - c\,(1-\nd) \frac{\np'(\nl)}{1-\np(\nl)} \, .
\end{align*}
Given that $0 < \np'(\nl) < +\infty$ for all $0 < \nl \leq 1/2$ and since
\begin{align*}
\lim_{\nl \rightarrow 0^+} \ri (1-\np(\nl)) (\log_2 (1-\nl) - \log_2 \nl) = + \infty \, ,
\end{align*}
there exists $a > 0$ such that
\begin{align*}
\ri(1-\np(\nl)) (\log_2 (1-\nl) -\log_2 \nl) > c\, \np'(\nl), \qquad \textrm{for all } 0 < \nl < a \, .
\end{align*}
This latter inequality implies
\begin{align*}
\ri(1-\np(\nl)) (\log_2 (1-\nl) -\log_2 \nl) > c\, \np'(\nl) - \nd \frac{c\, \np'(\nl)}{\np(\nl)}, \qquad \textrm{for all } 0 < \nl < a
\end{align*}
uniformly with respect to $\nd \in [0,1/2)$, that is equivalent to $\partial\, \f(\nd,\nl) /\partial \nl > 0 $ for all $0 < \nl < a$, independently of $\nd \in [0,1/2)$. Hence, the maximum cannot be taken between $0$ and $a$, with $a$ independent of $\nd \in [0,1/2)$.

\subsection{Proof of Necessity} \label{sec:necessity}
So far it has been proved that the condition on $(\ri,\ro)$ expressed by Theorem \ref{theorem_inner} is sufficient to have a positive normalized typical minimum distance. Now we need to show that this condition is also necessary.  Concretely, we need to prove that for the ensemble $\msr{C}_{\infty}(\oensemble,\Omega, \ri, \ro)$ all rate pairs $(\ri,\ro)$ such that $\lim_{\nd \rightarrow 0^+}G(\nd)=0$ (i.e., rate pairs on the boundary $\region$),  the derivative of the growth rate at $0$ is positive, $\lim_{\nd \rightarrow 0^+} G'(\nd) > 0$.

According to Lemma \ref{lemma:growth_rate_derivative} the expression of the derivative of the growth rate,  $G'(\nd)$ corresponds to
\[
G'(\nd) = \log_2 \frac{1-\nd}{\nd} + \log_2 \frac{\np(\nlo)}{1-\np(\nlo)}  \, .
\]
Therefore, since $G'(\nd)$ is the sum of two terms the first of which diverges to $+\infty$ as $\nd \rightarrow 0^+$, a necessary condition for the derivative to be negative is that the second term diverges to $-\infty$, i.e., $\lim_{\nd \rightarrow 0^+}\np(\nlo)=0$. This case is analyzed in the following lemma.
\begin{lemma}\label{lemma:limit_p}
If $\np (\nl)=0$ then $\nl \in \{ 0, 1 \}$ in case the LT distribution $\Omega$ is such that $\Omega_j=0$ for all odd $j$, and $\nl=0$ for any other LT distribution $\Omega$.
\end{lemma}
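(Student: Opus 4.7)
My plan is to work directly from the closed-form expression for $\np(\nl)$ given in \eqref{eq_npnl}, namely
\[
\np(\nl) = \frac{1}{2}\sum_{j=1}^{\dmax} \Omega_j \bigl[\,1-(1-2\nl)^j\,\bigr],
\]
and to argue that for $\nl\in[0,1]$ each summand is non-negative, so that the sum can vanish only if every summand vanishes individually.

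First, I would restrict attention to $\nl\in[0,1]$, which is the meaningful range (since $\nl$ is a normalized Hamming weight). On this interval we have $1-2\nl\in[-1,1]$, hence $(1-2\nl)^j\in[-1,1]$ for every $j\geq 1$, and therefore $1-(1-2\nl)^j\geq 0$ for every $j$. Since the coefficients $\Omega_j$ are non-negative and sum to one, the overall sum is a convex combination of non-negative quantities and thus $\np(\nl)=0$ forces $\Omega_j\bigl[1-(1-2\nl)^j\bigr]=0$ for every $j\in\{1,\dots,\dmax\}$.

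Next I would analyze each such equation separately. Whenever $\Omega_j>0$, we must have $(1-2\nl)^j=1$. For odd $j$ this is equivalent to $1-2\nl=1$, i.e.\ $\nl=0$, whereas for even $j$ it is equivalent to $1-2\nl=\pm 1$, i.e.\ $\nl\in\{0,1\}$. Consequently, if there exists at least one odd index $j$ with $\Omega_j>0$, the condition forces $\nl=0$. On the other hand, if $\Omega_j=0$ for all odd $j$, the only active constraints come from even indices and therefore permit the two values $\nl\in\{0,1\}$. This yields precisely the dichotomy asserted in the lemma.

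There is no real obstacle here, the argument is elementary; the only thing worth doing carefully is the sign analysis of $1-(1-2\nl)^j$ on $[0,1]$, to guarantee that no cancellations between terms of different parity can produce a spurious zero at an interior point. Since the range $1-2\nl\in[-1,1]$ ensures term-wise non-negativity, this concern is immediately dispensed with, and the conclusion of the lemma follows.
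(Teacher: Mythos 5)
Your proof is correct, but it takes a genuinely different route from the paper's. The paper proves the lemma by invoking the probabilistic meaning of $\np(\nl)$: it is the probability that the LT encoder emits a nonzero bit given an intermediate word of normalized weight $\nl$, with neighbors drawn with replacement. The paper then argues directly that this probability is zero exactly when the intermediate word is all-zero, plus the all-one word when every degree with positive probability is even. You instead work purely algebraically from the closed-form expression \eqref{eq_npnl}, observe that $1-2\nl\in[-1,1]$ so every summand $\Omega_j[1-(1-2\nl)^j]$ is non-negative on $\nl\in[0,1]$, and conclude that the sum vanishes only if each active summand vanishes, reducing the problem to solving $(1-2\nl)^j=1$ on $[-1,1]$ for odd versus even $j$. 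Both arguments are short and sound; yours has the merit of being self-contained and not relying on the reader recalling the probabilistic derivation of $\np$, while the paper's version conveys the underlying intuition (all-zero and all-one intermediate words) more directly. Either proof is acceptable here.
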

\begin{proof}
Let us recall that $\np(\nl)$ is the probability that the LT encoder picks an odd number of nonzero intermediate bits (with replacement) given that the intermediate codeword has Hamming weight $\nl h$. If $\Omega_j > 0$ for at least one odd $j$, then the only case in which a zero \ac{LT} encoded bit is generated with probability $1$ is the one in which the intermediate word is the all-zero sequence. If $\Omega_j=0$ for all odd $j$, there is also another case in which a nonzero bit is output by the LT encoder with probability $1$, i.e., the case in which the intermediate word is the all-one word.
\end{proof}

\medskip
Let us consider now a pair $(\ri,\ro)$ such that $\lim_{\nd \rightarrow 0^+}G(\nd)=0$. A fixed-rate Raptor code ensemble corresponding to this pair, has a positive typical minimum distance if and only if $\lim_{\nd \rightarrow 0^+} G'(\nd)<0$.
By Lemma~\ref{lemma:limit_p} this implies $\lim_{\nd \rightarrow 0^+} \nlo(\nd)=0$ when $\Omega_j>0$ for at least one odd $j$. It implies either $\lim_{\nd \rightarrow 0^+} \nlo(\nd)=0$ or $\lim_{\nd \rightarrow 0^+} \nlo(\nd)=1$ otherwise. That $\nlo(\nd)$ cannot converge to $0$ follows from the proof of sufficiency (as shown, the maximum for $\nd \in [0,1/2)$ is taken for $\lambda > a >0$). In order to complete the proof we now show that, in the case where $\Omega_j=0$ for all odd $j$, assuming $\lim_{\nd \rightarrow 0^+} \nlo(\nd)=1$ leads to a contradiction.

If $\Omega_j=0$ for all odd $j$, a Taylor series for $\np(\nl)$ around $\nl=1$ is $\np(\nl) = \bar \Omega (1-\nl) + o(\nl)$. Assuming $\lim_{\nd \rightarrow 0^+} \nlo(\nd)=1$, we consider the left-hand side of \eqref{eq:critical_point} and calculate its limit as $\nd \rightarrow 0^+$. We obtain
\begin{align}
& \lim_{\nd\rightarrow 0^+} \frac{\partial \f}{\partial \nl} (\nd, \nlo) 
= \ri \lim_{\nlo\rightarrow 1^-} \log_2 \frac{1-\nlo}{\nlo} + \lim_{\nd \rightarrow 0^+} \! \left( \frac{\nd}{\log 2} \, \frac{\np'(\nlo)}{\np(\nlo)} - \frac{1-\nd}{\log 2} \, \frac{\np'(\nlo)}{1-\np(\nlo)} \right) \\
&= \ri \lim_{\nlo\rightarrow 1^-} \log_2 \frac{1-\nlo}{\nlo} + \frac{1}{\log 2} \lim_{\nd \rightarrow 0^+} \frac{\np'(\nlo)(\nd-\np(\nlo))}{\np(\nlo)(1-\np(\nlo))} \\
&= \ri \lim_{\nlo\rightarrow 1^-} \log_2 \frac{1-\nlo}{\nlo} + \frac{1}{\log 2} \lim_{\nd \rightarrow 0^+} \frac{\bar \Omega (1- \nlo) - \nd}{1-\nlo}
\end{align}
where the last equality follows from the above-stated Taylor series expansion. According to \eqref{eq:critical_point}, the last expression must be equal to zero. This constraint requires the second limit to diverge to $+\infty$ (as the first limit diverges to $-\infty$). However, this cannot be fulfilled in any case when $\nd$ converges to zero and $\nlo$ to one. Actually, using standard Landau notation, when $1-\nlo = \Theta(\nd)$ or $\nd = \mathrm o(1-\nlo)$ the second limit converges, while when $1-\nlo = \mathrm o(\nd)$ it diverges to $-\infty$.

\section{Proof of Theorem \ref{pro:outer}} \label{sec:proof_outer}
In this proof we derive first a lower bound for $G(\nd)$ and then evaluate it for $\nd \to 0^+$. To obtain a lower bound for $G(\nd)$ we
first derive a lower bound for $\we_{\nd}$. Observing \eqref{eq:we_serial} it can be seen how $\we_{\nd}$ is obtained as a summation over all possible intermediate Hamming weights. A lower bound to $\we_{\nd}$  can be obtained by limiting the summation to the term $\nls=1-\ro$ yielding to
\begin{align}
\we_{\nd n} &\geq  \frac{\weo_{\nls h} \wei_{\nls h,\nd n}}{ \binom {h} {\nls h}} 
 = \weo_{\nls h} \Q_{\nd n,\nls h}\label{eq:appC:truncation}
\end{align}
where we have introduced
\[
\Q_{\nd n,\nl h}:=\frac{\wei_{\nl h,\nd n}}{ \binom {h} {\nl h}}
\]
that represents the probability that the inner encoder outputs a codeword with Hamming weight $\nd n$ given that the encoder input has weight $\nl h$.

Hence, we can now write
\begin{align}
G(\nd) &\geq \lim_{n \to \infty}  \frac{1}{n} \log_2 \weo_{\nls h} \Q_{\nd n,\nls h} 
 =  \lim_{n \to \infty}  \frac{1}{n} \log_2 \weo_{\nls h} + \lim_{n \to \infty}  \frac{1}{n} \log_2 \Q_{\nd n,\nls h} \\
&= \ri \left(\Hb(\nls) - (1-\ro) \right) + \lim_{n \to \infty}  \frac{1}{n} \log_2 \Q_{\nd n,\nls h}
\label{eq:app_upper_G}
\end{align}

We will now lower bound $\lim_{\nd \to 0^+} \Q_{\nd n,\nl h}$. We denote by

\[
q_{j,\l}:=\Pr\{X_i=0|\hw(\vecV)=\l,\deg(X_i)=j\}.
\]
Note that $q_{j,\l}=1-\pjl$. We have that

\begin{align}
\lim_{\nd \to 0^+} \Q_{\nd n,\nl h} & = \left(\sum_j \Omega_j q_{j,\nl h}\right)^n 
 \geq  \left(\sum_j \Omega_j \underline{q}_{j,\nl h}\right)^n
\end{align}
where $\underline{q}_{j,\l}\leq q_{j,\l}$ is the probability that the $j$ intermediate symbols selected to encoder $X_i$ are all zero.
For large $h$, we have
\[
\underline{q}_{j,\l}=\left( 1- \frac{\l}{h}\right)^j.
\]
Denoting by $\underline{q}_{\l}=\sum_j \Omega_j \underline{q}_{j,\l}$, by Jensen's inequality we have
\[
\underline{q}_{\l} \geq \left( 1- \frac{\l}{h}\right)^{\avgd}.
\]
Thus, we have that
\begin{equation}
\lim_{\nd \to 0^+} \Q_{\nd n,\nl h} \geq  \left( 1- \nl\right)^{n\avgd}. \label{eq:appC:Qbound}
\end{equation}

Replacing \eqref{eq:appC:Qbound} in \eqref{eq:app_upper_G} and recalling that $h=n\ri$ we obtain
\begin{align}
G(\nd) &\geq  \ri \left(\Hb(\nls) - (1-\ro) \right) + \lim_{n \to \infty}  \frac{1}{n} \log_2 \left( 1- \nls\right)^{n\avgd} \\
&=  \ri \left(\Hb(\nls) - (1-\ro) \right) + \avgd \log_2 \left( 1- \nls\right) \\
& = \ri \left(\Hb(1-\ro) - (1-\ro) \right) + \avgd \log_2 \ro.
\end{align}
By imposing $G(\nd)=0$ we obtain:
\[
\phi(\ro)= \frac{\bar \Omega \log_2 (1/\ro)}{\Hb(1-\ro) -(1-\ro)} \, .
\]
This expression is only valid when the denominator is negative, that is,  for $1>\ro>\ro^*$, being $\ro^*$ the only root of the denominator in $\ro \in (0,1)$, whose approximate numerical value is $\ro^* \approx 0.22709$.



\end{appendices}

\backmatter



\bibliographystyle{ieeetr}
\renewcommand{\bibname}{References} 
\bibliography{sty/IEEEabrv,References/references} 


\printthesisindex 

\chapter{Curriculum Vitae}
\begin{tabularx}{450pt}{lX}
  Last name: & L\'azaro Blasco \\
  First name:  & Francisco \\
  Nationality: & Spanish \\
  Date of birth: & 24.03.1983 \\
  Place of birth: & Zaragoza, Spain \\
  ~ & ~ \\
  09.1989 - 06.1997  &  Primary school in Alca\~niz, Teruel, Spain\\
  ~ & ~ \\
  09.1997 - 06.1999  &  Secondary school in Alca\~niz, Teruel, Spain\\
  ~ & ~ \\
  09.1999 - 06.2001  &  High school in Alca\~niz, Teruel, Spain\\
  ~ & ~ \\
  09.2001 - 12.2006  &  Studies of \emph{Telecommunication Engineering} at the University of  Zaragoza, Spain \newline Degree: Ingeniero Superior de Telecomunicaciones\\
  ~ & ~ \\
  11.2006 - 03.2007  &  Intern at Siemens Networks in Munich, Germany\\
  ~ & ~ \\
  04.2007 - 04.2008  &  Test Engineer at Rohde \& Schwarz in Munich, Germany\\
  ~ & ~ \\
  07.2008 - present  &  Scientific researcher in satellite and space communications at German Aerospace Center (DLR) in Oberpfaffenhofen, Germany\\
\end{tabularx}

\end{document}